\documentclass[12pt]{article}
\pdfoutput=1

\newlength{\abstractwidth}
\usepackage{epsf}
\usepackage{color}
\usepackage{graphicx}
\usepackage{hyperref}
\usepackage{dsfont}

\usepackage{amsmath, nccmath}
\usepackage{amsmath, amsthm}
\usepackage{amssymb}
\usepackage{amsfonts}
\usepackage{latexsym}
\usepackage{mathtools}
\usepackage{shuffle}
\usepackage[all,cmtip]{xy}

\usepackage{tikz, pgf}
\usepackage{tkz-fct}
\usepackage{pgfplots}
\usetikzlibrary{shapes.misc}
\usetikzlibrary{shapes,snakes}
\usetikzlibrary{decorations.pathmorphing}	
\usetikzlibrary{decorations.markings}

\usetikzlibrary{arrows,shapes,positioning}
\usetikzlibrary{decorations.markings}
\usepackage[rightcaption]{sidecap}
\tikzstyle arrowstyle=[scale=1]
\tikzstyle directed=[postaction={decorate,decoration={markings,
    mark=at position .65 with {\arrow[arrowstyle]{stealth}}}}]
\tikzstyle reverse directed=[postaction={decorate,decoration={markings,
    mark=at position .65 with {\arrowreversed[arrowstyle]{stealth};}}}]
\usetikzlibrary{positioning}

\renewcommand{\thefootnote}{\fnsymbol{footnote}}
\renewcommand{\thanks}[1]{\footnote{#1}}
\newcommand{\starttext}{
\setcounter{footnote}{0}
\renewcommand{\thefootnote}{\arabic{footnote}}}

\numberwithin{equation}{section}
\newcommand{\bea}{\begin{eqnarray}}
\newcommand{\eea}{\end{eqnarray}}
\newcommand{\be}{\begin{eqnarray}}
\newcommand{\ee}{\end{eqnarray}}
\newcommand{\<}{\langle}
\renewcommand{\>}{\rangle}
\newcommand{\bma}{\begin{matrix}}
\newcommand{\ema}{\end{matrix}}

\def\cA{{\cal A}}

\def\cC{{\cal C}}
\def\cD{{\cal D}}
\def\cE{{\cal E}}

\def\cG{{\cal G}}

\def\cI{{\cal I}}
\def\cJ{{\cal J}}

\def\cL{{\cal L}}
\def\cM{{\cal M}}
\def\cN{{\cal N}}
\def\cO{{\cal O}}
\def\cP{{\cal P}}
\def\cQ{{\cal Q}}
\def\cR{{\cal R}}

\def\cT{{\cal T}}
\def\cU{{\cal U}}
\def\cV{{\cal V}}
\def\cW{{\cal W}}

\def\bG{{\bf G}}

\def\bL{{\bf L}}

\def\mA{\mathfrak{A}}
\def\mB{\mathfrak{B}}

\def\mD{{\cal D}}

\def\mJ{\mathfrak{J}}

\def\mN{\cN}

\def\mT{\mathfrak{T}}

\def\mt{{\mathfrak{t}}}
\def\muu{\mathfrak{u}}

\def\CC{{\mathbb C}}

\def\RR{{\mathbb R}}
\def\ZZ{{\mathbb Z}}

\def\vA{{\vec{A}}}
\def\vB{{\vec{B}}}
\def\vC{{\vec{C}}}

\def\vI{{\vec{I}}}
\def\vJ{{\vec{J}}}
\def\vK{{\vec{K}}}

\def\vP{{\vec{P}}}
\def\vQ{{\vec{Q}}}
\def\vR{{\vec{R}}}
\def\vS{{\vec{S}}}

\def\vU{{\vec{U}}}
\def\vV{{\vec{V}}}

\def\vX{{\vec{X}}}
\def\vY{{\vec{Y}}}
\def\vZ{{\vec{Z}}}

\def\Im{{\rm Im \,}}

\def\det{{\rm det \,}}

\def\half{{1\over 2}}
\def\thalf{\tfrac{1}{2}}
\def\p{\partial}

\def\a{\alpha}
\def\b{\beta}

\def\tet{\vartheta}
\def\ep{\varepsilon}
\def\om{\omega}

\def\pbx{\p _{\bar x}}
\def\pby{\p _{\bar y}}

\def\pbw{\p _{\bar w}}
\def\pbz{\p _{\bar z}}

\def\xx{{\boldsymbol x}}

\def\zz{{\boldsymbol z}}

\def\bPhi{\boldsymbol{\Phi}}

\def\btau{{\boldsymbol{\tau}}}

\def\LL{{L}}

\def\CS{{\rm CS}}
\def\Sp{{\rm Sp}}
\def\SUB{{\boldsymbol{\Theta}}}

\def\der{{\mathfrak{der}}}
\def\otherep{\epsilon}

\def\no{\nonumber}
\def\sm{\smallskip}

\definecolor{Cyan}{cmyk}{1.,0,0,0}
\definecolor{Magenta}{cmyk}{0,1.,0,0}
\definecolor{Yellow}{cmyk}{0,0,1.,0}
\definecolor{White}{cmyk}{0,0,0,0}
\definecolor{Orange}{cmyk}{0,0.61,0.87,0}
\definecolor{RedOrange}{cmyk}{0,0.77,0.87,0}
\definecolor{Red}{cmyk}{0,1.,1.,0}
\definecolor{Purple}{cmyk}{0.45,0.86,0,0}
\definecolor{Violet}{cmyk}{0.79,0.88,0,0}
\definecolor{Blue}{cmyk}{1,0.5,0,0}
\definecolor{ProcessBlue}{cmyk}{0.96,0,0,0}
\definecolor{GreenYellow}{cmyk}{0.6,0,1.,0}
\definecolor{Black}{cmyk}{0,0,0,1}
\definecolor{dgreen}{rgb}{0,0.70,0.30}
\newtheorem{theorem}{Theorem}

\newtheorem{thm}{Theorem}[section]
\newtheorem{lem}[thm]{Lemma}
\newtheorem{prop}[thm]{Proposition}
\newtheorem{cor}[thm]{Corollary}

\newtheorem{deff}[thm]{Definition}
\newtheorem{rmk}[thm]{Remark}
\newtheorem{setup}[thm]{Schemata}

\begin{document}
\starttext
\setcounter{footnote}{0}

\begin{flushright}
2026 August 29  \\
UUITP-19/26
\end{flushright}

\vskip 0.30in

\begin{center}

{\LARGE \bf Flat connections on moduli spaces I}

\vskip 0.1in

{\large \bf  Local (1,0)-extension of the DHS connection}

\vskip 0.35in

{\large Eric D'Hoker${}^{a}$, Benjamin Enriquez${}^{b}$, Oliver Schlotterer${}^{c}$, Federico Zerbini${}^{d}$} 

\vskip 0.15in

{ \sl ${}^{a}$Mani L. Bhaumik Institute for Theoretical Physics}\\
{\sl  Department of Physics and Astronomy}\\
{\sl University of California, Los Angeles, CA 90095, USA}

\vskip 0.1in 

{\sl ${}^b$IRMA and Universit\'e de Strasbourg}\\
{\sl  7, Rue Descartes, 67084 Strasbourg, France}

\vskip 0.1in

{\sl ${}^c$Department of Physics and Astronomy,} \\
  { \sl Department of Mathematics,} \\
  { \sl Centre for Geometry and Physics,} \\ 
  {\sl Uppsala University, 75120 Uppsala, Sweden}
 
 \vskip 0.1in

{\sl ${}^{d}$Departamento de Matem\'aticas Fundamentales, UNED}\\
{\sl Calle de Juan del Rosal, 28040 Madrid, Spain}

\vskip 0.15in 

{\tt \small dhoker@physics.ucla.edu, enriquez@math.unistra.fr, oliver.schlotterer@physics.uu.se, f.zerbini@mat.uned.es}

\vskip 0.2in

\begin{abstract}
\vskip 0.1in
The flat DHS connection $\mathcal J_{\mathrm{DHS}}$ constructed in arXiv:2602.01461 is smooth on the configuration space of $n$ points on a fixed compact Riemann surface $\Sigma$ of arbitrary genus~$h$, takes values in an infinite-dimensional Lie algebra $\hat \mt_{h,n}$ and is invariant under the modular group $\mathrm{Sp}(2h,\ZZ)$. This paper is the first in a series for a program whose goal is to extend the connection $\mathcal J_{\mathrm{DHS}}$ to a global flat connection on the Teichm\"uller space $\mathcal T_{h,n}$ valued in the Lie algebra of derivations of $\hat{\mathfrak t}_{h,n}$. 
Upon the choice of local coordinates adapted to the map $\mathcal T_{h,n}\to \mathcal T_h$, such a connection splits into three pieces: $\mathcal J_{\mathrm{DHS}}$, a piece $\cL$ corresponding to holomorphic directions of $\cT_h$  and a third piece corresponding to anti-holomorphic directions in $\mathcal T_h$. In this paper, we isolate the system of equations satisfied by  $\cL$ and obtain its solution locally and explicitly. The construction of a global extension of $\mathcal J_{\mathrm{DHS}}$  to  $\mathcal T_{h,n}$ and the extension  of the meromorphic connection of arXiv:1112.0864 to  $\mathcal T_{h,n}$, are relegated to future publications in this series.
\end{abstract}

\end{center}

\newpage

\setlength{\textheight}{8in}

\setcounter{tocdepth}{2} 
\tableofcontents

\baselineskip=15pt
\setcounter{equation}{0}
\setcounter{footnote}{0}

\newpage

\section{Introduction}
\setcounter{equation}{0}
\label{sec:1}

The construction of flat connections on various manifolds, with values in suitable infinite-dimensional Lie algebras, is of considerable interest both from the viewpoints of physics and mathematics. In physics, state-of-the-art computations in perturbative quantum field theory and string theory benefit from systematic integration methods in terms of polylogarithms, namely iterated integrals arising from path-ordered exponentials of such flat connections. In mathematics, they encode information on the mixed motives which originate from fundamental groups, and give rise to proofs of 1-formality of the latter. 

\sm

Recently, flat connections on the configuration spaces $\mathrm{Cf}_n(\Sigma)$ of $n$ points on a
Riemann surface $\Sigma$ of genus $h\geq1$, valued in the Lie algebra $\hat \mt_{h,n}$, were studied from various viewpoints.\footnote{The Lie algebra $\hat \mt_{h,n}$ is isomorphic to the Lie algebra of the pro-unipotent completion of the braid group $\pi_1({\rm Cf}_n(\Sigma))$ \cite{Bezruk}. A definition in terms of generators and relations will be given in section \ref{sec:2.1}.} Modular invariant connections that are real analytic and single-valued on $\mathrm{Cf}_n(\Sigma)$ were produced for $n=1$ in \cite{DHoker:2023vax} and for general~$n$ in~\cite{DHoker:2026lgg}.  Multivalued meromorphic connections with non-trivial automorphy conditions were introduced in \cite{Enriquez:2011}.\footnote{See \cite{Baune:2024biq, DHoker:2025dhv, Enriquez:2021} for explicit representations of the differential forms in the meromorphic connection of~\cite{Enriquez:2011}.} Single-valued meromorphic variants were introduced in \cite{Enriquez:2021, Enriquez:2022}.
 Although these flat connections differ in their defining properties,  they are related to one another by composing a gauge transformation and a Lie algebra automorphism \cite{Enriquez:2021, DHoker:2025szl}, and this can be exploited to relate the corresponding associated spaces of polylogarithms \cite{DHoker:2025szl}.

\sm

The above-mentioned connections were all studied for a fixed complex structure on~$\Sigma$. The goal of this work and its sequels is to extend these connections by components that account  for variations in complex-structure moduli while preserving flatness. This task decomposes as follows: to first assemble the individual flat connections into a vertical connection relative to the fibration, 
\bea
\varpi:\mathcal T_{h,n}\to\mathcal T_h
\label{varpifib}
\eea
Here, $\cT_{h,n}$ is the Teichm\"uller space of compact Riemann surfaces of genus $h$ with $n$ marked points, $\mathcal T_h$ denotes the space $\mathcal T_{h,0}$, and the fibers of~$\varpi$ are isomorphic to ${\rm Cf}_n(\Sigma)$, as represented schematically in the lower box of figure~\ref{fig:1a} (see for example \cite{Hubbard}).  The next task is that of 
producing a \textit{global extension to Teichm\"uller space} of one of the above connections, namely,  the search for a flat connection on $\mathcal T_{h,n}$ with values in the Lie 
algebra\footnote{In this paper, we shall denote by $\mathfrak{der}(\mathfrak g)$  the Lie algebra of {\it continuous} derivations of a graded complete Lie algebra $\mathfrak g$; 
it is therefore the graded completion of the Lie algebra of derivations of the direct sum of its graded components. The latter algebra is supported in degree $\geq 0$ if $\mathfrak g$ is generated in degree 1. }   
of derivations $\mathfrak{der}(\hat{\mathfrak t}_{h,n})$, whose restriction to the fibers of $\varpi$ coincides with the image of the initial connection under the adjoint Lie algebra morphism $\hat{\mathfrak{t}}_{h,n}\to\mathfrak{der}(\hat{\mathfrak{t}}_{h,n})$.

\begin{figure}[htb]
\begin{center}
\tikzpicture[scale=0.98]
\scope[xshift=-4.3cm,yshift=6cm]
\filldraw [fill=blue!6, draw=black] (4.3,3) rectangle (0,0);
\draw [ultra thick] (0,0) -- (4.3,0);
\draw [draw=black] (4.3,3) rectangle (0,0);
\draw (1,-0.4) node{$\cT_h$};
\node [rotate=90] at (-0.4,2.3) {${\rm Cf}_n (\Sigma)$};
\draw [ultra thick, red, dashed] (0,1) .. controls (1,2) ..  (4.3,1);
\draw (6.3,1.8) node{\large $ \phi_n$};
\draw (6.3,1) node{\large $s \cdot \phi_n = s'$};
\draw [very thick, <->] (4.8, 1.4) -- (7.6, 1.4);
\draw (2.2,2.5) node{$d_{{\rm Cf}_n (\Sigma)} + \partial _{\cT_h} - \cJ^s$};
\endscope
\scope[xshift=4.3cm,yshift=6cm]
\filldraw [fill=blue!6, draw=black] (4.3,3) rectangle (0,0);
\draw [ultra thick] (0,0) -- (4.3,0);
\draw [draw=black] (4.3,3) rectangle (0,0);
\draw (3.5,-0.4) node{$\cT_h$};
\node [rotate=90] at (-0.4,2.3) {${\rm Cf}_n (\Sigma)$};
\draw [ultra thick, red, dashed] (0,1) .. controls (1,0.2) and (2,2.5)  ..  (4.3,1);
\draw (2.2,2.5) node{$d_{{\rm Cf}_n (\Sigma)} + \partial _{\cT_h} - \cJ^{s'}$};
\endscope
\scope[xshift=0cm,yshift=0cm]
\filldraw [fill=blue!6, draw=black] (4.5,4) rectangle (0,0);
\draw [ultra thick] (0,0) -- (4.5,0);
\draw [ultra thick] (1,0.7) -- (1,3.2);
\draw [ultra thick] (2.4,0.7) -- (2.4,3.2);
\draw [ultra thick] (1.8,2) node{\bf $\cdots$};
\node [rotate=90] at (-1.6,1.4) {projection $\varpi$};
\draw [ultra thick, ->] (-1,2) -- (-1,0.5);
\draw [draw=black] (4.5,4) rectangle (0,0);
\draw (4,3.5) node{$\cT_{h,n}$};
\draw (4,-0.4) node{$\cT_h$};
\draw (0.9,3.5) node{\small $\mathrm{Cf}_n(\Sigma)$};
\draw (2.4,3.5) node{\small $\mathrm{Cf}_n(\Sigma)$};
\node [rotate=45] at (4.9,5.2) {\large ${\rm triv}_{s'}$};
\draw [very thick, <-] (4.6,4.2) -- (5.9,5.5);
\node [rotate=-45] at (-0.5,5.1) {\large ${\rm triv}_s$};
\draw [very thick, <-] (-0.2,4.2) -- (-1.5,5.5);
\draw (3.4,0.5) node{$d_\SUB - \cJ_\SUB$};
\endscope
\endtikzpicture
\caption{The lower box gives a schematic representation of the Teichm\"uller spaces $\cT_{h,n}$ and $\cT_h$ and the projection $\varpi : \cT_{h,n} \to \cT_h$ whose fibers are isomorphic to ${\rm Cf}_n(\Sigma)$. The upper boxes give a schematic representation of the trivialization maps ${\rm triv}_s$ and ${\rm triv}_{s'}$ induced by two different sections $s$ and $s'$ of $\cT_h \to \CS(\Sigma)$, respectively. The construction of the connections $d_{{\rm Cf}_n (\Sigma)} + \partial _{\cT_h} - \cJ^s$ and  $d_{{\rm Cf}_n (\Sigma)} + \partial _{\cT_h} - \cJ^{s'}$ as well as their interrelation by the fibered diffeomorphism $\phi_n$ are the subjects of the present paper, while the construction of the connection $d_\SUB - \cJ_\SUB$ for the lower box is relegated to subsequent work.
\label{fig:1a}}
\end{center}
\end{figure}


\subsection{Motivation and context}

In genus $h=1$, the \textit{global extension problem} of the connections has already been studied from different perspectives. In the multivalued meromorphic context, it was solved for $n=1$ in \cite{LR} and for arbitrary $n$  in \cite{CEE}. In the smooth single-valued context,  it was solved  for $n=1$ in \cite{Schlotterer:2025qjv}. 
 
 \sm
 
In genus $h\geq 2$, the motivation for solving the extension problem is threefold:

\sm

(a) Its solution should lead to the introduction of new modular functions and modular tensors on $\mathcal T_h$, beyond the ones identified in \cite{Kawazumi:lecture, Kawazumi} at low rank and more generally in  \cite{DHoker:2015wxz, DHoker:2016mwo, DHoker:2017pvk, DHoker:2020uid}. The solution is also expected to explain further interrelations between modular tensors  and the moduli variation of polylogarithms.

\sm

(b) In topology, it is expected to give rise to an explicit flat connection on the  moduli space $\cM _{h,n}$ such as the one constructed in section 15 of \cite{Hain:1997} for $h\geq2$,  possibly giving an alternative proof of the results of \cite{Hain:1997}  on relative completions of mapping class~groups.

\sm

(c) In physics, the genus-one global connection of \cite{CEE}, and its component along the direction of variations of the modulus of $\Sigma$ in particular, has become a central tool used to organize the spaces of elliptic Feynman integrals \cite{Weinzierl:2022eaz, Bourjaily:2022bwx} and genus-one string amplitudes \cite{Berkovits:2022ivl, DHokerkaidi} in terms of iterated integrals over modular forms. A corresponding impact is expected to result from the solution of the higher-genus globalization problem.

\sm

It is worthwhile to also highlight links of the global extension problem to the earlier conformal field theory literature: in
\cite{Hitchinref, Bernref1, Bernref2}, a bundle
of conformal blocks on $\mathcal T_{h,n}$, depending on Lie algebraic data,\footnote{More precisely, an affine Kac-Moody Lie algebra and a collection of $n$ integrable 
representations at the same level, all this being related with the WZW models.} 
was constructed, and
equipped with a projectively flat connection (sometimes called the 
Knizhnik-Zamolodchikov-Bernard (KZB)/Hitchin connection), thus generalizing the Knizhnik-Zamolodchikov 
connection obtained at $h=0$ in \cite{kzref}. In \cite{drinref}, a universal (with respect to the 
Lie algebraic data) flat connection was constructed for $h=0$. At $h=1$, the analogous step was performed in \cite{CEE}. At general $h$, one could expect that solutions of the global extension problems for the connections in \cite{Enriquez:2011, DHoker:2026lgg} should be likewise related with the KZB/Hitchin connection.

\subsection{The global extension problem for the DHS connection}
\label{sec:1.1}

In this paper, we shall address the extension problem in the smooth single-valued and modular invariant context provided by the multivariable DHS connection  which was introduced in \cite{DHoker:2026lgg}. 

\sm

To set the stage, denote by $\Sigma$ a fixed smooth, compact, orientable  surface without boundary. Recall that the DHS construction   assigns to a complex structure $J$ on $\Sigma$ a smooth connection on $\mathrm{Cf}_n(\Sigma)$,  valued in the Lie algebra $\hat{\mathfrak t}_{h,n}$. Within this context, the list of tasks outlined above decomposes as:

\sm 

(a) the construction 
of a $\hat{\mathfrak t}_{h,n}$-valued 1-form $\cJ_\text{DHS}$ on $\mathcal T_{h,n}$ as above;

\sm 

(b) the construction of a $\mathfrak{der}(\hat{\mathfrak t}_{h,n})$-valued 
1-form $\mathcal J_{\rm global}$  over $\mathcal T_{h,n}$, such that

\sm 

(c) $d-\mathcal J_{\rm global}$ is a flat connection on $\mathcal T_{h,n}$, and

\sm 

(d) the restriction of $\mathcal J_{\rm global}$ to the vertical subbundle 
$\mathrm{ker}(T\varpi)$ of the fibration \eqref{varpifib} coincides with the image of 
$\cJ_\text{DHS}$ by the adjoint Lie algebra morphism 
$\hat{\mathfrak t}_{h,n}\to\mathfrak{der}(\hat{\mathfrak t}_{h,n})$.

\sm 

We relegate the solution of these tasks to subsequent papers in the series of articles for this program mentioned in the abstract.

\subsection{The local $(1,0)$ extension problem for the DHS connection}
\label{sec:1.1a}

As a first step towards a solution of (b) above, we will construct in a subsequent paper in this series a bundle $\Theta^*$ over $\mathcal T_{h,n}$ of  ``1-forms modulo the $(0,1)$-direction in the base  of \eqref{varpifib}", which will be a quotient of the bundle of 1-forms and therefore  equipped with a differential $d_\Theta$, 
and show that any solution $\mathcal J_{\rm global}$ of (b) leads to:

\sm 

(b$'$) a $\mathfrak{der}(\hat{\mathfrak t}_{h,n})$-valued section 
$\mathcal J_\Theta$ of $\Theta^*$ over $\mathcal T_{h,n}$ such that

\sm 

(c$'$) $d_\Theta-\mathcal J_{\Theta}$ is flat in a sense to be explained  therein, 
and

\sm 

(d$'$) the restriction of $\mathcal J_{\Theta}$  to $\mathrm{ker}(T\varpi)$ 
coincides with the image of $\mathcal J_{\mathrm{DHS}}$ outlined in (d).  

\sm 

We will construct the connection $d _\SUB - \mathcal J_\Theta$ with the properties (b$'$), (c$'$), (d$'$) in subsequent papers in the series and refer to it as the solution to the {\it global $(1,0)$ extension problem} as the extension is only in the $(1,0)$ direction on $\cT_{h}$. 

\sm

In this paper, we shall choose to replace the setting of the fibration  $\mathcal T_{h,n}\to \mathcal T_h$ by a trivialized setting, working locally on the product $\mathrm{Cf}_n(\Sigma)\times\mathcal T_h$ instead of 
$\mathcal T_{h,n}$. In this setting, we may assemble the DHS connections at  given complex structures into a family of connections on this product structure as follows. We shall denote by $\CS(\Sigma)$ the space of complex structures on $\Sigma$ and by ${\rm Diff}_0(\Sigma)$  the group of diffeomorphisms connected to the identity. An arbitrary local holomorphic section,
\bea
\label{local:sec}
 s: \cT_h \to \CS(\Sigma)
 \eea
of the projection $\mathrm{CS}(\Sigma) \to \mathrm{CS}(\Sigma)/\mathrm{Diff}_0(\Sigma)=\mathcal T_h$ gives rise to a vertical 1-form $\cJ^s_\text{DHS}$ on $\mathrm{Cf}_n(\Sigma)\times\mathcal T_h$; namely, the restriction of 
$\cJ^s_\text{DHS}$ to $\mathrm{Cf}_n( \Sigma) \times\{\btau\}$ is the DHS one-form attached to the complex structure  $s(\btau)$.
Another such section $s'$ is related to  $s$ by a $\cT_h$-dependent diffeomorphism, namely, by a  local map 
$\phi : \mathcal T_h\to\mathrm{Diff}_0(\Sigma)$ such that  $s'(\btau)=s(\btau)\phi(\btau)$ for $\btau \in \cT_h$  (henceforth simply denoted $s'=s\cdot \phi$). The domain $U_{ss'}$ of  the map $\phi$ is the intersection of the domains of $s$ and $s'$. The vertical 1-forms $\cJ^s_\text{DHS}$ for different sections $s$ and $s'$ are then related by, 
\bea
\mathcal J_{\mathrm{DHS}}^{s'}=\phi_{n}^* \, \cJ_{\mathrm{DHS}}^s
\eea
where $\phi_n$ is the fibered diffeomorphism of ${\rm Cf}_n(\Sigma)\times U_{ss'}$ induced by $\phi$, which takes
$(x_1, \cdots, x_n ,\btau) \in \mathrm{Cf}_n(\Sigma)\times U_{ss'}$ to 
\bea
\label{def:phin}
\phi_n: (x_1,\cdots,x_n,\btau)\mapsto (\phi x_1,\cdots,\phi x_n,\btau)\in \mathrm{Cf}_n(\Sigma)\times U_{ss'}
\eea
A schematic representation of the above set-up is shown in figure \ref{fig:1a}.

\sm

The cotangent bundle of the product space ${\rm Cf}_n (\Sigma) \times \cT_h$ is given by, 
\bea
T^*\big ( {\rm Cf}_n (\Sigma) \times \cT_h \big ) = T^* {\rm Cf}_n (\Sigma)  \, \oplus \, T^* \cT_h
\label{tstareq}
\eea
Since $\cT_h$ is a complex (K\"ahler) manifold, its complexified cotangent bundle splits into holomorphic and anti-holomorphic components, 
\bea
T^* _\CC \cT_h = T^{* (1,0)} \cT_h \, \oplus \, T^{*(0,1)} \cT_h
\eea
giving rise to the bundle,
\bea
T^*_\CC {\rm Cf}_n (\Sigma)  \oplus T^{* (1,0)} \cT_h
\eea
over $\mathrm{Cf}_n(\Sigma)\times\mathcal T_h$. The total differential on $\cT_h$ splits into $(1,0)$ and $(0,1)$ components $ \p _{\cT_h}$ and $ \bar{\p} _{\cT_h}$ according to the decomposition (\ref{tstareq}). It turns out that for any  $\phi : \mathcal T_h\to \mathrm{Diff}_0(\Sigma)$, the diffeomorphism  $\phi_n$ of $\mathrm{Cf}_n(\Sigma)\times\mathcal T_h$ lifts to this bundle.\footnote{This situation will be explained as follows in a subsequent paper in this series. Each section  $s$ gives rise to a local isomorphism  $\mathrm{triv}^s:\mathrm{Cf}_n(\Sigma)\times\mathcal T_h\to \mathcal T_{h,n}$, one has  $\mathrm{triv}^{s'}= \mathrm{triv}^{s} \phi_n$ if $s'=s\cdot \phi$, and 
$\mathcal J^s_{\mathrm{DHS}}=(\mathrm{triv}^s)^*\mathcal J_{\mathrm{DHS}}$; finally,
$T^* {\rm Cf}_n (\Sigma) \oplus T^{* (1,0)} \cT_h=\mathrm{triv}_s^*\Theta^*$.  }

\subsubsection{Formulation of the local $(1,0)$ extension problem}
\label{sec:1.111}

Taking advantage of the product structure $\mathrm{Cf}_n(\Sigma)\times\mathcal T_h$, the problem then becomes
the {\it local $(1,0)$ extension problem}, which may be formulated as follows;
\begin{description}
\itemsep=0in
\item (b$''$) for any section $s$, to construct\footnote{In the abstract, $\mathcal L^s $ is simply  denoted by 
$\mathcal L$ for the sake of simplicity.} a  $\mathfrak{der}(\hat{\mathfrak t}_{h,n})$-valued local section  
$\mathcal L^s $ of $T^{* (1,0)} \cT_h$  over $\mathrm{Cf}_n(\Sigma)\times \cT_h$, in such a way that 
for any sections $s$ and $s'$ related by a diffeomorphism $s' = s \cdot \phi$, one has 
$\mathcal L^{s'}=\phi_n^* \, \cL^{s}$; 
\item (c$''$) the connection component $\cJ^s$ defined by,
\bea
\cJ^s = \cJ^s_\text{DHS} + \cL^{s}
\eea
corresponds to a flat connection $d_{{\rm Cf}_n(\Sigma)} + \p _{\cT_h} - \cJ^s$. 
\end{description}

\sm

We shall pick local coordinates $(\xx |\btau)$ on $\mathrm{Cf}_n(\Sigma)\times\mathcal T_h$, where $\xx= (x_1, \cdots, x_n)$ are local coordinates on ${\rm Cf}_n(\Sigma)$ and $\btau=(\tau_1, \cdots , \tau_\sigma)$ with $\sigma = \dim \cT_h$ are local coordinates on $\cT_h$, with $\dim \cT_h=3h-3$ for $h\geq 2$ and $\dim \cT_1 = 1$. The flatness condition (c$''$) may then be expressed more explicitly on $\cJ^s = \cJ^s(\xx|\btau)$ by,\footnote{Here and throughout, $\cJ^s \wedge \cJ^s$ will be  understood as the wedge product with coefficients in the universal enveloping algebra of $\mathfrak{der}(\hat{\mathfrak t}_{h,n})$.}
\bea
\label{intr.02}
(d_\xx + \partial_\btau)  \, \cJ^s - \cJ^s \wedge \cJ^s =0
\eea
where $d_\xx = \sum_{i=1}^n (dx^i \, \p_{x_i} + d\bar x^i \, \p_{\bar x_i})$ and $\p_\btau = \sum _{\a=1}^\sigma d\tau_\a \p_{\tau_\a}$ represent the total differential $d_{{\rm Cf}_n(\Sigma)}$ and the Dolbeault differential $\p _{\cT_h}$, respectively. Taking into account the flatness of the DHS connection itself, the flatness condition (c$''$) becomes equivalent to a set of partial differential equations for $\cL^{s}$, given by, 
\bea
\label{intro.02ab}
(d_\xx + \p_\btau) \cL^{s} + \p_\btau \cJ^s_\text{DHS} - \cJ^s_\text{DHS} \wedge \cL^{s} - \cL^{s} \wedge \cJ^s_\text{DHS} - \cL^{s} \wedge \cL^{s}=0
\eea
Therefore, the first task will be to solve the equation of \eqref{intro.02ab}, which will be done by solving for the components $L_\a$ in the decomposition of $\cL^{s}$, 
\bea
\label{intr.01}
\cL^{s} (\xx|\btau) = \sum_{\alpha=1}^\sigma d \tau_\alpha \, L^s_{\alpha}(\xx | \btau) 
\eea
The final  tasks will be to prove that the solution is indeed valued in $\der(\hat \mt_{h,n})$ and transforms under a change of slice $s$ as specified by (b$''$).

\subsection{The main result}

The main result of this paper is stated in the following theorem.\footnote{One can show that a solution $\mathcal J_\Theta$ of (b$'$) gives rise to a solution $s\mapsto \mathcal J^s$  by $\mathcal J^s =\mathrm{triv}_s^*\mathcal J_\Theta$. In subsequent work, we intend to construct such a solution of (b$'$) and to relate it in this way with the solution presently found for (b$''$).}
\begin{theorem}
\label{1.thm:main}
For any $\btau_0 \in\mathcal T_h$, there exists a neighborhood $U_{\btau_0}$ of $\btau_0$ in which the
\textit{local $(1,0)$ extension problem}, formulated in items (b$''$) and (c$''$) of subsection \ref{sec:1.111}, admits a solution, namely, for an arbitrary section  $s : U_{\btau_0} \to \mathrm{CS}(\Sigma)$, there exists $\mathcal L^s$ as in 
(b$''$) and satisfying (c$''$). In local coordinates, this means that the differential equation (\ref{intro.02ab}) has a solution taking values in $\der(\hat \mt_{h,n})$ that satisfies condition (b$''$) upon changing sections. 
\end{theorem}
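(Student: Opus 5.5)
The plan is to first reduce \eqref{intro.02ab} to a recursive system in the grading of $\der(\hat\mt_{h,n})$, and then to solve the resulting vertical $d_\xx$-equations degree by degree using the structure of the DHS connection. Since $\cL^s$ is of type $(1,0)$ along $\cT_h$, I will decompose \eqref{intro.02ab} according to form type.

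The $d\tau_\a\wedge d\tau_\b$ part is the integrability condition $\p_{\tau_\a}L_\b-\p_{\tau_\b}L_\a=[L_\a,L_\b]$. The mixed part reads
\bea
d_\xx L^s_\a + \p_{\tau_\a}\cJ^s_{\rm DHS} = [\cJ^s_{\rm DHS},L^s_\a],
\no
\eea
where the bracket is understood through the adjoint action of $\hat\mt_{h,n}$ on derivations.

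The key structural input I intend to use is the form of $\cJ^s_{\rm DHS}$ from \cite{DHoker:2026lgg}: it is built from the Arakelov/DHS kernels together with the holomorphic Abelian differentials $\om_I$ and their conjugates. Its variation $\p_{\tau_\a}\cJ^s_{\rm DHS}$ can be computed by the standard variational formulas. Concretely, a holomorphic variation of complex structure acts through a Beltrami differential $\mu_\a$, and one has $\p_{\tau_\a}\om_I=$ a $(0,1)$-correction proportional to $\mu_\a\om_I$ plus exact terms, and similarly for the kernels. I will choose the section $s$ locally near $\btau_0$ so that the $\mu_\a$ are harmonic Beltrami differentials; this makes the formulas explicit.

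The solution will then be an ansatz. I take $L_\a$ to be the sum of two pieces:
\begin{itemize}
\item a $\btau$-dependent, $\xx$-independent derivation $D_\a$ of $\hat\mt_{h,n}$, acting on the generators $a^I,b_I$ linearly via the variation of the period matrix, $\p_{\tau_\a}\Omega_{IJ}$;
\item an $\xx$-dependent inner piece $\mathrm{ad}(\Phi_\a(\xx|\btau))$, with $\Phi_\a$ constructed as integrals over $\Sigma$ of the DHS kernels against $\mu_\a$, i.e.\ an iterated integral of the same type as the DHS generating functions.
\end{itemize}
Proceeding degree by degree in the grading of $\hat\mt_{h,n}$, the mixed equation becomes $d_\xx\Phi^{(k)}_\a=$ (known lower-degree terms). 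This is solvable because the right-hand side is $d_\xx$-closed by induction, using the flatness of $\cJ_{\rm DHS}$. Its vanishing cohomology class is guaranteed once the outer derivation $D_\a$ is chosen to absorb the cohomology in $H^1(\Sigma)$.

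The main obstacle is this cohomological step. The obstruction lies in $H^1$ of the configuration space with coefficients, and I must show it is exactly cancelled by the outer piece $D_\a$. I would try first to verify that $D_\a$ acting on the DHS generators reproduces the harmonic part of $\p_{\tau_\a}\cJ_{\rm DHS}$, namely the $\p_{\tau_\a}$ of the $\om_I$ coefficients. Once that matches, the remaining part is $d_\xx$-exact, and the primitive is fixed by requiring it to integrate to zero against the Arakelov volume form.

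Then the $d\tau\wedge d\tau$ condition must be checked. It follows from the mixed equation and flatness of $\cJ_{\rm DHS}$ up to a $d_\xx$-closed, hence constant-in-$\xx$, term. That term vanishes by the normalization, provided I shrink to a neighborhood $U_{\btau_0}$ where holomorphic coordinates and harmonic Beltrami representatives exist; this is where locality enters.

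Finally, for covariance (b$''$) under $s'=s\cdot\phi$, I note that $\phi_n^*\cL^s$ solves the equation for $\cJ^{s'}_{\rm DHS}=\phi_n^*\cJ^s_{\rm DHS}$. Uniqueness under the same normalization then gives $\cL^{s'}=\phi_n^*\cL^s$. The construction must be set up so that the normalization is diffeomorphism-natural, e.g.\ defined intrinsically from the complex structure rather than from coordinates. Membership in $\der(\hat\mt_{h,n})$ holds because each piece is either inner or the explicit symplectic derivation $D_\a$, which I will check preserves the defining relations of $\hat\mt_{h,n}$.
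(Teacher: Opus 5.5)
Your architecture matches the paper's $L_\a=L^0_\a+L^1_\a$: an $\xx$-independent outer derivation, plus a single-valued $\xx$-dependent inner piece built from DHS kernels integrated against $\mu_\a$. However, your ansatz for the outer piece is wrong, and the step ``once $D_\a$ reproduces $\p_{\tau_\a}$ of the $\om_I$-coefficients, the remainder is $d_\xx$-exact'' fails.

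The $H^1$ obstruction is non-zero in every degree. Pairing with $\bar\om_I(x_i)$, taking residues at $x_j=x_i$, and using the $\bar\p_i$-equation forces the outer action given in (\ref{7.b.12}), (\ref{7.e.2}) and Theorem~\ref{7.thm:1}. On $b_i^I$ it is $[L^0_\a,b_i^I]=\sum_{\vP}\p_\a\cA^I{}_{\vP}{}^J B_i^{\vP}a_{iJ}$. On $a_{iI}$ and $t_{ij}$ the forced actions are quadratic brackets weighted by $\p_\a\cA$ and $\p_\a\mD$. Only the $\vP=\emptyset$ term, $\p_\a\cA^I{}_\emptyset{}^Ja_{iJ}=-\frac1\pi\p_\a Y^{IJ}a_{iJ}$, is your linear period-matrix derivation.

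At $h=1$ the outer part is, modulo inner derivations, $\Delta_0+\sum_{r\ge1}(2r+1)G_{2r+2}\,\epsilon_{2r+2}$ (see (\ref{CEE.c05})), and Tsunogai's $\epsilon_{2k}$ with $k\ge2$ are outer. Moreover, a single-valued solution is unique. The difference $\ell$ of two solutions satisfies $d_\xx\ell=[\cJ_{\rm DHS},\ell]$, so its lowest-degree component is $\xx$-independent. Exactness in the next degree then forces that component to vanish, by the same $H^1$ argument you invoke together with centre-freeness of $\hat\mt_{h,n}$. Hence no ansatz of the form ``linear derivation $+\,\mathrm{ad}$ of a single-valued function'' can solve (\ref{intro.02ab}).

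As a consequence, the core of the theorem is missing from your plan. The $\xx$-independent part is an infinite series of outer derivations (generalized Tsunogai derivations) with coefficients $\p_\a$ of the new modular tensors $\cA$ and $\mD$, and it is specified only on generators. The real work is to show that these values preserve the relations of $\hat\mt_{h,n}$, so that $[L_\a,J_i^{(1,0)}]=0$ holds identically. The paper does this in four steps:
\begin{itemize}
\item it proves the linear identities (\ref{7.f.3}) between $\cA$ and $\mD$;
\item it lifts to $\muu_{h,n}$, where $[a_{iI},a_{jJ}]=0$ is dropped, and obtains a derivation there;
\item it proves $[\tilde L_{ww},\tilde J_i^{(1,0)}]\in\cI$ by a lowest-$\tilde b$-degree holomorphicity argument;
\item it uses Lemma~\ref{87.lem:1} and the symmetries of $\mN$ to get $\tilde L^0_{ww}([\tilde a_{iI},\tilde a_{jJ}])\in\cI$.
\end{itemize}
``Checking that $D_\a$ preserves the relations'' covers only the linear piece.

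There are two lesser issues. First, the $d\tau\wedge d\tau$ curvature is not $d_\xx$-closed; it satisfies $\p_iF_{\a\b}=[J_i^{(1,0)},F_{\a\b}]$, i.e.\ it is covariantly constant. It vanishes by the same degree and centre-free argument, not by a normalization. Second, harmonic Beltrami differentials do not vanish near the $x_i$, so your mixed equations acquire terms $\mu_\a(x_i)J_i^{(1,0)}$. The paper avoids these by using sections whose Beltrami differentials vanish on neighbourhoods of the points, and then gluing over the tuples $\underline U$.
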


The proof of Theorem \ref{1.thm:main} will be presented in section \ref{sec:7} by combining the results of Theorems \ref{7.thm:20} and \ref{7.thm:30}. In fact, the components $L_\alpha = L^s_\alpha$ of the solution (see (\ref{intr.01})) are obtained in explicit form given in Corollary \ref{7.cor:30}, given there on the generators of $\hat{\mathfrak{t}}_{h,n}$ and extended to the entire algebra by the results of section \ref{sec:66}.

\subsection{Further results, methods and organization}
\label{summsec}

Henceforth, we shall drop the superscript $^s$ referring to a choice of section unless indicated otherwise.
To solve (\ref{intro.02ab}) for $\cL$ requires evaluating the moduli variation $\p_\btau \cJ_\text{DHS}$. In section~\ref{sec:3a}, we review the theory of complex structure variations in terms of Beltrami and quadratic differentials. Section \ref{sec:3a} then provides explicit formulas for the variations of Abelian differentials $\om_I$, the Arakelov Green function $\cG(x,y)$, the DHS kernels\footnote{The DHS kernels are the coefficients of the expansion of $\cJ_\text{DHS}$ as a Lie series in the Lie algebra~$\hat \mt_{h,n}$. They are single-valued real analytic functions and differential 1-forms out of which the polylogarithms associated with the connection $\cJ_\text{DHS}$ can be constructed with the help of iterated integrals \cite{DHoker:2023vax}, whence the name \textit{DHS integration kernels}, or simply \emph{DHS kernels}. They will be defined in section \ref{sec:2.2}, and their moduli variations will be given in section \ref{sec:3.6}.} and the connection $\cJ_\text{DHS}$. The results may all be expressed in terms of DHS kernels. 

\sm

In section \ref{sec:4}, we obtain necessary conditions to be satisfied by a potential solution to (\ref{intr.02}). These necessary conditions may be used to produce explicit expressions for the action of $\cL$ on the generators of $\hat{\mt}_{h,n}$, assuming that $\cL$ exists. The explicit form of $\cL$ is given in (\ref{7.0c}), the constituents $L_\a$ of $\mathcal L$ decompose into $L_\a = L_\a ^0 + L_\a ^1$ with $L_\a^1$ given in (\ref{7.cor.33}) and the action of $L_\a ^0$ on the generators of $\mt_{h,n}$ is given by (\ref{7.cor.30}).
This reveals the functional dependence of $\cL$ in terms of DHS kernels as well as differentials $\partial_\btau \cA$ and $\partial_\btau \mD$ of new \emph{modular tensors} $\cA$ and $\mD$. 
This functional dependence of $\cL$ in terms of $\partial_\btau \cA$ and $\partial_\btau \mD$ is the counterpart to the functional dependence of $\cJ_\text{DHS}$ in terms of the DHS kernels. The modular tensors $\cA$ and $\mD$ are built from the convolution integrals of DHS kernels, and their components, which are single-valued real-analytic functions on ${\cal T}_h$, are given explicitly as follows,\footnote{The rank 4 modular tensor $\cA^M{}_{KL}{}^N$ was introduced by Kawazumi in \cite{Kawazumi:lecture, Kawazumi}. Its double trace $\cA^L{}_{KL}{}^K$ is proportional to the Kawazumi-Zhang invariant \cite{Kawazumi:2008, Zhang} which, in the special case of genus two, plays an important role in the low energy expansion of Type IIB string theory \cite{D'Hoker:2013eea,DHoker:2014oxd,Pioline:2015qha}.}
\bea
\label{intr.ad}
\cA^M {}_{K I_1\cdots I_r L}{\,}^N & = & 
\int _\Sigma d^2 x \, \int _\Sigma d^2 y \, \om^M(x) \bar \om_K(x)  \, \cG_{I_1\cdots I_r}(x,y) \, \bar \om_L(y) \om^N(y) 
\no \\
\mD_{I_1\cdots I_r} & = & \int _\Sigma d^2 x \, \kappa (x) \, \cG_{I_1\cdots I_r} (x,x)
\eea
where $r\geq 0$, all indices belong to $\{1,\cdots ,h\}$, $\omega^M$ and $ \bar \omega_K$ are Abelian differentials, $\kappa$ is the K\"ahler form on $\Sigma$, and $\cG_{I_1 \cdots I_r}(x,y)$  may be defined recursively starting from the Arakelov Green function $\cG(x,y)$ by setting $\cG_\emptyset (x,y) = \cG(x,y)$ for $r=0$ and for $r \geq 1$ by,
\bea
\cG_{I_1 \cdots I_r} (x,y) = \int _\Sigma d^2 z \, \cG(x,z) \bar \om_{I_1} (z) \p_z \cG_{I_2 \cdots I_r}(z,y)
\label{defcgte}
\eea
while separate definitions for $\cA^M{}_\emptyset {}^N$,  $\cA^M {}_{P} {}^N$ (the latter transforming non-tensorially under ${\rm Sp}(2h,\mathbb Z)$) are given in Theorem \ref{7.thm:1}. The modular tensors $\cA$ and $\mD$  obey a wealth of linear relations, some of which are given in Proposition \ref{7.prop:1re} of section \ref{sec:2}. 

\sm

In sections \ref{sec:6},  \ref{sec:66} and \ref{sec:7}, we shall show that the necessary local conditions on $\cL$ obtained in section~\ref{sec:4} are actually sufficient for the existence and flatness of a local connection~$\cJ$. A major task will be to demonstrate in section \ref{sec:66} that  $\cL$  takes values in $\mathfrak{der}(\hat \mt_{h,n})$. For this purpose, $\mt_{h,n}$ will be uplifted to a larger algebra $\muu_{h,n}$ in intermediate steps. The corresponding extended components of the local connection $\cJ$ are shown in section \ref{sec:6} to take values in the Lie algebra $\mathfrak{der}(\hat {\muu}_{h,n})$ before obtaining the desired $\cJ \in \mathfrak{der}(\hat \mt_{h,n})$,  through the Lie algebra projection $\muu_{h,n} \to \mt_{h,n}$, and proving in section \ref{sec:7} that $\cJ$, or more precisely the assignment $s\mapsto \cJ^s$ where $s$ is a local section as in section \ref{local:sec}, satisfies the flatness conditions (\ref{intro.02ab}), as summarized in Theorem \ref{7.thm:20}. Together with the extendability established in Theorem \ref{7.thm:30}, we arrive at the main result of this work previewed in Theorem~\ref{1.thm:main}.  

\sm

In Proposition \ref{5.prop.lww}, we identify  modular properties of $\cL^s$. This 
partially\footnote{\label{modularfoot}The other ingredients of this discussion, namely the construction 
of group-valued cocycles on the space $\mathrm{Cf}_n(\Sigma)\times \mathrm{CS}(\Sigma)$ and discussion 
of their behavior under the diffeomorphisms $(\phi_{ss'})_n$, will be treated in forthcoming work.} 
sets the stage for discussing modularity properties of the assignment $s\mapsto \cJ^s$.

\sm

It was established in \cite{DHoker:2026ggx} that the non-trivial components of the flatness conditions on the multi-variable DHS connection $\cJ_\text{DHS}$ are equivalent to the union of all interchange and Fay identities, namely non-trivial bilinear relations between the DHS kernels out of which $\cJ_\text{DHS}$ is constructed, that were derived via other methods in  \cite{DHoker:2024ozn}. The flat connection $\cJ$ extends $\cJ_\text{DHS}$ by including the directions in which  moduli are varied, with corresponding \textit{integration kernels} $\partial_\btau \cA$ and $\partial_\btau \mD$. The flatness conditions on $\cJ$ impose linear and bilinear relations between $\partial_\btau \cA$,  $\partial_\btau \mD$ and the familiar DHS kernels. A preliminary account of these relations is given in section \ref{sec:77}, but a systematic analysis is relegated to future work.   

\sm

In section \ref{sec:8}, we give a detailed account of the global flat connection $\cJ$ in the special case of genus one. We discuss how the Lie algebra of the Calaque-Enriquez-Etingof connection \cite{CEE} emerges and show its relation to Tsunogai's derivation algebra \cite{Tsuongai:1995}, explaining why the contributions to $\cJ$ in Remark \ref{innerrmk} are viewed as generalizing Tsunogai's derivations to arbitrary~genus. 

\sm

Ten appendices elaborate on technical aspects of this work, with combinatorial prerequisites in appendix \ref{sec:AA}, coincident limits of Fay identities in appendix \ref{sec:A} and several proofs in the remaining appendices.

\subsection{Ongoing work}

In this series of papers dedicated to the topic of \textit{Flat connections on moduli spaces}, of which this paper is the first, we plan to extend the scope of the present work by the investigations of the following problems,
\begin{itemize}
\itemsep= -0.02in
\item the global $(1,0)$ extension problem for the DHS connection;
\item the full global extension problem, including the $(0,1)$, namely, the $d \bar \tau_\alpha$ 
analogue of~$\mathcal L$ arising from the trivialization of $\mathcal J_{\mathrm{global}}$; 
\item the extension problem in the  meromorphic context of the\footnote{See also \cite{BT}, p.\ 669, footnote 1.} Enriquez connection~\cite{Enriquez:2011}.
\end{itemize}

\subsection{Future directions}

An important long-term goal is the construction of homotopy-invariant iterated integrals based on the flat connections $\cJ$ and $\cJ_\text{global}$. The flatness of $\cJ_\text{global}$ on the space $\cT_{h,n}$ guarantees that the path-ordered exponential of $\cJ_\text{global}$ along a path $\gamma \subset \cT_{h,n}$, 
\bea
\Gamma [\gamma] = {\rm Pexp} \int _\gamma \cJ_\text{global}
\label{intro:gam}
\eea
is homotopy invariant and may be utilized to investigate the moduli dependence of polylogarithms on ${\rm Cf}_n(\Sigma)$. 

\sm

More specifically, by equating the path-ordered exponential of (\ref{intro:gam}) associated with different paths $\gamma_1$ and $\gamma_2$ with the same endpoints (see figure \ref{fig:2}), polylogarithms may be expressed via iterated integrals over modular tensors in complex structure moduli. These representations provide powerful tools to unravel the relations among special values of polylogarithms at arbitrary genus in the same way as the systematics of elliptic polylogarithms at rational points on a torus is governed by iterated integrals over modular forms \cite{Adams:2017ejb, Broedel:2018iwv}. 
In the special case of paths $\gamma_1$ or $\gamma_2$ in the fundamental group of $\mathrm{Cf}_n(\Sigma)$, 
flatness of $\cJ_{\rm global}$ implies iterated-integral representations for higher-genus versions of
multiple zeta values \cite{Baune:2025sfy} and associators \cite{Gonzalez:2020, DHoker:2023vax, Tani:2025, Tani:2026},
generalizing the genus one results on elliptic multiple zeta values \cite{Enriquez:Emzv, Broedel:2015hia, Lochak:2017} and elliptic associators \cite{CEE, Enriquez:2014}.

\begin{figure}[htb]
\begin{center}
\tikzpicture[scale=1.3]
\scope[xshift=0cm,yshift=0cm]
\filldraw [fill=blue!6, draw=black] (5.5,3.5) rectangle (0.5,0);
\draw[ black, fill] (1.5,1) circle (0.06);
\draw[ black, fill] (4.5,2.5) circle (0.06);
\draw[very thick] (1.5,1) .. controls (2.5, 0) .. (4.5,2.5);
\draw[very thick] (1.5,1) .. controls (3, 3) .. (4.5,2.5);
\draw [ thick,->] (2.26,2) -- (2.36,2.11);
\draw [ thick,->] (3.26,1) -- (3.36,1.11);
\draw (5,3.1) node{$\cT_{h,n}$};
\draw (2.5,2.65) node{$\gamma_1$};
\draw (3.2,0.5) node{$\gamma_2$};
\endscope
\endtikzpicture
\caption{Flatness of $\cJ_{\rm global}$ guarantees equality of the path-ordered exponentials $\Gamma [\gamma_1]=\Gamma[\gamma_2]$ for any pair of homotopic curves $\gamma _1$ and $\gamma _2$ in $\cT_{h,n}$. \label{fig:2}}
\end{center}
\end{figure}

Moreover, combining the path-ordered exponentials of ${\cal J}_{\rm global}$ in (\ref{intro:gam}) with their complex conjugates will offer a starting point to investigate iterated-integral representations of the non-holomorphic modular tensors in \cite{Kawazumi:lecture, Kawazumi, DHoker:2017pvk, DHoker:2020uid} and the systematics of their relations.
In this way, ${\cal J}_{\rm global}$ is expected to be a stepping stone to explore echoes beyond genus one of the
link \cite{Gerken:2020yii, Dorigoni:2022npe, Dorigoni:2024oft} between modular graph forms \cite{DHoker:2015wxz, DHoker:2016mwo} and equivariant iterated Eisenstein integrals \cite{Brown:2017qwo, Brown:2017mgf}.

\subsection*{Acknowledgements}

We are grateful to Martin Raum and Yoann Sohnle for valuable discussions. We would like to thank the Erwin Schr\"odinger International Institute for Mathematics and Physics (ESI), University of Vienna (Austria), for the opportunity to participate in the Thematic Programme ``Amplitudes and Algebraic Geometry'' in 2026 where a significant part of this work has been accomplished and for the support given. The research of ED was supported in part by NSF grant PHY-26-09924. The research of OS is funded by the European Union under ERC Synergy Grant MaScAmp 101167287. Views and opinions expressed are however those of the author(s) only and do not necessarily reflect those of the European Union or the European Research Council. Neither the European Union nor the granting authority can be held responsible for them. The research of FZ was partly supported by the Spanish Ministry of Science, Innovation and Universities under the 2023 grant ``Proyecto de generaci\'on de conocimiento'' PID2023-152822NB-I00.

\subsection*{Tool and computational resource disclosure}
Suggestions by AI tools such as ChatGPT and Claude were used in the research process. 
However, all the statements of this work, as well as their proofs, are due to the authors.

\newpage

\section{The DHS connection on a fixed Riemann surface}
\setcounter{equation}{0}
\label{sec:2}

In this section, we review the DHS connection in $n \geq 2$ variables on a fixed compact Riemann surface $\Sigma$ of genus $h \geq 1$ without punctures, including the DHS integration kernels and the structure relations of the infinite-dimensional Lie algebra $\hat \mt_{h,n}$ from which the connection is constructed. Since the DHS  connection in $n$ variables on a surface with $p$ punctures may be obtained from the connection in $n+p$ variables on a surface without punctures by freezing $p$ of the variables, there is no need to consider  the case with punctures separately \cite{DHoker:2026lgg}.

\subsection{The DHS kernels on a Riemann surface $\Sigma$}
\label{sec:2.2}

Throughout this section, we shall consider a fixed compact Riemann surface $\Sigma$ of arbitrary genus $h \geq 1$ without punctures.  The homology group $H_1(\Sigma, \ZZ)$ is equipped with a non-degenerate  intersection pairing $\mJ$ with respect to which a canonical basis of cycles $\mA^I$ and $\mB_I$ has intersection pairings  $\mJ(\mA^I, \mA^J) = \mJ(\mB_I, \mB_J) =0$ and $\mJ(\mA^I, \mB_J) = \delta ^I_J$ for $I,J \in \{ 1,\cdots, h\}$. Choosing loops $\mA^I$ and $\mB_I$ on $\Sigma$ with a common base point $p \in \Sigma$ to represent these cycles may be used to promote them to generators of the fundamental group $\pi_1 (\Sigma, p)$.  A basis of the Dolbeault cohomology group $H^{(1,0)} (\Sigma, \CC)$ is given by $h$ holomorphic $(1,0)$-forms  $\om_I$ normalized by their $\mA$-periods and whose $\mB$-periods give the components $\Omega_{IJ}$ of the period matrix  $\Omega$,
\bea
\label{2.a.1}
\oint _{\mA^I} \om_J = \delta ^I_J \hskip 1in \oint _{\mB_I} \om_J = \Omega _{IJ}
\eea 
Similarly, a basis of $H^{(0,1)}(\Sigma, \CC)$ is given by the anti-holomorphic $(0,1)$ forms $\bar \om_I$.  The Riemann relations ensure symmetry of $ \Omega$ and positivity of $ \Im(\Omega)$. The pull-back of the translation invariant K\"ahler form on the Jacobian $\CC^h/(\ZZ^h + \Omega \ZZ^h)$ to $\Sigma$ provides a canonical volume form $\kappa$ on $\Sigma$, written in local complex coordinates\footnote{\label{ftglobal} Throughout, we shall equivalently represent a globally defined differential form in terms of local coordinates with the help of local coefficient functions  that have suitable transformation properties under a change of coordinates. For example, Abelian differentials $\om_I$ are expressed locally as $\omega_I = \omega_I(x) dx$ in terms of coefficient function $\om_I(x)$. The total differential $d$ on $\Sigma$ decomposes into $d = \p + \bar \p$ where the Dolbeault differentials are expressed as $\p = dx  \, \p_x $ and $\bar \p = d \bar x \, \p_{\bar x}$ in local complex coordinates $x, \bar x$ where $\p_x, \pbx$ stand for partial derivatives (see section \ref{sec:3.aa} below for more details). The coordinate volume form will be denoted $d^2x = { i \over 2} dx \wedge d \bar x$ and the Dirac delta function is normalized by $\int _\Sigma d^2 x \, \delta(x,y) = 1$.}
as $\kappa (x) \, d^2x$, which may be normalized as follows,\footnote{Throughout, a pair of a repeated upper and lower index is to be summed over by the Einstein convention without writing the sum symbol. Indices $I,J \in \{ 1,\cdots, h \}$ may be lowered and raised using the metric $Y = \Im (\Omega)$ and its inverse $Y^{-1}$ with components $Y_{IJ}$ and $Y^{IJ}$, respectively. For example, using these notations we have $Y_{IJ} Y^{JK} = \delta ^K_I$ and $\bar \om^I =  Y^{IJ} \, \bar \om_J$. }
\bea
\label{2.a.2}
\kappa(x) = { 1 \over h} \om_I(x) \bar \om^I(x) \hskip 1in
\int _\Sigma d^2 x \, \kappa(x) =1
\eea
The Arakelov Green function $\cG(x,y)$ is a symmetric single-valued  function of $x,y \in \Sigma$, which is smooth outside the diagonal, has a logarithmic singularity at $x=y$, and is uniquely determined by the following equations  (for an explicit construction see \cite{DHoker:2017pvk,DHoker:2023vax}), 
\bea
\label{2.a.3}
\pbx \p_x \cG(x,y) = - \pi \delta (x,y) + \pi \kappa(x) 
\hskip 1in 
\int _\Sigma d^2 x \, \kappa (x) \cG(x,y)=0
\eea
In terms of the Arakelov Green function $\cG_\emptyset (x,y) = \cG(x,y)$ we define the following smooth functions for $r \geq 1$ and $I_1,\cdots ,I_r,J \in\{1,\cdots ,h\}$ recursively by convolutions over~$\Sigma$, 
\bea
\label{2.a.4}
\cG_{I_1 \cdots I_r} (x,y) & = & \int _\Sigma d^2 z \, \cG(x,z) \, \bar \om_{I_1} (z) \, \p_z \cG_{I_2 \cdots I_r} (z,y)
\no \\
\Phi _{I_1 \cdots I_r}{}^J(x) & = & \int _\Sigma d^2 z \, \cG_{I_1 \cdots I_{r-1}}(x,z) \, \bar \om_{I_r} (z) \, \om^J(z)
\eea
The function $\Phi$ is traceless in its last two indices, while $\cG$ satisfies a reflection symmetry, 
\bea
\label{2.a.5}
\Phi _{I_1\cdots I_{r-1} J}{}^J(x)=0
\hskip 1in
\cG_{I_1 \cdots I_r} (x,y) = (-)^r \cG_{I_r \cdots I_1}(y,x)
\eea 
The DHS kernels for $r \geq 1$ are given in terms of $\cG$ and $\Phi$ by,\footnote{The DHS kernels in this paper are related to the ones introduced in \cite{DHoker:2023vax} and further developed in \cite{DHoker:2026lgg,DHoker:2026ggx}  by $f_{I_1 \cdots I_r} {}^J(x,y) = Y_{I_1 K_1} \cdots Y_{I_r K_r} \, f^{K_1 \cdots K_r}{}_L \, Y^{JL}$ (see Remark \ref{remark:2.1}). \label{foot:7}}
\bea
\label{2.a.6}
f_{I_1 \cdots I_r} {}^J(x,y)  =  \p_x \Phi _{I_1 \cdots I_r} {}^J (x) - \p_x \cG_{I_1 \cdots I_{r-1}} (x,y) \, \delta _{I_r}^J
\eea
and inherit their single-valuedness in $x,y \in \Sigma$ from the Arakelov Green functions in the convolutions (\ref{2.a.4}).
Conversely, the functions $\p_x \Phi$ and $\p_x \cG$ may be obtained from the traceless and trace parts of~$f$, respectively.   It will often be convenient to amend these relations by setting $f_\emptyset {}^J(x,y) = \p_x \Phi _\emptyset {}^J (x) = \om^J(x)$ for the case corresponding to $r=0$.

\subsubsection{Differential relations amongst DHS kernels}

The integral relations of (\ref{2.a.4}), along with the definition of $\cG(x,y)$ in (\ref{2.a.3}), readily imply the following Massey system of differential equations for $r \geq 1$, 
\bea
\label{2.a.7}
\pbx \, \p_x \, \Phi _{I_1 \cdots I_r} {}^J(x) & = & - \pi \, \bar \om_{I_1} (x) \, \p_x \Phi _{I_2 \cdots I_r} {}^J(x) 
+ \pi \, \delta_{r,1} \, \delta _{I_1} ^J \, \kappa(x)
\no \\
\pbx \, \p_x \, \cG_{I_1 \cdots I_r} (x,y) & = & - \pi \, \bar \om_{I_1} (x) \, \p_x \cG_{I_2 \cdots I_r} (x,y) 
\no \\
\pby \, \p_x \, \cG_{I_1 \cdots I_r} (x,y) & = & - \pi \, \bar \om_K(y) \, f_{I_1 \cdots I_r} {}^K(x,y)
\eea
The last two equations are extended to $r=0$ via (\ref{2.a.3}) and,
\bea
\label{2.a.7a}
\partial_x \partial_{\bar y} \cG(x,y) = \pi \delta(x,y) - \pi \bar \omega_K(y) \omega^K(x)
\eea
For later use, it will be convenient to have the differential equations directly in terms of the DHS kernels $f$, which are again given by a Massey type system, 
\bea
\pbx \, f_{I_1 \cdots I_r} {}^J(x,y) & = & - \pi \, \bar \om_{I_1} (x) \, f_{I_2 \cdots I_r} {}^J(x,y) + \pi \, \delta_{r,1} \, \delta _{I_1}^J \, \delta (x,y)
\no \\
\pby \, f_{I_1 \cdots I_r} {}^J(x,y) & = & \pi \, f_{I_1 \cdots I_{r-1}} {}^ K (x,y) \, \bar \om_K(y) \delta ^J_{I_r} 
- \pi \, \delta _{r,1} \, \delta ^J_{I_1} \, \delta(x,y)
\label{fmass}
\eea
Further properties of the DHS kernels may be found in \cite{DHoker:2023vax,DHoker:2024ozn}.

{\rmk 
\label{remark:2.1}
The multiplets $\Phi$, $\cG$ and $f$ defined above all transform under non-linear tensor representations of the modular group $\Sp(2h,\ZZ)$ \cite{DHoker:2020uid} to be specified in section \ref{sec:mod} below.
The choice to invert the positions of the indices on these tensors, indicated in footnote \ref{foot:7},  is made so that each tensor transforms under $\Sp(2h,\ZZ)$ with automorphy factors (\ref{modsec.03}) (which are themselves matrices) that are purely anti-holomorphic in the moduli of $\Sigma$. The advantage of this choice is that holomorphic derivatives in the moduli are covariant without the need for a connection, as will be detailed in section \ref{sec:3a}.  
}

\subsection{The modular tensors $\cA$, $\mD$ and $\mN$}
\label{sec:adn}

With the help of further convolution integrals,  the DHS kernels $\cG_\vI \, (x,y)$ produce  infinite families of modular graph functions \cite{DHoker:2017pvk} and modular graph tensors \cite{Kawazumi:lecture, Kawazumi, DHoker:2020uid} that  generalize to higher genus the genus-one modular graph functions \cite{DHoker:2015wxz} and forms \cite{DHoker:2016mwo}, respectively, and similarly depend only on the moduli of $\Sigma$.  Of immediate interest in the present work will be the two families defined for arbitrary $\vI= I_1 \cdots I_r$ with $r \geq 0$ and $\vJ = J_1\cdots J_s$ with $s\geq 2$ by,
\bea
\label{7.f.1}
\cA^M {}_{K \vI L}{\,}^N & = & 
\int _\Sigma d^2 x \, \int _\Sigma d^2 y \, \om^M(x) \, \bar \om_K(x)  \, \cG_\vI (x,y) \, \bar \om_L(y) \, \om^N(y) 
\no \\
\mD_\vJ & = & \int _\Sigma d^2 x \, \kappa (x) \, \cG_\vJ (x,x)\, , \ \ \ \ \ \  \mD_K = 0
\eea
with $K,L,M,N \in \{ 1, \cdots, h\} $, as well as the following family defined for $r \geq 3$ by,\footnote{The notation $\mN$ used here corresponds to the notation $\hat {\mathfrak{N}}$ used in \cite{DHoker:2024ozn} for the same object.}
\bea
\label{defcn}
 \mN_{I_1 I_2} & = & \int _\Sigma d^2 z_1 \, \bar \om_{I_1} (z_1) \int _\Sigma d^2 z_2 \, \bar \om_{I_2}(z_2) 
\Big ( \p_{z_1} \cG(z_1, z_2) \p_{z_2} \cG(z_2,z_1) - \p_{z_1} \p_{z_2} \cG(z_1, z_2) \Big )
\no \\
\mN_{I_1 \cdots I_r} & = & \int _\Sigma d^2 z_1 \, \bar \om_{I_1} (z_1) \int _\Sigma d^2 z_2 \, \bar \om_{I_2}(z_2) \,
\p_{z_1} \cG(z_1, z_2) \, \p_{z_2} \cG_{I_3 \cdots I_r}(z_2,z_1)
\eea
In terms of two-dimensional conformal field theory, the modular tensors $\cA$ correspond to tree-level Feynman graphs while $\mD$ and $\mN$ correspond to one-loop graphs.  The combinations $\mN$  occur naturally in the coincident limits $y \rightarrow x$ of $\p_x \cG_{I_1 \cdots I_s} (x,y)$  with $s\geq 1$   \cite{DHoker:2024ozn}, as will be reviewed in appendix \ref{app:B.cin} below. The simplest instance $\cA^M {}_{K L}{\,}^N $ of (\ref{7.f.1}) produces the Kawazumi-Zhang invariant \cite{Kawazumi:2008, Zhang, D'Hoker:2013eea, DHoker:2017pvk} upon contraction with~$\delta^K_N \delta^L_M$.

\sm

{\prop
\label{7.prop:1re}
The integral representations for $\cA, \mD$ and $\mN$ given in (\ref{7.f.1}) and (\ref{defcn}) are absolutely convergent. While $\mN$ enjoys cyclic symmetry, 
\bea
\mN_{\vI J} =  \mN_{J \vI}
\eea
$\cA, \mD$ and $ \mN$ have the following antipode symmetries,\footnote{The antipode $\theta$ is an automorphism of the shuffle algebra of words and may be defined on a word of length $r$ by $\theta(I_1 \cdots I_r) = (-)^r I_r \cdots I_1$. The summation over $K \vI L = \vX \vY \vZ$ in (\ref{7.f.3}) instructs to sum over all combinations of words $\vX$, $\vY$, $\vZ$ whose concatenation $\vX \vY \vZ$ equals the word $K \vI L$ for given letters $K,L$ and word $\vI$, including the cases where some of $\vX$, $\vY$, $\vZ$ are empty. A selection of combinatorial definitions and identities is reviewed in appendix \ref{sec:AA}, and a more comprehensive collection can be found in appendix A of \cite{DHoker:2026ggx}. \label{other.4}} 
\bea
\label{7.f.2}
\cA^M {}_{K \vI L}{\,}^N = \cA^N{}_{L \theta(\vI) K}{}^M
\hskip 0.8in
\mD_\vI = \mD_{\theta(\vI)}
\hskip 0.8in
\mN_\vI =  \mN_{\theta(\vI)} 
\eea
and obey the following linear interrelation,
\bea
\label{7.f.3}
 \mD_{K \vI} \, \delta ^J_L - \delta _K^J \, \mD_{\vI L} 
& = & 
 \sum_{K \vI L = \vX \vY \vZ} \cA^J {}_{ \big ( \vX \shuffle \theta(\vZ) \big ) M \big ( \vY + \theta (\vY) \big )} {}^M
\eea 
}

\begin{proof}
\vskip -0.1in
Given that the only singularity of $\p_x \cG(x,y)$ is a simple pole at $x=y$, absolute convergence is manifest. 
The reflection symmetry properties follow from the property $\cG_{\theta(\vI)} (y,x) = \cG_\vI(x,y)$ given in (\ref{2.a.5}). The proof of (\ref{7.f.3}) is more involved and is relegated to appendix \ref{sec:B}.
\end{proof}

\sm

In section \ref{sec:77}  we will discuss both corollaries of the identities (\ref{7.f.3}) and evidence 
for the existence of additional as-of-yet unknown relations among $\cA$ and $\mD$.

\subsection{The Lie algebra $\mt_{h,n}$}
\label{sec:2.1}

The DHS connection in $n$ variables on a compact Riemann surface $\Sigma$ of genus $h \geq 1$ without punctures  takes values in the infinite-dimensional Lie algebra $\hat \mt_{h,n}$.

{\deff
\label{2.def:1}
The generators of the Lie algebra $\mt_{h,n}$ are $a_{iI}, b_i^I$ and $t_{ij}= t_{ji}$, for  $I\in \{ 1,\cdots, h\}$ and $i,j \in  \{1, \cdots, n\}$ with $j \not= i$ and obey the defining structure relations \cite{Enriquez:2011},
\begin{align}
 \big [ a_{iI} , a_{jJ}  \big ]  & = 0 &   \big [ b_i^I , b_j^J \big ] & =  0 
&  
\big [b_i^I , a_{jJ} \big ]  & = \delta ^I_J \, t_{ij} 
\no \\
 \big [ a_{iI}, t_{jk} \big ] & = 0 &  \big [b_i^I, t_{jk} \big] & = 0 
&
\big [ b_{iI} , a^I _i \big ] & = - \sum_{j \not = i} t_{ij}   
\label{11.1}
\end{align}
where $i,j,k$ are mutually distinct.  The Lie algebra $\mt_{h,n}$ admits a positive bi-grading, denoted $| \cdot |$,  that assigns the bi-degrees $|a_{iI}|=(1,0)$, $|b_i^I|=(0,1)$ and $| t_{ij}|=(1,1)$ to the generators of $\mt_{h,n}$. Elements $X\in \mt_{h,n}$ of bi-degree $|X|=(k,\ell)$ are said to have $a$-degree $k$ and $b$-degree $\ell$.
The Lie algebra $\hat \mt_{h,n}$ is the degree completion of $\mt_{h,n}$.\footnote{The DHS connection takes values in the degree completion $\hat \mt_{h,n}$ rather than in $\mt_{h,n}$ since it is expressed as an infinite Lie series whose convergence is subject to the completion condition. }}

\sm

Further structure relations are obtained from (\ref{11.1}) using the Jacobi identity \cite{Enriquez:2011}, 
\begin{align}
\label{2.prop.1}
 \big [ a_{iI} + a_{jI}, t_{ij} \big ] & = 0 &  \big [ t_{ij} + t_{ik} , t_{jk} \big ] & = 0
\no \\
 \big [ b_{i}^I + b_{j}^I , t_{ij} \big ] & = 0  &  \big [ t_{ij}, t_{k\ell} \big ] & =0  
\end{align}
for $i,j,k,\ell$ mutually distinct. We note that,  for genus $h\geq 2$, the relations $[ a_{iI}, t_{jk}  ]  =  [b_i^I, t_{jk} ] = 0 $ of (\ref{11.1}) may be deduced from the other defining relations.

\subsection{The multivariable DHS connection}
\label{sc:2.3}

The multivariable DHS connection $\cJ_\text{DHS}$ defined in \cite{DHoker:2026lgg} takes values in the Lie algebra $\hat \mt_{h,n}$ given in Definition \ref{2.def:1} and is a single-valued one-form of $n$ variables  $\xx= (x_1, \cdots, x_n)  \in \Sigma^n$ which is smooth on the configuration space, 
\bea
\label{2.b.0}
\text{Cf}_n(\Sigma) =  \Sigma ^n \setminus \{ \hbox{diagonals} \} 
\eea 
with singularities along the diagonals $x_i=x_j$ given by holomorphic simple poles with residue $- t_{ij}$.
It will be convenient to locally decompose $\cJ_\text{DHS}$ into a sum of $(1,0)$-forms and $(0,1)$-forms,  
\bea
\label{2.b.1}
\cJ_\text{DHS} &= \sum_{i=1}^n \Big ( J_i^{(1,0)} (\xx) \, dx_i  + J_i^{(0,1)} (\xx) \, d \bar x_i  \Big )
 \eea
 The $(0,1)$ components are defined to be,
 \bea
J_i^{(0,1)} (\xx) = - \pi \, \bar \om_I(x_i) \, b_i^I
\label{defj01}
\eea
whereas the $(1,0)$ components are defined to be, 
\bea
\label{2.b.4}
J_i^{(1,0)} (\xx) = \sum _{r=0} ^\infty \p_i \Phi _{I_1 \cdots I_r} {}^J (x_i)  B_i^{I_1}  \cdots B_i ^ {I_r} a_{iJ} 
+ \sum _{j \not= i} \sum _{r=0}^\infty \p_i \cG_{I_1 \cdots I_r} (x_i, x_j)  B_i^{I_1}  \cdots B_i ^ {I_r}  t_{ij}
\qquad
\eea
where $\p_x \Phi _\emptyset {}^J(x) = \om^J(x)$ and $B_i^I X = [b_i^I,X]$ for arbitrary $X \in \hat \mt_{h,n}$.

\sm

The main result \cite{DHoker:2026lgg} is that $\cJ_\text{DHS}$ satisfies the Maurer-Cartan equation,
\bea
\label{2.a.1b}
d_\xx \, \cJ_\text{DHS} - \cJ_\text{DHS} \wedge \cJ_\text{DHS} =0
\eea 
where $d_\xx =\sum _i ( dx_i \, \p_{x_i} + d \bar x_i \, \p_{\bar x_i} ) $ is the total differential on $\text{Cf}_n(\Sigma)$ (at fixed moduli), so that $d_\xx- \cJ_\text{DHS}$ defines a flat connection over $\text{Cf}_n(\Sigma)$. In view of the proof of the main result of this article, it may be instructive for the reader to see how the construction of $\cJ_\text{DHS}$ may be regarded as a solution of the following problem.
 
{\prop 
 \label{2.thm:40}
There exists a single-valued smooth one-form $\cJ_{\rm DHS}$ on ${\rm Cf}_n(\Sigma)$ which takes values in $\hat \mt_{h,n}$, such that $d_\xx- \cJ_{\rm DHS}$ defines a flat connection over ${\rm Cf}_n(\Sigma)$, and whose components with respect to the decomposition (\ref{2.b.1}) into $(1,0)$ and $(0,1)$ forms satisfy the following conditions,

(a)  the components $J_i^{(0,1)} (\xx)$ are given by (\ref{defj01});

(b) the components $J_i^{(1,0)} (\xx)$ have $a$-degree 1;

(c) the components $J_i^{(1,0)} (\xx)$ have singularities along the diagonals $x_i=x_j$, for $j \not= i$, given by holomorphic simple poles with residue $- t_{ij}$.}

\begin{proof}
\vskip 0in
As a result of assumption $(b)$, the flatness condition (\ref{2.a.1b}) on ${\rm Cf}_n (\Sigma)$ decomposes into four sets of decoupled equations, each set of uniform $a$-degree. The first is $d_\xx\, J^{(0,1)}_j =~0 $ which is solved trivially by the expression (\ref{defj01}) prescribed in (a). The remaining components of the flatness conditions on ${\rm Cf}_n(\Sigma)$ are given as follows on $\Sigma ^n$,
\begin{subequations}
\label{2.b.3}
\begin{align}
{}  [ J^{(1,0)} _i , J^{(1,0)} _j  ] & =  0
\label{2.b.3aa} \\
\p_i J^{(1,0)} _j - \p_j J^{(1,0)} _i & =  0
\label{2.b.3bb} \\
\bar \p_j J^{(1,0)} _i - [J^{(0,1)}_j , J^{(1,0)} _i] & =  \pi \delta(x_i, x_j) \, t_{ij}  &  j & \not = i
\label{2.b.3cc} \\
\bar \p_i J^{(1,0)} _i - [J^{(0,1)}_i , J^{(1,0)} _i] & =  - \sum_{k \not= i} \delta(x_i, x_k) t_{ik}
\label{2.b.3dd}
\end{align}
\end{subequations}
Note that the $\delta$-functions on the right side explicitly account for the poles at coincident points as prescribed by the residue condition $(c)$ and vanish on ${\rm Cf}_n(\Sigma)$.  The $a$-degrees are two in the first line and one in the last three lines. A particular solution to the conditions of the theorem is given by $J^{(1,0)}_i$ in \eqref{2.b.4}, as proven in Theorem 4.1 of \cite{DHoker:2026lgg}. Indeed, the system  of equations (\ref{2.b.3cc}) and (\ref{2.b.3dd})  is solved by the Massey structure of the differential equations satisfied by the DHS kernels in (\ref{2.a.7}) and the residue condition of~$(c)$;  the system (\ref{2.b.3bb}) is solved by (\ref{2.b.4}) by the consequence
$\p_i\p_j \cG_{I_1 \cdots I_r}(x_i,x_j) =(-1)^r \p_j \p_i \cG_{I_r \cdots I_1}(x_j,x_i)$ of (\ref{2.a.5}) and, 
\bea
\label{2.b.7}
B^{I_1}_i \cdots B^{I_r}_i t_{ij} & = & (-)^r B^{I_r}_j \cdots B^{I_1}_j t_{ij}
\eea
which in turn follows from (\ref{2.prop.1}). Finally, the system of equations (\ref{2.b.3aa}) was proven in Theorem 4.1 of \cite{DHoker:2026lgg} and was shown in \cite{DHoker:2026ggx} to be equivalent to the set of all interchange and Fay identities on DHS kernels. \end{proof}

\sm

To express the components $J_i^{(1,0)} (\xx)$ in terms of DHS kernels succinctly, we use the following generating function whose lowest order is the Arakelov Green function $\cG(x,y)$,
\bea
\label{2.b.3a}
\bG (x,y;B) & =  & \sum _{r=0}^\infty \p_x \cG _{I_1 \cdots I_r}  (x,y) B^{I_1} \cdots B^{I_r}
\, = \sum _{\vI \in \cW_h} \p_x \cG _\vI (x,y) B^{\vI}
\eea
and the generating function whose lowest order is the holomorphic $(1,0)$ form $\om^J(x)$,
\bea
\label{2.b.3b}
\bPhi ^J (x;B) & = & \sum _{r=0}^\infty \p_x \Phi _{I_1 \cdots I_r} {}^J  (x) B^{I_1} \cdots B^{I_r}
\, = \sum _{\vI \in \cW_h} \p_x \Phi _\vI {}^J (x) B^{\vI} 
\eea
where $\cW_h$ denotes the set of all words in the alphabet of $h$ letters $I \in \{ 1,\cdots,h \} $, including the empty word $\emptyset$.\footnote{The summation over words $\vI \in \cW_h$ represents the double sum over the length $r \geq 0$ of the words $\vI= I_1 \cdots I_r$ and, for given $r$, the sum over all its individual letters $I_1, \cdots ,I_r  \in \{1,\cdots,h \} $, as illustrated in the right equalities of (\ref{2.b.3a}) and (\ref{2.b.3b}). The symbols $\Phi_\vI{}^J(x)$, $\cG_\vI(x,y)$ and $B^\vI$ stand for  $\Phi_{I_1 \cdots I_r}{}^J(x)$, $\cG_{I_1 \cdots I_r}(x,y)$ and for the concatenation product $B^\vI = B^{I_1} \cdots B^{I_r}$, respectively. \label{foot.4}} The differential equations for the DHS kernels of (\ref{2.a.7}) translate into the following differential equations for the generating functions, 
\bea
\label{2.gen.3}
\pbx  \bG (x,y;B) & = & \pi \kappa (x) - \pi \delta(x,y) - \pi \bar \om _I(x) B^I \, \bG(x,y;B)
\no \\
\pbx \bPhi ^J(x;B) & = & \pi \kappa(x) B^J - \pi \bar \om_I(x) B^I \, \bPhi^J (x;B)
\no \\
\pby  \bG(x,y;B) & = & \pi \delta (x,y) - \pi \bar \om_I(y) \, \big ( \bPhi ^I (x;B) - \bG(x,y;B) B^I \big ) 
\eea
In terms of these generating functions, the components (\ref{2.b.4}) of 
the $(1,0)$ part of the multivariable DHS connection take the form,   
\bea
\label{2.b.5}
J_i^{(1,0)} (\xx)  &= 
\bPhi^J(x_i; B_i) \, a_{iJ} + \sum_{j \neq i} \bG (x_i,x_j;B_i) \, t_{ij}  
\eea

\subsection{Modular invariance of $\cJ_\text{DHS}$}
\label{sec:mod}

Modular transformations leave the intersection pairing $\mJ$ invariant and map canonical bases of homology cycles to canonical homology bases. Assembling the cycles $\mA^I$ and $\mB_I$ into column matrices $\mA, \mB$ a modular transformation acts by a matrix $M \in \Sp (2h,\ZZ)$, 
\bea
\label{modsec.01}
\left ( \bma \mB  \cr \mA \ema \right ) 
\, \rightarrow \,
M \left ( \bma \mB  \cr \mA \ema \right )  
\hskip 0.7in 
M = \left ( \bma A & B \cr C & D \ema \right ) 
\hskip 0.7in 
M^t \mJ M = \mJ
\eea 
Under $\Sp(2h,\ZZ)$, the row matrix  of holomorphic Abelian differentials $\om$, the period matrix $\Omega$ of (\ref{2.a.1}), and its imaginary part $Y= \Im (\Omega)$ transform with the following $\Omega$-dependent automorphy factors that take values in $\mathrm{GL}(h,\CC)$, 
\bea
\label{modsec.05}
\om & \to &  \om (C \Omega {+}D)^{-1}
\no \\
\Omega & \to &   (A\Omega{+}B)(C\Omega{+}D)^{-1}
\no \\
Y & \to &  (\bar \Omega C^t {+} D^t)^{-1} Y (C \Omega {+}D)^{-1} 
\eea
Modular tensors transform under more general non-linear representations of $\Sp(2h,\ZZ)$. Using contractions with  the metric $Y$ or with its inverse, we may convert an arbitrary modular tensor to one whose modular transformation law involves only holomorphic automorphy factors (which is the choice made in \cite{DHoker:2026lgg}), or to a tensor whose transformation law involves only anti-holomorphic automorphy factors. Here, we shall make the latter choice (see Remark \ref{remark:2.1}) so that  the derivations with respect to holomorphic moduli that we shall use in the sequel are then covariant without the need for a connection.  

\sm

Therefore, more generally, we define a modular tensor $\mT$ of rank $(r,s)$ with components $\mT_{I_1 \cdots I_r} {} ^{J_1 \cdots J_s}$ to transform as follows under $M \in \Sp(2h,\ZZ)$,
\bea
\label{modsec.02}
\mT_{I_1 \cdots I_r}{}^{J_1\cdots J_s}(\Omega) & \to & 
\tilde Q_{I_1}{}^{\! K_1}  \cdots \tilde Q_{I_r}{}^{\! K_r}  \, 
\mT_{K_1 \cdots K_r}{}^{L_1\cdots L_s}(\Omega) \, 
\tilde R_{L_1}{}^{\! J_1}  \cdots \tilde R_{L_s}{}^{\! J_s} 
\qquad
\eea
where we use shorthands for the automorphy factor $\tilde Q$ and its inverse $\tilde R$,
\bea
\tilde Q & = & \tilde Q(M, \Omega) =  \big ( \bar  \Omega C^t {+} D^t \big )^{-1}  
\no \\
\tilde R & = & \tilde R(M, \Omega)  = \bar  \Omega C^t {+} D^t = \tilde Q^{-1}
\label{modsec.03}
\eea
As defined in (\ref{modsec.02}), modular tensors are sections of an anti-holomorphic vector bundle over Torelli space  (which is the moduli space of compact Riemann surfaces equipped with a canonical basis of $\mA$ and $\mB$ cycles). These conventions are related to the definition of modular tensors in \cite{DHoker:2020uid} by complex conjugation and ensure that the holomorphic moduli derivatives in  section \ref{sec:3a} preserve tensorial properties.

{\lem
\label{lem:MT}
The DHS kernels $\Phi$ and $f$ defined in (\ref{2.a.4}) and (\ref{2.a.6}) transform as modular tensors of rank $(r,1)$, while $\cG$ transforms as a tensor of rank $(r,0)$,
\bea
f_{I_1 \cdots I_r}{}^{J}(x,y) & \rightarrow & 
 \tilde Q_{I_1}{}^{K_1} \cdots \tilde Q_{I_r}{}^{K_r} f_{K_1 \cdots K_r}{}^{L}(x,y) \tilde R_{L}{}^{J} 
\no \\
\cG_{I_1 \cdots I_r} (x,y)  & \rightarrow & \tilde Q_{I_1}{}^{K_1} \cdots \tilde Q_{I_r}{}^{K_r} \cG_{K_1 \cdots K_r}(x,y)
 \label{modsec.04}
\eea
Similarly, the quantities $\cA^M {}_{P_1\cdots P_r}{\,}^N$, $\mD_{I_1\cdots I_s}$ and $\mN_{I_1\cdots I_t}$ defined by (\ref{7.f.1}) and (\ref{defcn}) for $r,s,t\geq 2$ transform as modular tensors of rank $(r,2)$, $(s,0)$ and $(t,0)$, respectively.}

\sm

\begin{proof}
\vskip 0in
The modular tensor properties (\ref{modsec.04}) follow from those of $\om^M, \, \bar \om_K$ together with modular
invariance of $\cG(x,y)$ and readily imply the tensorial $\Sp(2h,\ZZ)$ transformation of 
$\cA^M {}_{P_1\cdots P_r}{\,}^N$, $\mD_{I_1\cdots I_s}$ and $\mN_{I_1\cdots I_t}$
through their integral representations (\ref{7.f.1}) and~(\ref{defcn}).
\end{proof}

The tensorial transformation (\ref{modsec.04}) of DHS kernels leads to the following proposition proven in section 4.5 of \cite{DHoker:2026lgg}. 

{\prop
\label{2.prop:1}
The  modular transformation properties of the DHS kernels, given in (\ref{modsec.04}), lead to a modular invariant multi-variable DHS connection $\cJ_{\rm DHS}$, provided the generators $a,b,t$ of the Lie algebra $\mt_{h,n}$ transform as follows under $M \in \Sp(2h,\ZZ)$,
\bea
\label{,odsec.0.6}
a_{iI} \to \tilde Q(M, \Omega) _I{}^ K a_{iK} 
\hskip 0.8in 
b_i^J \to b_i^L \tilde R(M, \Omega) _L {}^J 
\hskip 0.8in 
t_{ij} \to t_{ij}
\eea
These transformations leave the structure relations of $\mt_{h,n}$ given in Definition \ref{2.def:1} invariant, so that $\Sp(2h,\ZZ)$ is an automorphism group of $\mt_{h,n}$.  }

\newpage

\section{Moduli variations of the DHS connection}
\setcounter{equation}{0}
\label{sec:3a}

In this section, we shall begin by briefly reviewing the theory of complex structure deformations on Riemann surfaces in sections \ref{sec:3.aa} to \ref{sec:difdtau}. We then proceed by providing explicit formulas for the variational derivatives of Abelian differentials,  the Arakelov Green function, and the DHS kernels in section \ref{sec:3.4}, and for the modular tensors $\cA$ and $\cD$ in section \ref{sec:3.3}. Combining these results with the vanishing of the variational derivatives of the generators of the Lie algebra $\mt_{h,n}$ discussed in section \ref{sec:3.8}, we obtain the variational derivative of the multivariable DHS connection in \ref{sec:3.9}.

\subsection{Complex structures and Teichm\"uller spaces}
\label{sec:3.aa}

Throughout, we shall consider a smooth orientable compact surface $\Sigma$ of genus $h \geq 1$. An almost complex structure $J$ on $\Sigma$ maps the tangent space $T_p (\Sigma)$ to $\Sigma$ at each point $p \in \Sigma$ to itself $J : T_p(\Sigma) \to T_p(\Sigma)$,  correspondingly maps the tangent bundle $T \Sigma$  to itself $J : T \Sigma \to T \Sigma$, and squares to minus the identity map so that $J$ is an automorphism of both $T_p (\Sigma)$ and  $T \Sigma$.  Its integrability being automatic, $J$ defines a complex structure on $\Sigma$, so that $\Sigma$ equipped with $J$ is a Riemann surface.  A complex structure $J$ splits the (complexified) tangent  bundle into holomorphic and anti-holomorphic components denoted $T^{(1,0)}_J \Sigma$ and $T^{(0,1)}_J \Sigma$ on which the eigenvalue of $J$ is $+i$ and $-i$, respectively, 
\bea
\label{3.b.1}
T_\CC \Sigma = T^{(1,0)}_J \Sigma \oplus T^{(0,1)}_J \Sigma
\eea
and similarly for the cotangent bundle; we shall often  use the notation,
\bea
\label{not:KJ}
K_J =T^{*(1,0)}_J \Sigma \hskip 1in  \overline K_J =T^{*(0,1)}_J \Sigma
\eea
for their dual bundles. This splitting depends upon the complex structure, as indicated by the subscript $J$.  The total differential $d$ splits accordingly, 
\bea
\label{3.b.1a}
d = \p _J + \bar \p_J
\eea
and this splitting again depends on $J$, as indicated by the subscript $J$.

\sm

A complex structure on $\Sigma$ is equivalent to a conformal structure on $\Sigma$ which, in turn, is specified by a Riemannian metric $g$ modulo Weyl transformations $ g \to e^\phi \, g$. In real local coordinates $\xi^m$ with $m \in \{1,2\}$, the metric may be expressed as $g = g_{mn} d\xi^m d \xi^n$ while the complex structure $J$ is given by \cite{Belavin:1986cy, DHoker:1988pdl},
\bea
\label{3.b.2}
J_m{}^n = \ep_{mp} \sqrt{\det g} \, g^{pn} 
\eea
where $g^{pn}$ are the components of the inverse of $g$ so that $g_{mn} g^{np} =\delta ^p_m$ while $\ep_{mn}=-\ep _{nm}$ with $ \ep_{12}=1$ specifies the orientation of $\Sigma$. In these coordinates, the components $\p_J$ and $\overline \p_J$ of the splitting of the total differential $d = \p_J + \overline \p_J$ introduced in (\ref{3.b.1a})  are given by,
\bea
\label{3.b.4old}
\p_J = \half d \xi^m \Big (  \p_m - i J_m{}^n \p_n \Big )
\hskip 1in 
\overline \p_J = \half d \xi^m \Big (  \p_m + i J_m{}^n \p_n \Big )
\eea
with $\p_m = \p / \p \xi^m$. The local coordinates $\xi^m$ are arbitrary and may be chosen to be independent of the  complex structure. Instead, however, we may choose \textit{local coordinates adapted to the complex structure $J$} in which both $J$ and the metric $g$ take simple forms. Such a choice of coordinates will now depend on $J$. We choose real adapted local coordinates $\xi^m$ in which the components of the complex structure $J$ are given by,
\bea
\label{3.b.3a}
J_1{}^1=J_2{}^2=0 \hskip 1in  J_1{}^2=-J_2{}^1=1
\eea 
and the metric takes the form $g = e^\phi |d\xi^1 + i d \xi^2 |^2$.  Correspondingly, in terms of the complex adapted local coordinates $z = \xi^1 + i \xi^2$ and $\bar z = \xi^1 - i \xi^2$   the components of the complex structure are given by,\footnote{The relation between $J$ of  (\ref{3.b.3a}) and (\ref{3.b.3b}) is $J_{\xi^1 + \eta i \xi^2}{}^{\xi^1 + \eta' i \xi^2} = \half ( J_1{}^1 + i \eta J_1{}^2 - i \eta' J_2{}^1 + \eta \eta' J_2{}^2 ) $.}

\bea
\label{3.b.3b}
J_z{}^z= - J_{\bar z} {}^{\bar z} =i 
\hskip 1in
J_z{}^{\bar z} = J_{\bar z} {}^z=0
\eea  
and the metric takes the form $g = e^\phi |dz|^2$. Equation (\ref{3.b.3b}) reflects the fact that diagonalizing  $J$ of (\ref{3.b.3a}) acting on $T\Sigma$ requires complexification to $T_\CC \Sigma$, thereby extending the validity of (\ref{3.b.4old}) to arbitrary real or complex local coordinates. The differential operators $\p_J$ and $\overline \p_J$ reduce to the familiar expressions $\p_J = dz \, \p_z $ and $\bar  \p_J = d \bar z \, \pbz$ so that locally holomorphic functions $f$ satisfy the familiar Cauchy-Riemann equation $\bar \p_J f=0$. 
 
\sm

We denote by $\CS(\Sigma)$ the infinite-dimensional complex manifold of complex structures on $\Sigma$. The infinite-dimensional Lie group $\mathrm{Diff}_0(\Sigma)$ of diffeomorphisms of $\Sigma$ that are continuously connected to the identity naturally acts on $\CS( \Sigma)$. The Teichm\"uller space  of genus $h$ 
is obtained from  $\CS(\Sigma)$ by the quotient,
\bea
\label{3.b.4}
\mathcal T_h=\CS(\Sigma)/\mathrm{Diff}_0(\Sigma)
\eea
The  Teichm\"uller space $\cT_h$ is a contractible finite-dimensional K\"ahler manifold, whose dimensions are $\dim_\CC \cT_1 =1$ and $\dim_\CC \cT_{h}=3h-3$ for $h\geq 2$ (see, for example,  \cite{Hubbard} and references therein).  The manifold structure of $\mathcal T_h$ may be obtained out of \eqref{3.b.4} since the action of $\mathrm{Diff}_0(\Sigma)$ is free, continuous, proper, and with local sections \cite{earle:eells}.

\subsection{Variations of complex structure}
\label{sec:3.a}

An infinitesimal variation of the complex structure $J$ may be parametrized by,
\bea
 J_m {}^n  \to J_m {}^n  + \otherep \, \delta J_m {}^n 
 \eea 
subject to the condition that the deformed complex structure squares to minus the identity to leading order in the infinitesimal parameter $\otherep$, 
\bea
\label{3.c.1}
J_m{}^p \, \delta J_p {}^n + \delta J_m {}^p \, J_p {}^n=0
\eea
In terms of the complex adapted local coordinates in which $J$ takes the form of (\ref{3.b.3b}), this condition requires $\delta J_z{}^z=\delta J _{\bar z} {}^{\bar z} =0$. Recall from (\ref{3.b.1}) that $J$ splits the complexified tangent bundle $T_\CC \Sigma$  into two sub-bundles $T^{(1,0)}_J \Sigma $ and $ T^{(0,1)}_J \Sigma$  on which $J$ takes the eigenvalues $+ i$ and $-i$, respectively.
Complex conjugation  exchanges the sub-bundles $T^{(1,0)}_J\Sigma$ and $T^{(0,1)}_J \Sigma$ of 
$T_{\mathbb C}\Sigma$, as well as their duals.

\sm

The infinitesimal variation $\delta J$ is a linear map from the complexified tangent bundle $T_\CC \Sigma$ to itself which can be equivalently described as the pair $(\delta J_{\bar z} {}^z, \delta J_z {}^{\bar z})$ of mutually complex conjugate maps between the two sub-bundles, 
\bea
\delta J : T_\CC \Sigma \to T_\CC \Sigma 
\hskip 1in 
\begin{cases} \delta J_{\bar z} {}^z : T^{(1,0)}_J \Sigma  \to  T^{(0,1)}_J \Sigma \cr
\delta J_z {}^{\bar z} : T^{(0,1)}_J \Sigma  \to  T^{(1,0)}_J \Sigma \end{cases}
\eea
The maps $\delta J_{\bar z} {}^z$ and $\delta J_z {}^{\bar z}$ may also be viewed as sections of the bundles $T^{(1,0)}_J \Sigma \otimes T_J^{* (0,1)}\Sigma$ and $T^{(0,1)}_J \Sigma \otimes T_J^{* (1,0)}\Sigma$, respectively, and are referred to as \textit{Beltrami differentials} $\mu$ and $\bar \mu$. In adapted local complex coordinates they are defined by,
\begin{align}
\label{3.c.1a}
\mu & = \mu(z) \, d\bar z/ dz & \delta J_{\bar z}{}^z & = - 2 i \, \mu (z) 
\no \\
\bar \mu & = \overline{\mu(z)} \, dz/ d \bar z & \delta J_{z}{}^{\bar z} & = + 2 i \, \overline{\mu (z)} 
\end{align} 
The normalization factors are conventional and have been chosen so that, under a variation $\delta J$ of the complex structure $J$, the metric $g= e^\phi |dz|^2$ associated with $J$ maps to the metric $g' = e^{\phi'} |dz - \otherep \mu(z) d\bar z|^2$ associated with $J +\epsilon \delta J$. Thus, the complex structure variation $\delta J$ is equivalent to a pair $(\mu, \bar \mu)$ of a Beltrami differential and its complex conjugate and the (real) tangent space $T_J ( \CS(\Sigma)  )$ to $\CS(\Sigma)$ is generated by all such pairs $(\mu, \bar \mu)$.

\sm

The (complexified) tangent space to  $\CS(\Sigma)$ at $J$ splits into its holomorphic and anti-holomorphic components,
\bea
\label{3.c.1b}
T_J \big (  \CS(\Sigma) \big ) \otimes \CC = T^{(1,0)}_J \big ( \CS(\Sigma) \big ) \oplus 
T^{(0,1)}_J \big ( \CS(\Sigma) \big )
\eea
Its component $T^{(1,0)}_J  ( \CS(\Sigma)  )$ and $T^{(0,1)}_J  ( \CS(\Sigma)  )$ are generated by the pairs of Beltrami differentials $(\mu, 0)$ and $(0,\bar \mu)$, respectively, namely,\footnote{The notation $\Gamma(M,V)$ stands for the space of smooth global sections of a bundle $V$ over a manifold~$M$.}
\bea
\label{3.c.2}
T^{(1,0)}_J \big ( \CS(\Sigma) \big ) & = & \Gamma \big ( \Sigma,T^{(1,0)}_J  \Sigma \otimes T^{*(0,1)}_J \Sigma \big )
\no \\
T^{(0,1)}_J \big ( \CS(\Sigma) \big ) & = & \Gamma \big ( \Sigma,T^{(0,1)}_J  \Sigma \otimes T^{*(1,0)}_J \Sigma \big )
\eea
relative to the $(1,0)\oplus(0,1)$ decomposition of  (\ref{3.b.1}). We also use the notation
\bea
\label{not:BD}
\mathrm{BD}_J:=\Gamma \big ( \Sigma,T^{(1,0)}_J  \Sigma \otimes T^{*(0,1)}_J \Sigma \big )
\eea

\subsection{The variational derivative}
\label{sec:intdww}

In the sequel, the only variations in the complex structure that will be needed to solve the partial globalization problem  are along the component $T^{(1,0)}_J \big ( \CS(\Sigma) \big ) $, and we shall henceforth restrict our discussion to this case. 

\sm

The corresponding variation of an arbitrary function $\phi$ of $\CS(\Sigma)$ under a variation of $J$ given by a Beltrami differential $\mu$ is then given by pairing $\mu$ against a $(2,0)$-form $q$ which is a section of the bundle $T^{* (1,0)}_J  \Sigma \, \otimes  T^{* (1,0)}_J \Sigma$. In local complex coordinates the quadratic differential is given by $q= q_{ww} (w) dw^2$ and the pairing takes the form,\footnote{This pairing is non-degenerate. The normalization by $\pi$  will conveniently simplify many formulas in the sequel. Henceforth, we shall use the local coordinates $w,\bar w$ to express the dependence of the Beltrami differentials on $\Sigma$ for added clarity, unless otherwise indicated.}
\bea
\label{3.c.30}
\< \mu | q \> = { 1 \over \pi} \int _\Sigma d^2 w \, \mu(w) \, q_{ww}(w)
\eea
The variation $\delta _\mu \phi$ of $\phi$ under a variation $\delta J _{\bar z} {}^z$ of the complex structure given by the first line in (\ref{3.c.1a}),  is defined by, 
\bea
\label{3.c.3}
\delta_\mu \, \phi  = 
{ \p \over \p \otherep}  \phi(J + \epsilon \, \delta J) \Big | _{\otherep = 0} 
\eea
The linear pairing defines a $(2,0)$ form $\delta_{ww} \phi$ by the relation, 
\bea
\label{3.c.40}
\delta_\mu \phi = \< \mu | \delta _{ww} \phi \> = { 1 \over \pi} \int _\Sigma d^2 w \, \mu(w) \, \delta_{ww} \phi 
\eea 
The $(2,0)$ form $\delta_{ww} \phi$ is referred to as the \textit{variational derivative of $\phi$} at the point $w \in \Sigma$. Reformulating (\ref{3.c.40}) in terms of variations of the metric, and taking the normalizations of (\ref{3.b.1}), (\ref{3.c.1a}) and $\ep_{z \bar z} = { i \over 2}$ into account gives,
\bea
\label{3.c.40a}
\delta_\mu \phi =  { 1 \over 4 \pi} \int _\Sigma d^2 w \, \sqrt{\det g} \, \delta g^{ww}  \, \delta_{ww} \phi 
\eea 
from which we recover the familiar expression \cite{Friedan:1982is,Verlinde:1986kw} (see also \cite{Bernref2}), 
\bea
\label{3.c.40b}
\delta _{ww} = { 4 \pi \over \sqrt{\det g} } { \delta \over \delta g^{ww}}
\eea
in terms of the functional derivative with respect to the metric on $\Sigma$. Here, the sub/superscript specify both the point $w$ at which the functional derivative is evaluated, and the fact that the result of the functional derivative is a $(2,0)$ form. The variational derivative satisfies the functional version of Schwarz's theorem, \footnote{The identity may be proven by observing that the relation is ultra-local, namely proportional to $\delta(v,w)$, upon which the remaining  components $g^{ww}$ and $g^{vv}$ at $w=v$ are equal to one another so that their derivatives commute with one another. A more detailed and more rigorous proof will be given in a subsequent paper in this series. } 
\bea
\delta _{vv} \delta _{ww} - \delta _{ww} \delta _{vv}=0
\label{comdww}
\eea

\sm

The variations of the differentials $\p_J$ and $\bar \p_J$ of (\ref{3.b.4old}) are similarly evaluated using a system of real coordinates $\xi^m$ that are independent of the complex structure to obtain, 
\bea
\label{3.c.5}
\delta_\mu \, \p_J & = & - \mu \, \p_J 
\no \\
 \delta_\mu \, \bar \p_J & = &  \mu \,  \p_J 
\eea
The corresponding variation of the local coordinate derivatives on scalars is readily obtained from these relations using the relation $\delta_\mu dz = - \mu (z) d\bar z$, and is given by,
\bea
\label{muder1a}
\delta_\mu \, \p_z & = &0 
\no \\
\delta_\mu \, \pbz & = & \mu(z) \p_z
\eea 
The extension of these relations to the case of coordinate derivatives acting on the coefficient functions of forms of more general type is well-known and given by \cite{DHoker:1988pdl, Friedan:1982is,Verlinde:1986kw}, 
\begin{align}
\label{muder1}
\delta_\mu \p_z & =  0 & \hbox{for } (0,n) \hbox{ forms}
\no \\
\delta _\mu \pbz  & =   \mu (z) \p_z + n \big (\p_z \mu (z) \big ) & \hbox{for } (n,0) \hbox{ forms}
\end{align}
A proof of the second line of (\ref{muder1}) can be found in appendix \ref{app:new}.

\subsection{Diffeomorphisms and derivatives with respect to moduli}
\label{sec:difdtau}

We shall mostly be interested in functions on  the space of complex structures $\CS(\Sigma)$ that are invariant under the group of diffeomorphisms ${\rm Diff}_0(\Sigma)$. This invariance reduces their dependence on the complex structure  to their dependence on Teichm\"uller space, as discussed in section~\ref{sec:3.aa} above (see also  section I.E in \cite{DHoker:1988pdl}).

\sm

The Lie algebra of $\mathrm{Diff}_0(\Sigma)$ is the space $\mathfrak{vec}(\Sigma)$ of all smooth vector fields on $\Sigma$. For each $J\in\ \CS (\Sigma)$, the action of $\mathrm{Diff}_0(\Sigma)$ on $\CS(\Sigma)$ induces a linear map $\mathfrak{vec}(\Sigma)\to T_J^{(1,0)}\CS(\Sigma)$ into the space of Beltrami differentials, given by the following $J$-dependent composition, 
\bea
\label{3.f.1}
\mathfrak{vec}(\Sigma) 
\to 
\Gamma \big ( \Sigma,T^{(1,0)}_J \Sigma \big ) 
\xrightarrow{~ \overline \partial_J ~} 
\Gamma \big ( \Sigma,T^{(1,0)}_J \Sigma \otimes T^{*(0,1)}_J \Sigma \big )
\eea
which we will also denote $\bar \partial_J$. The first map in \eqref{3.f.1} is the inclusion of $\mathfrak{vec}(\Sigma)=\Gamma ( \Sigma, T \Sigma)$ into the space of sections $\Gamma \big ( \Sigma,T_\CC \Sigma \big )$ of the complexified tangent bundle $T_\CC \Sigma$, which is then decomposed according to (\ref{3.b.1}) and projected onto its holomorphic component,  
\bea
\label{3.f.2}
\Gamma \big ( \Sigma,T \Sigma  \big ) 
\subset 
\Gamma \big ( \Sigma,T_\CC \Sigma \big )
=\Gamma \big (\Sigma,T^{(1,0)}_J \Sigma \oplus T^{(0,1)}_J \Sigma \big )
\to \Gamma \big ( \Sigma,T^{(1,0)}_J \Sigma)
\eea
The second map in (\ref{3.f.1}) is given by the $\bar \partial_J$ operator associated with the holomorphic bundle structure of $T^{(1,0)}_J \Sigma $. 

\sm

Translated into concrete formulas, we represent an element of $\mathfrak{vec}(\Sigma)$ by a vector field $v= v^m \p_m$ in real coordinates $\xi^m$, complexify $v$ to $ v(w) \p_w + \bar v(w) \pbw$, project to the holomorphic component $v(w) \p_w$, and map this into the space of Beltrami differentials. The action of $\mathfrak{vec}(\Sigma)$ on the complex structure $J$ is given by the following Beltrami differential,
\bea
\label{3.f.3}
\mu(w) = \pbw v(w)
\eea  
The formula may also be derived from the well-known action of a vector field $v^m$ on the metric, modulo Weyl transformations. Note that, for genus $h \geq 2$, there are no solutions to $\pbw v(w)=0$, namely, $\Sigma$ has no conformal Killing vectors and no continuous symmetries. 

\sm

Viewing a function $\psi$ on Teichm\"uller space $\cT_h$ as a ${\rm Diff}_0(\Sigma)$ invariant function on the space of complex structures $\CS(\Sigma)$, its invariance under $\mathfrak{vec}(\Sigma)$ combined with  the pairing relation (\ref{3.c.40}) implies that, 
\bea
\< \bar \p  v | \delta _{ww} \psi \>=0
\eea
for all $v \in \mathfrak{vec}(\Sigma)$. Since the pairing is non-degenerate, this implies that $\delta_{ww} \psi$ is a \textit{holomorphic quadratic differential}. The (1,0) differential of~$\psi$ on $\cT_h$  corresponds to a $\mathrm{Diff}_0(\Sigma)$-equivariant assignment that maps $J \in \CS(\Sigma)$ to a holomorphic quadratic differential $\delta _{ww} \psi $ given by the relation \eqref{3.c.40} for an arbitrary Beltrami differential $\mu$.  In particular, if $(\tau_1,\cdots ,\tau_\sigma)$ with $\sigma=\dim_\CC (\cT_h)$ is a local system of complex coordinates on $\mathcal T_h$  in a  neighborhood of (the equivalence class $[J]$) of $J$, then the derivative of $\psi$ with respect to a holomorphic modulus $\tau_\a$, for $\a  \in \{ 1,\cdots, \sigma \}$,  is given by, 
\bea
\label{3.f.9}
 \frac{\p }{ \p \tau _ \alpha }  \, \psi
=\< \mu _\a | \delta _{ww} \psi \>
= { 1 \over \pi} \int _\Sigma d^2 w \, \mu_\a (w) \, \delta _{ww} \psi
\eea
where $\mu_\alpha\in T^{(1,0)}_J(\CS(\Sigma))$ is an arbitrary  Beltrami differential representing the vector field $ \partial/\partial\tau_\alpha$ on  the Teichm\"uller space $\cT_h$.

\sm

{\rmk \label{rem:tww}
The problem of computing  moduli variations of an arbitrary function $\phi$ has been reduced to evaluating the  \text{variational derivative} $\delta _{ww} \phi $ which is a $(2,0)$ form in a point $w$ on $\Sigma$. We note that, in two-dimensional conformal field theory, the variational derivative  is obtained by inserting the traceless stress tensor $T_{ww}$ into the expectation value of an arbitrary operator $\cO$ with the following normalization $\delta_{ww} \< \cO \> = \< T_{ww} \cO \>$. This can also be seen from the representation (\ref{3.c.40b}) of $\delta_{ww}$ as a functional derivative with respect to the metric. The splitting of the tangent and cotangent bundles of (\ref{3.b.1}) was articulated in \cite{Friedan:1982is} and used extensively in \cite{Alvarez:1982zi, DHoker:1986eaw}. The use of the traceless stress tensor to vary moduli also dates back to \cite{Friedan:1982is} and was elaborated in \cite{Belavin:1986cy,Verlinde:1986kw} and reviewed in \cite{DHoker:1988pdl}. }

\subsection{Evaluation of auxiliary variational derivatives}
\label{sec:3.4}

In this section, we shall evaluate the variational derivatives of various key ingredients in the DHS connection, including local coordinate derivatives, Abelian differentials, the canonical volume form, the Arakelov Green function, and the DHS integration kernels. 

\sm

Converting the moduli variations of Cauchy-Riemann operators of (\ref{muder1})  into their variational derivatives gives the following relations,
\bea
\label{muder.1}
\delta _{ww} \p_z =0 
& \hskip 1in & 
\delta _{ww} \p_{\bar z} =  \pi \delta (w,z) \p_z +  \pi n \big ( \p_z \delta (w,z) \big )
\eea
We recall that the left formula applies to the derivative $\p_z$ acting on scalars, while the right formula applies to the $\pbz$ derivative acting on the coefficient function of a $(n,0)$ form.

\subsubsection{Abelian differentials}

The formulas of (\ref{muder1}) allow us to compute the complex structure and moduli 
variations of a holomorphic $(1,0)$-form $\om_I= \om_I(z)  \, dz$ as  follows \cite{Verlinde:1986kw} (see also Theorem 3 in \cite{Rauch}),
\bea
\label{3.a.3}
0 = \delta _\mu \,  \big(\pbz \, \om _I (z)  \big) & = & (\delta _\mu \,  \pbz) \, \om _I (z)  + \pbz \big(\delta _\mu \, \om _I (z)  \big) 
\no \\ & = & \p_z \big( \mu  \, \om_I (z) \big) +  \pbz \big(\delta _ \mu \, \om _I (z) \big) 
\eea
 which, using the normalization $\oint _{\mA^J} \delta _\mu \, \om_I=0$,  is solved uniquely as follows,
\bea
\label{3.a.4}
\delta _\mu \, \om_I (z) = { 1 \over \pi} \int _\Sigma d^2 w \, \mu (w)  \,  \om _I(w)  \, \p_z \p_w \ln E(z,w) 
\eea
where $E(z,w) $ is the prime form, see for instance \cite{Faysbook}.
From (\ref{3.a.4}) and (\ref{3.c.40}) we read off the variational derivative $\delta_{ww} \om_I(z)$, and we find,
\bea
\label{3.a.5}
\delta_{ww} \, \om_I (z) =  \om _I(w) \, \p_z \p_w \ln E(z,w) 
\eea
Integrating (\ref{3.a.5}) over a $\mB_J$ cycle gives the variational derivative of the periods,
\bea
\delta_{ww} \Omega _{IJ} = 2 \pi i \, \om_I(w) \om_J(w)
\label{domij}
\eea
while $\delta_{ww} \bar \Omega_{IJ}=0$. Further useful formulas are as follows, 
\begin{align}
\label{muder.2}
\delta _{ww} \, Y_{IJ} & =  \pi  \om_I(w) \om_J(w) & \delta _{ww} \, \om^I(z) & =  - \om^I(w) \p_z \p_w \cG(z,w) 
\no \\
\delta _{ww} \, Y^{IJ} & =  - \pi \om^I(w) \om^J(w)  & \delta _{ww} \, \bar \om_I(z) & =  0
\end{align} 
where $\cG$ is the Arakelov Green function defined by (\ref{2.a.3}). The variational derivative of the volume form $\kappa$ in (\ref{2.a.2}) may be obtained from these formulas and is given~by,
\bea
\label{muder.3}
\delta _{ww} \, \kappa (z) & = & - { 1 \over  h} \om^I(w) \, \bar \om _I (z) \, \p_z\p_w \cG(z,w)
\eea

\subsubsection{The Arakelov Green function}
\label{sec:3.5}

The variational derivative of the prime form was obtained in \cite{Verlinde:1986kw}. The variational derivative of the Arakelov Green function is given by the following proposition. 

{\prop
\label{3.prop:40}
The variational derivative of the Arakelov Green function is given by, 
\bea
\label{muder.4}
\delta _{ww} \, \cG (x,y) & = & - \p_w \cG (w,x) \, \p_w \cG (w,y) + \gamma _{ww} (x) + \gamma _{ww} (y)
\eea
where $\gamma _{ww} (x) $ is given in terms of the DHS kernel $\cG_I(x,y)$ by,
\bea
\label{muder.5}
\gamma _{ww} (x) = - \frac{1}{h} \, \omega^I(w) \, \partial_w \cG_I (w,x)
\eea}

This proposition will be proven in appendix \ref{sec:Arak} using the explicit construction of the Arakelov Green function in terms of Abelian differentials and the prime form in  \cite{DHoker:2017pvk}.

\subsubsection{The DHS kernels}
\label{sec:3.6}

The variational derivatives of the DHS kernels defined by (\ref{2.a.4}) and (\ref{2.a.6}) are given by the relation (29) of \cite{DHoker:2025dhv}, 
\begin{align}
\label{Var.3}
\delta_{ww} \, f_\vI{\, }^J(x,y)  = - \sum_{\vI = \vP \vQ} \p_w \p_x \cG_\vP(x,w)  f_\vQ{}^J(w,y)
\end{align}
where the subscript $\vI = \vP \vQ$ stands for the summation over all possible deconcatenations of the given word $\vI$ into words $\vP$ and $\vQ$ where $\vP$ or $\vQ$ are allowed to be the empty word~$\emptyset$. 
Decomposing this result into trace and traceless parts with respect to the rightmost indices, we obtain, 
\bea
\label{Var.20}
\delta_{ww} \, \p_x \cG_\vI (x,y) & = & 
- \sum_{\vI = \vP \vQ} \p_w \p_x \cG_\vP (x,w) \p_w \cG_\vQ(w,y)
+ { 1 \over h} \p_w \p_x \cG_{\vI J} (x,w) \om^J(w)
 \\
\delta_{ww} \, \p_x \Phi _{\vI K}{}^J(x) & = & 
- \sum_{\vI K = \vP \vQ} \p_w \p_x \cG_\vP (x,w) \p_w \Phi _\vQ{}^J(w)
+ { 1 \over h }  \p_w \p_x \cG_{\vI L}(x,w) \om^L(w) \, \delta ^J _K
\qquad
\no
\eea
where we recall the relations $f_\emptyset {}^J(x,y) = \p_x \Phi _\emptyset {}^J(x) = \om^J(x)$ and $\cG_\emptyset(x,y) = \cG(x,y)$. We shall also make use of the integrated version of these relations, which are given as follows,
\bea
\label{7.c.8}
\delta _{ww} \cG_\vI (x,y) & = &
- \sum_{\vI = \vP \vQ} \p_w \cG_\vP (x,w) \p_w \cG_\vQ(w,y) - { 1 \over h} \om^L(w) \Big ( \p_w \cG_{L \vI} (w,y) - \p_w \cG_{\vI L } (x,w) \Big ) 
\no \\
\delta_{ww} \Phi _{\vI K} {}^J(x) & = & - \sum_{\vI = \vP \vQ} \p_w \cG_\vP(x,w) \p_w \Phi _{\vQ K} {}^J(w)
- \p_w \cG_{\vI K}(x,w) \om^J(w)
\no \\ && \qquad
+{ 1 \over h} \p_w \cG_{\vI L} (x,w) \om^L(w) \delta^J_K
-{ 1 \over h} \om^L(w) \p_w \Phi _{L\vI K}{}^J (w)
\eea
Since $[\delta_{ww}, \p_x]=0$ by (\ref{muder1}), we recover (\ref{Var.20}) from (\ref{7.c.8}) by differentiating term by term, and the $x$-independent terms on the right side are readily validated by integrating against $\kappa(x)$ and
using the variational derivative (\ref{muder.3}).

\subsection{Variational derivatives of $\cA$ and $\mD$}
\label{sec:3.3}

The modular tensors $\cA$ and $\mD$ were introduced in section \ref{sec:adn}. Their variational derivatives will play a key role in the extension of the DHS connection to holomorphic moduli variations, and are given by the proposition below. 
{\prop
\label{7.prop:2}
The variational derivatives of the modular tensors $\cA$  and $\mD$ for an arbitrary word $\vI = I_1 \cdots I_r$ with $r \geq 0$  are given by,
\bea
\label{7.prop.2a}
\delta_{ww}  \, \cA^M {} _{K \vI L} {\, }^N & = & 
\sum_{K \vI L = \vR \vS} \p_w \Phi _{\theta(\vR)} {}^M (w) \, \p_w \Phi _\vS{}^N(w)
- { 1 \over h} \delta ^M _K \, \om^J(w) \, \p_w \Phi _{J \vI L}{}^N(w)
\no \\ && 
- { 1 \over h} \om^J(w) \, \p_w \Phi _{J \theta(\vI) K }{}^M (w) \, \delta ^N_{L} 
\eea
and 
\bea
\label{7.prop.2b}
\delta_{ww} \mD_\vI  &= &
- \frac{1}{h} \sum_{\vI = \vX \vY \vZ} \Big( \p_w \Phi_{ \vZ M \vX J }{}^M(w) \p_w \Phi_{\theta(\vY)}{}^J(w) 
+ \p_w \Phi_{ \vZ M \vX }{}^J(w) \p_w \Phi_{\theta(\vY) J}{}^M(w) \Big) 
\no \\ && 
+ \frac{1}{h} \om^J(w) \sum_{\vI = \vX \vY} \Big( \om^M(w) \mN_{M \vY J \vX} + \p_w \Phi_{M \vY J \vX}{}^M(w) \Big)  
\no \\ &&
 + \frac{1}{h} \,  \om^J(w) \sum_{\vI = \vX \vY \vZ} \p_w \Phi_{J (\vX \shuffle \theta(\vZ)) M (\vY + \theta(\vY)) }{}^M(w) 
\eea
where the modular tensor $\mN$ is defined in (\ref{defcn}), also see (\ref{A.4}) of appendix \ref{sec:A}.}

\sm

The proof of Proposition \ref{7.prop:2}  is relegated to appendix \ref{sec:BB}. 

\sm

{\rmk Note that the expression for $\delta_{ww}  \, \cA^M {} _{K \vI L} {\, }^N$ on the right side of  (\ref{7.prop.2a}) manifestly has the antipode symmetry of (\ref{7.f.2}). The expression for $\delta_{ww} \mD_\vI$ in (\ref{7.prop.2b}), however,  is not manifestly invariant under $ \vI \to \theta(\vI)$ since early steps of the proof in appendix \ref{sec:BB} involve coincident limits of the Fay identities in (\ref{A.10}) where the symmetry of the left side under $\vI \leftrightarrow \vJ$ is not manifest on the right side. Manifest symmetry of (\ref{7.prop.2b}) may be achieved simply by symmetrizing the right side, though this clearly lengthens  the formula.}

\subsection{Variational derivatives of Lie algebra generators of $\mt_{h,n}$}
\label{sec:3.8}

Modular invariance of the DHS connection $\cJ_\text{DHS}$, which was defined in  Theorem \ref{2.thm:40},  requires the automorphy factors $\tilde Q$, $\tilde R$ of Proposition \ref{2.prop:1} in the modular transformation law of the generators $a_{iI}, b_i^I$ and $t_{ij}$ of $\mt_{h,n}$. As a result, the generators $a_{iI}$ and $b_i^I$ may be viewed as sections of a non-trivial anti-holomorphic vector bundle over Teichm\"uller  space on which the variational derivative $\delta_{ww}$ acts covariantly without the need for a connection, 
\bea
\delta_{ww} \, a_{iI} \to \tilde Q_I {}^K \, (\delta _{ww} \, a_{iK}) 
\hskip 0.8in 
\delta _{ww} \, b_i^I \to (\delta_{ww} \, b_i^K) \tilde R_K{}^I
\eea  
where $\tilde R = \bar \Omega C^t + D^t$ and $\tilde Q = \tilde R^{-1}$ were given in (\ref{modsec.03}). Covariance of the variational derivative allows us to set these derivatives to zero in a manner that is consistent with modular invariance of $\cJ_\text{DHS}$, as summarized by the definition below.

{\deff
\label{3.def:3}
We define the variational derivative $\delta_{ww}$ of the generators of the Lie algebra $\mt_{h,n}$ to vanish,
\bea
\delta_{ww} \, a_{iI} =0 
\hskip 1in 
\delta _{ww} \, b_i^I =0 
\hskip 1in 
\delta_{ww} \, t_{ij}=0
\eea
This assignment is consistent with the structure relations of $\mt_{h,n}$.}

\subsection{Variational derivative of the DHS connection}
\label{sec:3.9}

To evaluate the variational derivative of $\cJ_\text{DHS}$ given in (\ref{2.b.5}) in terms of generating functions, we begin by evaluating the variational derivative of the generating functions $\bPhi$ and $\bG$. To do so we exploit the vanishing variational derivatives of the generators of $\mt_{h,n}$ by Definition \ref{3.def:3}. The result may be readily obtained by recasting the relations of (\ref{Var.20}) in terms of generating functions $\bPhi$, $\bG$ in (\ref{2.b.3a}), (\ref{2.b.3b}), and we obtain, 
\bea
\label{3.d.1}
\delta_{ww}  \bG(x,y;B)
& = & - \p_x \theta \bG(w,x;B) \bG(w,y;B)  + { 1 \over h} \om^K(w) \sum_{\vI} \p_w \p_x \cG_{\vI K} (x,w)  B^\vI
\no \\
\delta _{ww} \bPhi ^J(x;B) 
& = & 
- \p_x \theta \bG(w,x;B) \bPhi ^J(w;B) + { 1 \over h} \om^K(w) \sum_{\vI} \p_w \p_x \cG_{\vI K} (x,w)  B^{\vI J} 
\qquad
\eea
where $\theta$ is the antipode, whose definition was given in footnote \ref{other.4}, and whose action is on the function that it immediately precedes.  Combining these variational derivatives to obtain the one for $J_i^{(1,0)}$, we use the fact that the  second term of both lines in (\ref{3.d.1}) cancel one another, so that we get, 
\bea
\label{3.d.2}
\delta_{ww} \, J_i^{(1,0)} (\xx) = - \p_i \theta \bG(w,x_i;B_i) \, J_i^{(1,0)} \big(\xx_i(w) \big) 
\eea
where $\xx_i(w)$ is obtained from $\xx$ by substituting $w$ for $x_i$, leaving all other entries unaltered,
\bea
\label{3.d.3}
\xx_i(w) = (x_1, \cdots, x_{i-1} , w, x_{i+1}, \cdots, x_n)
\eea
Furthermore, in view of the vanishing variational derivative $\delta_{ww}$ of both $\bar \om_I(x_i)$ and $b^I_i$, we obtain the vanishing of the $(0,1)$ component of $\cJ_\text{DHS}$ in (\ref{defj01}), namely,
\bea
\label{3.d.4}
\delta _{ww} \, J_i^{(0,1)} (\xx) =0
\eea
Putting it all together, we obtain, 
\bea
\label{3.d.5}
\delta_{ww} \, \cJ_\text{DHS} (\xx) = - \sum _i dx_i \, \p_i \theta \bG(w,x_i;B_i) \, J_i^{(1,0)} \big(\xx_i(w) \big) 
\eea
In the next sections, we shall use these results to extend $\cJ_\text{DHS}$ to a flat connection with components in the holomorphic moduli directions.

\newpage

\section{Extending the DHS connection by varying moduli}
\label{sec:4}
\setcounter{equation}{0}

The main goal of this paper, and the topic of the present and subsequent sections,  is to provide an explicit construction of a local flat connection that extends the multivariable DHS connection $\cJ_\text{DHS}$ on the configuration space ${\rm Cf}_n (\Sigma)$ of $n$ variables on a fixed Riemann surface $\Sigma$ by including holomorphic variations in the complex structure of $\Sigma$. 

\subsection{Choice of local sections and strategy}
\label{sec:46.1}

Fix a point $\btau_0 \in \cT_h$ in Teichm\"uller space and a complex structure $J_0\in\mathrm{CS}(\Sigma)$ above $\btau_0$. 
As explained in section \ref{sec:1.1a}, a central role is played in Theorem \ref{1.thm:main} by local sections 
\bea
\label{46.a.1}
s : \mathcal T_h\to\mathrm{CS}(\Sigma)
\eea
such that $s(\btau_0)=J_0$.

\sm

To a local section $s$, given in (\ref{46.a.1}), and a point $\btau \in \cT_h$, is attached the space of  {\it Beltrami differentials} ${\rm BD}_{s,\btau}$ defined as the image of  the differential $d_\btau s : T_\btau\mathcal T_h\to \mathrm{BD}_{s(\btau)}$ using the notation \eqref{not:BD}.  A basis $(e_\alpha)_\alpha$ of $T_\btau\mathcal T_h$ being given, the space ${\rm BD}_{s,\btau}$ associated with  $(s,\btau)$ is spanned by the Beltrami differentials $\mu^s _\a$  defined~by, 
\begin{equation}
\label{BDs:section}
\mu_\alpha^{s,\btau} =(d_\btau s)(e_\alpha)\in \mathrm{BD}_{s(\btau)}
\end{equation}
where $\a \in \{ 1, \cdots , \dim \cT_h\}$.

\sm

In order to be in a position where we can easily translate the problem in terms of coordinates,  we will work not with just one but with several sections. For each tuple 
of open subsets $\underline U=(U_1,\cdots,U_n)$ of $\Sigma$ such that 
the union of their closures does not cover $\Sigma$, namely $\cup_i\overline U_i\neq \Sigma$,\footnote{The relevance of this condition will be made clear in section \ref{sect:bers}.}  we will construct a section $s_{\underline U}$, 
with the property that the corresponding Beltrami differentials satisfy, 
\bea
\label{vanishing:cond}
\mu_\a ^{s_{\underline U},\btau} (x) =0 \quad \hbox{ for all }\btau\hbox{ and } x \in U_i \hbox{ with } i \in \{1, \cdots, n\}
\eea
and all $\a\in \{ 1, \cdots , \dim \cT_h\}$. The vanishing condition (\ref{vanishing:cond}) for the Beltrami differentials is advantageous to prevent a mixing of $\partial_{x_i}$ and $\partial_{\bar x_i}$ in the differential equations of DHS kernels and related quantities under complex structure variations. The existence of sections that enable (\ref{vanishing:cond}) as well as their construction are discussed in section~\ref{sect:bers} below.

\sm

In sections \ref{sec:gfc} and \ref{sec:locf} through Lemma \ref{7.lem:5a},  we shall work locally in the following sense. We fix a collection of tuples of open sets 
$\underline U=(U_1,\cdots,U_n)$, work with $\mathcal J_{\mathrm{DHS}}^{s_{\underline U}}$, 
and study the 
system of partial differential equations satisfied by the extension of $\mathcal J_{\mathrm{DHS}}^{s_{\underline U}}$
by choosing  coordinates $(\xx |\btau)$ on $U_1\times \cdots \times U_n\times \mathcal T_h$,  where $\xx= (x_1, \cdots, x_n)$ is a set of complex coordinates  for~$U_1\times...\times U_n$  and $\btau = (\tau_1, \cdots,  \tau_\sigma)$ is a set of complex coordinates on $\cT_h$. Suppressing the superscript  $s_{\underline U}$ for clarity, the problem to be solved in those sections will therefore be  the construction of a $(1,0)$-form component $\cL(\xx|\btau)$ in $d\tau_\alpha$ directions 
 such that the connection, 
\bea
\label{7.0.x}
\cJ (\xx| \btau) = \cJ_\text{DHS} (\xx| \btau) + \cL(\xx|\btau)
\eea
satisfies the following flatness condition, 
\bea
\label{7.0.a}
(d_\xx + \p _\btau ) \cJ- \cJ \wedge \cJ=0
\eea
expressed in terms of  the total differential $d_\xx$ on ${\rm Cf} _n(\Sigma)$ defined after (\ref{2.a.1b}) and the differential in holomorphic moduli $\p _\btau = \sum _\a d\tau _\a \, \p_{\tau_\a}$ (excluding the anti-holomorphic variations $ \sum _\a d \bar \tau _\a \, \p_{\bar \tau_\a}$). 
The flatness condition (\ref{2.a.1b}) satisfied by  $\cJ_\text{DHS}$ together with the flatness condition (\ref{7.0.a}) satisfied by $\cJ$ imply the differential equation for the component $\cL$,
\bea
\label{7.0.b}
(d_\xx + \p _\btau )  \cL + d _\btau \, \cJ_\text{DHS} - \cJ _\text{DHS} \wedge \cL 
- \cL  \wedge  \cJ _\text{DHS} - \cL \wedge \cL =0
\eea 
This differential equation (in the variables $\xx, \bar \xx$, and $\btau$) satisfies the Frobenius integrability conditions, as  may be shown by applying the  differential $d_\xx + \p _\btau $ to the left side of (\ref{7.0.b}), eliminating $(d_\xx + \p _\btau )  \cL$ using (\ref{7.0.b}), and using the flatness condition (\ref{2.a.1b}).

\sm

While integrability of (\ref{7.0.b}) guarantees the existence of local solutions for $\cL$, it does not guarantee that these local solutions extend to regular global single-valued solutions on the Teichm\"uller space $\cT_{h,n}$ of compact Riemann surfaces of genus $h$ with $n$ marked points.  In the remainder of this section, we shall pinpoint necessary conditions for the existence of a single-valued and modular invariant solution for $\cL$ and we shall exhibit the solution to these necessary conditions explicitly in terms of DHS kernels.

\subsection{Construction of local sections}
\label{sect:bers}

In this subsection, we shall provide an explicit construction of the local sections $s: \cT_h \to \CS(\Sigma)$ in (\ref{46.a.1}), subject to the conditions of (\ref{vanishing:cond}) on their Beltrami differentials,  for a given tuple 
$\underline U=(U_1,\cdots,U_n)$ of open subsets of $\Sigma$ with $\cup_i\overline U_i\neq\Sigma$. This will justify the possibility to impose the vanishing conditions (\ref{vanishing:cond}) on Beltrami differentials that facilitate the computations in the remainder of this work.

\sm

The space $\mathrm{CS}(\Sigma)$ of complex structures may be identified with the space $\mathcal P (\Sigma)$ of endomorphisms $p$ of  the complexified tangent bundle $T_\CC \Sigma$, such that $p^2=p=\overline p$  taking $J$ to $p=(J+i)/(2i)$.   The corresponding bijection between their tangent spaces is induced as follows. 
Recall that ${\rm BD}_J$ is the space of Beltrami differentials $\mu$ at complex structure $J$ 
(see \ref{not:BD}) and denote by $X_p$ the following space,
\bea
X_p = \{ \delta q : p \delta q = \delta q (1-p) =0 \}
\eea
at endomorphism $p \in \cP(\Sigma)$. The tangent spaces to ${\rm CS}(\Sigma)$ and $\cP(\Sigma)$ are  given by,\footnote{In the second line, the direct sum notation is justified by the relation $X_p\cap \overline X_p=0$, 
which follows from  $\overline X_p=X_{1-p}$, and from the fact that $p\delta x=(1-p)\delta x=0$ implies $\delta x=0$.  The equality  follows from $\delta p+\overline{\delta p}=0=(p-1)\delta p+\delta p\cdot p =p\delta p+\delta p(p-1)$, which  implies, with $\delta q=\big ( (1-p)\delta p\big )  p$, both $\delta q\in X_p$ and, since  
$\overline{\delta q}=(1-\overline p)\overline{\delta p}\cdot \overline p=-p\cdot \delta p(1-p)=-p\delta p=\delta p p-\delta p
=(1-p)\delta p\cdot p-\delta p=\delta q-\delta p$, the relation  $\delta p=\delta q-\overline{\delta q}$.}
\bea
T_J \big ( \CS (\Sigma) \big ) & = & {\rm BD}^\Delta _J 
= \big \{ (\mu, - \bar \mu) \in {\rm BD} _J \oplus \overline{{\rm BD}}_J \big  \}
\no \\
T_p \big ( \cP(\Sigma) \big ) & = &  X^\Delta _p ~ \, = \big \{ (\delta q, - \overline{\delta q} ) \in X_p \oplus \bar X_p \big \}
\eea
The bijection between tangent spaces results from the above expressions and the bijection $\mathrm{BD}_{J}=\mathrm{Hom}(K_J,\overline K_J)\to X_p$ (homomorphisms of smooth bundles over $\Sigma$) induced by  $\mu\mapsto i_{\overline K_J}\mu p_{K_J}$. Here, $K_J=T^{\ast(1,0)}\Sigma$  is the canonical bundle of $\Sigma$ 
(see \ref{not:KJ}) and $p_{K_J}$ and $i_{\overline K_J}$ are the projection and injection of  
$T_\CC^*\Sigma$ to and from the relevant summands of its decomposition $K_J\oplus \overline K_J$.

\sm

For an arbitrary  $J_0\in \mathrm{CS}(\Sigma)$, we set $K_0=K_{J_0}$ and define a map, 
\bea
e:U_0\to  \mathrm{CS}(\Sigma) \hskip 1in U_0 \subset \mathrm{BD}_{J_0}^\Delta
\eea
on the subset $U_0$ of Beltrami pairs $(\mu,-\overline\mu)\in \mathrm{BD}_{J_0}^\Delta$ such that $|\mu\overline\mu|<1$ (recall that $\mu$ is a smooth bundle morphism $K_0\to\overline K_0$, therefore  
$\mu\overline\mu\in C^\infty(\Sigma,\mathbb R_+)$) by the condition that 
$e(\mu,-\overline\mu)\in \mathrm{CS}(\Sigma)$ is the complex structure defined by,  
\bea
p(\mu,-\overline\mu)=
(i_{K_0}+i_{\overline K_0}\mu)(id_{K_0}-\overline \mu\mu)^{-1}(p_{K_0}-\overline\mu p_{\overline K_0})\in\mathcal P(\Sigma)
\eea
where $i_{K_0}$ (resp.\ $p_{K_0}$) is the injection (resp.\ projection) relating $K_0$ and $K_0\oplus\overline K_0$, and similarly for $i_{\overline K_0}$ (resp. $p_{\overline K_0}$). 
 The differential, 
\bea
d_{(\mu,-\overline\mu)} e : \mathrm{BD}_{J_0}^\Delta\to \mathrm{BD}_{e(\mu,\overline\mu)}^\Delta
\eea
is such that 
\begin{align}
 \mathrm{BD}_{J_0}^\Delta\ni (\delta\mu,-\overline{\delta\mu})&\mapsto 
\big((i_{\overline K_0}+i_{K_0}\overline\mu)\delta\mu(id_{K_0}-\overline\mu\mu)^{-2}(p_{K_0}-\overline\mu p_{\overline K_0}),\\ & \quad \quad
(i_{K_0}+i_{\overline K_0}\mu)\overline{\delta\mu}(id_{\overline K_0}-\overline\mu\mu)^{-2}(\mu p_{K_0}-p_{\overline K_0}) \big)
\in X_{p(\mu,-\overline\mu)}^\Delta\simeq \mathrm{BD}_{e(\mu,\overline\mu)}^\Delta\notag 
\end{align}
It is therefore a $\mathbb C$-linear map. Its composition with a $\mathbb C$-linear section of the projection 
$\mathrm{BD}_{J_0}^\Delta\to T_{[J_0]}\mathcal T_h$ is then a locally defined holomorphic map 
$T_{[J_0]}\mathcal T_h\to \mathrm{CS}(\Sigma)$.\footnote{The classical literature (see for example \cite{Bers:bullLMS, EE:DG}) contains other examples of local sections of the  projection $\mathrm{CS}(\Sigma)\to\mathcal T_h$, 
however, global sections of this projection are known not to exist in general \cite{Earle:cross}.}  Let 
$\mathrm{BD}_{J_0}^{\underline U}\subset \mathrm{BD}_{J_0}$ be the subspace of Beltrami differentials which vanish on $\cup_i U_i$. By the assumption on $\underline U$, the restriction  
$\mathrm{BD}_{J_0}^{\underline U}\to T_{[J_0]}\mathcal T_h$ is surjective. Fix a section $\sigma_{\underline U}$ 
of it. Then $s_{\underline U}=e\circ \sigma_{\underline U} : T_{[J_0]}\mathcal T_h\to\mathrm{CS}(\Sigma)$ is such that the Beltrami differentials at $(\mu,-\overline\mu)$ are $\mu_\alpha$ corresponding~to, 
\bea
(i_{\overline K_0}+i_{K_0}\overline\mu)\sigma_{\underline U}(e_\alpha)(id_{K_0}-\overline\mu\mu)^{-2}(p_{K_0}-\overline\mu p_{\overline K_0})
\label{ik0eq}
\eea
which vanishes on $\cup_i U_i$, as announced. 

\sm

For future use, let us note here that local sections $s_{\underline U},s_{\underline U'}$ 
relative to different tuples of open subsets are related by, 
\bea
s_{\underline U'}(\btau)=s_{\underline U}(\btau)
\phi_{\underline U\underline U'}(\btau)
\eea
where $\phi_{\underline U\underline U'} : \mathcal T_h\to\mathrm{Diff}_0(\Sigma)$ is a smooth map, with $\phi_{\underline U\underline U'}(\btau_0)=1$.  For the remainder of this section, until before Lemma \ref{7.e.2A}, 
we will choose and install on $\mathrm{Cf}_n(\Sigma)  \times \mathcal T_h$ the vertical connection 
$\cJ_\text{DHS}^{\underline U}$, whose restriction to $\mathrm{Cf}_n(\Sigma) \times  \{\btau\}$ corresponds  to the complex structure $s_{\underline U}(\btau)$ over $\Sigma$; 
and furthermore restrict our study to the product of $U_1\times\cdots\times U_n$
with the defining set of $s_{\underline U}$, parametrized by $(\xx|\btau)$.

\subsection{The generalized flatness conditions}
\label{sec:gfc}

The flatness conditions of $\cJ$ in (\ref{7.0.x}) consist of the flatness conditions of $\cJ_\text{DHS}$ at fixed moduli, given in (\ref{2.b.3}), supplemented by the conditions in (\ref{7.0.b}) that involve the components $\LL_\a$ in the directions of holomorphic moduli variations,
\bea
\label{7.0c}
\cL (\xx|\btau) =   \sum_{\a =1} ^{\dim \cT_h}  \LL_\a (\xx| \btau) d \tau_\a
\eea
or moduli derivatives $\p_{\a} = \p_{\tau_\a}$ of $\cJ_\text{DHS}$,\footnote{As an illustration of the convenience of the vanishing condition  (\ref{vanishing:cond}) on the Beltrami differentials, we note that the middle line of (\ref{7.a.0}) would take the following more involved form $ \bar \p_i L_\a - \p_\alpha J_i^{(0,1)} - [J_i^{(0,1)}, L_\a ] + \mu_\a (x_i) \, J_i ^{(1,0)} =0$ for generic choices of the section $s: \mathcal T_h\to\mathrm{CS}(\Sigma)$.}
\bea
\label{7.a.0}
\p_i \LL_\a - \p_\a J_i^{(1,0)} - [ J_i ^{(1,0)} , \LL_\a] & = & 0
\no \\
\bar \p_i \LL_\a - \p_\a J_i^{(0,1)} - [ J_i ^{(0,1)} , \LL_\a] & = & 0
\no \\
\p_\a \LL_\b - \p_\b \LL_\a - [\LL_\a, \LL_\b] & = & 0
\eea
where the second equation follows from \eqref{vanishing:cond}, and the arguments $(\xx| \btau)$ are suppressed for brevity here and below unless indicated otherwise. Integrability of the system of differential equations for $\LL_\a$ in (\ref{7.a.0}) is guaranteed by the integrability of (\ref{7.0.b}) established in the preceding subsection.

\sm

It remains to specify the Lie algebra  in which the connection $\cJ$ takes values. Clearly this algebra must include $\hat \mt_{h,n}$ in which $\cJ_\text{DHS} $ is valued. Two options present themselves. 

\sm

In a first option, the connection $\cJ$ takes values in $\hat \mt_{h,n}$. The components $\LL_\a$ may be obtained by integrating the system of equations (\ref{7.a.0}) with the help of the path-ordered exponential of the connection $\cJ_\text{DHS}$ and its associated polylogarithms \cite{DHoker:2023vax}. This construction produces $\LL_\alpha$ in terms of non-local expressions in DHS kernels which preserve neither the $a$-grading of $\hat \mt_{h,n}$ nor the single-valuedness of $\cJ_\text{DHS}$ on ${\rm Cf}_n(\Sigma)$.

{\setup 
\label{setup:1}
The second option is the one we shall choose and is formulated as follows. The connection takes values in the  Lie algebra $\der(\hat \mt_{h,n})$ of derivations that contains $\hat \mt_{h,n}$ as a  non-trivial subalgebra and inherits the bi-grading of $\hat \mt_{h,n}$ by $a$- and $b$-degrees in Definition~\ref{2.def:1}.
The connection is assumed to preserve  the $a$-grading  by assigning $a$-degree one to $\LL_\a$ and splitting the flatness conditions of (\ref{7.a.0}) into their parts of fixed $a$-degree,
\begin{subequations}
\label{7.a.1}
\begin{align}
\p_i \LL_\a - \p_\a J_i^{(1,0)} & = 0 
\label{7.a.1a} \\ 
\p_\alpha \LL_\b - \p_\b \LL_\a & =  0
\label{7.a.1b} \\
 [ J_i ^{(1,0)} , \LL_\a] & = 0
\label{7.a.1c} \\
[\LL_\a, \LL_\b] & =0
\label{7.a.1d} \\
 \p_\a J^{(0,1)}_i & = 0 
\label{7.a.1e} \\
 \bar \p_i \LL_\a  - [ J_i ^{(0,1)} , \LL_\a]  & =  0
\label{7.a.1f} 
\end{align}
\end{subequations}
In the sequel, we shall show that the system of equations (\ref{7.a.1}), in which $\der(\hat \mt_{h,n})$, $J^{(1,0)}_i$ and $J^{(0,1)}_i$ are considered as given and $\LL_\a$ as the unknowns,  may be solved for functions $\LL_\a$ that are single-valued on $\cT_{h,n}$ and may be expressed  in terms of DHS kernels. In particular, assuming $L_\alpha$ to have $a$-degree one will lead to a unique solution of (\ref{7.a.1}).}

{\rmk  The second option generalizes the construction for genus one in terms of meromorphic and multiple-valued Kronecker-Eisenstein integration kernels  \cite{CEE} and Tsunogai's derivation algebra \cite{Tsuongai:1995}. Here, by contrast, the connection will be single-valued and modular invariant but non-meromorphic. Its restriction to genus one will be presented in explicit form in section~\ref{sec:8}.}

\subsection{Local formulation of the flatness conditions}
\label{sec:locf}

We shall represent the derivative of a function~$f$ with respect to a modulus $\tau_\a$ in terms of a pairing of the variational derivative $\delta_{ww} f$ of section \ref{sec:intdww} against the corresponding Beltrami differential $\mu_\a(w)$ (see section II.E in \cite{DHoker:1988pdl}). We shall express the components $\LL_\a$ of $\cJ$ with the help of these Beltrami differentials as well,
\bea
\label{7.a.2}
\p_\alpha f = { 1 \over \pi} \int _\Sigma d^2 w \, \mu_\a(w) \delta_{ww} f
\hskip 1in 
\LL_\a = { 1 \over \pi} \int _\Sigma d^2 w \, \mu_\a(w) \LL_{ww} 
\eea 
where $\mu_\a(w)$'s are given by \eqref{BDs:section}.
The flatness conditions of (\ref{7.a.1a}) and (\ref{7.a.1b}) may be expressed in terms of the variational derivative $\delta_{ww}$ and the $(2,0)$-form components $\LL_{ww}$, 
\begin{subequations}
\label{7.a.3}
\begin{align}
\p_i \, \LL_{ww} - \delta_{ww} \, J_i^{(1,0)}  &= 0 
\label{7.a.3aa} \\
\delta_{vv} \, \LL_{ww} - \delta_{ww} \, \LL_{vv}  & =  0
\label{7.a.3bb}
\end{align}
\end{subequations}
while the flatness conditions of (\ref{7.a.1c}) and (\ref{7.a.1d}) become two algebraic relations, 
\begin{subequations}
\label{7.a.4}
\begin{align}
{} \big [ J_i ^{(1,0)} , \LL_{ww} \big ]  & = 0
\label{7.a.4aa} \\
{} \big [\LL_{vv}, \, \LL_{ww} \big ]  & =0
\label{7.a.4bb}
\end{align}
\end{subequations}
Equation (\ref{7.a.1e}) is equivalent to,
\bea
\label{7.a.4a}
\delta_{ww} \, J_i^{(0,1)}=0
\eea
which is the relation (\ref{3.d.4}) derived in the previous section. The local form of (\ref{7.a.1f}) is more delicate, as we have to allow for contact terms when $w$ coincides with one of the variables $x_i$. This may be seen by taking the variational derivatives of the last two flatness conditions in (\ref{2.b.3}) which, upon using  (\ref{7.a.3aa}) and (\ref{7.a.4a}), become, 
\bea
\label{7.a.4b}
\p_j \Big ( \bar \p_i L_{ww} - [  J_i^{(0,1)  } , L_{ww} ] + \pi \delta (w,x_i) J_i^{(1,0)} \Big ) =0
\eea
for all $j  \in \{ 1,\cdots, n \}$. The combination inside the parentheses is independent of $x_j$ for $j \not= i$ since it is a scalar in those $x_j$ and is an anti-holomorphic $(0,1)$ form in $x_i$ which will be shown to vanish in the next subsection. As a result, the local version of (\ref{7.a.1f}) is,
\bea
\label{7.a.5}
\bar \p_i \LL_{ww}  - \big [ J_i ^{(0,1)} , \LL_{ww} \big ] + \pi \delta (w,x_i) J_i^{(1,0)} =0
\eea
Equation (\ref{7.a.1f}) is recovered by integrating (\ref{7.a.5}) against Beltrami differentials $\mu_\a(w)$ which are known to  vanish in the open subsets $U_i$'s where the $x_i$'s belong, given the choices made.

\sm

{\setup
\label{setup:2}
The flatness conditions  (\ref{7.a.3}), (\ref{7.a.4}), (\ref{7.a.4a}) and (\ref{7.a.5}) are equivalent to the flatness conditions of (\ref{7.a.1}) stated in Schemata \ref{setup:1}. They will constitute a convenient starting point for the construction of the connection $\cJ$ which we shall undertake in the subsections below and in subsequent sections.}

\subsection{Integrating the differential equations for $\LL_{ww}$}
\label{sec:4.3}

We begin by solving the system of differential equations (\ref{7.a.3aa}) which determines the $x_i$-dependence of $L_{ww}(\xx| \btau)$ from the variational derivative (\ref{3.d.2}) of the components $J_i^{(1,0)}$ in~(\ref{2.b.5}). The general solutions may be decomposed as follows, 
\bea
\label{7.b.1}
L_{ww}(\xx| \btau) = L^0_{ww}( \btau) + L^1_{ww}(\xx| \btau) 
\eea 
where $L_{ww}^0(\btau)$ is independent of $\xx$ and takes values in $\der(\hat \mt_{h,n})$,  while $L^1_{ww}(\xx| \btau)$ is given by the proposition below. 

{\prop
\label{7.prop:1}
A particular  solution for  $L_{ww}^1 (\xx| \btau)$ to the inhomogeneous linear differential equation (\ref{7.a.3aa}) that takes values in $\hat \mt_{h,n}$ is given by, 
\bea
\label{7.b.2}
L_{ww} ^1 (\xx| \btau) = \cV_{ww} (\xx| \btau) + \cV^1_{ww}(\btau)
\eea 
where each term may be expressed in terms of DHS kernels,
\begin{subequations}
\label{7.b.9}
\begin{align}
\cV_{ww} (\xx| \btau) & =  
- \sum_k \theta \bG(w,x_k;B_k) \Big (  \bPhi^J(w;B_k) a_{kJ} 
+ \half \sum_{\ell  \not= k}  \theta \bG(w,x_\ell; B_\ell) t_{k\ell} \Big )
\label{7.b.9aa} \\
\cV^1 _{ww} (\btau)  & =
- \om^J(w)  \sum_k \bigg \{ { 1 \over h} \sum _{\vI \not= \emptyset}  \p_w \Phi _{J \vI}{}^K(w) B_k^ \vI \, a_{kK}  
+ {1\over h-1} \p_w \Phi _J{}^K(w) a_{kK} \bigg \}
\label{7.b.9bb}
\end{align}
\end{subequations}
The expression (\ref{7.b.9bb}) is valid for $h \geq 2$ while we define $\cV_{ww}^1=0$ for $h=1$.}

\sm 

Note that the term $\cV^1_{ww}(\btau)$, which is independent of $\xx$, was included into (\ref{7.b.2}) to render
the particular solution $L_{ww} ^1 (\xx| \btau) $ meromorphic in $w$ as will be proven in Corollary \ref{7.cor:1} below, and curl-free as will be proven in Corollary \ref{7.cor:2}.

\begin{proof}
To prove the proposition,  we use the expression for the variational derivative of $J_i^{(1,0)}$ computed in (\ref{3.d.2}) and the expression  for $J_i^{(1,0)}$ in terms of generating functions of (\ref{2.b.5})  to recast (\ref{7.a.3aa})  in the following form,
\bea
\label{7.b.8}
\p_i \, \LL_{ww} (\xx| \btau) = - \p_i \, \theta \bG(w,x_i;B_i) \, \Big ( \bPhi^J(w; B_i) \, a_{iJ} + \sum_{j \neq i} \theta \bG (w,x_j;B_j) \, t_{ij}  \Big ) 
\eea
Using the decomposition (\ref{2.b.3b}) and (\ref{2.b.3a}) of the generating functions $\bPhi$ and $\bG$ in terms of DHS kernels, one  verifies that $\p_i \cV_{ww}(\xx|\btau)$ obtained from (\ref{7.b.9aa}) equals the right side of  (\ref{7.b.8}) so that $\p_i \big ( L_{ww} - \cV_{ww} \big )=0$. Since $\cV_{ww}$ is manifestly single-valued in $\xx$ and $L_{ww}$ is assumed to be single-valued, the differential equation $\p_i \big ( L_{ww} - \cV_{ww} \big )=0$ on the compact Riemann surface $\Sigma$ is solved by   $L_{ww} - \cV_{ww} $ being independent of $\xx$.  Since $\cV^1_{ww}$ is independent of $\xx$, there exists an $\xx$-independent $(2,0)_w$ form $L_{ww}^0$ such that $L_{ww}$ given by (\ref{7.b.1}) solves~(\ref{7.a.3aa}). 
\end{proof}

{\cor
\label{777.cor:1}
The decomposition of $L_{ww}$ into $L_{ww}^0$ and a $\hat \mt_{h,n}$-valued solution $ L_{ww}^1$ to (\ref{7.a.3aa}) is unique modulo shifts $(L_{ww}^0, L_{ww}^1) \to (L_{ww}^0 + \ell_{ww}, L_{ww}^1 -\ell_{ww})$ for an arbitrary $\xx$-independent $(2,0)_w$-form $\ell_{ww}$ that takes values in $\hat \mt_{h,n}$.}

\begin{proof}
The shift by an arbitrary $\xx$-independent $(2,0)_w$ form $\ell_{ww}$ with values in $\hat \mt_{h,n}$ leaves the decomposition (\ref{7.b.1}) invariant. Conversely, the $\xx$-dependence of $L_{ww}$ is entirely contained in $L_{ww}^1$ so that any shift  by $\ell_{ww}$ must be $\xx$-independent. Since $L_{ww}^1$ takes values in $\hat \mt_{h,n}$ so must $\ell_{ww}$, which establishes uniqueness modulo shifts as specified above. 
\end{proof}

The result of the following corollary will play a key role in solving the second differential flatness condition (\ref{7.a.3bb}) below. 

{\cor
\label{7.cor:1}
The component $L^1_{ww}(\xx| \btau)$ in (\ref{7.b.2}) and (\ref{7.b.9}) is a meromorphic $(2,0)$ form in $w$ whose only singularities in $w$ are simple poles at the variables $x_k$ with residues $J_k^{(1,0)}$, which is regular as $x_j \rightarrow x_i$ and which satisfies,
\bea
\label{7.c.4}
\pbw \LL_{ww}^1 (\xx| \btau) = \pi \sum_k \delta (w,x_k) \, J_k^{(1,0)} (\xx| \btau)
- \pi \, \delta_{h,1} \, \kappa (w) \sum _k \om^K(w)  a_{kK} 
\eea \sm}

\begin{proof}
\vskip -0.3in
To prove (\ref{7.c.4}), we evaluate $\pbw \cV_{ww}$ and $\pbw \cV^1 _{ww} $ with the help of (\ref{2.gen.3}),
 \bea
 \label{7.c.5a}
\pbw \cV_{ww} (\xx| \btau) & = & \pi \sum_k \delta (w,x_k) J^{(1,0)}_k (\xx| \btau) 
- \pi \kappa (w) \sum _k \bPhi^K(w;B_k) a_{kK} 
\no \\
\pbw \cV^1 _{ww} (\btau) & = &  \pi \kappa (w) \sum _k  \bPhi^K(w; B_k) \, a_{kK} 
\eea
where the second relation holds for $h \geq 2$ while $\cV^1 _{ww}$  vanishes for $h=1$. 
The sum of these contributions gives (\ref{7.c.4}) for all $h\geq 1$. The simple poles in
$w$ at $x_k$ with residues $J_k^{(1,0)}$ are exhibited by the delta distributions
in (\ref{7.c.4}), and regularity as $x_j \rightarrow x_i$ is manifest from (\ref{7.b.9}) since none of
the integration kernels depend simultaneously on $x_i$ and $x_j$.
\end{proof}

{\cor
\label{7.cor:2}
The component $L^1_{ww}(\xx| \btau)$ in (\ref{7.b.2}) and (\ref{7.b.9}) satisfies,
\bea
\label{7.c.5}
\delta_{vv} \LL_{ww}^1 - \delta_{ww} \LL^1 _{vv} =0
\eea}

\begin{proof}
To prove (\ref{7.c.5}) we decompose $J_i^{(1,0)}(\xx, \btau)$ in (\ref{2.b.4}) as follows,
\bea
\label{7.c.6}
J_i^{(1,0)} (\xx| \btau) = \om^K(x_i) \, a_{iK} + \p_i \cW(\xx| \btau)
\eea
where $\cW(\xx|\btau)$ is a single-valued function of $\xx \in {\rm Cf}_n(\Sigma)$ given by,
\bea
\label{7.c.7}
\cW(\xx|\btau) = \sum_k \sum _{\vI \not = \emptyset} \Phi _\vI {}^K(x_k) B_k^\vI \, a_{kK}
+ \half \sum _{k \not= \ell} \sum_\vI \cG_\vI (x_k, x_\ell) B_k^\vI \, t_{k \ell}
\eea 
With the help of the variational derivatives given in (\ref{7.c.8}), and the expressions for $\cV_{ww}$ and $\cV_{ww}^1$ in (\ref{7.b.9}), we obtain the following expression for $\LL^1_{ww}$, 
\bea
\label{7.c.9}
\LL_{ww}^1(\xx| \btau) = \delta_{ww} \cW(\xx| \btau) - \cU_{ww}(\xx|  \btau) 
\eea
where $\cU_{ww}$ is given by,
\bea
\label{7.c.10}
\cU_{ww}(\xx| \btau) & = & 
\sum_k \Big \{ \p_w \cG(w,x_k) \om^K(w) +c_h \,  \om^J(w) \p_w \Phi _J{}^K(w) \Big \} a_{kK}
\eea
with $c_1=0$ and $c_h=1/(h-1)$ for $h \geq 2$. 
One readily verifies that the  function $\cU_{ww}$ satisfies the variational curl condition, 
\bea
\label{7.c.11}
\delta _{vv} \, \cU_{ww} - \delta _{ww} \, \cU_{vv}=0
\eea
Using the variational derivative identity $\delta_{vv} \delta_{ww} \cW = \delta_{ww} \delta_{vv} \cW$ 
of (\ref{comdww}), together with (\ref{7.c.11}),  establishes (\ref{7.c.5}) and completes the proof of the corollary.
\end{proof}

{\rmk
The $(2,0)_w$ form $\cU_{ww}$ in (\ref{7.c.10}) may be expressed as the $\delta_{ww}$ variational derivative of a function  that  is neither single-valued on ${\rm Cf}_n(\Sigma)$ nor well-defined on $\cM_{h}$.  Thus, we expect $\cU_{ww}$ to correspond to a non-trivial cohomology class in $H^1(\cM_h, \CC)$ with values in $\hat \mt_{h,n}$. }

{\setup
\label{setup:3}
At this point in the development of the paper, we have obtained the full $\xx$-dependence of the component $L_{ww}$ of the connection, if it exists, in the form of $L^1_{ww}$ given in (\ref{7.b.2}) and (\ref{7.b.9}) and thereby integrated the differential equations (\ref{7.a.3aa}), (\ref{7.a.4a}) and (\ref{7.a.5}) for $L_{ww}$, as well as (\ref{7.c.5}) for $L^1_{ww}$.  It now remains to determine the $\xx$-independent part $L^0_{ww}$ of (\ref{7.b.1}).}

\subsection{Action of $\LL_{ww}^0$ on the generators of $\mt_{h,n}$}
\label{sec:4.4}

We shall now initiate the determination of the $\xx$-independent component $L^0_{ww}$ in the decomposition (\ref{7.b.1}) of $L_{ww}$. Our procedure is as follows.
\begin{enumerate}
\itemsep -0.03in
\item In this subsection, we derive a number of \textit{necessary conditions} on $L^0_{ww}$ required by the system of equations (\ref{7.a.4aa}) and (\ref{7.a.5});
\item In the next sections \ref{sec:6} to \ref{sec:7}, we shall prove that these necessary conditions are \textit{sufficient conditions} to solve (\ref{7.a.3bb}) and (\ref{7.a.4}), to prove the existence of $L_{ww}^0$ and to show that it takes values in $\der(\hat \mt_{h,n})$.
\end{enumerate}

\subsubsection{Action of $L^0_{ww}$ on $b_i^I$}

In this subsection, we establish the necessary conditions imposed on $L^0_{ww}$ to guarantee that equation (\ref{7.a.5}) is solved by the function $L_{ww}$ whose $\xx$-dependent constituent $L^1_{ww} = \cV_{ww}+\cV^1_{ww}$ is determined by Proposition \ref{7.prop:1}.  By applying $\bar \p_i$ to (\ref{7.b.9aa}) and using the formulas of (\ref{2.gen.3}), one verifies that the function $\cV_{ww}$ satisfies the following equation (suppressing the common argument $(\xx| \btau)$  for brevity),  
\bea
\label{7.b.11}
\bar \p_i \cV_{ww} - \big [ J_i ^{(0,1)} , \cV_{ww} \big ] +  \pi \, \delta (x_i, w) \, J_i ^{(1,0)} 
& = &  \pi \, \bar \om_I(x_i) \, \theta \bPhi ^I(w;B_i) \bPhi ^J (w;B_i) a_{iJ}
\eea
In view of the simple form (\ref{defj01}) of $J_i ^{(0,1)}$, this can be conveniently solved for 
 $[ \LL^0_{ww} (\btau) , b_i^I]$. Eliminating $\cV_{ww}$ in favor of $\LL_{ww}$, $\LL_{ww}^0$  and $\cV^1_{ww}$ using (\ref{7.b.1}) and (\ref{7.b.2}) we recover (\ref{7.a.5}) provided we impose the following action of $\LL^0_{ww}$ on the generators $b_i^I$ of $\hat \mt_{h,n}$,
\bea
\label{7.b.12}
{} \big [ \LL^0_{ww} (\btau) , b_i^I \big ]  = \theta \bPhi ^I(w;B_i) \, \bPhi ^J (w;B_i) \, a_{iJ} - \big [ \cV^1_{ww} (\btau) , b_i^I \big ] 
\eea
with $\cV^1_{ww}$ given by (\ref{7.b.9bb}).
Expanding the generating functions on the right of (\ref{7.b.12}) in powers of $b_i$ shows  the lowest $b$-degree contribution to be $ \om^I(w) \om^J(w)  a_{iJ}$. 

{\rmk 
\label{remark:4.20}
The relation (\ref{7.b.12}) reveals  that $\LL^0_{ww}$ cannot be valued in $\hat \mt_{h,n}$ as this algebra contains no element with bi-grading $(1,-1)$ that would be required to satisfy the relation (\ref{7.b.12}) to lowest $b$-degree, namely $  [ \LL^0_{ww} (\btau) , b_i^I  ] = \om^I(w) \om^J(w) a_{iJ}+\cdots$ with $b$-degrees $\geq 1$ in the ellipsis. This is the simplest instance in which $L^0_{ww}$ is found to take values in a Lie algebra that must be strictly larger than $\hat \mt_{h,n}$, and that will be identified with $\der (\hat \mt_{h,n})$ in subsequent sections.}

\subsubsection{Action of $L^0_{ww}$ on $a_{iI} $ and $t_{ij}$}

As a necessary condition for $\LL_{ww}$ in (\ref{7.b.1}) to solve the flatness condition (\ref{7.a.4aa}),
the action of $\LL^0_{ww}$ on the generators $a_{iI} $ and $t_{ij}$ of $\mt_{h,n}$ must be given by the following lemmas.
{\lem
\label{7.lem:5a}
The\footnote{The argument of the integral of the right side should not be mistaken for a local object; rather, a collection $\underline U_0=(U^0_1,\cdots,U^0_n)$ of open subsets is chosen; 
$[ \LL_{ww}^1(\xx_0| \btau) , J_i^{(1,0)} (\xx_0 |\btau) \big ]$ is relative to $\underline U_0$ and to 
$\mathcal J_{\mathrm{DHS}}^{s_{\underline{U}_0}}$; to make this into a global object, one transports to various $[ \LL_{ww}^1(\xx_0| \btau) , J_i^{(1,0)} (\xx_0 |\btau) \big ]$ attached to 
$\underline U=(U_1,\cdots,U_n)$ by applying the vertical diffeomorphism taking $\mathcal J_{\mathrm{DHS}}^{s_{\underline U}}$ to $\mathcal J_{\mathrm{DHS}}^{s_{\underline U_0}}$; the resulting partial functions are checked to glue into a global object, denoted $[ \LL_{ww}^1(\xx| \btau) , J_i^{(1,0)} (\xx |\btau) \big ]$; a similar viewpoint is understood in the 
next equation.}  following combinations are independent of $\xx$,
\begin{subequations}
\label{7.e.2A}
\begin{align}
Z^a_{iI} & =  \int _\Sigma d^2 x_i \, \bar \om_I(x_i) \, \big [ \LL_{ww}^1(\xx| \btau) , J_i^{(1,0)} (\xx |\btau) \big ]
\label{7.e.2Aaa} \\
Z^t_{ij} & = \lim_{x_j \to x_i} \big [ \LL_{ww} ^1(\xx| \btau), t_{ij} \big ]
\label{7.e.2Abb}
\end{align}
\end{subequations}}
\begin{proof}
\vskip -0.1in
To  prove $\xx$-independence of  $Z^a_{iI}$, we first note that  the variable $x_i$ has been integrated over so that $\xx$-independence may be investigated by  applying $\p_j$ for $j \not= i$ (suppressing the  arguments $\xx| \btau$),
\bea
\p_j Z^a_{iI} & = & 
\int _\Sigma d^2 x_i \, \bar \om_I(x_i) \, \Big ( \big [ \p_j \LL_{ww}^1 , J_i^{(1,0)}  \big ]
+ \big [ \LL_{ww}^1 , \p_j J_i^{(1,0)}  \big ] \Big )
\no \\ & = & 
\int _\Sigma d^2 x_i \, \bar \om_I(x_i) \, \delta_{ww} \big [ J_j^{(1,0)} , J_i^{(1,0)}  \big ] =0
\eea
In going from the first line to the second,  we have used (\ref{2.b.3bb}) to convert $\p_j J_i^{(1,0)}$ into $\p_i J_j^{(1,0)}$, integrated by parts in $x_i$, and used (\ref{7.a.3aa}) to convert $\p_i L^1_{ww} = \p_i L_{ww}$ into $ \delta _{ww} J^{(1,0)}_i$ in both terms. The vanishing of the second line follows from the flatness condition (\ref{2.b.3aa}), thereby proving $\xx$-independence of the right side of (\ref{7.e.2Aaa}). To prove $\xx$-independence of $Z^t_{ij}$ in (\ref{7.e.2Abb}), we evaluate its   derivative with respect to $x_k$ for $k \not= i,j$  using  (\ref{7.a.3aa}) to obtain,
\bea
\p_k Z^t_{ij} = \delta_{ww} \lim_{x_j \to x_i} \big [ J_k^{(1,0)} (\xx| \btau), t_{ij} \big ]
\eea
which vanishes by using the relations of (\ref{11.1}). The derivative with respect to $x_i$ gives,
\bea
\p_i Z^t_{ij}
= \delta_{ww} \lim_{x_j \to x_i} \big [ J_i^{(1,0)} (\xx| \btau) + J_j^{(1,0)} (\xx| \btau), t_{ij} \big ]
\eea
The vanishing of the right side was proven in appendix A of \cite{DHoker:2026lgg}. 
\end{proof}

{\lem
\label{7.lem:5}
Necessary\footnote{Caveats similar to those of Lemma \ref{7.lem:5a} are in order.} 
conditions on $\LL_{ww}^0$  to solve the flatness condition $[J_i^{(1,0)}, \LL_{ww} ]=0$ of (\ref{7.a.4aa}) are given by the following actions of $\LL^0_{ww}$ on $a_{iI}$ and $t_{ij}$,
\begin{subequations}
\label{7.e.2}
\begin{align}
{} \big [ \LL_{ww}^0 (\btau) , a_{iI} \big ] & =  - \int _\Sigma d^2 x_i \, \bar \om_I(x_i) \, \big [ \LL_{ww}^1(\xx| \btau) , J_i^{(1,0)} (\xx |\btau) \big ]
\label{7.e.2aa} \\
{} \big [ \LL_{ww}^0 (\btau) , t_{ij} \big ] & =  - \lim_{x_j \to x_i} \big [ \LL_{ww} ^1(\xx| \btau), t_{ij} \big ]
\label{7.e.2bb}
\end{align}
\end{subequations}}
 That these conditions, along with the action (\ref{7.b.12}) of $L^0_{ww}$ on $b_i^I$, are  
sufficient will be proven in section \ref{sec:vanLJ}.

\begin{proof}
\vskip 0in
To prove the necessity of the relations (\ref{7.e.2}) we show that they are implied by the flatness condition $[\LL_{ww}, J_i^{(1,0)} ]=0$ of (\ref{7.a.4aa}),  which may equivalently be expressed as,
\bea
\label{7.e.3}
{}  \big [ \LL_{ww}^0 (\btau) , J_i^{(1,0)} (\xx| \btau) \big ] = - \big [ \LL_{ww}^1 (\xx|  \btau) , J_i^{(1,0)} (\xx | \btau) \big ] 
\eea
The first relation (\ref{7.e.2aa}) is obtained by integrating both sides of (\ref{7.e.3}) over $x_i$ against $\bar \om_I(x_i)$, using the fact that $\LL_{ww}^0 $ is independent of $\xx$, and that the remaining integral of $J_i^{(1,0)}$ on the left gives $a_{iI}$. The second relation (\ref{7.e.2bb}) is obtained by evaluating the residue in $x_i$ at $x_j$ with $j \not= i$ on both sides, using the fact that the residue of $J_i^{(1,0)}(\xx| \btau)$ is $- t_{ij}$, that $\LL^0_{ww}(\btau)$ is independent of $\xx$ and that $\LL^1_{ww}(\xx|\btau)$ is regular as $x_j \to x_i$ by Corollary \ref{7.cor:1}. 
\end{proof}

Note that the expressions (\ref{7.b.12}) and (\ref{7.e.2}) for the action of $\LL_{ww}^0 $ on the generators of $\mt_{h,n}$ obtained as necessary conditions expose that the solution of the flatness conditions (\ref{7.a.3aa}) and (\ref{7.a.4aa}) is unique if it exists. The flatness conditions (\ref{7.a.3aa}) and (\ref{7.a.4aa}) in turn follow from our requirement on $L_\alpha$ and therefore $L_{ww}$ to have $a$-degree one, see Schemata \ref{setup:1}. On these grounds, the uniqueness of $L_{ww}$ and thus ${\cal L}$ is contingent on our assumption that its $a$-degree is~one.

\subsubsection{Holomorphicity and variational curl of $L_{ww}^0$}

Next, we shall prove two lemmas that will play a key role in the sequel.

{\lem 
\label{7.lem:6}
If the necessary conditions of (\ref{7.b.12}) and (\ref{7.e.2}) have a solution for $L^0_{ww}$, then the action of $\LL_{ww}^0$ on the generators of $\mt_{h,n}$, namely on all $X \in \{ a_{iI}, b_i^I, t_{ij}   \}$ with $I \in \{ 1,\cdots, h\}$ and $ i,j \in \{ 1,\cdots, n\}$,  is given by a holomorphic $(2,0)$ form in~$w$,  
\bea
\label{7.e.7}
\pbw \big [ L_{ww}^0 (\btau) , X \big ]=0 
\eea \sm}

\begin{proof}
\vskip -0.1in
For $X= a_{iI}$, the relation (\ref{7.e.7}) follows by substituting (\ref{7.c.4}) into the $\pbw$ derivative of (\ref{7.e.2aa}) and using the flatness conditions $[ J_k^{(1,0)}, J_i^{(1,0)} ]=0$. For $X= b_i^I$, we evaluate the left side of (\ref{7.e.7}) by taking the $\pbw$ derivative of (\ref{7.b.12}), using the second relation in (\ref{2.gen.3}) to evaluate the derivatives of the first term in (\ref{7.b.12}) and the second equation in (\ref{7.c.5a}) to evaluate the $\pbw$ derivative of $\cV^1_{ww}$, to obtain,
\bea
\pbw \big [ L^0_{ww}, b_i^I \big ] = \pi \kappa (w)  \, \bigg ( \sum _{k \not= i} B_i^I \bPhi ^K(w;B_k) a_{kK} + 
\theta \bPhi ^I(w;B_i) B_i^J a_{iJ} \bigg ) 
\eea
Commuting $B_i^I$ through $\bPhi$ in the first term, using $B_i^I a_{kK} = \delta ^I_K t_{ik}$ and then using $B_i^Ja_{iJ} = - \sum _{k \not= i} t_{ik}$, we confirm that the right side vanishes.  Finally, for $X= t_{ij}$ in (\ref{7.e.7}), we use (\ref{7.c.4}) to eliminate $\pbw L_{ww}^1$ from the $\pbw$ derivative of (\ref{7.e.2bb}) to obtain, 
\bea
\label{7.e.8}
\pbw \big [ \LL_{ww}^0 (\btau) , t_{ij} \big ] & = & - \pi \lim_{x_j=x_i} \sum_k \delta (w,x_k) \,  \big [ J_k^{(1,0)} (\xx| \btau), t_{ij} \big ]
\no \\ & = & 
- \pi  \delta (w,x_i) \,  \lim _{x_j \to x_i} \big [ J_i^{(1,0)} (\xx| \btau) + J_j^{(1,0)} (\xx| \btau), t_{ij} \big ]
\eea
The limit was shown to vanish in appendix A of \cite{DHoker:2026lgg}.
\end{proof}

{\lem 
\label{7.lem:7}
If the necessary conditions of (\ref{7.b.12}) and (\ref{7.e.2}) have a solution for $L^0_{ww}$, then the action of $\LL_{ww}^0$ on the generators of $\mt_{h,n}$, namely on all $X \in \{ a_{iI}, b_i^I, t_{ij} \}$ with $I \in \{ 1,\cdots, h \}$ and $ i,j  \in \{ 1,\cdots, n \}$,  has vanishing variational curl, 
\bea
\label{7.e.9}
\delta_{vv}  \big [ L_{ww}^0 (\btau) , X \big ] - \delta_{ww}  \big [ L_{vv}^0 (\btau) , X \big ]  =0 
\eea}

\begin{proof}
For $X= a_{iI}$, the left side of (\ref{7.e.9}) is computed  by applying the variational derivative to  (\ref{7.e.2aa}), using the relation $\delta _{ww} \bar \om _I(x_i)=0$ on the third line of (\ref{muder.2}) as well as the relations (\ref{7.c.5}) and (\ref{7.a.3aa}), to obtain\footnote{A caveat similar to that of Lemma \ref{7.lem:5a} is in order.} , 
\bea
\delta_{vv}  \big [ L_{ww}^0(\btau) ,  a_{iI}  \big ] - \delta_{ww}  \big [ L_{vv}^0 (\btau) ,  a_{iI}  \big ] 
= 
\int _\Sigma d^2 x_i \, \bar \om_I(x_i) \, \p_i \big [ \LL^1_{vv} (\xx| \btau), \LL^1 _{ww} (\xx| \btau) \big ] 
\eea
which vanishes upon integration by parts. For $X= b_i^I$, equation (\ref{7.e.9}) follows from applying the variational derivative to $\bPhi$ and $\cV^1_{ww}$ in (\ref{7.b.12}) and using the second line of (\ref{3.d.1}). Finally, for $X = t_{ij}$, equation (\ref{7.e.9}) is an immediate consequence of applying the variational derivative to  (\ref{7.e.2bb}) and using the relation (\ref{7.c.5}). 
\end{proof}

{\setup \label{schem.15}
Combining the result of Lemma \ref{7.lem:7} with that of Corollary \ref{7.cor:2} implies $\delta_{vv}  \big [ L_{ww}  , X \big ] - \delta_{ww}  \big [  L_{vv}   , X \big ]  =0 $ for $X \in \{ a_{iI}, b_i^I, t_{ij} \}$ with $I\in \{ 1,\cdots, h\} $ and $ i,j \in \{ 1,\cdots, n \} $. Since the center of $\mt_{h,n}$ vanishes, this solves (\ref{7.a.3bb}) when applied to $X \in \{ a_{iI}, b_i^I, t_{ij} \}$. 
However, it will require the proof of section \ref{sec:six3} that the action (\ref{7.b.12}), (\ref{7.e.2}) 
of $L_{ww}^0$ defines a derivation of $\hat{\mt}_{h,n}$ to deduce that 
the component $L_{ww}$ of this section solves (\ref{7.a.3bb}) 
when applied to arbitrary $X \in \hat{\mt}_{h,n}$.
}

\subsection{Action of $\LL_{ww}^0$ via modular tensors $\cA$ and $\mD$}
\label{sec:4.5}

We now formulate the fundamental theorem below which gives the necessary conditions of (\ref{7.b.12}) and (\ref{7.e.2}) for the action of $\LL_{ww}^0$ on the generators $a, b$ and $t$ in terms of the modular tensors $\cA$ and $\mD$ introduced in section \ref{sec:adn}, and their variational derivatives obtained in section \ref{sec:3.3}.

{\thm
\label{7.thm:1}
If a solution to the flatness conditions (\ref{7.a.4aa}) and (\ref{7.a.5}) exists, then the action of $\LL_{ww}^0$ on the generators of $\mt_{h,n}$ may be expressed in terms of the  variational derivatives of the modular tensors $\cA$ and $\mD$ defined in (\ref{7.f.1})  as follows, 
\bea
\label{9.q.20}
{} [\LL^0_{ww} , a_{iI} ] & = &
 \half \sum_{\vP, \vQ } \bigg \{  \delta_{ww} \, \cA^M{}_{\theta(\vP) I \vQ} {}^N \, 
\big [ B_i^\vP a_{iM}, B_i^\vQ a_{iN} \big ]
- \sum_{k\not= i} 
 \delta_{ww} \mD_{ \theta(\vP) I \vQ} \,   \big [ B_i^\vP t_{ik} , B_i^\vQ t_{ik}  \big ] \bigg \}
\no \\
{} [\LL^0_{ww} , b_i^I ] & = & \sum_\vP  \delta_{ww} \, \cA^I {}_\vP {}^J \, B_i^\vP a_{iJ}
\no \\
{} [\LL^0_{ww}, t_{ij} ] & = & -  \sum_{\vP \not= \emptyset}   \delta _{ww} \, \mD_\vP \, \big [  B_i^\vP t_{ij}, t_{ij} \big ]
\eea
We have extended the definition of $\cA$ in (\ref{7.f.1}) to include the rank 2 case for all $h \geq 1$,
\bea
\label{7.f.5}
\delta _{ww} \cA ^M {} _\emptyset {}^N & = & \om^M(w) \om^N(w) 
\hskip 1in 
\cA^M{} _ \emptyset {}^N= - { 1 \over \pi} \, Y^{MN}
\eea
and the rank 3 case for $h \geq 2$,
\bea
\label{7.f.6}
\delta _{ww} \cA ^M {} _I{}^N & = & \om^M(w) \, \p_w \Phi _I{}^N(w) 
+{ 1 \over h-1} \, \om^L(w) \p_w \Phi _L{}^M (w) \, \delta ^N_I - (M \leftrightarrow N) 
\eea
while $\cA^M{}_I {}^N=0$ for genus $h=1$.  
}

\sm

The proof of the theorem will be given in appendix \ref{sec:C}. 

\sm

{\rmk 
While the expressions for $\delta _{ww} \cA$ and $\delta _{ww} \mD$ in Proposition \ref{7.prop:2} give the action of $\LL^0_{ww}$ as a holomorphic  linear combination  of bilinears in the DHS kernels $\p_w \Phi_{\vP}{}^Q(w)$ and  $\om^I(w)\om^J(w)$ with the modular tensors $\mN_{\vR}$ of (\ref{defcn}) as coefficients, it would be desirable to obtain expressions whose $w$-dependence is a linear combination of  $\om^I(w)\om^J(w)$ alone, so that, by (\ref{domij}), the moduli variations of these functions can be readily read off in terms of the corresponding variations in the period matrix.}

{\rmk \label{innerrmk}
The action of $L^0_{ww}$ on $t_{ij}$ in the third line of (\ref{9.q.20}) is by inner derivation. As a result, we may use the freedom exposed in
Corollary \ref{777.cor:1} to shift $L^0_{ww}$ by an $\xx$-independent $\ell_{ww}$ that takes values in $\hat \mt_{h,n}$. Denoting the shifted generator by $\bL^0_{ww}$, we~have,
\bea
\bL^0_{ww} = L^0_{ww} + \ell_{ww} 
\hskip 1in 
\ell_{ww} =  \half\sum_{\vI \neq \emptyset} \delta _{ww} \mD_\vI  \sum_{k \not= \ell}  B_k^\vI  \, t_{k \ell} 
\label{defbL0}
\eea
In addition to the relation $[ \bL^0_{ww}, t_{ij}]=0$, the action of $\bL^0_{ww}$ on $a_{iI}$ and $b_i^I$ is given by,
\bea
\label{7.thm.2}
{} \big [ \bL_{ww}^0 , a_{iI} \big ] & = & 
 \half \sum_{\vP, \vQ }  \delta_{ww} \, \cE^M{}_{\theta(\vP) I \vQ} {}^N \, 
\big [ B_i^\vP a_{iM}, B_i^\vQ a_{iN} \big ]
\no \\
{} \big [ \bL_{ww}^0 , b_i^I \big ] & = & 
\sum_\vP   \delta _{ww} \, \cE^I{}_\vP {}^J \,  B_i^\vP a_{iJ}
\eea
where the modular tensors $\cE$ are given by (with $ \mD_{\emptyset}=0= \mD_P$),
\begin{align}
\cE^I {} _\emptyset {}^J &= \cA^I {}_\emptyset {}^J
\hskip 0.6in
\cE^I {} _P {}^J = \cA^I {}_P {}^J
\hskip 0.6in
\cE^I {}_{K \vP L} {}^J = \cA^I {}_{K \vP L} {}^J + \delta ^I_K \, \mD_\vP \, \delta ^J_L 
 \label{22.a} 
\end{align} 
The proof is straightforward and left to the reader. }

{\rmk
Note that by the right sides of (\ref{7.thm.2}), the action of the shifted generator $\bL_{ww}^0$ does not mix the $b_i^I$, $a_{iJ}$ with different values of  $i \in \{ 1,\cdots,n \} $. This decoupling is preserved by the vanishing $\bL_{ww}^0$
action on arbitrary brackets $[b_i^I , a_{jJ}]$ with $i \neq j$ and $[b_i^I , a_{iI}]$ implied by $[ \bL_{ww}^0 , t_{ij} ] =0 $.
As will be detailed in section \ref{sec:idts}, these properties of the bracket relations (\ref{7.thm.2}) lead us to view the expansion  coefficients of $\bL_{ww}^0$ as generalizations of Tsunogai's derivations on $\mt_{1,n}$ from genus 
one to arbitrary genus.}

\subsection{Formulation in terms of moduli variations}
\label{sec:modvar}

Having obtained explicit solutions for $\LL_{ww}^1 $ in (\ref{7.b.2}) and (\ref{7.b.9}) or (\ref{7.c.9}) in terms of the generators of $\mt_{h,n}$ and for the action of  $\LL^0_{ww}$ on the generators of $\mt_{h,n}$ in (\ref{9.q.20}),  we may now pair these expressions against Beltrami differentials $\mu_\a(w)$ to express our results in terms of the components $ \LL_\a = \LL_\a ^0 + \LL^1_\a$ defined by the second equation in (\ref{7.a.2}) for each function and the moduli derivatives of the modular tensors $\cA$ and $\mD$. We note that the modular tensors $\cA$ and $\mD$ have been defined in (\ref{7.f.1}) as sections of an anti-holomorphic vector bundle (see section \ref{sec:mod}) so that the derivatives with respect to holomorphic moduli $\tau_\a$ act covariantly without the need for a connection. The results are as follows.

{\cor
\label{7.cor:30}
The action of $\LL_\a^0$ on the generators of $\mt_{h,n}$ may be expressed in terms of moduli variations of the modular tensors $\cA$ and $\mD$ defined in (\ref{7.f.1})  as follows, 
\bea
\label{7.cor.30}
{} [\LL^0_\a , a_{iI} ] & = &
{ 1 \over 2}  \sum_{\vP, \vQ } \bigg \{  \p_\a  \cA^M{}_{\theta(\vP) I \vQ} {}^N 
\big [ B_i^\vP a_{iM}, B_i^\vQ a_{iN} \big ]
-  \p_\a \mD_{ \theta(\vP) I \vQ} \sum_{k\not= i}  \big [ B_i^\vP t_{ik} , B_i^\vQ t_{ik}  \big ] \bigg \}
\no \\
{} [\LL^0_\a , b_i^I ] & = &  \sum_\vP  \p_\a \, \cA^I {}_\vP {}^J \,   B_i^\vP a_{iJ}
\no \\
{} [\LL^0_\a, t_{ij} ] & = & -    \sum_{\vP \not= \emptyset} \p_\a \mD_\vP  \, \big [  B_i^\vP \, t_{ij}, t_{ij} \big ]
\eea
The expression for $\LL^1_\a$ is similarly read off from (\ref{7.c.9}),
\bea
\label{7.cor.33}
\LL^1_\a(\xx| \btau) = { 1 \over \pi} \p_\a \cW(\xx| \btau) - { 1 \over \pi } \int _\Sigma d^2 w \, \mu_\a (w) \, \cU_{ww} (\xx | \btau)
\eea
with $\cW(\xx| \btau)$ and $\cU_{ww} (\xx | \btau)$ given by (\ref{7.c.7}) and (\ref{7.c.10}), respectively,
where it is understood that the variables $\xx$ are left unchanged under infinitesimal moduli variations, since we have consistently chosen Beltrami differentials $\mu_\a (w)$ that vanish in the neighborhood of each variable $x_i$, as was discussed after equation (\ref{7.a.5}). }

\sm 

Note that the one-form $\cL  = \sum _\a \LL_\a d \tau_\a$ that enters the connection $\cJ$ in (\ref{7.0.x}) 
is  independent of the choice of the local complex coordinates $\tau_\a$ on the moduli space $\cM_h$ since it is a linear combination of the forms $\sum_\a d \tau_\a \, \p_\a \cA $ and $\sum_\a d \tau _\a \, \p_\a \mD$, as may be seen from (\ref{7.cor.30}), which are both manifestly invariant under locally holomorphic reparametrizations of $\tau_\a$.

\newpage

\section{Uplifting  to a larger algebra}
\setcounter{equation}{0}
\label{sec:6}

In this section and the next ones, we shall prove that the $\xx$-independent part $L^0_{ww}$
of the extended flat connection (\ref{7.0.x}) acts as a derivation of $\hat \mt_{h,n}$ and that the remaining flatness conditions $[L_{ww}, J^{(1,0)}_i]=0$, $[L_{vv}, L_{ww}]=0$ and $\delta_{vv} L_{ww} - \delta_{ww} L_{vv}=0$ are satisfied. The starting point will be the action of $L^0_{ww}$ on the generators $a,b,t$, derived in Theorem \ref{7.thm:1} from necessary conditions for the flatness condition $[L_{ww}, J^{(1,0)}_i]=0$ to hold, and the relations between the modular tensors $\cA$ and $\mD$ proven in Proposition \ref{7.prop:1re}. It turns out, however, that the flatness conditions require additional relations between $\cA$ and $\mD$ which are akin to the Fay identities between DHS kernels \cite{DHoker:2024ozn}, and whose explicit form is as yet not known explicitly. To circumvent these issues, we shall proceed as follows.

\subsection{The Lie algebra $\muu_{h,n}$}

In this subsection, we  introduce an auxiliary Lie algebra $\muu_{h,n}$ which will provide an essential stepping stone
towards proving, in section \ref{sec:six3} below, that the action of $L^0_{ww}$ in Theorem \ref{7.thm:1} defines a
derivation on $\hat \mt_{h,n}$.

{\deff
\label{6.def:1}
The Lie algebra $\mathfrak{lie}(\tilde a, \tilde b, \tilde t)$ is defined to be freely generated by the elements $\tilde a_{iI}, \tilde b_i ^I$ and $\tilde t_{ij}= \tilde t_{ji}$ for $I \in \{ 1, \cdots, h \} $ and $i,j \in \{ 1, \cdots, n\} $ with $j \not= i$. The Lie algebra $\muu_{h,n}$ is the quotient of $\mathfrak{lie}(\tilde a, \tilde b, \tilde t)$  by the following set of relations for mutually distinct $i,j,k$,\footnote{The relations of (\ref{6.z.1}) are the ones used to define $\mt_{h,n}$ in Definition \ref{2.def:1}, except for the relation $ [ a_{iI},  a_{jJ}]=0$ of $\mt_{h,n}$ for $i \not= j$ whose counterpart is being omitted in the definition of  $\muu_{h,n}$.}
\begin{subequations}
\label{6.z.1} 
\begin{align}
\big [ \tilde b_i^I , \, \, \tilde b_j^J \big ] & =  0 
\label{6.z.1aa} \\
\big [\tilde b_i^I, \, \tilde t_{jk} \big] & = 0 
\label{6.z.1bb} \\
\big [\tilde b_i^I , \tilde a_{jJ} \big ]  & = \delta ^I_J \, \tilde t_{ij} 
\label{6.z.1cc} \\
\big [ \tilde b_{iI} , \tilde a^I _i \big ] & = - \sum_{j \not = i} \tilde t_{ij}   
\label{6.z.1dd} \\
\big [ \tilde a_{iI}, \tilde t_{jk} \big ] & = 0 
\label{6.z.1ee}
\end{align}
\end{subequations}
The Lie algebra $\muu_{h,n}$ is equipped with a positive bi-grading $| \tilde a_{iI} |=(1,0)$, $| \tilde b_i^I|=(0,1)$ and $ |\tilde t_{ij} | = (1,1)$ whose first and second entries are referred to as $\tilde a$-degree and $\tilde b$-degree, respectively. The degree completion of $\muu_{h,n}$ is denoted $\hat {\muu}_{h,n}$. We  define the ideal $\cI \subset \muu_{h,n}$~by,
\bea
\label{6.z.3}
\cI = \hbox{ideal generated by } [\tilde a_{iI}, \tilde a_{jJ}]  \hbox{ with $i\neq j$}
\eea
The Lie algebra  $\mt_{h,n}$ may be obtained as the quotient $\mt_{h,n} = \muu_{h,n}/\cI$ via the exact sequence
\bea
\label{6.z.3a}
0 \to \cI \to \muu_{h,n} \xrightarrow[\text{}]{\text{\, $\pi$ \, }}  \mt_{h,n} \to 0
\eea 
The projection $\pi:  \muu_{h,n} \to \mt_{h,n}$ maps the generators as follows,
\bea
\pi ( \tilde a_{iI}) =  a_{iI}
\hskip 1in 
\pi(\tilde b_i^I)=  b_i^I
\hskip 1in 
\pi(\tilde t_{ij}) =  t_{ij}
\eea
 and is an isomorphism in $a$-degrees zero and one}.

{\prop
\label{prop:6.z1}
 The generators $\tilde a$, $\tilde b$ and $\tilde t$ of $\muu_{h,n}$ satisfy the following relations for arbitrary $i,j,k,\ell$ mutually distinct, 
\begin{subequations}
\label{6.z.2}
\begin{align}
{} \big [ \tilde b_i^I+ \tilde b_j^I, \tilde t_{ij} \big ] & =0 
\label{6.z.2aa} \\
{} \big [ \tilde t_{ik}+ \tilde t_{jk}, \tilde t_{ij} \big ] & =0
\label{6.z.2bb} \\
{} \big [ \tilde t_{i\ell}, \tilde t_{jk} \big ] & =0
\label{6.z.2cc} \\
{} \big [ \tilde a_{iI}+ \tilde a_{jI}, \tilde t_{ij} \big ] &  \in \cI
\label{6.z.2dd}
\end{align}
\end{subequations}
\sm }

\begin{proof}
\vskip -0.2in
 To prove (\ref{6.z.2aa}), we use the structure relations of (\ref{6.z.1}),
\bea
{} \big [ \tilde b_i^I+ \tilde b_j^I, \tilde t_{ij} \big ]  =  
\big [ \tilde b_i^J, [ \tilde b_j^I, \tilde a_{iJ} ] \big ] + \big [ \tilde b_j^J , [\tilde b_i^I , \tilde a_{jJ} ]  \big ] 
= 
- \sum_{ k \not= i } \big [ \tilde b_j^I , \tilde t_{ik} \big ] - \sum_{ \ell \not= j } \big [ \tilde b_i^I , \tilde t_{j\ell} \big ] 
\eea
where we have used (\ref{6.z.1cc}) and $\tilde t_{ij}= \tilde t_{ji}$ to obtain the first equality and $[\tilde b_i^I, \tilde b_j^J]=0$ and (\ref{6.z.1dd}) to obtain the second equality. Only $k=j$ and $\ell=i$ contribute non-trivially to the sums in view of (\ref{6.z.1bb}) which gives (\ref{6.z.2aa}). The relation (\ref{6.z.2bb}) is obtained by taking the commutator of the first relation with $\tilde a_{kK}$ for $k \not= i,j$, while (\ref{6.z.2cc}) is obtained by taking the commutator of  (\ref{6.z.1bb})  with $\tilde a_{\ell L}$ for $\ell \not= i,j,k$. Finally, to prove (\ref{6.z.2dd}), we use (\ref{6.z.1cc}) to eliminate $\tilde t_{ij}$ in favor of $[\tilde b, \tilde a]$, and the Jacobi identity to obtain,
\bea
{} \big [\tilde a _{iI} + \tilde a_{jI}, \tilde t_{ij} \big ] & = &
\big [ \tilde b_i^J, [ \tilde a_{iJ}, \tilde a_{jI} ] \big ] + \sum _{ k \not = i} \big [ \tilde t_{ik} , \tilde a_{jI} \big ]
+ \big [ \tilde b_j^J, [ \tilde a_{jJ}, \tilde a_{iI} ] \big ] + \sum _{ \ell  \not = j} \big [ \tilde t_{j \ell } , \tilde a_{iI} \big ]
\eea
Only the terms with $k=j$ and $\ell=i$ contribute and the resulting terms are the opposite of the left side so that we obtain 
$\big [\tilde a _{iI} + \tilde a_{jI}, \tilde t_{ij} \big ] =
\thalf \big [ \tilde b_i^J, [ \tilde a_{iJ}, \tilde a_{jI} ] \big ] 
+ \thalf \big [ \tilde b_j^J, [ \tilde a_{jJ}, \tilde a_{iI} ] \big ]   \in \cI$.
\end{proof}

\subsection{$\tilde L^0_{ww}$ is a derivation of $\muu_{h,n}$}

We define a $(2,0)_w$ form $\tilde L_{ww}^0$ in terms of its action on the generators $\tilde a, \tilde b, \tilde t$ of $\muu_{h,n}$ by the relations mirroring those on $a,b,t$ in (\ref{9.q.20}), 
\bea
\label{6.z.4}
\tilde \LL^0_{ww} ( \tilde a_{iI} ) & = &
 \half \sum_{\vP, \vQ } \bigg \{  \delta_{ww} \, \cA^M{}_{\theta(\vP) I \vQ} {}^N \, 
\big [ \tilde B_i^\vP \tilde a_{iM}, \tilde B_i^\vQ \tilde a_{iN} \big ]
- \sum_{k\not= i} 
 \delta_{ww} \, \mD_{ \theta(\vP) I \vQ} \,   \big [ \tilde B_i^\vP \tilde t_{ik} , \tilde B_i^\vQ \tilde t_{ik}  \big ] \bigg \}
\no \\
\tilde \LL^0_{ww} ( \tilde b_i^I ) & = & \sum_\vP  \delta_{ww} \, \cA^I {}_\vP {}^J \,   \tilde B_i^\vP \tilde a_{iJ}
\no \\
\tilde \LL^0_{ww}( \tilde t_{ij} ) & = & -  \sum_{\vP \not= \emptyset}   \delta _{ww} \mD_\vP  \, \big [ \tilde B_i^\vP \tilde t_{ij}, \tilde t_{ij} \big ]
\eea
where $\tilde B_i^I X = [\tilde b_i^I , X]$ for arbitrary $X \in \hat{\muu}_{h,n}$, and we write $\tilde B^\vI_i = \tilde B_i^{I_1}\cdots \tilde B_i^{I_r}$ for $\vI = I_1\cdots I_r$.
The action of $\tilde L^0_{ww}$ extends to a derivation of the freely generated Lie algebra $\mathfrak{lie}(\tilde a, \tilde b, \tilde t)$. It remains to  establish that $\tilde L^0_{ww}$ acts as a derivation on $\hat{\muu}_{h,n}$.\footnote{For this reason, the action of $\tilde L^0_{ww}$ on $\hat{\muu}_{h,n}$ is being denoted by $\tilde L^0_{ww} (X)$ for $X \in  \hat{\muu}_{h,n}$ rather than $[\tilde L^0_{ww}, X]$ which would presume that $\tilde L^0_{ww}$ acts as a derivation of $\hat{\muu}_{h,n}$. } Doing so is the purpose of the remainder of the  present section and the following theorem.

{\thm
\label{6.thm:1}
$\tilde \LL_{ww}^0$ acts on $\hat{\muu}_{h,n}$ by derivation, namely, $\tilde L^0_{ww}$ takes values in $\mathfrak{der}(\hat{\muu}_{h,n})$ and preserves every structure relation of $\muu_{h,n}$ given in Definition \ref{6.def:1} for mutually distinct $i,j,k$, 
\begin{subequations}
\label{6.z.5}
\begin{align}
\Big [ \tilde \LL_{ww}^0 (\tilde b_i^I) , \tilde b_j^J \Big ] + \Big [ \tilde b_i^I, \tilde \LL_{ww}^0( \tilde b_j^J) \Big ] & =0
\label{6.z.5aa} \\
\Big [ \tilde \LL_{ww}^0 (\tilde b_i^I) , \tilde t_{jk} \Big ]  + \Big [ \tilde b_i^I,  \tilde  \LL_{ww}^0( \tilde t_{jk}) \Big ]  & = 0  
\label{6.z.5bb} \\
\Big [ \tilde \LL_{ww}^0 (\tilde b_i^I) , \tilde a_{jJ} \Big ]  + \Big [ \tilde b_i^I,  \tilde  \LL_{ww}^0( \tilde a_{jJ} )\Big ]  - \delta ^I_J \tilde L^0_{ww} (\tilde t_{ij})  & =0 
\label{6.z.5cc} \\
\Big [ \tilde \LL_{ww}^0 (\tilde b_i^I) , \tilde a_{iI} \Big ]  + \Big [ \tilde b_i^I,  \tilde \LL_{ww}^0( \tilde a_{iI} )\Big ]  + \sum_{j \not= i} \tilde L^0_{ww} (\tilde t_{ij})  & = 0
\label{6.z.5dd} \\
\Big [ \tilde \LL_{ww}^0( \tilde a_{iI}) , \tilde t_{jk}  \Big ] + \Big [ \tilde a_{iI} ,  \tilde \LL_{ww}^0( \tilde   t_{jk} ) \Big ] & =0
\label{6.z.5ee} 
\end{align}
\end{subequations}}

\begin{proof}
To prove the theorem, we proceed to proving each one of the relations of (\ref{6.z.5}) in turn, by substituting the corresponding expressions for the actions of $\tilde L^0_{ww}$ given by (\ref{6.z.4}). 

\sm

$\bullet$ \textit{(\ref{6.z.5aa})} The left side  becomes,
\bea
\label{6.b.2}
 \sum_\vP  \delta _{ww} \cA^J {}_\vP {}^K \,  \tilde B_i^I \tilde B_j^\vP \tilde a_{jK} 
- \sum_\vP  \delta _{ww} \cA^I {}_\vP {}^K \,  \tilde B_j^J \tilde B_i^\vP \tilde a_{iK} 
\eea
Using the structure relation of (\ref{6.z.1cc}) on both terms, we obtain, 
\bea
\label{6.b.3}
\sum_\vP \delta _{ww} \cA^J {}_\vP {}^I \,  \tilde B_j^\vP \tilde t_{ij} 
- \sum_\vP \delta _{ww} \cA^I {}_\vP {}^J \,  \tilde B_i^\vP \tilde t_{ij} 
\eea
For $|\vP |\geq 2$, both contributions cancel one another using the corollary $\tilde B_j ^\vP \tilde t_{ij} = \tilde B_i ^{\theta(\vP)} \tilde t_{ij}$ of (\ref{6.z.2aa}) and the antipodal property of $\cA$ in (\ref{7.f.2}). For $|\vP|=0$, the cancellation is trivial using (\ref{7.f.5}) while for $|\vP|=1$ it holds using (\ref{7.f.6}) along with (\ref{6.z.2aa}).

\sm

$\bullet$ \textit{(\ref{6.z.5bb})} The first commutator in (\ref{6.z.5bb}) vanishes using the second equality in (\ref{6.z.4}) and the relation $[\tilde B_i^\vP \tilde a_{iJ}, \tilde t_{jk}]=0$ which follows from (\ref{6.z.1bb}) and (\ref{6.z.1ee}). The second commutator vanishes using the last equality in (\ref{6.z.4}) and the identity 
$\big [ \tilde b_i^I,  [ \tilde B_j ^\vP  \tilde t_{jk}, \tilde t_{jk}  ] \big ]  =0$  for $i \not= j,k$.

\sm

$\bullet$ \textit{(\ref{6.z.5cc})} This relation requires a lengthy proof, which relies on the relations
between the modular tensors $\cA$ and $\mD$ of Proposition \ref{7.prop:1re} and is given in appendix  \ref{sec:E}.

\sm

$\bullet$ \textit{(\ref{6.z.5dd})}  The part of the left side  that involves $\cA$ is obtained by substituting the first and second lines of (\ref{6.z.4}) into the second and first commutators, respectively, and  produces two terms that are equal to one another and may be regrouped as follows, 
\bea
\label{6.d.3}
 \sum_\vP  \delta_{ww} \cA^M{}_\vP{\, }^N \,  \big [ \tilde a_{iM}, \tilde B_i^\vP \tilde a_{iN} \big ]
- \sum_{\vP, \vQ}  \delta_{ww} \, \cA^M{}_{\theta(\vP) I \vQ} {}^N \, 
\big [ \tilde B_i^\vP \tilde a_{iM}, \tilde B_i^{I\vQ} \tilde  a_{iN} \big ]
\eea
Replacing the sum over $I\vQ$ by a sum over $\vQ'$ minus the correction term corresponding to $\vQ'= \emptyset$ which is not part of the sum over $I \vQ$,  the second term of (\ref{6.d.3}) becomes, 
\bea
\label{6.d.4}
- \sum_{\vP, \vQ'}  \delta_{ww} \, \cA^M{}_{\theta(\vP)  \vQ'} {}^N \, 
 \big [ \tilde  B_i^\vP \tilde  a_{iM}, \tilde  B_i^{\vQ'} \tilde  a_{iN} \big ]
+ \sum_\vP \delta_{ww} \, \cA^M{}_{\theta(\vP)} {}^N \, 
\big [ \tilde  B_i^\vP \tilde a_{iM}, \tilde a_{iN} \big ]
\eea
The first sum vanishes as the first factor is invariant under swapping $(M, \vP) \leftrightarrow (N, \vQ')$ but the commutator changes sign. The second term in (\ref{6.d.4}) cancels the first term in (\ref{6.d.3}) so that all dependence on $\cA$ of the left side of (\ref{6.z.5dd}) cancels. It remains to collect the contributions from terms involving the modular tensor $\mD$, 
\bea
\label{6.d.5old}
- \sum_{k\not= i} \sum_{\vP , \vQ } 
 \delta_{ww} \mD_{ \theta(\vP) I \vQ} \,  \big [ \tilde B_i^\vP \tilde t_{ik} , \tilde B_i^{I \vQ} \tilde t_{ik}  \big ] 
- \sum_{k \not= i} \sum_{\vP \not= \emptyset}  \delta _{ww}  \mD_\vP  \, \Big [  \tilde B_i^\vP \tilde t_{ik}, \tilde t_{ik} \Big ]
\eea
Replacing the sum over $I \vQ$ by a sum over $\vQ'$ we recognize the second term of (\ref{6.d.5old}) as the $\vQ'= \emptyset$ contribution, so that the sum of both contributions may be regrouped as follows,
\bea
\label{6.d.5}
- \sum_{k\not= i} \sum_{\vP , \vQ' } 
 \delta_{ww} \mD_{ \theta(\vP)  \vQ' } \,  \big [ \tilde B_i^\vP \tilde t_{ik} , \tilde B_i^{\vQ'} \tilde t_{ik}  \big ] 
\eea
Since the first factor under the double sum is invariant under $\vP \leftrightarrow \vQ'$ but the commutator changes sign, the sum of (\ref{6.d.5}) vanishes.

\sm

$\bullet$ \textit{(\ref{6.z.5ee})} Substituting the expression on the first line of (\ref{6.z.4}) into the first commutator, we see that the contributions from the modular tensor $\cA$ cancel right away. The remaining contributions from  the modular tensor $\mD$ give, 
\bea
\label{6.c.2}
\half \sum_{\ell\not= i} \sum_{\vP , \vQ } 
 \delta_{ww} \mD_{ \theta(\vP) I \vQ}  \Big [ \tilde t_{jk}, \big [ \tilde B_i^\vP \tilde t_{i\ell} , \tilde B_i^\vQ \tilde t_{i\ell}  \big ]  \Big ]
\eea
Moving $\tilde t_{jk}$ through the $\tilde B_i$ factors and using $[\tilde t_{jk}, \tilde t_{i\ell}]= 0$ when $\ell \not = j,k$ shows that only the contributions from $\ell =j,k$ are non-vanishing and equal to one another. Furthermore, using
$[\tilde t_{jk}, \tilde t_{ij}]= - [\tilde t_{jk}, \tilde t_{ik} ]$, the first commutator of  (\ref{6.z.5ee}) becomes,
\bea
\label{6.c.3}
- \sum_{\vP , \vQ } 
 \delta_{ww} \mD_{ \theta(\vP) I \vQ}   \Big [ \tilde t_{jk}, \big [ \tilde B_i^\vP \tilde t_{ij} , \tilde B_i^\vQ \tilde t_{ik}  \big ]  \Big ]
\eea
Substituting the last line of (\ref{6.z.4}) into the second commutator of  (\ref{6.z.5ee}), we obtain, 
\bea
\label{6.c.4}
- \sum_\vP  \delta _{ww} \mD_\vP  \, \Big [ \big [ \tilde a_{iI} , \tilde B_j^\vP \tilde t_{jk} \big ] , \tilde t_{jk} \Big ]
= - \sum_{\vR, \vQ}  \delta _{ww}\mD_{\vR I \vQ}  \,
\Big [ \tilde B_j^\vR \tilde B_k^{\theta(\vQ)} [  \tilde t_{jk}, \tilde t_{ij}] , \tilde t_{jk} \Big ]
\eea
where we have used the extension of the $\mathfrak{t}_{h,n}$ identity in (\ref{A.Lie.2}) to an identity in $\muu_{h,n}$  to evaluate the inner commutator.\footnote{The rewriting in (\ref{6.c.4}) relies on the extension to $\muu_{h,n}$ of the $\mathfrak{t}_{h,n}$ identity in (\ref{A.Lie.2}) with $X= t_{jk}$ and only requires the structure relations of Definition \ref{6.def:1}, without generating any terms in the ideal (\ref{6.z.3}).} Using (\ref{6.z.2bb}) in the inner commutator,  moving the factors of $\tilde B_j$ and $\tilde B_k$ back into the inner commutator, converting them to $\tilde B_i$, and relabeling $\vR = \theta(\vP)$, the second commutator of (\ref{6.z.5ee}) becomes, 
\bea
\label{6.c.6}
 \sum_{\vP, \vQ}  \delta _{ww}  \mD_{\theta(\vP) I \vQ} \, 
\Big [  \tilde t_{jk},  [ \tilde B_i^\vP \tilde t_{ij}, \tilde B_i^\vQ \tilde t_{ik} ]  \Big ]
\eea
The simplified expressions (\ref{6.c.3}) and (\ref{6.c.6}) for the two commutators in (\ref{6.z.5ee}) add up to zero
which completes our proof of the final relation in (\ref{6.z.5}). 
\end{proof}

\newpage

\section{Establishing $\LL_{ww}^0 \in \mathfrak{der}(\mt_{h,n})$}
\setcounter{equation}{0}
\label{sec:66}

In this section, we shall establish the following results, in the order of the subsections.
\begin{enumerate}
\itemsep =-0.03in
\item Prove the existence and uniqueness of lifting the $\hat \mt_{h,n}$-valued components $J^{(1,0)}_i, J_i^{(0,1)}$ and $L^1_{ww}$ of the connection $\cJ$ to $\hat {\muu}_{h,n}$-valued components $\tilde J^{(1,0)}_i, \tilde J_i^{(0,1)}$ and $\tilde L^1_{ww}$;
\item Prove that $\tilde L_{ww} (\tilde J^{(1,0)}_i)$ is contained in the ideal $\cI$ defined by (\ref{6.z.3});
\item Prove $\tilde L^0_{ww} ([\tilde a_{iI}, \tilde a_{jJ} ]) \in \cI$, so that $\tilde L^0_{ww}$ and $\tilde L_{ww}$  project to derivations of $\hat \mt_{h,n}$;
\item Deduce $[L_{ww}  , J^{(1,0)}_i] = 0$ from the first three results.
\end{enumerate}
These results will furthermore
set the stage for proving the remaining flatness conditions  $\delta_{vv} L_{ww} - \delta_{ww} L_{vv}=0$ and $[L_{vv}, L_{ww}]=0$ as well as modularity of $L_{ww}$ in section \ref{sec:7}.

\subsection{Lifting $J^{(1,0)}_i, J_i^{(0,1)}$ and $L^1_{ww}$ from $\hat \mt_{h,n}$ to $\hat{\muu}_{h,n}$}

We begin the proof of the existence and uniqueness of the lift to $\hat{\muu}_{h,n}$ of the component of the 
extended flat connection $\cJ$ (\ref{7.0.x}) with the following results.

\sm

Since the lift $\muu_{h,n} \to \mt_{h,n}$ is an isomorphism in $a$-degrees zero and one, there exists a unique lift 
of the relations of Definition \ref{3.def:3}, which is given by,
\bea
\delta_{ww} \, \tilde a_{iI} =0 
\hskip 1in 
\delta _{ww} \, \tilde b_i^I =0 
\hskip 1in 
\delta_{ww} \, \tilde t_{ij}=0
\eea
The isomorphism also implies that  there exists a unique lift of the $a$-degree zero component $ J_i^{(0,1)} $ to $\tilde J_i^{(0,1)}$ and of the $a$-degree one component $ J^{(1,0)}_i$ to $ \tilde J^{(1,0)}_i$, given as follows,
\begin{subequations}
\label{66.a.2}
\begin{align}
\tilde J^{(0,1)} _i & = 
- \pi \bar \om_I(x_i) \, \tilde b_i^I
\label{66.a.2aa} \\
\tilde J_i^{(1,0)}  &=  
\bPhi^J(x_i; \tilde B_i) \, \tilde a_{iJ} + \sum_{j \neq i} \bG (x_i,x_j; \tilde B_i) \, \tilde t_{ij}  
\label{66.a.2bb} 
\end{align}
\end{subequations} 
The flatness conditions of (\ref{2.b.3})  for the DHS connection uplift  as follows, 
\begin{subequations}
\label{89.b.3}
\begin{align}
{}  [ \tilde J^{(1,0)} _i , \tilde J^{(1,0)} _j  ] & \in \cI
\label{89.b.3aa} \\
\p_i \tilde J^{(1,0)} _j - \p_j \tilde J^{(1,0)} _i & =  0
\label{89.b.3bb} \\
\bar \p_j \tilde J^{(1,0)} _i - [\tilde J^{(0,1)}_j , \tilde J^{(1,0)} _i] & =  \pi \delta(x_i, x_j) \, \tilde t_{ij}  &  j & \not = i
\label{89.b.3cc} \\
\bar \p_i \tilde J^{(1,0)} _i - [\tilde J^{(0,1)}_i , \tilde J^{(1,0)} _i] & =  - \sum_{k \not= i} \delta(x_i, x_k) \tilde t_{ik}
\label{89.b.3dd}
\end{align}
\end{subequations}
In particular, we have,
\bea
\delta_{ww} \tilde J^{(0,1)}_i=0
\eea

The lemma below gives the uplift of the component $L^1_{ww}$ to $\tilde L^1_{ww}$. Combined with the action of $\tilde L^0_{ww}$ on $\muu_{h,n}$ given in (\ref{6.z.4}), this provides $\tilde L_{ww}$, as follows,
\bea
\tilde L_{ww} = \tilde L^0_{ww} + \tilde L^1_{ww}
\eea
{\lem
\label{66.lem:1}
There exists a unique lift of the $\hat \mt_{h,n}$-valued component $L^1_{ww}$ to the $\hat{\muu}_{h,n}$-valued component $\tilde L_{ww}^1$ such that the following relations are satisfied in $\hat{\muu}_{h,n}$,
\begin{subequations}
\label{66.a.1}
\begin{align}
\p_i \tilde L_{ww}^1 - \delta_{ww} \tilde J_i^{(1,0)} & = 0 
\label{66.a.1aa} \\
\delta_{vv} \tilde L_{ww}^1 - \delta_{ww} \tilde L_{vv} ^1 & =0
\label{66.a.1bb} \\
\bar \p_i \tilde L_{ww}^1  - \big [ \tilde J^{(0,1)}_i , \tilde L_{ww}^1 \big ] - \pi \bar \om_I(x_i) \tilde L^0_{ww} (\tilde b_i^I) & = -  \pi \delta(w,x_i) \tilde J_i^{(1,0)} 
\label{66.a.1dd} 
\end{align}
\end{subequations}
where $\tilde L^0_{ww} (\tilde b_i^I) $ was given in (\ref{6.z.4}). 
The explicit expressions for $\tilde L_{ww}^1$ is as follows,
\bea
\label{66.a.2cc}
\tilde L_{ww} ^1 & = & \tilde \cV_{ww} + \tilde \cV_{ww}^1 
\eea
where the constituents $\tilde \cV_{ww}$ and $ \tilde \cV_{ww}^1 $ are given by,
\bea
\label{66.a.8}
\tilde \cV_{ww} & = & - \sum_k \theta \bG(w,x_k; \tilde B_k) \Big (  \bPhi^J(w; \tilde B_k) \tilde a_{kJ} 
+ \half \sum_{\ell  \not= k}  \theta \bG(w,x_\ell; \tilde B_\ell) \tilde t_{k\ell} \Big )
\no \\ 
 \tilde \cV_{ww}^1 & =&  
- \om^J(w)  \sum_k \bigg \{ { 1 \over h} \sum _{\vI \not= \emptyset}  \p_w \Phi _{J \vI}{}^K(w) \tilde B_k^ \vI \, \tilde a_{kK}  
+ c_h\p_w \Phi _J{}^K(w) \tilde a_{kK} \bigg \}
\eea
where $c_1=0$ and $c_h=1/(h-1)$ for $h \geq 2$.}

\sm

Equations (\ref{66.a.1}) and (\ref{66.a.2cc})  are the uplifts of the relations (\ref{7.a.3}),  (\ref{7.a.5}) and the combined equations (\ref{7.b.2}) with (\ref{7.b.9}), respectively. 

\begin{proof}
To prove (\ref{66.a.1aa}), we use the fact that the variational derivative (\ref{3.d.2}), by its $a$-degree
one, straightforwardly lifts to,
\bea
\label{66.b.8}
\delta _{ww} \tilde J_i^{(1,0)}  = - \p_i \, \theta \bG(w,x_i;\tilde B_i) \, \Big ( \bPhi^J(w; \tilde B_i) \, \tilde a_{iJ} + \sum_{j \neq i} \theta \bG (w,x_j; \tilde B_j) \, \tilde t_{ij}  \Big ) 
\eea
and verify that it equals $\p_i \tilde L^1_{ww}$ obtained from differentiating (\ref{66.a.2cc}), as was done in the
proof of Proposition \ref{7.prop:1}.  To prove (\ref{66.a.1bb}), we closely follow the calculations used to prove Corollary \ref{7.cor:2}, which exclusively involves expressions of $a$-degree one. More specifically, we decompose $\tilde J_i^{(1,0)}$ and $\tilde L^1_{ww}$, which are defined in (\ref{66.a.2bb}), and (\ref{66.a.2cc}), respectively, as follows,
\bea
\label{66.c.6}
\tilde J_i^{(1,0)}  & = & \om^K(x_i) \, \tilde a_{iK} + \p_i \tilde \cW
\no \\
\tilde \LL_{ww}^1 & = & \delta_{ww} \tilde \cW - \tilde \cU_{ww} 
\eea
where $\tilde \cW$  and $\tilde \cU_{ww}$ are given by the following uplifts of (\ref{7.c.7}) and (\ref{7.c.10}),
\bea
\label{66.c.7}
\tilde \cW & = & \sum_k \sum _{\vI \not = \emptyset} \Phi _\vI {}^K(x_k) \tilde B_k^\vI \, \tilde a_{kK}
+ \half \sum _{k \not= \ell} \sum_\vI \cG_\vI (x_k, x_\ell) \tilde B_k^\vI \, \tilde t_{k \ell}
\no \\
\tilde \cU_{ww}& = & 
\sum_k \Big \{ \p_w \cG(w,x_k) \om^K(w) +c_h \,  \om^J(w) \p_w \Phi _J{}^K(w) \Big \} \tilde a_{kK}
\eea 
and we reiterate that $c_1=0$ and $c_h=1/(h-1)$ for $h \geq 2$. The last line of (\ref{66.c.6}) is obtained by evaluating $\delta_{ww} \tilde \cW$ with the help of the variational derivatives given in (\ref{7.c.8}) to express $\tilde \LL^1_{ww}$ with the help of  $\tilde \cU_{ww}$. One verifies that the  function $\tilde \cU_{ww}$ satisfies the variational curl condition $\delta _{vv} \, \tilde \cU_{ww} = \delta _{ww} \, \tilde \cU_{vv}$ which, together with the variational derivative identity $\delta_{vv} \delta_{ww} \tilde \cW = \delta_{ww} \delta_{vv} \tilde \cW$ of (\ref{comdww}), establishes  (\ref{66.a.1bb}).
To prove (\ref{66.a.1dd}), one evaluates the combination $\bar \p_i \tilde L_{ww}^1  - \big [ \tilde J^{(0,1)}_i , \tilde L_{ww}^1 \big ] +  \pi \delta(w,x_i) \tilde J_i^{(1,0)} $ and shows that the result is given by $\pi \bar \om_I(x_i) \tilde L^0_{ww} (\tilde b_i^I) $ of (\ref{6.z.4}) using the variational derivative of $\cA$ given in (\ref{7.prop.2a}) of Proposition~\ref{7.prop:2}. 
\end{proof}

\subsection{Proof of $[\tilde L_{ww}, \tilde J_i^{(1,0)}]   \in  \cI$}

The relations established in Lemma \ref{7.lem:5}  may be uplifted to the action of $\tilde \LL^0_{ww}$ on the generators $\tilde a_{iI} $ and $\tilde t_{ij}$ of $\muu_{h,n}$ as given by the lemmas below.

{\lem
\label{66.lem:5a}
The\footnote{A caveat similar to that of Lemma \ref{7.lem:5a} is in order.}  $\xx$-dependence of $\tilde Z^a_{iI}$ and $\tilde Z^t_{ij}$, defined below, belongs to $\cI$,
\begin{subequations}
\label{66.e.2A}
\begin{align}
\tilde Z^a_{iI} & =  \int _\Sigma d^2 x_i \, \bar \om_I(x_i) \, \big [ \tilde L_{ww}^1 (\xx| \btau) , \tilde J_i^{(1,0)} (\xx|\btau) \big ]
\label{7.e.2alt} \\
\tilde Z^t_{ij} & = \lim_{x_j \to x_i} \big [ \tilde L_{ww}^1 (\xx| \btau), \tilde t_{ij} \big ]
\label{66.e.2Abb}
\end{align}
\end{subequations}}

\begin{proof}
\vskip -0.1in
The analytic parts of the proof are identical to those in the proof of Lemma \ref{7.lem:5a}, but the algebraic parts significantly differ.  To investigate the $\xx$-dependence of  $\tilde Z^a_{iI}$, we note that  the variable $x_i$ has been integrated over so that $\tilde Z^a_{iI}$ is independent of $x_i$. The dependence on the remaining $x_j$ may be obtained by  applying $\p_j$ for $j \not= i$,
\bea
\p_j \tilde Z^a_{iI} & = & 
\int _\Sigma d^2 x_i \, \bar \om_I(x_i) \, \Big ( \big [ \p_j \tilde \LL_{ww}^1 , \tilde J_i^{(1,0)}  \big ]
+ \big [ \tilde \LL_{ww}^1 , \p_j \tilde J_i^{(1,0)}  \big ] \Big )
\no \\ & = & 
\int _\Sigma d^2 x_i \, \bar \om_I(x_i) \, \delta_{ww} \big [ \tilde J_j^{(1,0)} , \tilde J_i^{(1,0)}  \big ] 
\eea
In going from the first line to the second,  we have used (\ref{89.b.3bb}) to convert $\p_j \tilde J_i^{(1,0)}$ into $\p_i \tilde J_j^{(1,0)}$, integrated by parts in $x_i$, and used (\ref{66.a.1aa}) to convert $\p_i \tilde L^1_{ww} $ into $ \delta _{ww} \tilde J^{(1,0)}_i$ in both terms. The second line belongs to $\cI$ as a result of the flatness condition (\ref{89.b.3aa}).  
To investigate the $\xx$-dependence of $\tilde Z^t_{ij}$, we evaluate its   derivative with respect to $x_k$ for $k \not= i,j$  using  (\ref{66.a.1aa})  to obtain,
\bea
\p_k \tilde Z^t_{ij} = \delta_{ww} \lim_{x_j \to x_i} \big [ \tilde J_k^{(1,0)} (\xx| \btau), \tilde t_{ij} \big ]
\eea
which vanishes by using (\ref{6.z.1}) and (\ref{6.z.2}). The vanishing of the derivative with respect to $x_i$, modulo terms in $\cI$,  is proven in appendix \ref{sec:F}. 
\end{proof}

{\lem
\label{66.lem:25}
The\footnote{A caveat similar to that of Lemma \ref{7.lem:5a} is in order.}  
actions of $\tilde \LL_{ww}^0$  on $\tilde a_{iI}$ and $\tilde t_{ij}$ satisfy the following relations,
\begin{subequations}
\label{66.b.1}
\begin{align}
{} \big [ \tilde \LL_{ww}^0 (\btau) , \tilde a_{iI} \big ]  + \int _\Sigma d^2 x_i \, \bar \om_I(x_i) \, \big [ \tilde \LL_{ww}^1(\xx| \btau) , \tilde J_i^{(1,0)} (\xx|\btau) \big ]  & \, \in \, \cI
\label{66.b.1aa} \\
{} \big [ \tilde \LL_{ww}^0 (\btau) , \tilde t_{ij} \big ]  + \lim_{x_j=x_i} \big [ \tilde \LL_{ww} ^1(\xx| \btau), \tilde t_{ij} \big ] & \,  \in \, \cI
\label{66.b.1bb}
\end{align}
\end{subequations}}

The proof of this lemma heavily relies on Lemma \ref{66.lem:5a} and is given in appendix \ref{sec:H}.

{\cor
\label{66.cor:1}
The\footnote{A section $s_{\underline{U}_0}$ is understood.} commutator $[\tilde L_{ww}, \tilde J_i^{(1,0)}]$ satisfies the following relations,
\begin{subequations}
\label{66.cor.1}
\begin{align}
\bar \p_i \big [ \tilde L_{ww} , \tilde J_i^{(1,0)} \big ] - \Big [ \tilde J_i^{(0,1)}, \big [ \tilde L_{ww} , \tilde J_i^{(1,0)} \big ]  \Big ] \in \cI
\label{66.cor.1aa} \\
\int _\Sigma d^2 x_i \, \bar \om_I(x_i) \big [ \tilde L_{ww}, \tilde J_i^{(1,0)} \big ] \in \cI
\label{66.cor.1bb}
\end{align}
\end{subequations}
for $\xx \in \Sigma^n$ rather than $\xx \in {\rm Cf}_n(\Sigma)$, namely, all its terms, including those multiplying $\delta$-functions, are in $\cI$.}

\begin{proof}
The notation $[\tilde L_{ww} , \tilde J_i^{(1,0)} ]$ for $[\tilde L_{ww}^1 , \tilde J_i^{(1,0)} ]
+ \tilde L_{ww}^0(\tilde J_i^{(1,0)} )$  is justified in both relations above since  $\tilde L_{ww}^0$ belongs to $\mathfrak{der}(\hat {\muu}_{h,n})$ by Theorem \ref{6.thm:1} and $\tilde J_i^{(1,0)}$ belongs to $\hat {\muu}_{h,n}$. Working out (\ref{66.cor.1aa}) with the help of (\ref{66.a.1dd}), we obtain in parallel to the proof of Lemma \ref{7.lem:5},
\bea
\label{66.a.4}
\bar \p_i \big [ \tilde \LL_{ww}, \tilde J_i^{(1,0)} \big ] 
- \Big [ \tilde J_i^{(0,1)} , \big [ \tilde \LL_{ww}, \tilde J_i^{(1,0)} \big ]  \Big ] 
=
- \pi \sum_{j \not= i} \delta (x_i, x_j) \big [ \tilde \LL_{ww}^0 + \tilde L^1_{ww} , \tilde t_{ij} \big ] 
\eea
The right side belongs to $\cI$ in view of (\ref{66.b.1bb}) of Lemma \ref{66.lem:25}, which proves the corollary.
\end{proof}

We shall now prove the following central theorem.
{\thm 
\label{66.thm:1}
The\footnote{A section $s_{\underline{U}_0}$ is understood.} commutator $[\tilde L_{ww}, \tilde J_i^{(1,0)}]$ satisfies,
\bea
\label{66.thm.1}
[\tilde L_{ww}, \tilde J_i^{(1,0)}]  \, \in \, \cI
\eea}

\begin{proof}
\vskip -0.02in
Combining the relations in (\ref{66.cor.1}) with the fact that the commutator $\big [ \tilde \LL_{ww}, \tilde J_i^{(1,0)} \big ] $ is a Lie series in $\tilde b$ and a single-valued $(1,0)$ form in $x_i$, we proceed to proving the theorem by contradiction. Assuming that (\ref{66.thm.1}) does not hold, so that $\big [ \tilde \LL_{ww}, \tilde J_i^{(1,0)} \big ] \not \in \cI$, the Lie series expansion of the commutator must have a lowest $\tilde b$-degree term which does not belong to $\cI$. We shall refer to this term as $X$ for which we have $X \not \in \cI$, if it exists. Since the 
second term of (\ref{66.cor.1aa}) has $\tilde b$-degree one higher than the left side, $X$ must be a single-valued holomorphic $(1,0)_i$ form  in $x_i$, modulo terms that belong to $\cI$. But in view of (\ref{66.cor.1bb}) the integral of $X$ against every $\bar \om_I(x_i)$ belongs to $\cI$ and therefore $X$ itself must belong to $\cI$,  which contradicts our initial assumption in the proof by contradiction. Therefore, we must have $\big [ \tilde \LL_{ww}, \tilde J_i^{(1,0)} \big ] \in \cI$. 
\end{proof}

\subsection{Proof  of $\tilde L_{ww}^0 \big ( [\tilde a_{iI}, \tilde a_{jJ} ] \big ) \in \cI$}
\label{sec:six3}

We shall now prove the missing link needed to establish that $L_{ww}^0 \in\mathfrak{der} ( \hat \mt_{h,n})$.
We begin with the following lemma.
{\lem
\label{87.lem:1}
The\footnote{A section $s_{\underline{U}_0}$ is understood.}
following relation holds in $ \hat{\muu}_{h,n}$ for arbitrary $i \not= j$,
\bea
\label{87.lem.1}
\int _\Sigma d^2 x_i \,\bar \om_I(x_i) \! \int _\Sigma d^2 x_j \, \bar \om_J(x_j)  
\big [ \tilde J^{(1,0)}_i, \tilde J^{(1,0)} _j \big ]
=
\big [ \tilde a_{iI}, \tilde a_{jJ} \big ]
+ \sum_{\vP, \vQ} \mN_{I \vP J \vQ} \, \big [ \tilde B_i^\vP \tilde t_{ij}, \tilde B_j^\vQ \tilde t_{ij} \big ] 
\qquad
\eea
where the modular tensors $\mN$ were defined in (\ref{defcn}). }

\begin{proof}
The left side of (\ref{87.lem.1}) may be decomposed into its constituents with the help of the expression of (\ref{66.a.2bb}) for $\tilde J_i^{(1,0)}$ and we obtain, 
\bea
\label{6.g.6}
&&
\int_\Sigma d^2 x_i \, \bar \om_I(x_i) \int _\Sigma d^2 x_j \, \bar \om_J(x_j ) \bigg \{ 
\big [ \bPhi^K(x_i; \tilde  B_i) \, \tilde  a_{iK} , \bPhi^L(x_j; \tilde B_j) \, \tilde a_{jL}  \big ]
 \\ && \hskip 2.1in
+ \sum_{\ell \not= j} \big [ \bPhi^K(x_i; \tilde B_i) \, \tilde a_{iK} ,  \bG(x_j,x_\ell; \tilde B_j) \tilde t_{j \ell} \big ] 
\no \\ && \hskip 2.1in
+ \sum_{k \not= i} \big [  \bG(x_i,x_k; \tilde B_i) \tilde t_{ik} , \bPhi^L(x_j; \tilde B_j) \, \tilde a_{jL} \big ] 
\no \\ && \hskip 2.1in
+ \sum_{k \not= i} \sum _{\ell \not= j} \big [ \bG(x_i,x_k; \tilde B_i) \tilde t_{ik} , \bG(x_j,x_\ell; \tilde B_j) \tilde t_{j \ell} \big ] \bigg \}
\no
\eea
The following integrals are obtained by using the fact that $\bPhi ^J(x_i; \tilde B_i)- \om^J(x_i)$ and $\bG(x_i, x_j;\tilde B_i)$ are in the range of $\p_i$, as may be seen by inspection of (\ref{2.b.3a}) and (\ref{2.b.3b}), 
\bea
\label{6.g.7}
\int_\Sigma d^2 x_i \, \bar \om_I(x_i) \, \bPhi^J(x_i; \tilde B_i) & = & \delta ^J_I
\no \\
\int_\Sigma d^2 x_i \, \bar \om_I(x_i) \, \bG(x_i, x_k; \tilde  B_i) & = & 0
\eea
As a result of (\ref{6.g.7}),  the contribution from the first line on the right of (\ref{6.g.6}) gives $\big [ \tilde a_{iI}, \tilde a_{jJ} \big ]$, while the contributions from the second and third lines vanish identically. On the fourth line, the contribution from $\ell \not= i$ vanishes by carrying out the integral over $x_i$ first while the contribution from $k \not= j$ vanishes by carrying out the integral over $x_j$ first. This leaves only the contribution with $k=j$ and $\ell=i$, so that (\ref{6.g.6}) takes the  form,
\bea
\label{6.g.8}
\big [ \tilde a_{iI}, \tilde a_{jJ} \big ]  + 
\int_\Sigma d^2 x_i \, \bar \om_I(x_i) \int _\Sigma d^2 x_j \, \bar \om_J( x_j ) 
 \big [ \bG(x_i,x_j; \tilde B_i) \tilde t_{ij} , \bG(x_j,x_i; \tilde B_j) \tilde t_{ij} \big ] 
\eea
Expanding the generating functions in the second term in (\ref{6.g.8})  in powers of $\tilde B$ the combination in (\ref{6.g.8}) becomes, 
\bea
\label{6.g.9}
\big [ \tilde a_{iI}, \tilde a_{jJ} \big ] \! + \! 
\int_\Sigma d^2 x_i \, \bar \om_I(x_i) \!
 \int _\Sigma d^2 x_j \, \bar \om_J( x_j  ) \sum_{\vP, \vQ} 
\p_i \cG_\vP(x_i, x_j) \p_j \cG_\vQ (x_j, x_i) 
\big [ \tilde B_i^\vP \tilde t_{ij}, \tilde B_j^\vQ \tilde t_{ij} \big ] 
\quad
\eea
The integrals may be carried out in terms of the modular tensors $ \mN$ defined by (\ref{defcn}) to give the desired formula of (\ref{87.lem.1}). 
\end{proof}

Next, we apply Lemma \ref{87.lem:1} to proving the following theorem.

{\thm 
\label{66.thm:2} 
(a) The following identity holds for arbitrary $i\neq j$,
\bea
\label{6.g.0}
{} \big [ \tilde \LL_{ww}^0, [\tilde a_{iI}, \tilde a_{jJ} ] \big ] \in \cI
\eea
(b) $\tilde L_{ww}^0 \in \mathfrak{der}(\hat {\muu}_{h,n})$ induces $L_{ww}^0 \in \mathfrak{der} (\hat \mt_{h,n})$ under the projection $\pi : \muu_{h,n} \to \mt_{h,n}$ in (\ref{6.z.3a}),
\bea
\label{6.g.0a}
\pi \big (\tilde L^0_{ww} ( \tilde X) \big ) = L_{ww}^0 \big(\pi (\tilde X) \big)  
\eea
for arbitrary $\tilde X \in \hat {\muu}_{h,n} $. }

\begin{proof}
To prove \textit{(a)}, we use the relation $[\tilde L_{ww}, \tilde J^{(1,0)} _i] \in \cI$ established in Theorem \ref{66.thm:1}, the fact that $[ \tilde J^{(1,0)}_i, \cI ] \in \cI$ and the statement $\tilde L_{ww} \in \mathfrak{der}(\hat{\muu}_{h,n})$ of Theorem \ref{6.thm:1} to prove the  following relation for the double commutator,
\bea
\label{6.g.1}
\big [ \tilde L_{ww} , [ \tilde J^{(1,0)} _i, \tilde J^{(1,0)} _j ] \big ] =
\big [ [ \tilde L_{ww}, \tilde J^{(1,0)} _i], \tilde J^{(1,0)} _j \big ] 
+ \big [ \tilde  J^{(1,0)} _i, [ \tilde L_{ww}, \tilde J^{(1,0)} _j ] \big ] \in \cI
\eea
We may decompose $\tilde L_{ww}$ into $\tilde L_{ww} = \tilde L_{ww}^0 + \tilde L_{ww}^1$ where $\tilde L_{ww}^0$ is independent of $\xx$, takes values  in $\mathfrak{der} (\hat{\muu}_{h,n})$ and was given on the generators of $\muu_{h,n}$ in (\ref{6.z.4}),  while $\tilde L^1_{ww}$  was defined in (\ref{66.a.2cc}), is $\xx$-dependent and takes values in $\hat{\muu}_{h,n}$. 
The relations $[ \tilde J^{(1,0)} _i, \tilde J^{(1,0)} _j ] \in \cI$ and $\tilde L^1_{ww} \in \hat {\muu}_{h,n}$ imply that 
$\big [ \tilde L_{ww}^1 , [ \tilde J^{(1,0)} _i, \tilde J^{(1,0)} _j ] \big ] \in \cI$. Combining this result with (\ref{6.g.1}) implies in turn, 
\bea
\label{6.g.2}
 \big [ \tilde L_{ww}^0 , [ \tilde J^{(1,0)} _i, \tilde J^{(1,0)} _j ] \big ] \in \cI
\eea
As a result, we also have the following identity,
\bea
\label{6.g.4}
\int_\Sigma d^2 x_i \, \bar \om_I(x_i) \int _\Sigma d^2 x_j \, \bar \om_J( x_j ) 
\big [ \tilde L_{ww}^0 , [ \tilde J^{(1,0)} _i, \tilde J^{(1,0)} _j ] \big ] \in \cI
\eea
Since $\tilde L^0_{ww}$ is independent of $\xx$, we may use (\ref{87.lem.1}) of Lemma \ref{87.lem:1} to evaluate the integrals, 
\bea
\label{6.g.5}
\Big [ \tilde L^0_{ww}, \big [ \tilde a_{iI}, \tilde a_{jJ} \big ] \Big ] 
+ \sum_{\vP, \vQ} \mN_{I \vP J \vQ} \, \Big [ \tilde L^0_{ww}, \big [ \tilde B_i^\vP \tilde t_{ij}, \tilde B_j^\vQ \tilde t_{ij} \big ] \Big ] \in \cI
\eea
Under swapping the summation variables $\vP \leftrightarrow \theta(\vQ)$, the double commutator changes sign in view of the identities $\tilde B_i^{\theta(\vQ)} \tilde t_{ij} = \tilde B_j ^\vQ \tilde t_{ij}$ and $\tilde B_j^{\theta(\vP)} \tilde t_{ij} = \tilde B_i ^\vP \tilde t_{ij}$, while the coefficient $\mN_{I \vP J \vQ}$ is invariant thanks to the antipode and cyclic symmetries $\mN_{I \theta(\vQ) J \theta(\vP)} = \mN_{\vP J \vQ I} = \mN_{I \vP J \vQ } $ of Proposition \ref{7.prop:1re}. As a result, the summand in the second term of (\ref{6.g.5})  is odd under swapping $\vP \leftrightarrow \theta(\vQ)$ so that the sum  vanishes, which proves (\ref{6.g.0}). 

\sm

To prove part \textit{(b)} we combine the results of Theorem \ref{6.thm:1} with the results of part \textit{(a)} which, together, imply that $\tilde L_{ww}^0$ preserves all the structure relations of (\ref{6.z.1}) exactly in $\hat{ \muu}_{h,n}$ as well as that of (\ref{6.g.0}) which is modulo the ideal $\cI$. In the projection $\pi : \muu_{h,n} \to \mt_{h,n}$,  the ideal $\pi(\cI)= 0$ so that $\pi( \tilde J_i^{(1,0)})=  J_i^{(1,0)}$ and $\tilde L_{ww}^0 \to L_{ww}^0$ as given by (\ref{6.g.0a}), which preserves all the structure relations of $\mt_{h,n}$ and therefore acts as a derivation on $\hat \mt_{h,n}$.
\end{proof}

\sm

{\rmk Since it has now been established that $L^0_{ww} \in \mathfrak{der}(\hat \mt_{h,n})$, we shall henceforth represent its action by the customary commutator notation.} 

\subsection{Vanishing of the commutator $[L_{ww},  J_i^{(1,0)}] $}
\label{sec:vanLJ}

With the results $[\tilde L_{ww},  \tilde J_i^{(1,0)}] \in {\cal I}$ and $L_{ww}^0 \in \hat{\mt}_{h,n}$ of Theorems \ref{66.thm:1} and \ref{66.thm:2} in place, we readily deduce the following flatness condition for the extended connection ${\cal J}$ of (\ref{7.0.x}).

{\thm 
\label{12.thm:1}
The  commutator $[L_{ww},  J_i^{(1,0)}] $ vanishes in $\hat \mt_{h,n}$.}

\begin{proof}
The theorem readily follows from $[\tilde L_{ww},  \tilde J_i^{(1,0)}] $ falling into the
ideal $\cI$ by Theorem \ref{66.thm:1} and the fact that $L^0_{ww}$ obtained from
the projection $\pi : \muu_{h,n} \to \mt_{h,n}$ of $\tilde L^0_{ww}$ is a derivation of $\hat{\mt}_{h,n}$ by Theorem \ref{66.thm:2}.
\end{proof}

\newpage

\section{Completing the proof of flatness, of Theorem \ref{1.thm:main}, and stage-setting of modularity}
\setcounter{equation}{0}
\label{sec:7}

To complete the proof that the extended connection $\cJ$ of (\ref{7.0.x})  is flat, it remains to prove the relation  $\delta_{vv} \LL_{ww} - \delta_{ww} \LL_{vv}=0$  of (\ref{7.a.3bb}) and the relation $[\LL_{vv}, \LL_{ww}]=0$ of (\ref{7.a.4bb}). These results will be established in the remainder of this section.

\subsection{Proving the flatness condition $\delta_{vv} \LL_{ww} - \delta_{ww} \LL_{vv}=0$}
\label{sec:7.1}

{\lem
\label{lem:7.22}
We have the flatness condition,
\bea
\label{10.a.1}
\delta_{vv} [\LL_{ww}, X] - \delta_{ww} [\LL_{vv}, X] =0
\eea
}

\begin{proof}
\vskip -0.0in
To prove this flatness condition  means proving that the condition is realized when applied to an arbitrary (moduli-independent) element $X \in \hat{\mt}_{h,n}$. 
Given that $\LL_{ww}$ was established to be a derivation of $\hat{\mt}_{h,n}$ in Theorem \ref{66.thm:2}, 
it suffices to prove this condition on the elementary generators $X \in \{ a_{iI}, b_i^I, t_{ij}\}$ with $I\in \{ 1,\cdots, h\} $ 
and $ i,j \in \{ 1,\cdots, n \} $ which in turn was accomplished in Schemata \ref{schem.15}.
\end{proof}

\subsection{Proving the flatness condition $[\LL_{vv}, \LL_{ww}] =0$}
\label{sec:llcond}

{\lem 
\label{7.lem:33}
We have the flatness condition,
\bea
[\LL_{vv}, \LL_{ww}] =0
\eea}

\begin{proof}
\vskip -0.0in 
We begin by showing that $[\LL_{vv}, \LL_{ww}] $ is independent of $\xx$ by computing its $\p_i$ derivative, and using the first relation in (\ref{7.a.3}) to recast the $\p_i$ derivatives of $\LL_{vv}$ and $\LL_{ww}$ in terms of variational derivatives of $J_i^{(1,0)}$. In this way, we obtain,
\bea
\label{10.c.1}
\p_i \big [ \LL_{vv}, \LL_{ww} \big ] & = & 
\big [ \delta_{vv} J_i^{(1,0)} , \LL_{ww} \big ]  + \big [ \LL_{vv}, \delta_{ww} J_i^{(1,0)} \big ] 
\no \\ & = &
\delta_{vv}  \big [ J_i^{(1,0)} , \LL_{ww} \big ]  + \delta_{ww}  \big [ \LL_{vv}, J_i^{(1,0)} \big ] 
\no \\ && \qquad
+ \big [ J_i^{(1,0)}, \delta_{vv} \LL_{ww} - \delta_{ww} \LL_{vv} \big ] 
\qquad
\eea
The term on the last line vanishes individually in view of the results of the previous subsection, while the terms of the second line vanish in view of Theorem \ref{12.thm:1}. Since $\big [ \LL_{vv}, \LL_{ww} \big ]$ is a scalar function of each $x_i$, which is anti-holomorphic in view of $\p_i \big [ L_{vv}, L_{ww} \big ] =0 $ for all $x_i$ on the compact surface $\Sigma$ with $i \in \{ 1,\cdots,n \} $, we conclude that  $\big [ \LL_{vv}, \LL_{ww} \big ]$ is indeed independent of all~$x_i$. 

\sm

Next, we use the Jacobi identity,
\bea
\label{10.c.2}
\Big [ \big [ \LL_{vv}, \LL_{ww} \big ] , J_i^{(1,0)} \Big ]= 
\Big [ \big [ \LL_{vv}, J_i^{(1,0)} \big ] , \LL_{ww} \Big ] + \Big [ \LL_{vv}, \big [ \LL_{ww} , J_i^{(1,0)} \big ] \Big ] 
\eea
The right side vanishes since the inner commutators in both terms vanish in view of Theorem \ref{12.thm:1}. The vanishing of the left side implies the vanishing of the following combinations,
\bea
\label{10.c.3}
\int _\Sigma d^2 x_i \, \bar \om_I(x_i) \big [  [ \LL_{vv}, \LL_{ww}  ], J_i ^{(1,0)} \big ] & = & 0
\no \\
\mathop{\mathrm{Res}}_{x_j=x_i} \big [  [ \LL_{vv}, \LL_{ww}  ], J_i ^{(1,0)} \big ] & = & 0
\eea
Since $[L_{vv}, L_{ww}]$ is independent of $x_i$, this commutator commutes with the integral in the first line and the residue calculation in the second line, resulting in the relations,
\bea
\label{10.c.4}
\big [  [ \LL_{vv}, \LL_{ww}  ], a_{iI}  \big ] & = & 0
\no \\
\big [  [ \LL_{vv}, \LL_{ww}  ], t_{ij} \big ] & = & 0
\eea
Finally, we shall obtain the commutator of $[\LL_{vv}, \LL_{ww}]$ with $b_i^I$ by considering the complex conjugate derivatives, 
\bea
\label{10.c.5}
\bar \p_i [ \LL_{vv}, \LL_{ww}] =  [ \bar \p_i \LL_{vv}, \LL_{ww}]  + [ \LL_{vv}, \bar \p_i  \LL_{ww}] 
\eea
The left side vanishes since $[ \LL_{vv}, \LL_{ww}]$ is independent of $\xx$, while the right side may be worked out with the help of (\ref{7.a.5}), and we obtain,
\bea
\label{10.c.6}
\big [ [J_i^{(0,1)}, \LL_{vv}] - \pi \delta (v,x_i) J_i^{(1,0)} , \LL_{ww} \big ]  
+ \big [ L_{vv}, [J_i^{(0,1)}, \LL_{ww}] - \pi \delta (w,x_i) J_i^{(1,0)} \big ] =0
\eea
The terms proportional to $\delta$-functions cancel individually since the accompanying commutators $[J_i^{(1,0)} , \LL_{ww} ]$ and $[ L_{vv},  J_i^{(1,0)}]$ vanish. The remaining relation may be recast as follows using the Jacobi identity,
\bea
\label{10.c.7}
\big [ [ \LL_{vv} , \LL_{ww} ] , J_i^{(0,1)} \big ] =0
\eea
and using the explicit form of $J^{(0,1)}_i = - \pi \bar \om_I(x_i) b_i^I$ we obtain,
\bea
\label{10.c.8}
\big [ [ \LL_{vv} , \LL_{ww} ], b_i^I  \big ] =0
\eea
Combined with the results in (\ref{10.c.4}), the above equation shows that we have,
\bea
\label{10.c.9}
\big [ [ \LL_{vv} , \LL_{ww} ], X  \big ] =0
\eea
for all generators $X \in \{ a_{iI}, b_i^I,t_{ij}\}$ of $\mt_{h,n}$. Since the commutator $[ \LL_{vv} , \LL_{ww} ]$ readily inherits the derivation property of $L_{ww}$ in Theorem \ref{66.thm:2}, this implies (\ref{10.c.9}) to hold for arbitrary $X \in  \hat{\mt}_{h,n}$ or, more informally, we obtain the desired flatness condition $[ \LL_{vv}, \LL_{ww}]=0$. 
\end{proof}

\subsection{Summary and completing the proof of Theorem \ref{1.thm:main}}

Combining the results obtained in Proposition \ref{7.prop:1}, Theorem \ref{7.thm:1}, Theorem \ref{12.thm:1} with the results obtained earlier in this section, we obtain the following summary.

{\thm 
\label{7.thm:20}
There exists a unique $\der(\hat \mt_{h,n})$-valued solution $L_{ww}= L^0_{ww} + L^1_{ww}$ to the local flatness conditions stated in (\ref{7.a.3}), (\ref{7.a.4}), (\ref{7.a.4a}) and (\ref{7.a.5}), where $L^1_{ww}$ is given in (\ref{7.b.2}) and (\ref{7.b.9})  while the action of $L^0_{ww}$ on the generators of $\hat \mt_{h,n}$ is given by (\ref{9.q.20}). 
 Moreover, $L_{ww}$ is unique under the assumption that its $a$-degree is one, in the sense detailed in Schemata \ref{setup:1}. }

\sm

The construction of the components $L_\a$ of $\cL$ from $L_{ww}$ via (\ref{7.a.2}) requires choosing a non-canonical section $s: \cT_h \to \CS(\Sigma)$, as already explained in the introduction, and made explicit  in sections  \ref{sec:46.1} and \ref{sect:bers} in terms of Beltrami differentials. The fact that the construction obeys the formulation of the \textit{local $(1,0)$ extension problem} in section \ref{sec:1.111} is the content of the following theorem.

\sm

{\thm
\label{7.thm:30}
For any $\btau_0\in\mathcal T_h$, there exists a neighborhood $U_{\btau_0}$ 
of $\btau_0$ in which problem (b$''$)-(c$''$) in section \ref{sec:1.111} admits a solution; i.e., for any section 
$s : U_{\btau_0}\to \mathrm{CS}(\Sigma)$, there exists $\mathcal L^s$ as in 
(b$''$) and satisfying (c$''$). 
}

\begin{proof}
The starting observation is that it suffices to construct $\mathcal L^{s_0}$
satisfying the conditions of (c$''$) for a {\it particular} section $s_0$,
as $\mathcal L^{s}$ for any other section $s$ is derived from it using the constraints of (b$''$). 
The construction of $\mathcal L^{s_0}$ was accomplished in section \ref{sec:4} to section \ref{sec:llcond} and 
will now be summarized.

\sm

Actually, $\mathcal L^{s_0}$ is not constructed all  at once on the whole of the product of $\mathrm{Cf}_n(\Sigma)$ by a neighborhood of $\btau_0$.  Rather, we first select a complex structure $J_0$ above $\btau_0$, and cover $\Sigma^n$ by  open subsets $U_1\times \cdots \times U_n$ attached to a finite  collection of $n$-tuples 
$\underline U=(U_1,\cdots,U_n)$  of open subsets of $\Sigma$,  each $\underline U$ being such that 
the complement of $\cup_i U_i$ has a nonempty interior in $\Sigma$. 
By restriction to $\mathrm{Cf}_n(\Sigma)$, the finite collection of $n$-tuples $\underline U$
induces a finite cover of $\mathrm{Cf}_n(\Sigma)$ (section \ref{sec:46.1}). 

\sm

We then attach to each $\underline U$ a section $s_{\underline U}$, which is defined on a neighborhood $U_{\btau_0,\underline U}$ of $\btau_0$ such that $\btau_0\mapsto J_0$, and which satisfies the vanishing condition (\ref{vanishing:cond}) on the associated  Beltrami differentials (see section \ref{sect:bers}).  We recall that the spaces $\mathrm{BD}_{s_{\underline U},\btau}$ of these Beltrami differentials, indexed by the elements $\btau$ of 
$U_{\btau_0,\underline U}$, are the images of the differential map of $s_{\underline U}$ at the various points 
 $\btau\in U_{\btau_0,\underline U}$. Specifically, the condition satisfied by $s_{\underline U}$ is that any Beltrami differential in any of these various spaces $\mathrm{BD}_{s_{\underline U},\btau}$ vanishes on $\cup_i U_i$.

\sm

We show that the expression in terms of local coordinates $(\xx|\boldsymbol{\tau})$ on 
$(U_1\times \cdots \times U_n)\times U_{\btau_0,\underline U}$ of the 
system of partial differential equations bearing on $\mathcal L^{s_{\underline U}}$ 
is particularly simple  (sections \ref{sec:gfc} and \ref{sec:locf}).  
From section \ref{sec:4.3} to the end of section \ref{sec:6}, we partially solve the analogue of
this system relative to a Lie algebra $\hat{\mathfrak u}_{h,n}$  equipped 
with a morphism $\hat{\mathfrak u}_{h,n}\to \hat{\mathfrak t}_{h,n}$ obtained by relaxing some relations of the Lie algebra 
$\mathfrak t_{h,n}$ (see Definition \ref{6.def:1}), leading to $\tilde{\mathcal L}^{s_{\underline U}}(\xx|\boldsymbol{\tau})$ such that  the $\Lambda^2(T_{\mathbb C}^*\mathrm{Cf}_n(\Sigma))$ and 
$T_{\mathbb C}^*\mathrm{Cf}_n(\Sigma)\otimes T^{*(1,0)}\mathcal T_h$ parts of the system
\bea
\label{sec7prf.3}
(d_\xx + \p _\btau)  \tilde{\cL}^{s_{\underline U}} + \p_\btau \tilde{\cJ}_\text{DHS}^{s_{\underline U}} -  \tilde{\cJ}_{\rm DHS}^{s_{\underline U}} \wedge \tilde{\cL}^{s_{\underline U}} - \tilde{\cL}^{s_{\underline U}} \wedge  \tilde{\cJ}_{\rm DHS}^{s_{\underline U}} - \tilde{\cL}^{s_{\underline U}} \wedge \tilde{\cL}^{s_{\underline U}}= 0
\eea
which is equivalent to 
\bea 
\label{sec7prf.1}
(d_\xx + \partial_\btau)  \, \tilde{\cJ}^{s_{\underline U}} - \tilde{\cJ}^{s_{\underline U}} \wedge \tilde{\cJ}^{s_{\underline U}} 
= 0
\eea
under 
\bea
\label{sec7prf.2}
\tilde{\mathcal J}^{s_{\underline U}}(\xx|\boldsymbol{\tau})= \tilde{\mathcal J}_{\mathrm{DHS}}^{s_{\underline U}}(\xx|\boldsymbol{\tau}) + \tilde{\mathcal L}^{s_{\underline U}}(\xx|\boldsymbol{\tau})
\eea
are satisfied modulo terms in the ideal $\mathcal I=\mathrm{ker}(\hat{\mathfrak u}_{h,n}\to\hat{\mathfrak t}_{h,n})$ (see \eqref{6.z.3}).
Solutions $\tilde{\mathcal J}^{s_{\underline U}}(\xx|\boldsymbol{\tau})$ 
and $\tilde{\mathcal J}^{s_{\underline U'}}(\xx'|\boldsymbol{\tau})$ corresponding to 
tuples $\underline U,\underline U'$ are shown to 
correspond to one another through the diffeomorphism 
$(\phi_{\underline U,\underline U'})_n$, where 
$\phi_{\underline U,\underline U'}$ is the transition map 
$U_{\btau_0,\underline U}\cap U_{\btau_0,\underline U'}\to\mathrm{Diff}_0(\Sigma)$
from $s_{\underline U}$ to $s_{\underline U'}$, and therefore glue to a 
well-defined $\tilde{\mathcal J}^{s_{{\underline U}_0}}$, ${\underline U}_0$ 
being a particular $n$-tuple.  
However, since the various open sets are slightly displaced by applying the diffeomorphisms, the intersection over all tuples $\underline U$ of open sets 
$U_{\btau_0,\underline U}$ must be further 
restricted to a smaller neighborhood $U_{\btau_0}$ of $\btau_0$ 
in order to ensure that
the vertical trace of the union over all tuples $\underline U$ of the displaced open sets 
$(\phi_{\underline U_0,\underline U})_n(U_1\times \cdots \times U_n\times U_{\btau_0})$  
still covers $\Sigma^n$. Arguments relative to the global behavior of 
$\tilde{\mathcal J}^{s_{\underline U_0}}$ on $\Sigma$ are then 
applied to show that it descends to a derivation of $\hat{\mathfrak t}_{h,n}$ (section 
\ref{sec:66}), therefore producing a 1-form on ${\mathcal J}^{s_{{\underline U_0}}}$ on 
$\mathrm{Cf}_n(\Sigma)\times U_{\btau_0}$ valued in $\mathfrak{der}(\hat{\mathfrak t}_{h,n})$, 
which by the above satisfies the $\Lambda^2(T^*\mathrm{Cf}_n(\Sigma))_{\mathbb C}$ and 
$(T^*\mathrm{Cf}_n(\Sigma))_{\mathbb C}\otimes T^{*(0,1)}\mathcal T_h$ parts of the flatness condition 
\bea
(d_\xx + \partial_\btau)  \, {\mathcal J}^{s_{{\underline U_0}}} - {\mathcal J}^{s_{{\underline U_0}}} \wedge {\mathcal J}^{s_{{\underline U_0}}} 
=0
\eea
In sections \ref{sec:7.1} and \ref{sec:llcond}, we show that ${\mathcal J}^{s_{{\underline U_0}}}$
also satisfies the $\Lambda^2T^{*(0,1)}\mathcal T_h$ part of this condition, and therefore 
is a solution of (c$''$) relative to  $s_{{\underline U_0}}=s_0$.  
This ends the construction of the announced $\mathcal L^{s_0}$ and therefore the proof of the  theorem. 
\end{proof}

\sm

\subsection{Modular properties of $L_{ww}$}
\label{sec:modlww}

This section is dedicated to establishing the transformation properties of $L_{ww}(\xx| \btau)$ under the modular group $ {\rm Sp}(2h,\mathbb Z)$, whose action on the homology was defined in (\ref{modsec.01}). Key ingredients are the modular tensor properties of the DHS kernels, $\cA$ and $\cD$ given by Lemma \ref{lem:MT}, and the modular transformations of the generators of $\mt_{h,n}$ given in Proposition \ref{2.prop:1}. Combined, they lead to the modular invariance of the DHS connection as reviewed in section \ref{sec:mod} and together with Theorem \ref{66.thm:2} imply the following result for $L_{ww}$. 

{\prop
\label{5.prop.lww}
In the decomposition (\ref{7.b.1}) of $L_{ww}(\xx| \btau)$, the $\hat \mt_{h,n}$-valued component $L_{ww}^1(\xx|\btau)$ is modular invariant (as are both of its components $\cV_{ww} (\xx| \btau)$ and $\cV^1 _{ww} (\btau) $ given in (\ref{7.b.9})) while  the action of $L_{ww}^0(\btau) \in \mathfrak{der}(\hat \mt_{h,n})$ determined by Theorem \ref{7.thm:1} is modular covariant, 
provided that $a_{iI}$, $b_i^I$, $t_{ij}$ transform under ${\rm Sp}(2h,\mathbb Z)$ as in~(\ref{,odsec.0.6}).
}

\begin{proof}
Modular invariance of the $\xx$-dependent and $ \hat \mt_{h,n}$-valued parts $\cV_{ww} (\xx| \btau) + \cV^1 _{ww} (\btau)$ of $L_{ww}(\xx| \btau)$ is manifest term-by-term from their expressions in (\ref{7.b.9}): the tensorial ${\rm Sp}(2h,\mathbb Z)$ transformation of DHS kernels in (\ref{modsec.04}) and of $a_{iI}$, $b_i^I$, $t_{ij}$ in (\ref{,odsec.0.6}) imply that
the generating series $\theta \bG(w,x_i;B_i)$ and $\bPhi^J(w;B_k) a_{kJ}$ in (\ref{7.b.9aa}) are individually modular invariant and that the $(2,0)_w$ forms $\om^J(w) \p_w \Phi _{J \vI}{}^K(w)  $ in (\ref{7.b.9bb}) compensate the modular transformation of the accompanying $ \hat \mt_{h,n}$ elements $B_k^ \vI  a_{kK}$.

\sm

It remains to show that, for arbitrary $X \in \hat \mt_{h,n}$, the $\mathfrak{der}(\hat \mt_{h,n})$ action $[ L^0_{ww}(\btau) , X]$ transforms under ${\rm Sp}(2h,\mathbb Z)$ in the same way as $X$ does by (\ref{,odsec.0.6}). At the level of the generators $a_{iI}$, $b_i^I$, $t_{ij}$ of $\hat \mt_{h,n}$, this is the case for each term in the expressions (\ref{9.q.20}) for their brackets with $ L^0_{ww}(\btau) $: by Lemma \ref{lem:MT} together with (\ref{7.f.5}) and (\ref{7.f.6}), the ${\rm Sp}(2h,\mathbb Z)$ transformation of the modular tensors $\delta_{ww} \cA^M{}_{\vP}{}^N$ and $\delta_{ww} \cD_{\vP}$ in (\ref{9.q.20}) is inverse to that of the accompanying $ \hat \mt_{h,n}$ elements, up to the uncontracted lower index $I$ in the first line and the upper index $I$ in the second line. By these uncontracted indices, $\delta_{ww}  \cA^M{}_{\theta(\vP) I \vQ} {}^N [ B_i^\vP a_{iM}, B_i^\vQ a_{iN} ]$ and 
$\delta_{ww} \mD_{ \theta(\vP) I \vQ}  [ B_i^\vP t_{ik} , B_i^\vQ t_{ik} ]$ in the first line of (\ref{9.q.20})  transform like $a_{iI}$ of $ [\LL^0_{ww} , a_{iI} ] $ on the left side
and $\delta_{ww}  \cA^I {}_\vP {}^J B_i^\vP a_{iJ}$ in the second line transforms like
$ b_i^I$ of $[\LL^0_{ww} , b_i^I ]$ on the left side. The contributions $ \delta _{ww}  \mD_\vP  [  B_i^\vP t_{ij}, t_{ij}  ]$ to the right side of the third line are modular invariant just like the $t_{ij}$ of $[\LL^0_{ww} , t_{ij} ]$ on the left side. 

\sm

We have demonstrated that $[\LL^0_{ww}(\btau) , X ]$ and $X$ exhibit the same modular transformation for all generators $X \in \{ a_{iI}, b_i^I,t_{ij}\}$ of $ \hat \mt_{h,n}$ with $I\in \{ 1,\cdots, h\} $ and $ i,j \in \{ 1,\cdots, n \} $.
Given that $L_{ww}(\xx| \btau)$ is a derivation of $\hat \mt_{h,n}$ by Theorem \ref{66.thm:2}, the same is true for arbitrary $X \in \hat \mt_{h,n}$, and we conclude that $L^0_{ww}( \btau)$ enjoys a modular covariant action on $\hat \mt_{h,n}$.
\end{proof}

In order to infer from Proposition \ref{5.prop.lww}, expressing  the modularity of $L_{ww}$, a modularity result 
on the $(1,0)$-form components $L_\alpha$ in holomorphic moduli directions via  (\ref{7.a.2}), it remains to combine this result with the  ${\rm Sp}(2h,\mathbb Z)$ properties of the Beltrami differentials, also see footnote \ref{modularfoot} (section \ref{summsec}).  The leftover work to deduce modularity of ${\cal L} = {\cal L}^s$ is relegated to future work.

\newpage

\section{Relations between $\cA$, $\mD$, $\mN$ and flatness conditions}
\setcounter{equation}{0}
\label{sec:77}

The flatness conditions $[J^{(1,0)}_i, J^{(1,0)}_j]=0$ for the DHS connection $\cJ_\text{DHS}$ given in (\ref{2.b.3aa})   were shown in \cite{DHoker:2026ggx} to be equivalent to the set of all the interchange and Fay identities on DHS kernels that had earlier been derived by independent methods in \cite{DHoker:2024ozn}. In this sense, the flatness relations $[J^{(1,0)}_i, J^{(1,0)}_j]=0$ are generating functions for all interchange and Fay identities. Both the flatness relations and the interchange and Fay identities play a crucial role in making the algebra of DHS kernels together with the associated polylogarithms closed under taking primitives. 

\sm

Having extended the DHS connection $\cJ_\text{DHS}$ on a fixed Riemann surface $\Sigma$ to a flat connection $\cJ$ that includes holomorphic variations in the moduli of $\Sigma$, we dispose of new flatness conditions as well as of the relations ensuring that $L^0_{ww}$ belongs to the derivation algebra $\mathfrak{der}(\hat{\mt}_{h,n})$. The novel protagonist in all these relations is $L^0_{ww}$ determined by its action on the generators $a_{iI}, b_i^I$ and $t_{ij}$ given in Theorem \ref{7.thm:1} in terms of the variational derivatives of the modular tensors $\cA$ and $\mD$ of (\ref{7.f.1}). Therefore, the flatness conditions of the connection $\cJ$ and the conditions for  $L^0_{ww} \in \mathfrak{der}(\hat \mt_{h,n})$ require relations between $\cA$ and $\mD$. The new relations arise in two categories.
\begin{enumerate}
\itemsep=0 in
\item Linear relations between $\cA$ and $\mD$, such as those of Proposition \ref{7.prop:1re} which played, for example, a crucial role in the proof of  Theorem \ref{6.thm:1} given in appendix \ref{sec:E};
\item Bilinear relations between the variational derivatives of  $\cA$ and $\mD$ and DHS kernels that arise from the flatness condition $[L_{vv}, L_{ww}]=0$.
\end{enumerate}
Investigations into the relations of item 2. are relegated to future work. In the remainder of this section, we provide additional discussion of the linear relations of item 1.

\subsection{Linear relations}
\label{sec:77.lin}

While the preservation of the structure relation $\big [ L^0_{ww} , [a_{iI}, a_{jJ}] \big ]=0$  was proven earlier by indirect methods, one may also investigate this relation directly. Being a quantity of $a$-degree 3, it receives contributions of the schematic form $aaa$, $aat$, $att$ and $ttt$, with a Lie series of possible insertions of $B_i$ such as $[ B_{i}^\vP a_{iK} , [ B_{i}^\vQ a_{iL} , B_{i}^\vR t_{ij} ] ]$ in the second category. The results, presented here without proof, are as follows. The contributions of the form $aaa$ and $aat$ cancel from $\big [ L^0_{ww} , [a_{iI}, a_{jJ}] \big ]$ without using any relations between $\cA$ and $\mD$. The contributions of the form $att$ cancel upon a lengthy calculation with the help of identities between $\cA$ and $\mD$ that follow from and generalize those in (\ref{7.f.3}) of Proposition \ref{7.prop:1re}, and are given by, 
\bea
\label{9.lem.44}
\mD_{\vP I \vQ} \, \delta ^L_J - \delta ^L_I \, \mD_{\vQ J \vP}
= \sum _{\vP = \vU \vV} ~ \sum _{\vU I \vQ J \vV = \vX \vY \vZ} 
\cA^L {}_{\big ( \vX \shuffle \theta(\vZ) \big ) M \big ( \vY + \theta(\vY) \big )} {}^M
\eea
The remaining contributions of the form $ttt$ take the following form,
\begin{align}
\big[L_{ww}^0 ,[a_{i I}  ,a_{jJ} ] \big]  &= \sum_{\vP, \vQ,\vR} \Upsilon_{\vP | \vQ | \vR} \, \big[B_{i}^\vP t_{ij} , [ B_{i}^\vQ t_{ij} , B_{i}^\vR t_{ij} ] \big]
\label{tovan} 
\end{align}
The coefficients $\Upsilon_{\vP | \vQ | \vR}$  are complicated but explicitly known in terms of $\delta_{ww} \mD_{\vI}$ and double traces of the variational derivatives of $\cA$, of the form $\delta_{ww}\cA^M{}_{\vI M \vJ N \vK}{}^N$ and $\delta_{ww}\cA^M{}_{\vI N \vJ M \vK}{}^N$. Remarkably, the coefficients of each term in $\Upsilon_{\vP | \vQ | \vR}$ are independent of $h$ for arbitrary $\vP , \vQ , \vR$. The set of identities required for $\Upsilon=0$ remains to be fully understood. 

\sm

Here, we shall present a number of identities that we have so far. They include, 
\begin{itemize}
\item 
Cyclic permutations of (\ref{7.f.3}) in $K \vI L = \vP =P_1 \cdots P_r$ arranged to eliminate the $\mD$,
\bea
\label{aasec.92} 
\sum_{\vP = \vX \vY \vZ}\cA^J {}_{ \big ( \vX \shuffle \theta(\vZ) \big ) N \big ( \vY + \theta (\vY) \big )} {}^N + {\rm cycl}(P_1,\cdots,P_r) =0
\eea
along with further traces over $J$ and $P_i$ that result in doubly-traced $\cA$ terms;
 \item 
Substitution of  $\vI = \vP M \vQ$ into (\ref{7.f.3}) and contraction with $\delta_J^M$,
\bea
\mD_{K \vP L \vQ} - \mD_{ \vP K \vQ L} =  
\sum_{K \vP M \vQ L = \vX \vY \vZ} \cA^M {}_{ \big ( \vX \shuffle \theta(\vZ) \big ) N \big ( \vY + \theta (\vY) \big )} {}^N
 \label{aasec.91}
\eea
and further relations among doubly-traced $\cA$ in which $\mD$ has been eliminated by various traces and cyclic permutations of (\ref{aasec.91});
\item 
The  $\delta^L_J$ trace of (\ref{7.f.3}) gives another relation between $\mD$ and double traces of $\cA$,
\bea
h \mD_{K \vI} - \mD_{\vI K} = \sum_{K \vI M  = \vX \vY \vZ} 
 \cA^M {}_{ \big ( \vX \shuffle \theta(\vZ) \big ) N \big ( \vY + \theta (\vY) \big )} {}^N 
  \label{aasec.55}
\eea
which, for genus $h \geq 2$,  can serve to eliminate any $\mD$ in favor of double traces of $\cA$.
\end{itemize}

\subsection{Bilinear relations}

The flatness condition $[L_{vv}, L_{ww}]=0$ may be expressed in terms of its vanishing action on the generators $a,b$ and $t$ of $\hat \mt_{h,n}$. The resulting relations are all bilinear in the variational derivatives of $\cA$ and $\mD$. These variational derivatives may be expressed in terms of DHS kernels $\p \Phi$ and the modular tensors $\mN$ with the help of (\ref{7.prop.2a}) and (\ref{7.prop.2b}). The nature and structure of the induced relations constitutes an open problem analogous in spirit to the Fay identities between DHS kernels.

\newpage

\section{Restriction to genus one}
\setcounter{equation}{0}
\label{sec:8}

In this section, we shall discuss the special case of genus $h=1$ for which many simplifications take place. In addition, this will give us the opportunity to compare the formal structure and concrete formulas of the single-valued and modular invariant, but non-meromorphic, flat connection of this paper with the meromorphic global connection of \cite{CEE}. In particular, we shall be able to identify Tsunogai's derivations \cite{Tsuongai:1995} as part of the algebra where the connection $\cJ$ at $h=1$ and that of \cite{CEE} takes values.

\sm

We shall represent a genus one Riemann surface $\Sigma$ as the quotient $\Sigma = \CC / (\ZZ + \tau \ZZ)$ where the modulus $\tau$ takes values in the upper half plane, parametrized by $ \tau = \tau _1 + i \tau_2$ with $\tau_1,\tau_2 \in \mathbb R$ and $ \tau_2 >0$.  The customary choice of complex coordinates $z, \bar z$ on the torus implements the quotient representation by identifying $z \cong z+1$ along a homology $\mA^1$ cycle and $z \cong z+\tau$ along a $\mB_1$ cycle so that the normalized holomorphic Abelian differential is $\om_1 = dz$. As a result, the coordinates $z, \bar z$ and the differential $dz$ are $\tau$-dependent.  The choice of complex coordinates $x, \bar x$ adopted in the preceding sections of this paper 
is different from the customary choice in that $x, \bar x$ are independent of the moduli. The change of variables $x \to z(x)$  between the two choices is easily realized in terms of the differential $\om_1= dz = \om_1(x) dx$ with $\om_1(x) = \p z / \p x$.

\subsection{The connection at fixed modulus}

The generators of the genus one Lie algebra $\mt_{1,n}$ are $a_{i1}, b^1_i$ and $t_{ij}$ subject to the structure relations of Definition \ref{2.def:1} and the following echo of translation invariance on the torus,
\bea
\sum_{i=1}^n a_{i1} = \sum_{i=1}^n b^1_i = 0 
\eea 
We shall keep the value of the index $I=1$ explicit, as its position distinguishes between $a_{i1}$ and $a_i^1 = a_{i1} /\tau_2$ and between $b_i^1$ and $b_{i1} = \tau_2 \, b_i^1$.\footnote{We recall that the identities in the third column of  (\ref{11.1}) relate the generators $t_{k\ell}$ of $\mt_{1,n}$ to brackets of $a_i^1,b_{j1}$ or $a_{i1},b_{j}^1$ with opposite positions of their index $I=1$.}

\sm

The restriction of the connection $\cJ_\text{DHS}$ to genus one is still given by (\ref{2.b.1}), but the expression for $J^{(1,0)}_i$ of (\ref{2.b.4}) simplifies in view of the fact that $\Phi _{I_1 \cdots I_r}{}^J=0$ whenever $r \geq 1$. We shall express (\ref{2.b.4}) in terms of customary coordinates $z_i, \bar z_i$ on the torus,
\bea
\label{8.b.1}
\cJ_\text{DHS} (\zz| \tau)&= \sum_{i=1}^n \Big ( J_i^{(1,0)} (\zz | \tau) \, dz_i   - \pi b_i^1 \, d \bar z_i  \Big )
\eea
where we set $\zz= (z_1, \cdots , z_n)$ and $\p_i= \p/ \p z_i$. The components are given as follows, 
\bea
\label{8.b.4}
J_i^{(1,0)} (\zz | \tau) = a_i^1 
+ \sum _{j \not= i} \sum _{r=0}^\infty \p_i \cG_{(r)} (z_i- z_j|\tau )  (B_i^1)^r t_{ij}
\eea
The function $\cG_{(0)} (z|\tau ) = \cG(z|\tau )$ is the genus one Arakelov Green function, given by,
\bea
\label{8.b.5}
\cG(z |\tau ) = - \ln \bigg | {  \tet_1(z | \tau)  \over \eta(\tau) } \bigg |^2 + {2 \pi \over \tau_2} \big ( \Im  z \big )^2 
\eea
where $\eta(\tau)$ is the Dedekind $\eta$-function. 
For $r \geq 1$ the functions $\cG_{(r)}(z|\tau)$ are given recursively by convolution with $\cG$ as in (\ref{2.a.4}), 
\bea
\label{8.b.6}
\cG_{(r)}(z|\tau) = \int _\Sigma d^2 z' \, \cG(z-z' |\tau) \, \p_{z'} \cG_{(r-1)} (z'  | \tau  )
= \int _\Sigma d^2 z' \, \p_z \cG(z-z' |\tau) \, \cG_{(r-1)} (z'  | \tau  ) \ \ 
\eea
where the second equality results from integrating by parts in $z'$ and using translation invariance to convert the derivative $\p_{z'}$ to $ - \p_z$ action on $\cG$. 

\sm

{\rmk 
\label{BLrmk}
The DHS connection (\ref{8.b.1}) and (\ref{8.b.4}) at genus one reduces to the Brown-Levin connection \cite{BrownLevin} once the variables $z_2,\cdots,z_n$ are treated as punctures of the torus (setting all the 
$dz_{i\neq 1}$ to zero) and one of $t_{12},\cdots,t_{1n}$ is eliminated through the $\mt_{1,n}$ relation 
$[b^1_1,a_{11}]=-\sum_{j=2}^n t_{1j}$. This can be seen by identifying the derivatives 
$\p_i \cG_{(r)} (z_i{-} z_j|\tau )$ in (\ref{8.b.4}) as,
\bea
\p_z \cG_{(r)} (z | \tau ) = - \tau_2^r f^{(r+1)}(z|\tau)
\label{dgisf}
\eea
where the single-valued and modular Kronecker-Eisenstein kernels $f^{(s)}(x{-}y|\tau)$ 
are the specializations of the DHS kernel $f^{I_1\cdots I_s}{}_J(x,y)$ to $h=1$ \cite{DHoker:2023vax}.}

\subsection{The flat connection with holomorphic moduli variations}

The flat connection $\cJ$ of (\ref{7.0.x})  for genus one is given in terms of the genus one connection $\cJ_\text{DHS}$ by adding a single component $\cL  \,d \tau$ which we divide by $2 \pi i$ in order to conform to the conventions of \cite{CEE},
\bea
\cJ(\zz| \tau) = \cJ_\text{DHS} (\zz| \tau) + \cL(\zz| \tau) \, { d\tau \over 2 \pi i} 
\label{univh1}
\eea
The generalized flatness conditions are then given by the first two lines of (\ref{7.a.0}), as the conditions on the third line are satisfied trivially. The expressions for the constituents of~$\cL$,
\bea
\cL(\zz| \tau) = \LL^0(\tau) + \LL^1(\zz| \tau)
\label{unig1}
\eea
may be read off directly from the results (\ref{7.cor.30}) of Corollary \ref{7.cor:30}, as these equations were valid for an arbitrary choice of coordinates on moduli space.

\subsubsection{Evaluating the action of $\LL^0$}

To evaluate the action of $\LL^0$ for genus one we avail ourselves of a major simplification of (\ref{7.cor.30}) as a result of the facts that,
\bea
\cA^1 {}_\emptyset {}^1 = - { 1 \over \pi \tau_2} 
\hskip 1in 
\cA^1 {}_{1_r} {}^1 =0 \quad \hbox{ for } r \geq 1
\eea 
where $1_r$ is the word formed out of $r$ letters $1$. Expressing the result in terms of 
\bea
a_i= a_i^1=  \frac{ a_{i1} }{\tau_2} \, , \ \ \ \ \ \
b_i = b_{i1} = \tau_2 b_i^1
\label{redefab}
\eea
we obtain a form that is close to the expression of \cite{CEE}, 
\bea
\label{8.cor.32}
{} [\LL^0 , a_i ] & = &
- \pi i   \sum_{p,q=0} ^\infty
 (-)^p \tau_2 ^{-p-q-1} \p_\tau \mD_{ 1_{p+q+1} } \sum_{k\not= i}  \Big [ B_i ^p  t_{ik} , B_i^q t_{ik}  \Big ] 
\no \\
{} [\LL^0 , b_i ] & = &    a_i
\no \\
{} [\LL^0, t_{ij} ] & = & - 2 \pi i  \sum_{p=1}^\infty  \tau_2^{-p} \, \p_\tau  \mD_{1 _p}    \, \big [  B_i^p \, t_{ij}, t_{ij} \big ]
\eea
The evaluation of $\tau _2^{-r} \p_\tau \mD_{1_r}(\tau)$ for $r \geq 1$ is given by the lemma below in terms of the holomorphic Eisenstein series $G_k(\tau)$ defined for $k \geq 3$ by the absolutely convergent 
sums over $\mathbb Z^2\setminus \{ (0,0) \}$, 
\bea
G_k (\tau) = \sum_{(m,n) \not= (0,0)}  { 1 \over (m + \tau n)^k} 
\label{defgk}
\eea
and related to the normalized Eisenstein series $E_k (\tau)=1+{\cal O}(e^{2\pi i \tau})$ by $G_k(\tau) = 2 \zeta_k E_k(\tau) $. 

{\lem
\label{8.lem:1}
The combination $\tau _2^{-r} \p_\tau \mD_{1_r}$  evaluates to the holomorphic Eisenstein series, 
\bea
\label{8.lem.1}
2 \pi i \,   \tau _2^{-r}  \, \p_\tau \mD_{1_r}(\tau) 
=  (r{+}1)  \, G_{r+2} (\tau) 
\eea
for $r \geq 1$ and therefore vanishes for $r$ odd.} 

\begin{proof}
\vskip 0in
For genus one and $r \geq 1$ the definition of $\mD_{1_r} (\tau) $ given in (\ref{7.f.1}) simplifies with the help of translation invariance on the torus, and we have,
\bea
\label{8.lem.2}
\mD_{1 _r} (\tau) = \cG_{(r)} (0 |\tau)
\eea
To evaluate the right side, we use the following alternative representation of 
the genus one Arakelov Green function that was given in (\ref{8.b.5}),
\bea
\label{8.lem.3}
\cG(z|\tau) = \frac{ \tau_2}{\pi} \sum _{(m,n) \not = (0,0)} { e^{ 2 \pi i (n u - mv)} \over  |m + \tau n|^2} 
\eea
where $z=u+ \tau v$ parametrizes the torus for $u,v \in \RR$ and $0 \leq u,v \leq 1$. The recursion relation for $\cG_{(r)}(z|\tau)$ in terms of the convolutions in (\ref{8.b.6}) is  solved by the following special cases of Zagier's single-valued elliptic polylogarithms~\cite{zagsvempl},
\bea
\label{8.lem.6}
\cG_{(r)} (z |\tau) = \frac{(-)^r \, \tau_2^{r+1}}{ \pi }
\sum _{(m,n) \not = (0,0)} { e^{ 2 \pi i (n u - mv)} \over ( m + \tau n)^{r+1} (m + \bar \tau n) } 
\eea
from which, using (\ref{8.lem.2}),  we readily obtain the one-loop modular graph forms \cite{DHoker:2016mwo},
\bea
\label{8.lem.7}
\mD_{1 _r} (\tau) = \frac{\tau_2^{r+1} }{ \pi }  \sum _{(m,n) \not = (0,0)} {1 \over ( m + \tau n)^{r+1} (m + \bar \tau n) }
\eea
which vanish for odd values of $r$. Evaluating their $\tau$ derivative via,
\bea
\partial_\tau \left ( \frac{\tau_2}{m\tau{+}n} \right ) = \frac{m \bar \tau{+}n}{2i (m\tau{+}n)^2}
\eea 
identifying the Eisenstein series (\ref{defgk}) and multiplying by $\tau_2^{-r}$ implies (\ref{8.lem.1}). 
\end{proof}

As a consequence of Lemma \ref{8.lem:1}, the $\mD$-dependent action (\ref{8.cor.32}) of $L^0$
can be brought into the more explicit form stated in the following corollary.

{\cor \label{nwlzero} The $\zz$-independent part $L^0(\tau)$ of the extended flat connection in (\ref{univh1})
and (\ref{unig1}) acts on the $\mt_{1,n}$ generators via,  
\bea
{} [\LL^0 (\tau) , a_i ] & = & -
\half  \sum_{r=1}^\infty  (2r{+}1) \, G_{2r+2}(\tau) \sum_{{p,q \geq 0 \atop  p+q=2r-1} }
 (-)^p  \sum_{k\not= i}  \big [ B_i ^p  t_{ik} , B_i^q t_{ik}  \big ] 
\no \\
{} [\LL^0 (\tau) , b_i ] & = & a_i
\no \\
{} [\LL^0 (\tau) , t_{ij} ] & = &  \sum_{r=1}^\infty (2r{+}1) \, G_{2r+2}(\tau) \, \big [  t_{ij},  B_i^{2r} \, t_{ij} \big ] 
\label{8.cor.35}
\eea}

\subsubsection{Identifying Tsunogai derivations}
\label{sec:idts}

As a major advantage of the presentation of the $\LL^0$ action in Corollary \ref{nwlzero},
it can be lined up with the following $\zz$-independent part of the $d\tau$ component
of the CEE connection~\cite{CEE},
\bea
\label{CEE.c01}
\LL^0_{\rm CEE}(\tau) = \Delta_0 + \sum_{r=1}^\infty  (2r{+}1) \,  G_{2r+2}(\tau) \, \delta_{2r}
\eea
where the derivations $\Delta_0$ and $\delta_{2r}$ of $\mt_{1,n}$ with $r\geq 1$ are determined by
\begin{align}
\Delta_0(a_i) &= 0 & \delta_{2r}(a_i) &= - \frac{1}{2} \sum_{k\neq i} \sum_{p=0}^{2r-1} (-1)^p \, \big[B_i^{p} t_{ik}  ,  B_{i}^{2r-1-p} t_{ik} \big]
\notag \\
\Delta_0(b_i) &= a_i & \delta_{2r}(b_i) &= 0
\notag \\
\Delta_0(t_{ij}) &= 0 & \delta_{2r}(t_{ij}) &= \big[ t_{ij} , B_i^{2r} t_{ij} \big]
\label{CEE.c02}
\end{align}
Upon comparison with (\ref{8.cor.35}), we find that the action of $L^0$ obtained from
the single-valued global flat connection exactly matches that of $\LL^0_{\rm CEE}$
from the meromorphic CEE connection on arbitrary $X \in \hat \mt_{1,n}$,
\bea
\big[ \LL^0 , X \big] =  \big[ \LL^0_{\rm CEE} , X \big] \label{CEE.19}
\eea
Similar to the observation on $L_{ww}^0$ in Remark \ref{innerrmk},  the action (\ref{CEE.c02}) of $\delta_{2r}$ on $t_{ij}$ can be absorbed into inner derivations of $\hat \mt_{1,n}$ generated by, 
\bea
\ell(\tau) = \frac{1}{2}  \sum_{r=1}^\infty (2r{+}1)  \, G_{2r+2}(\tau) \sum_{p\neq q} B^{2r}_p  t_{pq}
 \label{CEE.20}
\eea
which trivializes the action of the following series on $t_{ij}$:
\bea
\label{CEE.21}
\LL^0_{\rm CEE}(\tau)  + \ell(\tau)  = \Delta_0 + \sum_{r=1}^\infty  (2r{+}1)  \, G_{2r+2}(\tau) \, \epsilon_{2r+2}
\eea
The coefficients of the holomorphic Eisenstein series $G_{k}(\tau)$ of modular weight $k\geq 4$ are then given by 
Tsunogai's derivations $\epsilon_{k}$ of $\mt_{1,n}$ \cite{CEE}. More specifically, by
combining the contributions from (\ref{CEE.c02}) and (\ref{CEE.20}) to
\bea
\epsilon_{2r}(X) = \delta_{2r-2}(X) + \frac{1}{2} \sum_{p\neq q}  \, \big[ B_p^{2r-2} t_{pq},  X \big]
\label{CEE.c03}
\eea
for arbitrary $X \in \hat \mt_{1,n}$ and $r\geq 2$, we arrive at $\epsilon_{2r}(t_{ij}) = 0$ by construction as well as
\begin{align}
\epsilon_{2r}(a_i) &=  \big[ a_i , B^{2r-1}_i a_i  \big] + \sum_{\ell=1}^{r-1} (-1)^\ell \, \big[B_i^\ell a_i , B_i^{2r-1-\ell} a_i  \big]
\notag \\
\epsilon_{2r}(b_i) &= B_i^{2r} a_i
\label{CEE.c04} 
\end{align}
The inner derivation (\ref{CEE.20}) exactly matches the restriction of $\ell_{ww}$ in (\ref{defbL0}) to
genus $h=1$ after converting $\delta_{ww} \rightarrow 2\pi i \partial_\tau$ as in (\ref{univh1})
and recovering Eisenstein series via (\ref{8.lem.1}). Together with (\ref{CEE.19}), this implies that
the redefined connection $ \bL_{ww}^0$ in Remark \ref{innerrmk} reduces to a series in Tsunogai's derivations
at $h=1$,
\bea 
\bL_{\alpha}^0(\tau) =   \Delta_0 + \sum_{r=1}^\infty  (2r{+}1)  \, G_{2r+2}(\tau) \, \epsilon_{2r+2}
\label{CEE.c05} 
\eea
We therefore regard the action (\ref{7.thm.2}) of $ \bL_{ww}^0$ on $a_{iI}$, $b_i^I$ (together with 
$[ \bL_{ww}^0, t_{ij} ] = 0$) as a generalization of the Tsunogai derivations acting via  (\ref{CEE.c04}) to higher genus. In particular, the separation of $\mt_{h,n}$ generators $a_{iI}$, $b_i^I$ with different values of $i \in \{ 1,\cdots,n \} $ is preserved in passing from (\ref{CEE.c04}) at genus one to (\ref{7.thm.2}) at arbitrary genus. By the lowest-degree terms,
\bea
[ {\bf L}_{ww}^0 , a_{iI}] = {\cal O}(a_i^2) \, , \ \ \ \ \ \
[ {\bf L}_{ww}^0 , b_{i}^I]  = \omega^I(w) \omega^J(w) a_{iJ}+  {\cal O}(b_i a_i)
\label{CEE.31} 
\eea
in (\ref{7.thm.2}), the degree
zero derivation $\Delta_0$ in (\ref{CEE.c02}) with customary notation $\epsilon_{0} = \Delta_0$ 
which is part of an $\mathfrak{sl}(2)$ algebra generalizes to generators of $\mathfrak{sp}(2h)$.
While the genus one connection (\ref{CEE.c05}) only features a single modular form $G_k(\tau)$
and associated derivation $\epsilon_k$ at a given degree in $b_i$, the action
(\ref{7.thm.2}) at higher genus generically introduces multiple independent modular tensors 
per degree in the place of $G_k(\tau)$. The classification of these modular tensors and
their relations (see section \ref{sec:77.lin}) as well as the
identification of elementary derivations of $\hat \mt_{h,n}$ at fixed $b_i$ degree 
entering (\ref{7.thm.2}) is relegated to future~work.

\subsubsection{Evaluating $\LL^1$}

For genus one, the $\zz$-dependent part $\LL^1$ of the $d \tau$ component (\ref{unig1}) of the extended
 flat connection (\ref{univh1}) is obtained by integrating (\ref{7.c.9}) against the Beltrami differential. The contribution from $\cU_{ww}$  vanishes for this case, and we are left with,
\bea
\LL^1(\zz| \tau) = 2 \pi i \, \p_\tau \cW(\zz| \tau)
\label{pulldtau}
\eea
where $\cW$ is given by the restriction of (\ref{7.c.7}) to genus one, 
\bea
\cW (\zz| \tau) = \half \sum_{k \not= \ell} \sum_{r=0}^\infty \cG_{(r)} (z_k- z_\ell|\tau) (B_k^1)^ r t_{k \ell} 
\eea
Recalling that $\p_\tau b_k^1=  \p_\tau t_{k\ell}  =0$ since  $\delta_{ww} b^1_k  = \delta_{ww} t_{k\ell}  =0$ in view of Definition \ref{3.def:3}, the only $\tau$-dependence above is through $\cG_{(r)}$, whose $\tau$-derivative is conveniently evaluated using the representation (\ref{8.lem.6}). Recalling that the coordinate $z$ depends on $\tau$ in view of the identification $z\cong z+\tau$ along the $\mB_1$ cycle but that the real coordinates $u,v$ in which we decompose $z = u + \tau v$ remain constant during variations in $\tau$, we find, 
\bea
2 \pi i \p_\tau \cG_{(r)} (z|\tau) =  \! \!
\sum _{(m,n) \not = (0,0)}  { (-1)^r (r{+}1) \tau_2^r  \over  ( m + \tau n)^{r+2} } \, e^{ 2 \pi i (n u - mv)}
=  - (r{+}1)   \, \tau_2^r \, f^{(r+2)} (z|\tau) \ \ 
\label{abstau2}
\eea
where the Kronecker-Eisenstein kernels $f^{(s)} (z|\tau) $ were identified
through their lattice-sum representation equivalent to (\ref{8.lem.6}) and (\ref{dgisf}),
\bea
f^{(s)} (z|\tau) = - (-1)^s
\sum _{(m,n) \not = (0,0)}  { e^{ 2 \pi i (n u - mv)} \over ( m + \tau n)^{s} } 
\label{latfs}
\eea
On these grounds, our final result for $\LL^1$ is given by (absorbing the factors
of $\tau_2$ on the right side of (\ref{abstau2}) into $\tau_2^r ( B_k^1)^r = B_k^r$),
\bea
\LL^1(\zz| \tau) =  - \half \sum _{k \not = \ell} \sum_{r=0}^\infty
 (r{+}1) \,  f^{(r+2)}(z_k - z_\ell|\tau) \, B_k^r \, t_{k \ell} 
\label{finL1}
\eea
which closely resembles the $\zz$-dependence of the analogous $d\tau$ component of the CEE connection 
\cite{CEE}: the latter is formally obtained from the single-valued and modular $\LL^1(\zz| \tau) $
in (\ref{finL1}) by converting $ f^{(r+2)} \rightarrow g^{(r+2)}$ to the meromorphic
Kronecker-Eisenstein kernels. This formal relation via $ f^{(r+2)} \rightarrow g^{(r+2)}$
signals that our single-valued and modular global flat connection (\ref{univh1}) is
obtained from a gauge transformation of the meromorphic CEE connection. In the 
case of a single variable $z_i =u_i + \tau v_i$ on the torus, the replacement $ f^{(r+2)} \rightarrow g^{(r+2)}$
is attained through a gauge transformation with logarithm $\sim u_i b_i$, see section 2.3 of \cite{DHoker:2025szl} 
for the case of the Brown-Levin connection
and section 5.1 of \cite{Schlotterer:2025qjv} for its extension to the moduli space ${\cal M}_{1,2}$.
In our setting of multiple variables $z_i$ on the torus with $i \in \{ 1,\cdots,n \} $, the analogous gauge transformation
between the meromorphic CEE connection and the single-valued and modular extended flat connection (\ref{univh1}) is given by the exponential of $\sum_{i=1}^n u_i b_i$.

\newpage

\appendix

\section{Combinatorial and Lie algebra relations}
\setcounter{equation}{0}
\label{sec:AA}

We collect a number of combinatorial and Lie algebraic relations that were proven in \cite{DHoker:2026ggx}. 

\subsection{Shuffle identities}

The following deconcatenation sums of shuffle products were proven in appendix A of~\cite{DHoker:2026ggx}, 
\begin{align}
\sum_{\vI = \vP \vQ } \vP \shuffle \theta(\vQ) &= \delta _{\vI, \emptyset} 
&
\delta_{\vI, \emptyset} &= \left\{ \begin{array}{cl} \emptyset & \hbox{ for } \,  \vI= \emptyset \\
0 & \hbox{ for } \,  \vI \neq \emptyset \end{array} \right.
\label{A.sh.1}
 \\
\sum_{ \vI = \vec{P} \vec{Q} } (\vR \shuffle \vec{P}) L (\vS \shuffle \vec{Q}) & = \vR L\vS \shuffle \vI  
& 
& \vI,\vR,\vS  \in \cW_h
\label{A.sh.2}
\\
\sum_{\vI = \vP \vQ}  \vQ \shuffle \big(\vR \shuffle  \theta(\vP) \big)L & =  \vR L \vI 
&
& \vI, \vR \in \cW_h
\label{A.sh.4}
\end{align}
where $\theta$ is the antipode defined in footnote \ref{other.4} and $\cW_h$ is the set of all words in an alphabet of~$h$ letters. By iterating (\ref{A.sh.2}) we obtain the general formula, 
\bea
\label{A.sh.5}
\vI \shuffle \vR L_1 \cdots L_\ell \vS = \sum_{\vI = \vX_1 \cdots \vX_{\ell+1}} 
\big ( \vX_1 \shuffle \vR \big ) L_1 \vX_2 L_2 \cdots L_{\ell-1} \vX_\ell L_\ell \big ( \vX_{\ell+1} \shuffle \vS \big )
\eea

\subsection{Lie algebra identities}
\label{app:lie}

The following Lie algebra relation  was originally stated and proven in Lemma E.2 of  \cite{DHoker:2026ggx}, 
\bea
\label{A.Lie.1}
\big [ B_i ^\vI \, a_{iK} + B_j ^ \vI \, a_{jK}, t_{ij} \big ]
= - \sum _{\vI = \vP K \vQ} 
\sum_{\vP = \vX \shuffle \vY} \Big [ B_i ^\vX \big ( B_i^\vQ + B_i^{\theta(\vQ)} \big ) B_i^{\theta (\vY)}  t_{ij}, t_{ij} \Big ]
\eea
while the Lie algebra relation below, for an arbitrary $X \in \hat \mt_{h,n}$, will also be needed,
\bea
\label{A.Lie.2}
{} \big [ a_{iI}, B_j^{\vJ} X \big ] = B_{j }^\vJ \,  \big [ a_{iI}, X \big ] 
+ \sum _{\vJ = \vR L \vS} \delta_I^L \, B_{j}^ \vR \big [ B_{j }^\vS X, t_{ij} \big ] 
\eea

\begin{proof}
The proof is by induction on the length $s$ of the word $\vJ=J_1 \cdots J_s$. The statement clearly holds when $s=0$ since there are then no contributions to the second term on the right. For $s \geq 1$ we peel off one factor of $B_j$ followed by $B_j^J a_{iI}= \delta^J_I t_{ij}$,
\bea
{}  \big [ a_{iI}, B_j^{J_1} \cdots B_j^{J_s} X \big ] = B_j^{ J_1} \big [ a_{iI}, B_j^{J_2} \cdots B_j^{ J_s} X \big ]
+ \delta_I^{J_1} \big [ B_j^{J_2} \cdots B_j^{J_s} X, t_{ij} \big ] 
\eea
and then use the validity of the relation at orders $\leq s-1$ to evaluate the first term. 
\end{proof}

\newpage

\section{Coincident limits of the Fay identities}
\setcounter{equation}{0}
\label{sec:A}

In this appendix, we construct the one coincident limit of the Fay identities involving three points $x,y,z \in \Sigma$ that was not constructed in \cite{DHoker:2024ozn}, namely from the  $(1,0)_x \otimes (1,0)_y \otimes (0,0)_z$ form of the Fay identities to a $(2,0)_w \otimes (0,0)_z$ form when $x, y \to w$. The starting point is the general Fay identity in Theorem 6.2 of \cite{DHoker:2024ozn} where the product of two DHS kernels with a common scalar point $z$  is expressed in terms of a sum of binary products where at most one factor is $z$-dependent. For the problem at hand, it will suffice to take the trace in the indices $K$ and $M$ of this identity, whose decomposition  into $\p \Phi $ and $\p \cG$ may be recast as follows after some simplifications, 
\footnote{For later convenience we change variables from $(x,y)$ in the original formula of Theorem 5.2  for the interchange identities to $(w,y)$ and from $(x,y,z)$ in the original formula of Theorem 6.2 for the Fay identities to $(w,y,x)$ here.}
\bea
\label{A.2}
&&
\p_w \cG_\vI (w,x) \p_y \cG_\vJ (y,x) = - { 1 \over h} \om^M(y) \p_w \cG_{\vI M \theta (\vJ)} (w,y)
 \\ && \qquad 
+ \sum _{\vI = \vP \vQ} \Big ( \p_w \cG_\vP (w,y) \, \p_y \cG_{\vJ \shuffle \vQ} (y,x) 
- \p_w \Phi _\vP{}^M (w) \, \p_y \cG _{\vJ \shuffle M\vQ} (y,x) \Big )
\no \\ && \qquad
+ \sum _{\vJ = \vR \vS} \Big ( \p_y \cG_\vR (y,w) \, \p_w \cG_{\vI \shuffle \vS} (w,x) 
- \p_y \Phi _\vR {}^M(y)  \, \p_w \cG_{\vI \shuffle M \vS} (w,x) \Big )
\no \\ && \qquad
+ { 1 \over h} \sum _{\vJ = \vR \vS} \Big ( \p_w \Phi _{\vI M \theta (\vS)K}{}^M(w) \, \p_y \Phi _\vR{}^K(y) 
+ \p_w \Phi _{\vI M \theta(\vS)}{}^K  (w) \, \p_y \Phi _{\vR K} {}^M(y) \Big )
\no
\eea
Next, we take the limit of the above Fay identity as $ y \to w$ with the help of the following limits of the individual components.

\subsection{Coincident limits of DHS kernels}
\label{app:B.cin}

The limits $y \rightarrow w$ of (\ref{A.2}) exist and for the terms of the form 
$\p_w \cG_\vP (w,y)$ and $\p_y \cG_\vR (y,w) $ boil down to the following combinations (see section 8.3 of \cite{DHoker:2024ozn}), 
\bea
\label{A.3}
\cC_\emptyset (w) & = & \lim_{y \to w} \Big \{ \p_w \cG(w,y) + { 1 \over w-y} \Big \} 
\no \\
\cC_I (w) & = & \lim_{y \to w} \Big \{ \p_w \cG_I(w,y) - \pi { \bar w - \bar y \over w-y} \bar \om_I(w)  \Big \} 
\no \\
\cC_\vI (w) & = & \lim_{y \to w} \p_w \cG_\vI (w,y)  \hskip 2in |\vI | \geq 2  
\eea
The $(1,0)$ form $\cC_\vI(w)$ is single-valued and well-defined for $\vI \not= \emptyset$, but $\cC_\emptyset(w)$ transforms as a connection under conformal transformations.  Next, we recall the decomposition of $\cC_\vI(w)$ with $\vI \not= \emptyset$ in terms of DHS kernels from Theorem 8.2 of \cite{DHoker:2024ozn},
\bea
\label{A.4}
\cC_\vI (w) = \sum_{\vI = \vP \vQ \vR} 
\Big \{ \p_w \Phi _{\vP \shuffle \theta(\vR)} {}^M(w) \, \mN_{M \vQ} 
+ \p_w \Phi _{\big ( \vP \shuffle \theta(\vR) \big  ) M \vQ} {}^M(w) \Big \}
\eea
where the modular tensors $ \mN_{M\vQ}$ defined by (\ref{defcn}) satisfy the following dihedral symmetry relations, 
\bea
\label{A.6}
\mN_{M \vQ} = \mN_{\vQ M} 
\hskip 1in 
 \mN_{\theta(M \vQ)} =  \mN _{M \vQ}
\eea
These properties allow us to derive the following lemma.

{\lem
\label{A.lem:1}
The coincident limit  $\p_w \cG_\vI(w,w) $ may be expressed in terms of DHS kernels,
\bea
\label{A.lem.1}
\p_w \cG_\vI (w,w) = \sum_{\vI = \vP \vQ \vR}  \p_w \Phi _{\big ( \vP \shuffle \theta(\vR) \big  ) M \big ( \vQ + \theta(\vQ) \big ) } {}^M(w) 
\eea \sm}

\begin{proof}
\vskip -0.2in
On the one hand, the definition of $\cC_\vI$ in (\ref{A.3}) implies,
\bea
\label{A.1}
\p_w \cG_\vI (w,w) = \cC_\vI (w) + \cC_{\theta(\vI)} (w)
\eea
On the other hand $\cC_{\theta(\vI)}(w)$ obtained from (\ref{A.4}) takes the form,
\bea
\label{A.5}
\cC_{ \theta(\vI)}  (w) = \sum_{\vI = \vP \vQ \vR} 
\Big \{ \p_w \Phi _{\vP \shuffle \theta(\vR)} {}^M(w) \,  \mN_{M \theta(\vQ)} 
+ \p_w \Phi _{ ( \vP \shuffle \theta(\vR) ) M \theta(\vQ)} {}^M(w) \Big \}
\eea
Using the consequence $\mN_{M \theta(\vQ)}= - \mN_{M \vQ}$ of (\ref{A.6}), we see that the contribution from $ \mN$ cancels in (\ref{A.1}), while the remaining terms may be regrouped as in (\ref{A.lem.1}). 
\end{proof}

\subsection{Coincident limits of the trace of the Fay identity}

In taking the limit $y \to w$ of the trace of the Fay identity in (\ref{A.2}), we let $\p_w \cG_{\vP}(w,y) \to \cC_\vP(w)$ and pick up the following extra contributions for $|\vP|=0,1$,
\bea
&&
- { 1 \over w-y} \Big ( \p_y \cG_{\vI \shuffle \vJ} (y,x) - \p_w \cG_{\vI \shuffle \vJ}(w,x) \Big )
\\ &&
+ \pi { \bar w - \bar y \over w-y} \Big ( \p_y \cG _{I_2 \cdots I_r \shuffle \vJ}(y,x) \,  \bar \om _{I_1} (w) 
+ \p_w \cG _{\vI \shuffle J_2 \cdots J_s} (w,x) \,  \bar \om _{J_1} (w) \Big )
\no 
\eea
Expanding the terms inside the parentheses on the first line to first order,
\bea
&&\p_y \cG_{\vI \shuffle \vJ} (y,x) - \p_w \cG_{\vI \shuffle \vJ}(w,x)
 \\ && \quad 
= (y-w)  \p_w^2 \cG_{\vI \shuffle \vJ}(w,x) + ( \bar y - \bar w) \pbw \p_w \cG_{\vI \shuffle \vJ}(w,x) + \cO\big(|y-w|^2 \big)
\no \\ && \quad 
= (y-w)  \p_w^2 \cG_{\vI \shuffle \vJ}(w,x)  
- \pi  ( \bar y - \bar w) 
\Big (  \p_w \cG_{I_2 \cdots I_r \shuffle \vJ}(w,x)   \bar \om_{I_1} (w) 
\no \\ && \hskip 2.7in
+ \p_w \cG_{\vI  \shuffle J_2 \cdots J_s }(w,x)  \bar \om_{J_1} (w)  \Big ) +  \cO \big(|y-w|^2 \big)
\no
\eea
All the terms proportional to $\bar \om_K(w)$ cancel one another in the limit leaving the sole contribution $\p_w^2 \cG_{\vI \shuffle \vJ}(w,x)$, so that the coincident limit of the Fay identity of (\ref{A.2}) becomes, 
\bea
\label{A.10}
&&
\p_w \cG_\vI (w,x) \p_w \cG_\vJ (w,x) 
= - { 1 \over h} \om^L(w) \cC_{\vI L \theta (\vJ)} (w) + \p_w^2 \cG_{\vI \shuffle \vJ}(w,x)
 \\ && \qquad \quad
+ \sum _{\vI = \vP \vQ} \Big ( \cC_\vP (w) \, \p_w \cG_{\vJ \shuffle \vQ} (w,x) 
- \p_w  \Phi _\vP{}^ L (w) \, \p_w \cG _{\vJ \shuffle L \vQ} (w,x) \Big )
\no \\ && \qquad \quad
+ \sum _{\vJ = \vR \vS} \Big ( \cC_\vR (w) \, \p_w \cG_{\vI \shuffle \vS} (w,x) 
- \p_w  \Phi_\vR{}^ L(w)  \, \p_w \cG_{\vI \shuffle L \vS} (w,x) \Big )
\no \\ && \qquad \quad
+ { 1 \over h} \sum _{\vJ = \vR \vS} \Big ( \p_w \Phi _{\vI L \theta (\vS)K}{}^L (w) \, \p_w \Phi _\vR{}^K(w)
+ \p_w \Phi _{\vI L \theta(\vS)}{}^K  (w) \, \p_w \Phi _{\vR K} {}^L (w) \Big )
\no
\eea
The double derivative $\p_w^2 \cG_{\vI \shuffle \vJ}(w,x)$ is not a genuine $(2,0)$ form, but in the above formula it is accompanied by the contributions from $\cC_\emptyset(w)$ from the second and third lines, so that the combination $\p_w^2 \cG_{\vI \shuffle \vJ}(w,x) + 2 \, \cC_\emptyset (w) \cG_{\vI \shuffle \vJ}(w,x)$ plays the role of a covariant derivative and transforms as a genuine $(2,0)$ form, as do all other contributions. In our application of (\ref{A.10}) to the integrand of (\ref{9.q.14}) below, both of $\p_w^2 \cG_{\vI \shuffle \vJ}(w,x)$ and $\cC_\emptyset (w) \cG_{\vI \shuffle \vJ}(w,x)$ drop out upon integration against $\kappa(x)$.

\newpage

\section{Proof of Proposition \ref{7.prop:1re}}
\setcounter{equation}{0}
\label{sec:B}

Starting from the definition of $\mD$ in (\ref{7.f.1}), we evaluate the combination $\delta _K^J \, \mD_{\vI L}$
by eliminating the volume form $\kappa$ in the integration using the relation,
\bea
\label{B.1}
\delta ^J_K \, \kappa (x) = \om^J(x) \bar \om_K(x) + { 1 \over \pi} \, \pbx \p_x \Phi _K {}^J(x)
\eea
to obtain, 
\bea
\label{B.2}
\delta _K^J \, \mD_{\vI L} 
& = & 
\int_\Sigma d^2 x \, \om^J(x)  \bar \om_K(x) \cG_{\vI L} (x,x) 
+{ 1 \over \pi} \int _\Sigma d^2 x \, \pbx \p_x \Phi _K{}^J(x) \,  \cG_{\vI L} (x,x) 
\eea
Integrating by parts in $\p_x$ and $\pbx$ in the second term, eliminating $\p_x \cG$ using the coincident limit formula (\ref{A.lem.1}) of Lemma \ref{A.lem:1}, taking the $\pbx$ derivative using the first line in (\ref{2.a.7}) and then carrying out the integral in $x$, we obtain,
\bea
\label{B.3}
\delta _K^J \, \mD_{\vI L} 
& = & 
\int_\Sigma d^2 x \, \om^J(x)  \bar \om_K(x) \cG_{\vI L} (x,x) 
- \sum_{\vI L = \vX \vY \vZ} \cA^J {}_{K \big ( \vX \shuffle \theta(\vZ) \big ) M \big ( \vY + \theta (\vY) \big )} {}^M
\eea 
in terms of the modular tensors $\cA$ of (\ref{7.f.1}). Next, we evaluate,
\bea
\label{B.4}
\cA^M{}_{K \vI L M}{}^J 
& = & \int _\Sigma d^2 x \int _\Sigma d^2 y \, \om^M(x) \, \bar \om_K(x) \, \cG_{\vI L}(x,y) \, 
\bar \om_M(y) \, \om^J(y) 
\eea
by eliminating the combination $\om^M(x) \bar \om_M(y)$ using the relation, 
\bea
\label{B.5}
\om^M(x) \bar \om_M(y) = \delta(x,y) - { 1 \over \pi} \, \pby \p_x \cG(x,y)
\eea
to obtain upon integrating by parts in $\p_x$ and $\pby$, 
\bea
\label{B.6}
\cA^M{}_{K \vI L M}{}^J 
= \int _\Sigma d^2 x  \, \bar \om_K(x) \Big ( \cG_{\vI L}(x,x)  \om^J(x) 
- { 1 \over \pi}  \int _\Sigma d^2 y \,   \cG(x,y) \, \pby \p_x \cG_{\vI L}(x,y) \, \om^J(y) \Big )
\quad
\eea
The contribution from the $y$-integral may be computed using the last equation in (\ref{2.a.7}), 
\bea
\label{B.7}
\cA^M{}_{K \vI L M}{}^J 
& = & \int _\Sigma d^2 x  \, \bar \om_K(x) \cG_{\vI L}(x,x)  \om^J(x) 
 \\ &&
+ \int _\Sigma d^2 x \int _\Sigma d^2 y \,   \cG(x,y) \, \bar \om_K(x) \, \bar \om_M(y) \Big ( \p_x \Phi _{\vI L}{}^M(x) 
- \delta ^M_L \p_x \cG_{\vI} (x,y) \Big ) \om^J(y)
\no
\eea
The $x$-integrals in the second line may both be performed in terms of DHS kernels. The first integral gives another instance of $\cA$ in (\ref{B.4}) while the second gives a term of the form of the first line after integrating over $x$ via (\ref{2.a.4}) and then renaming $y\rightarrow x$, 
\bea
\label{B.8}
\cA^M{}_{K \vI L M}{}^J - \cA^J {}_{M K \vI L}{} ^M 
& = & \int _\Sigma d^2 x  \, \om^J(x) \Big ( \bar \om_K(x) \cG_{\vI L}(x,x)  
- \bar \om_L(x)  \cG_{K \vI} (x,x) \Big )
\eea
The integrals on the right side do not line up with the definition (\ref{7.f.1}) of $\cD$ or $\cA$ tensors but can be eliminated by rewriting the left side of Proposition \ref{7.prop:1re} via (\ref{B.3}). In this way, we obtain,
\bea
\label{B.9}
\delta _K^J \, \mD_{\vI L} - \mD_{K \vI} \, \delta ^J_L
& = & 
- \cA^J{}_{ML \theta(\vI)K }{}^M  
- \sum_{\vI L = \vX \vY \vZ} \cA^J {}_{K \big ( \vX \shuffle \theta(\vZ) \big ) M \big ( \vY + \theta (\vY) \big )} {}^M
\no \\ &&
- \cA^J {}_{M K \vI L}{} ^M 
+ \sum_{K \vI  = \vX \vY \vZ} \cA^J {}_{L \big ( \vX \shuffle \theta(\vZ) \big ) M \big ( \vY + \theta (\vY) \big )} {}^M
\eea 
To match the right side of Proposition \ref{7.prop:1re}, we separate the first sum  into the contributions from $\vZ= \emptyset$ with $\vI L = \vX \vY$ and those from $\vZ \to \vZ L$ with $\vI = \vX \vY \vZ$, and the second sum into those from $\vX= \emptyset$ with $K \vI = \vY \vZ$ and those from $\vX \to K \vX$ with $\vI = \vX \vY \vZ$,
\bea
\label{B.10}
\delta _K^J \, \mD_{\vI L} - \mD_{K \vI} \, \delta ^J_L
& = & 
- \cA^J{}_{ML \theta(\vI)K }{}^M  - \cA^J {}_{M K \vI L}{} ^M 
 \\ &&
- \sum_{\vI L = \vX \vY} \cA^J {}_{K \vX M \big ( \vY + \theta (\vY) \big )} {}^M
+ \sum_{\vI  = \vX \vY \vZ} \cA^J {}_{K \big ( \vX \shuffle L \theta(\vZ) \big ) M \big ( \vY + \theta (\vY) \big )} {}^M
\no \\ &&
+ \sum_{K \vI  = \vY \vZ} \cA^J {}_{L \theta(\vZ) M \big ( \vY + \theta (\vY) \big )} {}^M
+ \sum_{\vI  = \vX \vY \vZ} \cA^J {}_{L \big ( K \vX \shuffle \theta(\vZ) \big ) M \big ( \vY + \theta (\vY) \big )} {}^M
\no
\eea 
The terms on the right may be combined into the expression given in (\ref{7.f.3}),
\bea
\label{B.11}
 \mD_{K \vI} \, \delta ^J_L - \delta _K^J \, \mD_{\vI L} 
& = & 
 \sum_{K \vI L = \vX \vY \vZ} \cA^J {}_{ \big ( \vX \shuffle \theta(\vZ) \big ) M \big ( \vY + \theta (\vY) \big )} {}^M
\eea 
To show this, we similarly separate the deconcatenation sum into the contributions from $\vZ= \emptyset$ with $K \vI L = \vX \vY$ and those from $\vZ \to \vZ L$ with $K \vI = \vX \vY \vZ$,
\bea
\label{B.12}
\delta _K^J \, \mD_{\vI L} - \mD_{K \vI} \, \delta ^J_L
& = & 
- \!\! \sum_{K \vI L = \vX \vY} \!\! \cA^J {}_{ \vX  M \big ( \vY + \theta (\vY) \big )} {}^M
+ \! \! \sum_{K \vI  = \vX \vY \vZ} \!\! \cA^J {}_{ \big ( \vX \shuffle L \theta(\vZ) \big ) M \big ( \vY + \theta (\vY) \big )} {}^M
\qquad
\eea 
Decomposing each sum into the contributions from $\vX = \emptyset$ with $K \vI L = \vY$ and $K \vI = \vY \vZ$, respectively, and those from $\vX \to K \vX$ with $\vI L - \vX \vY$ and $\vI = \vX \vY \vZ$, respectively, 
\bea
\label{B.13}
\delta _K^J \, \mD_{\vI L} - \mD_{K \vI} \, \delta ^J_L
& = & 
- \cA^J {}_{ M K \vI L} {}^M - \cA^J {}_{ M  \theta (K \vI L) } {}^M
+ \sum_{\vI  = \vX \vY \vZ} \cA^J {}_{ \big ( K \vX \shuffle L \theta(\vZ) \big ) M \big ( \vY + \theta (\vY) \big )} {}^M
\no \\ &&
- \sum_{\vI L = \vX \vY} \cA^J {}_{ K \vX  M \big ( \vY + \theta (\vY) \big )} {}^M
+ \sum_{K \vI  = \vY \vZ} \cA^J {}_{  L \theta(\vZ) M \big ( \vY + \theta (\vY) \big )} {}^M
\eea 
Using the defining recursion relation for the shuffle product in the argument of the  third term on the first line, we find perfect agreement with the expression obtained in (\ref{B.9}) and rewritten as (\ref{B.10}). Using (\ref{A.sh.2}), the result of (\ref{B.11}) may be re-expressed as follows, 
\bea
\label{B.14}
 \mD_{K \vI } \, \, \delta _L^J  - \delta _K^J \, \mD_{\vI L} 
&=&
 \sum_{K \vI L = \vX \vY} \Big (  \cA^J {}_{\vX \shuffle \theta(\vY)M }{}^M  + \cA^J {}_{\vX M \shuffle \theta(\vY)}{}^M \Big )
\eea

\newpage

\section{Proof of the second relation in (\ref{muder1})}
\label{app:new}

It suffices to prove the relation for $n=1$, as the case $n=-1$ follows by duality and the cases  $|n| >1$ follow by tensor product. 
For $n=1$, the $\delta_{\mu}$ variation in (\ref{muder1}) of a one-form $\om = \om_m \, d\xi^m$ solely requires the projection onto $(0,1)$ forms of the covariant derivative
\begin{align}
(\nabla \om )_{np} = \p_n \om_p - \Gamma ^q_{np} \om_q  \hskip 1in
\Gamma ^q _{np} = \half g^{qr} \Big ( \p_n g_{pr} + \p_p g_{nr} - \p_r g_{np} \Big )
\end{align}
with respect to the Christoffel connection $\Gamma $ for a Riemannian metric $g = g_{mn} d \xi^m d \xi^n$ expressed in arbitrary real or complex local coordinates $\xi^m$ with $\p_m = \p / \p \xi^m$. The $(0,1)$-projection relevant for $\delta_\mu \omega$ is then given by
\bea
(\nabla ^- \om)_{mp} =\half \big ( \delta _m{}^n + i J_m{}^n \big ) (\nabla \om)_{np}
\eea
and commutes with the covariant derivative.
Parametrizing the metric by $g = e^\phi |dz{-} \mu d \bar z|^2$ and retaining only terms of order 0 and 1 in $\mu$ and $\bar \mu$ (namely, neglecting terms of order $\mu^2, \bar \mu^2$ and $\mu \bar \mu$), the components of the metric and its inverse are given, to this order,  by,
\begin{align}
g_{z \bar z} & = \thalf e^\phi & g_{zz} & = - e^\phi \bar \mu & g_{\bar z \bar z} & = - e^\phi \mu
\no \\
g^{z \bar z} & = 2 e^{-\phi } & g^{zz} & = 4 \, e^{- \phi} \mu  & g^{\bar z \bar z} & = 4 \, e^{-\phi} \bar  \mu
\end{align}
The resulting components of the Christoffel connection are given by,
\bea
\Gamma ^{\bar z}  _{\bar z \bar z} & = & \pbz \phi + \mu \p_z \phi + \p_z \mu
\no \\
\Gamma ^{z} _{\bar z \bar z} & = &  - \pbz \mu + \mu \pbz \phi
\no \\
\Gamma ^z_{z \bar z} = \Gamma ^z_{\bar z z} & = & - \p_z \mu - \mu \p_z \phi
\eea
and their complex conjugates. Evaluating the projected covariant derivative on the coefficient function $\om_z$ of a $(1,0)$ form $\om = \om_z dz$ gives,
\bea
(\nabla ^- \om)_{\bar z z} & = &
\pbz \om_z + \mu \p_z \om_z - \half (\delta _{\bar z} {}^z + i J_{\bar z} {}^z) \Gamma _{z z} ^z \om_z
- \half (\delta _{\bar z} {}^{\bar z}  + i J_{\bar z} {}^{\bar z} ) \Gamma _{\bar z z} ^z \om_z
\eea
Using $\delta _{\bar z} {}^{\bar z}=1$, $\delta _{\bar z} {}^z=0$, $J_{\bar z} {}^z= - 2 i \mu$ and $J_{\bar z} {}^{\bar z} =-i$, the formula simplifies as follows,
\bea
(\nabla ^- \om)_{\bar z z} & = &
\pbz \om_z + \mu \p_z \om_z - \mu \Gamma ^z _{zz} \om_z   - \Gamma _{\bar z z} ^z \om_z 
\no \\
& = &
\pbz \om_z + \mu \p_z \om_z +(\p_z \mu)  \om_z
\eea
which is  (\ref{muder1}) for $n=1$.

\newpage

\section{Proof of Proposition \ref{3.prop:40}}
\setcounter{equation}{0}
\label{sec:Arak}

In this appendix, we prove Proposition \ref{3.prop:40} which gives the variational derivative of the Arakelov Green function  in equations (\ref{muder.4}) and (\ref{muder.5}). The starting point for the proof is the explicit expression for the Arakelov Green function $\cG(x,y)$ in terms of Abelian integrals and the prime form obtained in  \cite{DHoker:2017pvk}. The construction uses the ``string Green function" $G(x,y)$ which, like the Abelian integrals and the prime form, is not single-valued on $\Sigma$ but rather should be defined on a given fundamental domain $D$ for $\Sigma$, as follows, 
\bea
\label{Arak.1}
G(x,y) = - \ln |E(x,y)|^2 + 2 \pi \Big ( \Im \int ^x _y \om_I \Big ) Y^{IJ} \Big ( \Im \int ^x _y \om_J \Big )
\eea
The relation is as follows, 
\bea
\label{Arak.2}
\cG(x,y) = G(x,y) - \gamma (x) - \gamma (y) + \gamma _0
\eea
where $\gamma(x)$ and $\gamma_0$ are given in terms of the K\"ahler form $\kappa(z)$ of (\ref{2.a.2}) by, 
\bea
\label{Arak.3}
\gamma (x) = \int _D d^2 z \, \kappa (z) \, G(x,z) \hskip 1in \gamma_0 = \int _D d^2 z \, \kappa (z) \, \gamma (z)
\eea
While $G(x,y)$, $\gamma (x) $ and $\gamma_0$ depend on the choice of the fundamental domain $D$, the combination $\cG(x,y)$ is single-valued in $x,y \in \Sigma$ and independent of the choice of $D$. 

\sm

Using the variational derivative for the prime form derived in \cite{Verlinde:1986kw}, 
\bea
\label{Arak.4}
\delta _{ww} \ln E(x,y) = - \half \Big ( \p_w \ln E(w,x)  - \p_w \ln E(w,y) \Big )^2
\eea
and the formulas for the variational derivatives of $\om_I$, $\bar \om_I$ and $Y$ given in (\ref{muder.2}), one readily evaluates the variational derivative of $G(x,y)$, and we obtain,
\bea
\label{Arak.5}
\delta _{ww} G(x,y) & = & \half \Big ( \p_w G(w,x) - \p_w G(w,y) \Big )^2
\no \\
& = & \half \Big ( \p_w \cG(w,x) - \p_w \cG(w,y) \Big )^2
\eea
where we passed from the first line to the second using (\ref{Arak.2}). The variational derivatives of the combinations $\gamma(x)$ and $\gamma _0$ may then be evaluated using (\ref{muder.3}) and the above expression for $\delta_{ww} G(x,y)$ expressed in terms of the Arakelov Green function on the second line of (\ref{Arak.5}), and we obtain,
\bea
\delta _{ww} \gamma (x) & = & 
\half \big ( \p_w \cG(w,x) \big )^2 + \half \int _\Sigma d^2 z \, \kappa (z) \big ( \p_w \cG(w,z) \big )^2
\no \\ &&
- { 1 \over h} \om^I(w) \int _D d^2 z \, \p_z \p_w \cG(w,z)   \, \bar \om_I(z) \, \Big ( \cG (z,x) + \gamma (z) \Big )
\no \\
\delta _{ww} \gamma _0 & = & 
\int _\Sigma d^2 z \, \kappa (z) \big ( \p_w \cG(w,z) \big )^2
- { 2 \over h} \om^I(w) \int _D d^2 z \, \p_z \p_w \cG(w,z)   \, \bar \om_I(z) \, \gamma (z) 
\eea
Combining these contributions with the help of (\ref{Arak.2}), all $D$-dependent integrals, the 
squares of $ \p_w \cG(w,\cdot)$ and their integrals over $\Sigma$ against $\kappa(z)$
manifestly cancel one another, such that we obtain the following formula,
\bea
\delta_{ww} \cG(x,y) & = & 
- \p_w \cG(w,x) \, \p_w \cG(w,y) 
+ { 1 \over h} \om^I(w) \int _\Sigma d^2 z \, \p_z \p_w \cG(w,z) \, \bar \om_I(z) \, \cG(z,x) 
\no \\ &&
+ { 1 \over h} \om^I(w) \int _\Sigma d^2 z \, \p_z \p_w \cG(w,z) \, \bar \om_I(z) \, \cG(z,y) 
\eea
Since $\cG$ and the Abelian differentials are single-valued on $\Sigma$, we may integrate by parts in $z$ and express the result in terms of the DHS kernel $\cG_I$ defined in the first line of (\ref{2.a.4}) to obtain (\ref{muder.4}) and (\ref{muder.5}) and thereby proving Proposition \ref{3.prop:40}.

\newpage

\section{Proof of Proposition \ref{7.prop:2}}
\setcounter{equation}{0}
\label{sec:BB}

The proposition is proven by direct calculation of the variational derivatives of the defining formulas in (\ref{7.f.1}) using the variational derivatives of the primitives of DHS kernels in (\ref{7.c.8}) and the defining relations for their coincident limits $\cC_\vI(w)$ in (\ref{A.3}). For the modular tensors $\cA$, the proof of the variational derivative (\ref{7.prop.2a}) is straightforward. For the modular tensors $\mD_\vI$, we use their definition in (\ref{7.f.1}) and  the variational derivatives of $\kappa(x)$ and $\cG_\vI(x,y)$ given in (\ref{muder.3}) and (\ref{7.c.8}), respectively. Combining these ingredients, we obtain,
\bea
\label{9.q.12}
\delta_{ww} \mD_\vI & = & 
{ 1 \over h} \om^J(w) \int_\Sigma d^2 x \, \p_w \cG(w,x) \, \bar \om_J(x) \, \p_x \cG_\vI (x,x)
\no \\ &&
- \sum_{\vI = \vX \vY} \int _\Sigma d^2 x \, \kappa(x) \, \p_w \cG_{ \theta(\vX)}  (w,x) \, \p_w \cG_\vY (w,x)
\eea
Substituting the expression of (\ref{A.lem.1}) for $\p_x \cG_\vI (x,x)$ into the first~line of 
(\ref{9.q.12}) and using the coincident Fay identity of (\ref{A.10}) to carry out the integral on the second line, we obtain,
\begin{align}
\int _\Sigma d^2 x \, &\kappa (x) \, \p_w \cG_{\theta(\vX)} (w,x) \, \p_w \cG_\vY (w,x)
= 
- { 1 \over h} \om^L(w) \cC_{\theta(\vX) L \theta(\vY)} (w)
 \label{9.q.14} \\ 
 &+ { 1 \over h} \sum _{\vY = \vR \vS} \Big ( \p_w \Phi _{\theta(\vX) L \theta(\vS) M} {}^L(w) \p_w \Phi _\vR{}^M(w)
+ \p_w \Phi _{\theta(\vX) L \theta(\vS)} {}^M (w) \p_w \Phi _{\vR M }{}^L(w) \Big ) 
\notag
\end{align}
Combining these results proves the following expression,
\bea
\delta_{ww} \mD_\vI & = & 
 \frac{1}{h} \,  \om^J(w)  \bigg\{ \sum_{\vI = \vX \vY \vZ} \p_w \Phi_{J (\vX \shuffle \theta(\vZ)) M (\vY + \theta(\vY)) }{}^M(w)
+   \sum_{\vI = \vX \vY} \!    \cC_{\theta(\vX) J \theta(\vY)}(w) \bigg\} 
\label{9.q.15} \\ && 
- { 1 \over h} \! \sum_{\vI = \vX \vY \vZ} \! 
\Big ( \p_w \Phi _{\theta(\vX) L \theta(\vZ) M} {}^L(w) \p_w \Phi _\vY{}^M(w)
+ \p_w \Phi _{\theta(\vX) L \theta(\vZ)} {}^M (w) \p_w \Phi _{\vY M }{}^L(w) \Big ) 
\no 
\eea
or, after using $ \mD_\vI =  \mD_{\theta(\vI)}$ and rearranging terms, 
\bea
\delta_{ww} \mD_\vI & = & 
- { 1 \over h} \! \sum_{\vI = \vX \vY \vZ} \! 
\Big( \p_w \Phi_{ \vZ M \vX J }{}^M(w) \p_w \Phi_{\theta(\vY)}{}^J(w) 
+ \p_w \Phi_{ \vZ M \vX }{}^J(w) \p_w \Phi_{\theta(\vY) J}{}^M(w) \Big) 
\notag \\ && 
+ \frac{1}{h} \,  \om^J(w)  \bigg\{ 
  \sum_{\vI = \vX \vY} \!    \cC_{\vY J \vX}(w) 
+\sum_{\vI = \vX \vY \vZ} \p_w \Phi_{J (\vX \shuffle \theta(\vZ)) M (\vY + \theta(\vY)) }{}^M(w)
\bigg\} 
\label{9.q.15re}
\eea
The first line and the last term are already part of the desired expression (\ref{7.prop.2b}) for $\delta_{ww} \mD_\vI$,
and the leftover task in completing its proof is to derive the combinatorial identity,
\bea
  \sum_{\vI = \vX \vY} \!    \cC_{\vY J \vX}(w)  = 
 \sum_{\vI = \vX \vY} \Big( \om^M(w) \mN_{M \vY J \vX} + \p_w \Phi_{M \vY J \vX}{}^M(w) \Big)  
 \label{9.q.15nw}
\eea
for the combination of coincident DHS kernels $\cC_{\vI}(w)$ given by (\ref{A.4}).
While individual terms on the left side feature numerous combinations
$\p_w \Phi_{\vQ}{}^K(w) \mN_{K\vR}$ with non-empty $\vQ$, $\vR$ from (\ref{A.4}),
the right side exhibits considerable cancellations to be explained below.

\sm

The deconcatenation sums of (\ref{A.4}) and the left side of (\ref{9.q.15nw}) conspire~to 
\begin{align}
&\sum_{\vI = \vX \vY} \cC_{\vY J \vX}(w) 
= \sum_{\vI = \vX \vY} \sum_{\vY J \vX = \vA \vB \vC} \Big( \p_w \Phi_{\vA \shuffle \theta(\vC)}{}^M(w) 
 \mN_{M \vB}
+ \p_w\Phi_{(\vA \shuffle \theta(\vC)) M \vB}{}^M(w)\Big)
\notag \\
&\ \ = \sum_{\vI =  \vX_1 \vX_2 \vX_3 \vY} \Big( \p_w \Phi_{ \vY J \vX_1 \shuffle \theta(\vX_3)}{}^M(w) 
 \mN_{M \vX_2}
+ \p_w\Phi_{(\vY J  \vX_1 \shuffle \theta(\vX_3)) M \vX_2}{}^M(w)\Big) \notag \\
&\ \ \quad + \sum_{\vI = \vX_1 \vX_2 \vY_1 \vY_2} \Big( \p_w \Phi_{   \vY_1\shuffle \theta(\vX_2) }{}^M(w) 
\mN_{M \vY_2 J  \vX_1}
+ \p_w\Phi_{( \vY_1 \shuffle \theta(\vX_2)   ) M \vY_2 J \vX_1}{}^M(w)\Big) \notag \\
&\ \ \quad - \sum_{\vI = \vX \vY_1 \vY_2 \vY_3} \Big( \p_w \Phi_{ \vY_1 \shuffle \theta(\vX) J \theta(\vY_3) }{}^M(w) 
\mN_{M \vY_2}
+ \p_w\Phi_{(\vY_1 \shuffle \theta(\vX) J \theta(\vY_3)  ) M\vY_2}{}^M(w)\Big)
\label{smpcs.01}
\end{align}
where we have organized the deconcatenations $\vY J \vX= \vA \vB \vC$ into three cases:
\begin{itemize}
\item $J \in \vA \ \Rightarrow \ \vA = \vY J \vX_1 \, , \ \ \vB= \vX_2 \, , \ \ \vC = \vX_3 \, , \ \  \vX = \vX_1 \vX_2 \vX_3$;
\item $J \in \vB \ \Rightarrow \ \vA = \vY_1 \, , \ \ \vB= \vY_2 J  \vX_1 \, , \ \ \vC = \vX_2 \, , \ \ \vX = \vX_1 \vX_2 \, , \ \ \vY = \vY_1 \vY_2 $;
\item $J \in \vC \ \Rightarrow \ \vA = \vY_1 \, , \ \ \vB= \vY_2 \, , \ \ \vC = \vY_3 J \vX \, , \ \ \vY = \vY_1 \vY_2 \vY_3$.
\end{itemize}
The last three lines of (\ref{smpcs.01}) then admit the following major simplifications:
\begin{itemize}
\item[(i)] in the second line from below, the only contributions to $\sum_{\vI = \vX_1 \vX_2 \vY_1 \vY_2}$ stem from empty words $\vX_2$ and $\vY_1$ (using the standard property $\sum_{\vQ = \vX_2 \vY_1} \vY_1  \shuffle  \theta(\vX_2) = \emptyset \,\delta_{\vQ,\emptyset}$ of the antipode), leaving us with $\sum_{\vI = \vX \vY} (\om^M (w)\mN_{M \vY J \vX} + \p_w \Phi_{M \vY J \vX}{}^M(w))$;
\item[(ii)] the first and third line from below of (\ref{smpcs.01}) cancel each other as one can see by writing their respective $\mN$ terms as
\bea
\sum_{\vI = \vA \vB \vC} \mN_{M \vB} \, \bigg\{
 \sum_{\vC = \vR \vS} \p_w \Phi_{ \vS J \vA \shuffle \theta(\vR)  }{}^M(w)
-\sum_{\vA = \vR \vS} \p_w \Phi_{  \vS  \shuffle \theta(\vR) J \theta(\vC) }{}^M(w)
\bigg\}
\label{smpcs.02}
\eea
and inserting the combinatorial identities 
\bea
\! \! \!
\sum_{\vC = \vR \vS} \!  \vS J \vA  \! \shuffle \! \theta(\vR) = J \big( \theta(\vC) \! \shuffle  \! \vA \big)
\, , \ \ \ \
\sum_{\vA = \vR \vS} \! \vS \! \shuffle \!  \theta(\vR) J \theta(\vC) = J \big( \theta(\vC) \! \shuffle \! \vA \big)
\label{smpcs.03}
\eea
The analogous terms without factors of $ \mN$ cancel by the same mechanism.
\end{itemize}
With the simplifications (i) and (ii) in place, the last three lines of (\ref{smpcs.01}) reduce to the right side of
(\ref{9.q.15nw}) which, upon insertion into (\ref{9.q.15re}), produces the second line of the target expression (\ref{7.prop.2b}) and completes the proof of Proposition \ref{7.prop:2}.

\newpage

\section{Proof of Theorem \ref{7.thm:1}}
\setcounter{equation}{0}
\label{sec:C}

To prove Theorem \ref{7.thm:1} we shall use the expressions for the action of $L^0_{ww}$ on the generators in (\ref{7.b.12}) and (\ref{7.e.2}), as well as the defining relations in (\ref{7.f.1}) and variational derivatives in Proposition \ref{7.prop:2} of the modular tensor $\cA$, $\cD$ to recast the action of $L^0_{ww}$ in terms of $\delta_{ww} \cA$ and $\delta_{ww} \mD$.

\subsection{The action of $\LL^0_{ww}$ on $a_{iI}$}

The starting point is the defining relation in (\ref{7.e.2}) which we repeat here for convenience, 
\bea
\label{C.a.1}
{} \big [ \LL_{ww}^0 (\btau) , a_{iI} \big ] & = & - \int _\Sigma d^2 x_i \, \bar \om_I(x_i) \, \big [ \LL_{ww}^1(\xx| \btau) , J_i^{(1,0)} (\xx| \btau) \big ]
\eea
We also recall the expressions for $\LL_{ww}^1(\xx| \btau)$ of (\ref{7.c.9}) and $J_i^{(1,0)} (\xx|\btau)$ of (\ref{7.c.6}) in terms of the functions $\cW(\xx| \btau)$ and $\cU_{ww}(\xx| \btau) $ defined in (\ref{7.c.7}) and (\ref{7.c.10}), respectively,
\bea
\label{C.a.2}
\LL^1_{ww} (\xx| \btau) & = & \delta_{ww} \cW(\xx| \btau) - \cU_{ww}(\xx| \btau) 
\no \\
J^{(1,0)}_i (\xx| \btau) & = & \p_i \cW(\xx| \btau) + \om^K(x_i) a_{iK}
\eea
In the remainder of this subsection we shall suppress the arguments $\xx$ and $\btau$.   Substituting the expression for $L_{ww}^1$ into (\ref{C.a.1}) and pulling a variational derivative from acting on $\cW$ to acting on the entire integral, we obtain, 
\bea
\label{9.f.2}
\big [ L^0_{ww},  a_{iI} \big ] & = & 
- \delta_{ww} \int _\Sigma d^2 x_i \, \bar \om_I(x_i) \, \big [  \cW  ,  J_i ^{(1,0)} \big ]
+ \int _\Sigma d^2 x_i \, \bar \om_I(x_i) \, \big [  \cW ,  \p_i L^1_{ww}  \big ]
\no \\ &&
+ \int _\Sigma d^2 x_i \, \bar \om_I(x_i) \, \big [ \cU_{ww},  J_i ^{(1,0)}  \big ]
\eea
where in the second term we have used the relation $\delta_{ww} J^{(1,0)}_i = \p_i L^1_{ww}$.
Integrating the second term by parts in $\p_i$ and using the second relation in (\ref{C.a.2}) to recast $\p_i \cW$ in terms of $J_i^{(1,0)}$ and (\ref{C.a.1}) to re-express the integral of $[L^1_{ww}, J_i^{(1,0)}]$ in terms of $\big [ \LL_{ww}^0 , a_{iI} \big ] $ we obtain, 
\bea
\label{9.f.3}
\int _\Sigma d^2 x_i \, \bar \om_I(x_i) \, \big [  \cW  ,  \p_i L^1_{ww}  \big ] 
& = & - \big [ \LL_{ww}^0 , a_{iI} \big ] 
- \int _\Sigma d^2 x_i \, \bar \om_I(x_i) \, \big [L^1_{ww}  , \om^K(x_i) a_{iK} \big ] 
\eea
Eliminating the left side of (\ref{9.f.3})  from (\ref{9.f.2}), we obtain,  
\bea
{} [L^0_{ww} , a_{iI}] & = & 
- \half \delta_{ww} \int _\Sigma d^2 x_i \, \bar \om_I(x_i) \, \big [  \cW  ,  \, J_i ^{(1,0)}  \big ]
 \\ &&
- \half \int _\Sigma d^2 x_i \, \bar \om_I(x_i) \, \big [L^1_{ww} , \, \om^K(x_i) a_{iK} \big ] 
+ \half \int _\Sigma d^2 x_i \, \bar \om_I(x_i) \, \big [ \cU_{ww},  \, J_i ^{(1,0)}  \big ]
\no
\eea
Expressing $J_i^{(1,0)}$ in terms of $\p_i \cW$ and $L^1_{ww}$ in terms of $\delta_{ww} \cW$ using (\ref{C.a.2}) gives, \bea
{} [L^0_{ww} , a_{iI}] & = & 
- \half \delta_{ww} \int _\Sigma d^2 x_i \, \bar \om_I(x_i) \, \big [  \cW  ,  \p_i \cW  + \om^K(x_i) a_{iK} \big ]
 \\ &&
- \half \int _\Sigma d^2 x_i \, \bar \om_I(x_i) \, \big [\delta_{ww} \cW - \cU_{ww}  , \, \om^K(x_i) a_{iK} \big ] 
\no \\ &&
+ \half \int _\Sigma d^2 x_i \, \bar \om_I(x_i) \, \big [ \cU_{ww},  \p_i \cW  + \om^K(x_i) a_{iK} \big ]
\no
\eea
 Integrating by parts in $\p_i$ on the last line and using the relation,
\bea
\p_i \, \cU_{ww} + \delta _{ww} \, \om^K(x_i) a_{iK} =0
\eea
further simplifies the final form, which is summarized by the Lemma below.

{\lem 
\label{9.lem:20}
The action of $L^0_{ww}$ on $a_{iI}$ is given by,
\bea
\label{9.lem.20}
{} [L^0_{ww} , a_{iI}] & = &  - \delta_{ww} \cP_{iI} + \cQ_{iI ww}
\eea
where $\cP_{iI}$ is defined by,  
\bea
\label{9.p.1}
\cP _{iI} & = & \half \int _\Sigma d^2 x_i \, \bar \om_I(x_i) \, \big [  \cW  , \,  \p_i  \cW  \big ]
+ \int _\Sigma d^2 x_i \, \bar \om_I(x_i) \, \big [  \cW  , \,   \om^K(x_i) a_{iK}  \big ]
\eea
while $\cQ_{iI ww}$ is given by,
\bea
\label{9.p.1a}
\cQ _{iIww} = \int _\Sigma d^2 x_i \, \bar \om_I(x_i) \, \big [ \cU_{ww}, \,  \om^K(x_i) a_{iK} \big ]
= \half \delta_{ww} \, \cA^M {}_I {}^N \, \big [ a_{iM}, a_{iN} \big ]
\eea
Both functions $\cP_{iI}$ and $\cQ_{iI ww}$ are independent of $\xx$ and $ \delta_{ww} \cA^M {}_I {}^N$ is given in (\ref{7.f.6})}. 

\begin{proof} 
\vskip 0in
To verify that $\cP_{iI}$ is independent of $x_j$ we compute its derivative $\p_j$ for $j \not= i$,
\bea
\label{9.p.3}
\p_j \cP_{iI} = 
\int _\Sigma d^2 x_i \, \bar \om_I(x_i) \, \big [ \p_j  \cW ,  J_i^{(1,0)}  \big ]
=\int _\Sigma d^2 x_i \, \bar \om_I(x_i) \, \big [ J_j^{(1,0)} - \om^K(x_j) a_{jK} ,  J_i^{(1,0)}  \big ]
\quad
\eea
Using the flatness condition $[J^{(1,0)}_i , J^{(1,0)} _j  ]=0$ and carrying out the integration in the remaining term, we find that the $\p_j \cP_{iI}=0$ in view of the structure relation $ [a_{jK} , a_{iI} ] =0$.  Since $\cP_{iI}$ is a scalar in $x_j$ for $j \not= i$, the vanishing of its $\p_j$ derivative implies that $\cP_{iI}$ is independent of $\xx$. 

\sm

The function $\cQ_{iI ww}$ is readily  computed by inserting the expression (\ref{7.c.10}) for  $ \cU_{ww}$ into (\ref{9.p.1a}), and we have the following expression for $h\geq 2$,
\bea
\label{9.p.2}
\cQ_{iI ww} =  
 \Big ( \om^M(w) \p_w \Phi _I{}^N(w) 
+ { 1 \over h-1} \om^J(w) \p_w \Phi _J{}^M (w) \, \delta ^N_I \Big ) \, [a_{iM}, a_{iN}] 
\eea
whereas $\cQ_{iI ww} = 0$ for $h=1$.
One easily verifies that $\cQ_{iI ww} $ is holomorphic in $w$ and that  its variational curl vanishes. 
Comparing with (\ref{7.f.6})  allows us to write $\cQ_{iI ww} $ as on the right of (\ref{9.p.1a}).
\end{proof}

\subsubsection{Computation of $\cP_{iI}$}

We begin by simplifying the integrals over $x_i$ in (\ref{9.p.1}). Clearly, any term in $ \cW(\xx)$  that is independent of $x_i$ has a vanishing contribution to the first integral in (\ref{9.p.1}). Thus, in the first integral in (\ref{9.p.1}), we have effectively,
\bea
\label{9.p.6}
 \cW (\xx) & ~ \to ~ & \sum _{\vP \not = \emptyset} \Phi _\vP {}^M(x_i) B_i ^\vP \, a_{iM}
+ \sum _{k \not= i} \sum_\vP \cG_\vP (x_k, x_i) B_k^\vP \, t_{i k}
\no \\ 
\p_i  \cW (\xx) & = & \sum _{\vQ \not = \emptyset} \p_i \Phi _\vQ {}^N(x_i) B_i ^\vQ \, a_{iN}
+ \sum _{\ell \not= i} \sum_\vQ \p_i \cG_\vQ (x_\ell, x_i) B_\ell^\vQ \, t_{i \ell}
\eea
while  all terms in $ \cW(\xx)$ are to be retained in the second integral in (\ref{9.p.1}). To compute $\cP_{iI}$ we take advantage of a trick that will save us from many lengthy calculations. Since $\cP_{iI}$ is independent of $\xx$ in view of Lemma \ref{9.lem:20} it equals its integral in $x_k$ for all $k \not= i$ over $\Sigma$  against the product of all $\kappa(x_k)$,
\bea
\cP_{iI} = \int_{\Sigma^n} \bigg( \prod_k  \kappa(x_k) \bigg)  \, \cP_{iI} = \prod _{k \not= i} \int _\Sigma d^2 x_k \, \kappa (x_k) \cP_{iI} 
\eea
Combining both operations above, we find that all cross-terms in the first term cancel upon integration against $\kappa$, and in the second term only the contribution from $\Phi _\vP {}^M(x_i) B_i^\vP a_{iM}$ to $\cW$ is non-zero. As a result, $\cP_{iI}$ is given by,
\bea
\cP_{iI} & = & \half \int _\Sigma d^2 x_i \, \bar \om_I(x_i) 
\sum _{\vP \not = \emptyset, \vQ \not = \emptyset}
\Phi _\vP {}^M(x_i) \p_i \Phi _\vQ {}^N(x_i) \big [ B_i ^\vP \, a_{iM}   , B_i ^\vQ \, a_{iN} \big ]
 \\ &&
+ \int _\Sigma d^2 x_i \, \bar \om_I(x_i) \sum _{\vP \not = \emptyset} \Phi _\vP {}^M(x_i) \om^N(x_i) \big [ B_i ^\vP \, a_{iM}, a_{iN} \big ]
\no \\ &&
+ \half \sum _{k \not= i}  \int _\Sigma d^2 x_k \, \kappa (x_k) \int _\Sigma d^2 x_i \, \bar \om_I(x_i)  
\sum_{\vP, \vQ}  \cG_\vP (x_k, x_i) 
 \p_i \cG_\vQ (x_k, x_i) \big [ B_k^\vP \, t_{i k} , B_k^\vQ \, t_{ik} \big ] 
\no
\eea
Carrying out the integrals over $x_i$ gives again DHS kernels and $\cA$ tensors, 
\bea
\cP_{iI} & = & 
- \half \sum _{\vP \not = \emptyset, \vQ \not = \emptyset} \cA^M {}_{ \theta(\vP) I \vQ} {}^N 
 \big [ B_i ^\vP \, a_{iM}   , B_i ^\vQ \, a_{iN} \big ]
 - \sum _{\vP \not = \emptyset} \cA^M {}_{ \theta(\vP) I } {}^N 
 \big [ B_i ^\vP \, a_{iM}   , a_{iN} \big ]
\no \\ &&
+ \half \sum _{k \not= i}  \int _\Sigma d^2 x_k \, \kappa (x_k) 
\sum_{\vP, \vQ}  \cG_{\vP I \theta(\vQ)}  (x_k, x_k) 
 \big [ B_k^\vP \, t_{i k} , B_k^\vQ \, t_{ik} \big ] 
\eea
Using the reflection symmetry $\cA^M{}_{\vI}{}^N = \cA^N {}_{\theta(\vI)} {}^M$ the first two terms may be regrouped as a sum over $\vP$ and $\vQ$ constrained only by $\vP \vQ \not= \emptyset$, while in the last line, the integral over $x_k$ may be carried out in terms of the modular tensors $\mD$ defined in (\ref{7.f.1}),
\bea
\cP_{iI} = 
- \half \!  \sum _{\vP, \vQ; \vP \vQ \not = \emptyset} \!\! \! \cA^M {}_{ \theta(\vP) I \vQ} {}^N 
 \Big [ B_i ^\vP \, a_{iM}   , B_i ^\vQ \, a_{iN} \Big ]
+ \half  \sum _{k \not= i}   \sum_{\vP, \vQ}  \mD_{\vP I \theta(\vQ)}  \Big [ B_k^\vP \, t_{i k} , B_k^\vQ \, t_{ik} \Big ] 
\qquad
\eea
Taking the variational derivative, including the contribution from $\cQ_{iI ww}$ in (\ref{9.p.1a}) as the $\vP \vQ= \emptyset$ contribution to the first sum, and converting $B_k^\vP t_{ik} = B_i ^{\theta(\vP)} t_{ik}$ in the second line, we obtain the first formula in (\ref{9.q.20}) of Theorem \ref{7.thm:1}.

\subsection{The action of $\LL^0_{ww}$ on $b_i^I$}
\label{appd2}

The starting point is the expression of (\ref{7.b.12}), repeated here for convenience,
\bea
\label{C.b.1}
{} \big [ \LL_{ww}^0 (\btau)  , b_i^I \big ] & = & \theta \bPhi ^I(w;B_i) \, \bPhi ^J (w;B_i) \, a_{iJ} - \big [ \cV_{ww}^1 (\btau), b_i^I \big ]
\eea
where $\cV_{ww}^1$ is given in (\ref{7.b.9bb}).  Forming the generating function for the variational derivative of the $\cA$-tensors, derived in Proposition \ref{7.prop:2}, we find, 
\bea
\label{C.b.2}
\sum_{\vI} \delta _{ww} \cA^M{}_\vI {}^N B_i^\vI a_{iN}
& = & \theta \bPhi ^M(w;B_i) \bPhi^N(w;B_i) a_{iN}
 \\ &&
-{ 1 \over h} \sum _\vI \om^R(w) \p_w \Phi _{R \vI} {}^N(w) \Big (  B_i^K B_i^\vI -  B_i^{\theta(\vI)}  B_i^N \Big ) a_{iN}
\no \\ &&
+ { 1 \over h-1} \om^R(w) \Big (  \p_w \Phi _R{}^M(w) B_i^N - \p_w \Phi _R{}^N (w) B_i^M  \Big ) a_{iN}
\no
\eea
Comparing the last two lines with $B_i^M \cV_{ww}^1$, obtained from (\ref{7.b.9bb}), we find,
\bea
\label{C.b.3}
\sum_{\vI} \delta _{ww} \cA^M{}_\vI {}^N B_i^\vI a_{iN} = \theta \bPhi ^M(w;B_i) \bPhi^N(w;B_i) a_{iN} + B_i^M \cV_{ww}^1
\eea
Its right side equals that of  (\ref{C.b.1}) which proves the second line of (\ref{9.q.20}) in Theorem \ref{7.thm:1}.

\subsection{The action of $\LL^0_{ww}$ on $t_{ij}$}

The starting point is the expression in the second line of (\ref{7.e.2}), restated here as follows,
\bea
\label{C.c.1}
{} \big [ L_{ww}^0 (\btau) , t_{ij} \big ] = - \big [ \cV^1_{ww} (\btau) , t_{ij} \big ]
- \lim_{x_j \to x_i} \big [ \cV_{ww} (\xx| \btau)  , t_{ij} \big ]
\eea
The first term on the right may be evaluated using the definition of $\cV_{ww}^1$ given in (\ref{7.b.9bb}), whose last term under the sum manifestly commutes with $t_{ij}$ while in the remainder only the terms with $k=i,j$ contribute to give,
\bea
\label{C.c.1a}
\big [ \cV^1_{ww}  , t_{ij} \big ] & = &
- { 1 \over h} \om^J(w) \sum_{\vI} \p_w \Phi _{J \vI} {}^K (w) \Big [ B_i^\vI a_{iK} + B_j^\vI a_{jK} , t_{ij} \Big ]
\eea
To evaluate the commutator, we make use of (\ref{A.Lie.1}) and we find, 
\bea
\label{C.c.1b}
\big [ \cV^1_{ww}  , t_{ij} \big ] & = &
 { 1 \over h} \om^J(w) \sum_{\vI} \big [ B_i^\vI t_{ij}, t_{ij} \big ] 
 \sum _{\vI = \vX \vY \vZ} \p_w \Phi _{J (\vX \shuffle \, \theta(\vZ) ) K (\vY + \theta(\vY)) }{}^K
 \eea
The second term on the right of (\ref{C.c.1}) may be evaluated using the following lemma.

{\lem
\label{C.lem:1}
The limit of the commutator $[\cV_{ww}, t_{ij}]$ on the right of (\ref{C.c.1}) is independent of $\xx$ and may be recast as follows,
\bea
\label{C.c.2}
\lim _{x_i, x_j \to x} \big [ \cV_{ww} (\xx| \btau) , t_{ij} \big ]
& = & 
- \big [ \theta \bG(w,x;B_i) \, \theta \bG(w,x;B_j) t_{ij} , t_{ij} \big ] 
 \\ &&
- \big [ \theta \bG(w,x;B_i) \, \bPhi ^K(w;B_i) a_{iK} 
\no \\ && \hskip 0.4in
+ \theta \bG(w,x;B_j) \, \bPhi ^K(w;B_j) a_{jK} , t_{ij} \big ] 
\no
\eea
\sm}

\begin{proof}
\vskip -0.2in
To prove (\ref{C.c.2}), we render $\cV_{ww}$ explicit with the help of (\ref{7.b.9}), use its regularity as $x_j \to x_i$ to simplify the commutator, and retain  only those contributions that do not manifestly vanish in view of the structure relations of $\mt_{h,n}$. All dependence on points $k \not= i,j$ is found to vanish using the antipode identity $\theta \bG(w,x;B_i) t_{ik} = \bG(w,x;B_k) t_{ik}$ and its $i \to j$ counterpart  as well as the structure relation $[t_{ik} + t_{jk}, t_{ij}]=0$. We are left with (\ref{C.c.2}), which involves only the variables $w$ and $x$.   To prove that (\ref{C.c.2}) is independent of $x$, we prove that the left side has vanishing $x$-derivative,
\bea
\label{C.c.4}
\p_x \Big \{ \lim _{x_i, x_j \to x} \big [ \cV_{ww} (\xx| \btau)  , t_{ij}  \big ] \Big \} & = &
\lim _{x_i, x_j \to x} \big [ \p_i \cV_{ww} (\xx| \btau) + \p_j \cV_{ww} (\xx| \btau) , t_{ij}  \big ] 
 \\ & = &
 \delta _{ww} \Big \{ \lim _{x_i, x_j \to x} \big [ J_i^{(1,0)}  (\xx| \btau) + J_j^{(1,0)}  (\xx| \btau) , t_{ij}  \big ] \Big \} 
 \no
\eea
The limit of the commutator in the second line on the right side was shown to vanish in appendix A of \cite{DHoker:2026lgg}. Since the limit of the commutator on the left is a single-valued scalar in $x$ with vanishing $x$-derivative, it must be independent of $x$.
\end{proof}

\sm

Since equation (\ref{C.c.2}) is independent of $x$, it is equivalent  to its integral in $x$ over~$\Sigma$ against $\kappa(x)$. Since the integral  $\int d^2 x  \, \kappa(x) \bG(w,x;B_i)$ vanishes, we may recast (\ref{C.c.1}) with the help of (\ref{C.c.2}) as follows,
\bea
\label{C.c.5}
{} \big [ L_{ww}^0 (\btau) + \cV^1_{ww}(\btau), t_{ij} \big ] 
& = & 
\int _\Sigma d^2 x \, \kappa(x)   \big [ \theta \bG(w,x;B_i) \,  \bG(w,x;B_i) t_{ij} , t_{ij} \big ] 
\eea
Expanding in powers of $B_i$ and rewriting $\p_w \cG_{\theta(\vP)} (w,x) = \p_w \cG_\vP (x,w)$,
\bea
\label{C.c.6}
{} \big [ L_{ww}^0 (\btau) + \cV^1_{ww}(\btau), t_{ij} \big ] 
& = & 
\int _\Sigma d^2 x \, \kappa(x) \sum _{\vP, \vQ} \p_w \cG_\vP (x,w) \p_w \cG_\vQ (w,x) 
\big [ B_i ^{\vP \vQ} t_{ij}, t_{ij} \big ] 
\eea
Setting $y=x$ in the first equation of (\ref{7.c.8}) and then integrating  in $x$ over $\Sigma$ against $\kappa(x)$, we see that the last terms integrate to zero and we are left with, 
\bea
\label{C.c.7}
{} \big [ L_{ww}^0 (\btau) + \cV^1_{ww}(\btau), t_{ij} \big ] 
& = & 
- \sum _{\vI} \big [ B_i ^\vI t_{ij}, t_{ij} \big ]  \int _\Sigma d^2 x \, \kappa(x) \delta _{ww} \cG_\vI(x,x) 
\eea
Lemma \ref{A.lem:1} provides a formula for $\p_x \cG_\vI(x,x)$, which we repeat here for convenience, 
\bea
\label{C.c.9}
\p_x \cG_\vI(x,x) = \sum_{\vI = \vX \vY \vZ}  \p_x \Phi _{(\vX \shuffle \theta(\vZ)) M (\vY + \theta(\vY)) }{}^M(x) 
\eea
The primitive of this relation involves an integration constant which follows from integrating both sides over $x$ against $\kappa(x)$  and using the definition of $\mD_\vI$ in (\ref{7.f.1}),
\bea
\label{C.c.10}
\cG_\vI(x,x) = \mD_\vI + \sum_{\vI = \vX \vY \vZ} \Phi _{(\vX \shuffle \theta(\vZ)) M (\vY + \theta(\vY)) }{}^M(x) 
\eea
The integral of its variational derivative against $\kappa(x)$ gives,
\bea
\label{C.c.11}
\int _\Sigma d^2 x \, \kappa(x) \delta _{ww} \cG_\vI(x,x) 
& = & \delta_{ww} \,  \mD_\vI 
+ \sum_{\vI = \vX \vY \vZ} \int _\Sigma d^2 x \, \kappa(x) \delta_{ww} \Phi _{(\vX \shuffle \theta(\vZ)) M (\vY + \theta(\vY)) }{}^M(x) 
\no \\ & = &
 \delta_{ww} \,  \mD_\vI -{1 \over h} \om^J(w)  \sum_{\vI = \vX \vY \vZ} \p_w  \Phi _{J (\vX \shuffle \theta(\vZ)) M (\vY + \theta(\vY)) }{}^M(w)
 \eea
using the second equation of (\ref{7.c.8}) in passing to the second line and discarding the integrals of total derivatives in $x$. Putting all together, we obtain, 
\bea
\label{C.c.12}
{} \big [ L_{ww}^0 (\btau) , t_{ij} \big ] 
& = & - \sum _\vI  \delta_{ww} \,  \mD_\vI \big [ B_i^\vI t_{ij}, t_{ij} \big ] - [ \cV_{ww}^1 , t_{ij}] 
\no \\ &&
+{1 \over h} \om^J(w) \sum_{\vI} \big [ B_i^\vI t_{ij}, t_{ij} \big ]  \sum_{\vI = \vX \vY \vZ} \p_w  \Phi _{J (\vX \shuffle \theta(\vZ)) M (\vY + \theta(\vY)) }{}^M(w)
\eea
In view of (\ref{C.c.1b}), the second and third terms on the right side cancel one another, and we are left with the last expression in (\ref{9.q.20}). This concludes the proof of Theorem \ref{7.thm:1}.

\newpage

\section{Proof of equation (\ref{6.z.5cc}) of Theorem \ref{6.thm:1}}
\setcounter{equation}{0}
\label{sec:E}

In this last appendix, we provide the proof of equation (\ref{6.z.5cc}) of Theorem \ref{6.thm:1}, which we repeat here for convenience, after swapping  $(iI) \leftrightarrow (jJ)$ and reversing the overall sign,
\bea
\label{E.2}
\Big [ \tilde \LL_{ww}^0 (\tilde  a_{iI}) , \tilde b_j^J \Big ]  
+ \Big [ \tilde a_{iI},  \tilde  \LL_{ww}^0( \tilde b_j^J )\Big ]  + \delta ^J_I \,  \tilde L^0_{ww} ( \tilde t_{ij})   =0 
\eea
As long as we directly use the form of the $\tilde  \LL_{ww}^0$ action in (\ref{6.z.4}) without appealing to the relation (\ref{7.f.3}) between the modular tensors $\cA$ and $\cD$, only the first and second terms of (\ref{E.2}) receive contributions from $\cA$ while only the first and last terms receive contributions from $\mD$. The contributions from $\cA$ are given by,
\bea
\label{6.e.3}
{} \big [ \tilde \LL_{ww}^0 (  \tilde a_{iI} ) , \tilde b_j^J  \big ] _\cA
& = &
-  \sum_{\vP, \vQ }  \delta_{ww}  \cA^J{}_{\theta(\vP) I \vQ} {}^N \, 
\big [ \tilde B_i^\vP \tilde t_{ij} , \tilde B_i^\vQ \tilde a_{iN} \big ]
\no \\ 
\big [ \tilde a_{iI}, \tilde \LL_{ww}^0 (  \tilde b_j^J) \big ] &= &
 \sum_\vP  \delta_{ww}  \cA^J {}_\vP {}^N \,  \big [ \tilde a_{iI},  \tilde B_j^\vP \tilde a_{jN} \big ]
\eea
while the contributions from $\mD$ are given by,
\bea
\label{6.e.4}
\big [ \tilde  \LL_{ww}^0(  \tilde a_{iI} ), \tilde b_j^J  \big ]_\mD  
& = &
- \sum_{\vP , \vQ } 
 \delta_{ww} \mD_{ \theta(\vP) I \vQ} \,   \big [ \tilde B_i^{\vP J} \tilde  t_{ij} , \tilde B_i^\vQ \tilde  t_{ij}  \big ] 
\no \\ 
\delta^J_I \, \tilde  \LL_{ww}^0( \tilde  t_{ij}) &= &
- \delta ^J_I \,  \sum_\vP  \delta _{ww}  \mD_\vP   \, \big [  \tilde  B_i^\vP \tilde  t_{ij}, \tilde  t_{ij} \big ]
\eea
We begin by simplifying the second line in (\ref{6.e.3}) using the relation (\ref{A.Lie.2}) of appendix \ref{sec:AA}, 
\bea
\label{6.e.5}
\big [ \tilde a_{iI}, \tilde L_{ww}^0( \tilde  b_j^J ) \big ]  = \sum_{\vQ, \vR}   
 \delta_{ww} \cA^J {}_{\vR I \vQ}  {}^N \,  \tilde  B_j^\vR \big [ \tilde  B_j^\vQ \tilde  a_{jN}, \tilde  t_{ij}  \big ]
\eea
Using (\ref{A.Lie.1}) to convert $\tilde  B_j^\vQ \tilde  a_{jK}$ into $- \tilde  B_i^\vQ \tilde a_{iK}$ plus $\tilde a$-independent terms, and applying the generalized Leibniz property $\tilde B_j^{\vR}  [X , Y ] = \sum_{\vR = \vP \shuffle \vQ} [ \tilde B_j^{\vP} X ,  \tilde B_j^{\vQ}Y]$ \cite{DHoker:2026ggx}, we obtain,
\bea
\label{6.e.6}
\big [ \tilde a_{iI}, \tilde L_{ww}^0( \tilde b_j^J ) \big ] & = &
- \sum_{\vP, \vQ, \vR}  \delta_{ww} \cA^J {}_{(\vR \shuffle \, \theta(\vP) )  I \vQ}  {}^N  \,  
\big [ \tilde  B_j^\vR \tilde B_i^\vQ \tilde  a_{iN}, \tilde  B_i^\vP \tilde  t_{ij}  \big ]
 \\ &&
- \sum_{ \vS, \vU, \vV, \vY, \vX}  \delta_{ww} \cA^J {}_{(\vU \shuffle \vV)  I (\vX \shuffle \, \vS) M (\vY + \theta(\vY)) }  {}^M \, \Big [  \tilde  B_i^{\vX \vY \theta(\vU \vS)} \tilde  t_{ij}, \tilde  B_i^{ \theta(\vV)} \tilde  t_{ij} \Big ]
\no
\eea
The $\vR= \emptyset$ contribution to the first line cancels the $\big [ \tilde  \LL_{ww}^0 ( \tilde  a_{iI} ) , \tilde  b_j^J  \big ] _\cA$ term in (\ref{6.e.3}). The remaining $\vR \not= \emptyset $ contributions to the first term in (\ref{6.e.6}) may be parametrized by letting $\vR \to \vR M$. Furthermore, using the identity (\ref{A.sh.2}) in terms of the deconcatenation $\vZ = \theta(\vU \vS)$ in the last term in (\ref{6.e.6}), changing variables $\vV \to \theta (\vP)$, and factoring out common Lie algebra elements, we obtain, 
\bea
\label{6.e.10}
{} \big [ \tilde \LL_{ww}^0 (  \tilde a_{iI} ) , \tilde b_j^J  \big ] _\cA + \big [ \tilde a_{iI}, \tilde \LL_{ww}^0 (  \tilde b_j^J) \big ] 
& = &
\sum_{\vP, \vQ} \cR_{\vP, \vQ} \big [ \tilde  B_i^\vP \tilde  t_{ij} , \tilde  B_i^\vQ \tilde t_{ij} \big ]
\eea
where the coefficients $\cR_{\vP, \vQ}$ are given by,
\bea
\label{6.e.11}
\cR_{\vP, \vQ} & = & 
\half \sum _{\vQ= \vX \vY}  \delta_{ww} \cA^J {}_{ \big ( \theta(\vY) M \shuffle \, \theta(\vP) \big )  I \vX}  {}^M   
- (\vP \leftrightarrow \vQ)
\no \\ &&
+ \half \sum_{\vQ =  \vX \vY \vZ}  \delta_{ww} 
\cA^J {}_{ \big ( \theta(\vZ) \shuffle \theta(\vP) I \vX \big ) M \big ( \vY + \theta(\vY) \big )}  {}^M 
- (\vP \leftrightarrow \vQ)
\eea
We shall now show the following equality,
\bea
\label{6.e.12}
\cR_{\vP, \vQ} =  \half \sum_{\theta(\vP) I \vQ = \vX \vY \vZ} 
\delta_{ww} \cA^J {} _{\big ( \vX \shuffle \theta(\vZ) \big ) M \big ( \vY + \theta(\vY) \big ) } {}^M
\eea
by decomposing the deconcatenation sum in (\ref{6.e.12}) as follows. We parametrize the three different sectors according to  whether the letter $I$ belongs to $\vX$, $\vY$ or $\vZ$, as follows, 
\begin{align}
\label{6.e.13}
& (1) & I & \in \vX &  \vX & = \theta(\vP) I \vU & \vQ & = \vU \vY \vZ  
\no \\
& (2) & I & \in \vY &  \theta(\vP) & = \vX \vU   & \vQ & = \vV \vZ  &  \vY & = \vU I \vV 
\no \\
& (3) & I & \in \vZ &  \theta (\vP) & = \vX \vY \vU   & \vZ & = \vU I \vQ
\end{align}
The contributions to $\cR_{\vP, \vQ}$ from $(1)$ and $(3)$ give,
\bea
\label{6.e.14}
(1) & = & 
\half \sum_{\vQ = \vU \vY \vZ} \delta_{ww} \cA^J {} _{\big ( \theta(\vP) I \vU \shuffle \theta(\vZ) \big ) M \big ( \vY + \theta(\vY) \big ) } {}^M
\no \\
(3) & = & 
\half \sum_{\theta(\vP) = \vX \vY \vU} \delta_{ww} \cA^J {} _{\big ( \vX \shuffle \theta(\vU I \vQ)) \big ) M \big ( \vY + \theta(\vY) \big ) } {}^M
\eea
Changing the variable $\vU \to \vX$ in the expression for $(1)$ gives the first term on the second line of (\ref{6.e.11}). Changing variables $\theta(\vX) \to \vZ$, $\theta(\vY) \to \vY$ and $\theta(\vU) \to \vX$ in $(3)$ gives the second term on the second line of (\ref{6.e.11}). The contributions to $(2)$ from the $\vY$ and $\theta(\vY)$ parts may be separated as follows,
\bea
\label{6.e.15}
(2) & = & 
\half \sum_{\theta (\vP) = \vX \vU} \sum_{\vQ = \vV \vZ} 
\delta_{ww} \cA^J {} _{\big ( \vX \shuffle \theta(\vZ) \big ) M \vU I \vV } {}^M - (\vP \leftrightarrow \vQ) 
\eea
Using the deconcatenation relation (\ref{A.sh.2}) of appendix \ref{sec:AA} for $\vS=\emptyset$, 
the sums simplify,
\bea
\label{6.e.16}
(2) & = & 
\half \sum_{\vQ = \vV \vZ} 
\delta_{ww} \cA^J {} _{ \big ( \theta(\vP) \shuffle \theta(\vZ) M  \big ) I \vV } {}^M - (\vP \leftrightarrow \vQ) 
\eea
Changing variables $\vV \to \vS$ and $\vZ \to \vX$, we recover the first line in (\ref{6.e.11}), which completes the proof of the expression (\ref{6.e.12}). Assembling all contributions to (\ref{6.e.10}) and reinstating the two classes of terms in (\ref{6.e.4}) involving $\mD$, we find,
\begin{align}
\label{6.e.17}
\Big [ \tilde \LL_{ww}^0 & (\tilde  a_{iI})  , \tilde b_j^J \Big ]  
+ \Big [ \tilde a_{iI},  \tilde  \LL_{ww}^0( \tilde b_j^J )\Big ]  + \delta ^J_I \,  \tilde L^0_{ww}  ( \tilde t_{ij}) 
\no \\ 
& =  
\half \sum_{\vP, \vQ} \big [ \tilde  B_i^\vP \tilde  t_{ij}, \tilde  B_i^\vQ \tilde  t_{ij} \big ] 
\sum_{\theta(\vP) I \vQ = \vX \vY \vZ} 
\delta_{ww} \cA^J {} _{\big ( \vX \shuffle \theta(\vZ) \big ) M \big ( \vY + \theta(\vY) \big ) } {}^M
\no \\ & \quad 
 - \sum_{\vP , \vQ }  \big [ \tilde B_i^{\vP J}  \tilde t_{ij} , \tilde B_i^\vQ \tilde t_{ij}  \big ] \, \delta_{ww} \mD_{ \theta(\vP) I \vQ}  
- \sum_{\vP } \big [ \tilde B_i^{\vP}  \tilde t_{ij} ,  \tilde t_{ij}  \big ]  \, \delta ^J_I \,   \delta _{ww}  \mD_\vP  
\end{align}
We shall now analyze separately the contributions involving Lie algebra elements $[\tilde B_i^\vP \tilde t_{ij}, \tilde t_{ij}]$ and those given by $[\tilde B_i^\vP \tilde t_{ij}, \tilde B_i ^\vQ \tilde t_{ij}]$ with $\vP, \vQ \not= \emptyset$.

\subsection{Cancellation of the terms of the form $[\tilde B_i^\vP \tilde t_{ij}, \tilde t_{ij}]$ }

The contributions to (\ref{6.e.17}) that are proportional to $[\tilde B_i^\vP \tilde t_{ij}, \tilde t_{ij}]$ are given by, 
\bea
\label{6.e.18}
&& 
\sum_{\vP} \big [ \tilde B_i^\vP \tilde t_{ij}, \tilde t_{ij} \big ] \sum_{\theta(\vP) I  = \vX \vY \vZ} 
\delta_{ww} \cA^J {} _{\big ( \vX \shuffle \theta(\vZ) \big ) M \big ( \vY + \theta(\vY) \big ) } {}^M
\no \\ && \quad
+ \sum_{\vP} \big [ \tilde B_i^{\vP J}  \tilde t_{ij} , \tilde t_{ij}  \big ]  \delta_{ww} \mD_{ I \vP  }  
- \delta ^J_I  \sum_\vP \big [  \tilde B_i^\vP \tilde t_{ij}, \tilde t_{ij} \big ] \delta _{ww}  \mD_\vP  
\qquad
\eea
after using $ \mD_{ \theta(\vP) I}  = - \mD_{ I \vP}  $.
The contributions from $\vP = \emptyset$ and $\vQ = \emptyset$ in (\ref{6.e.17})  contribute equally to the first term in (\ref{6.e.18}) while those from $\vP = \emptyset$ to the first and third terms in (\ref{6.e.18}) cancel in view of $[\tilde t_{ij}, \tilde t_{ij} ]=0$. The remaining contributions with $\vP \not= \emptyset$ may be parametrized by letting $\vP \to \vP K$. The coefficient of $ \big [ \tilde B_i^{\vP K} \tilde t_{ij}, \tilde t_{ij} \big ] $ is then given by, 
\bea
\label{6.e.19}
- \sum_{I \vP K  = \vX \vY \vZ} 
\delta_{ww} \cA^J {} _{\big ( \vX \shuffle \theta(\vZ) \big ) M \big ( \vY + \theta(\vY) \big ) } {}^M
+  \delta ^J_K \delta_{ww} \mD_{ I \vP  }   - \delta ^J_I  \delta _{ww} \mD_{\vP K}  
\eea
which vanishes in view of the identity (\ref{7.f.3}) for $\vI \to \vP$.

\subsection{Cancellation of the terms $[\tilde B_i^\vP \tilde t_{ij}, \tilde B_i ^\vQ \tilde t_{ij}]$ with $\vP, \vQ \not= \emptyset$}

The contributions to (\ref{6.e.17}) that are proportional to $[\tilde B_i^\vP \tilde t_{ij}, \tilde B_i^\vQ \tilde t_{ij}]$ with $\vP, \vQ \not= \emptyset$ are, 
\bea
\label{6.e.20}
&& 
\half \sum_{\vP \not= \emptyset , \vQ \not= \emptyset} \big [ \tilde B_i^\vP \tilde t_{ij}, \tilde B_i^\vQ \tilde t_{ij} \big ] 
\sum_{\theta(\vP) I \vQ = \vX \vY \vZ} \!\!\!
\delta_{ww} \cA^J {} _{\big ( \vX \shuffle \theta(\vZ) \big ) M \big ( \vY + \theta(\vY) \big ) } {}^M
\no \\ && 
- \sum_{\vQ \not= \emptyset , \vP } \big [ \tilde B_i^{\vP J}  \tilde t_{ij} , \tilde B_i^\vQ \tilde t_{ij}  \big ]  \delta_{ww} \mD_{ \theta(\vP) I \vQ}  
\eea
Since $\vP \not= \emptyset $ in the first term, we parametrize $\vP$ by letting $\vP \to \vP K$, while in both terms we parametrize $\vQ \to \vQ L$ since $\vQ \not= \emptyset$. The coefficient of the Lie algebra element $\big [ \tilde B_i^{ \vP K} \tilde t_{ij}, \tilde B_i^{ \vQ L} \tilde t_{ij} \big ] $ in (\ref{6.e.20})  is then given by half of, 
\bea
\delta^J_L \, \delta_{ww} \mD_{K \theta(\vP) I \vQ} - \delta^J_K \, \delta_{ww} \mD_{ \theta(\vP) I \vQ L} 
- \sum_{K \theta(\vP) I \vQ L = \vX \vY \vZ} 
\delta_{ww} \cA^J {} _{\big ( \vX \shuffle \theta(\vZ) \big ) M \big ( \vY + \theta(\vY) \big ) } {}^M 
\quad
\eea
which vanishes in view of (\ref{7.f.3}) for $\vI \to \theta(\vP) I \vQ$.

\newpage

\section{Completing the proof of Lemma \ref{66.lem:5a}}
\setcounter{equation}{0}
\label{sec:F}

It remained to prove that the $x_i$-dependence of $\tilde Z^t_{ij}$ in Lemma \ref{66.lem:5a}  belongs to the ideal $\cI$ of (\ref{6.z.3}). 
The analytic part of this proof is almost identical to the analytic part of the proof given in appendix A of \cite{DHoker:2026ggx} but the algebraic part differs. We recall the function of interest,
\bea
\label{G.1}
\p_i \tilde Z^t_{ij}
= - \delta_{ww} \tilde J^{(1,0)}_{ij} 
\hskip 1in 
\tilde J^{(1,0)}_{ij}  = \lim_{x_j \to x_i} \big [ \tilde t_{ij}, \tilde J_i^{(1,0)}  + \tilde J_j^{(1,0)}  \big ]
\eea
which we shall show belongs to $\cI$.

\sm

Substituting the expression for $\tilde J^{(1,0)}_i$ from (\ref{66.a.2bb}) into (\ref{G.1}), we obtain, 
\bea
\label{G.2}
\tilde J^{(1,0)} _{ij} & = & \lim _{x_j \to x_i} \Big [ t_{ij}, \, \bPhi ^J(x_i; \tilde B_i) \tilde a_{iJ}
+ \bPhi ^J(x_j; \tilde B_j) \tilde a_{jJ}
\no \\ && \qquad ~
+ \sum _{k \not= i} \bG(x_i,x_k; \tilde B_i) \tilde t_{ik} + \sum _{\ell \not= j} \bG(x_j,x_\ell; \tilde B_j) \tilde t_{j\ell}  \Big ]
\eea
The contributions on the second line of (\ref{G.2}) for which $k \not= j$ and $\ell \not =i$  admit regular limits as $x_j \to x_i$ and  cancel one another. This may be  established by converting the generators $\tilde B_i$ and $\tilde B_j$ into $\tilde B_k$ using (\ref{6.z.2aa}); pulling the common factor $\theta  \bG(x_i,x_k; \tilde B_k) $ out of the commutator; and using the fact that the remaining commutator $[\tilde t_{ij}, \tilde t_{ik}{+} \tilde t_{jk}]$ vanishes by (\ref{6.z.2bb}). Thus, we are left with,
\bea
\label{G.3}
\tilde J^{(1,0)} _{ij} & = & \lim _{x_j \to x_i} 
\Big [ \tilde t_{ij}, \, \bPhi ^J(x_i; \tilde B_i) \tilde a_{iJ} + \bPhi ^J(x_j; \tilde B_j) \tilde a_{jJ}
\no \\ && \qquad \qquad
+ \bG(x_i,x_j; \tilde B_i)  \tilde t_{ij}  +  \bG(x_j,x_i; \tilde B_j) \tilde t_{ij}   \Big ]
\qquad
\eea  
To evaluate the limit  we expand in powers of $\tilde B_i$ and $\tilde B_j$. The only terms that are singular in the limit occur in $\bG$ at the zeroth order in $\tilde B$. They  are proportional to $[\tilde t_{ij}, \tilde t_{ij}]$ in the above expression and cancel trivially.  The remaining contributions may be arranged as, 
\bea
\label{G.4}
\tilde J^{(1,0)} _{ij} & = & \sum_{\vI} \p_i \Phi _\vI {}^J(x_i) \Big [ \tilde t_{ij}, \tilde B_i^\vI \tilde a_{iJ} + \tilde B_j^\vI \tilde a_{jJ} \Big ]
+  \sum_{\vI, \,  |\vI| \geq 2}  \p_i \cG_\vI  (x_i, x_i)  \Big [ \tilde t_{ij}, \, \tilde B_i^\vI \tilde t_{ij} \Big ]
\eea
where we used the identity $ \cG_\vI (x_j, x_i) = \cG_{\theta(\vI)} (x_i, x_j)$ to combine partial derivatives and to cancel terms with $|\vI |=1$. 
Clearly,  $ \tilde J^{(1,0)} _{ij} $ is a single-valued $(1,0)$ form in $x_i$ all of whose terms in the expansion are in the range of $\p_i$, except for the $\vI=\emptyset $ term $\om^J (x_i) [\tilde t_{ij}, \tilde a_{iJ} + \tilde a_{jJ}]$ in the first sum   which belongs to $\cI$ in view of (\ref{6.z.2dd}). Therefore, we have, 
\bea
\label{G.5}
\int _\Sigma d^2 x_i \, \bar \om^K(x_i) \tilde J^{(1,0)} _{ij}  \, \in \, \cI
\eea
Next, we prove that  $ \tilde J^{(1,0)} _{ij} $ is holomorphic in $x_i$. To compute its $\bar \p_i$ derivative it will be convenient to use formula (\ref{G.1}) prior to taking the limit. To obtain the full derivative, we then need to differentiate in both $x_i$ and $x_j$, 
\bea
\label{G.6}
\bar \p_i \tilde J^{(1,0)} _{ij}  = \lim _{x_j \to x_i} \Big ( \bar \p_i + \bar \p_j \Big ) 
\Big [ \tilde t_{ij}, \tilde J^{(1,0)} _i + \tilde J^{(1,0)}_j \Big ]
\eea
These derivatives can be worked out using (\ref{89.b.3}) and we find, 
\bea
\label{A.14}
\bar \p_i \tilde J^{(1,0)} _{ij}  & = &
\lim _{x_j \to x_i} \bigg \{ \Big [ \tilde t_{ij}, \big [ \tilde J^{(0,1)}_i 
+ \tilde J^{(0,1)} _j , \tilde J^{(1,0)}_i + \tilde J^{(1,0)} _j \big ] \Big ]
\no \\ && \qquad
- \pi \sum_{k \not=i} \delta (x_i, x_k) [\tilde t_{ij}, \tilde t_{ik} ] 
- \pi \sum_{k \not=j} \delta (x_j, x_k) [\tilde t_{ij}, \tilde t_{j k} ] \bigg \}
\eea
The sum of the $\delta$ functions vanishes in the limit $x_j \to x_i$  in view of the relation 
(\ref{6.z.2bb}), and the first line may be rearranged using the relation, 
\bea
\label{A.15}
\lim _{x_j \to x_i} \big [ \tilde t_{ij}, \tilde J^{(0,1)}_i + \tilde J^{(0,1)} _j \big] 
= - \pi \bar \om^I (x_i) [ \tilde t_{ij} , \tilde b_{iI} + \tilde b_{jI} ]=0
\eea
as follows,
\bea
\label{A.16}
\bar \p_i \tilde  J^{(1,0)} _{ij}  & = &
\Big [ \lim_{x_j\to x_i} ( \tilde J^{(0,1)}_i + \tilde J^{(0,1)} _j ), \tilde J^{(1,0)} _{ij} \Big ]
\eea
The proof that $\tilde J^{(1,0)} _{ij}$ belongs to $\cI$ may be completed by contradiction. We use the fact that $\tilde J^{(1,0)} _{ij}$ admits a Taylor expansion in powers of $\tilde b$, each term having positive $\tilde b$-degree. We denote the term of  lowest $\tilde b$-degree that does not belong to $\cI$ by $X$. Then, by equation (\ref{A.16}), $X$ must be holomorphic in $x_i$, if it exists, up to terms that belong to $\cI$,  since the right side of (\ref{A.16}) has $\tilde b$-degree at least one higher than $X$ and is therefore a linear combination of the holomorphic Abelian differentials $\bar \om^L(x_i)$ up to terms that belong to $\cI$. But the integral of this non-vanishing lowest $\tilde b$-degree term against $\bar \om^K(x_i)$ must belong to $\cI$,  as shown in (\ref{G.5}), thereby contradicting our initial assumption. As a result $\tilde J^{(1,0)} _{ij}$ must belong to $\cI$ to all $\tilde b$-degree.

\sm

Given that $\tilde J^{(1,0)} _{ij} \in \cI$ is preserved by its variational derivative, we have established
$\partial_i \tilde Z^t_{ij} \in \cI$ by (\ref{G.1}) and thus completed the proof of Lemma \ref{66.lem:5a}.

\newpage

\section{Proof of Lemma \ref{66.lem:25}}
\setcounter{equation}{0}
\label{sec:H}

To prove Lemma \ref{66.lem:25}, we shall evaluate the functions $\tilde Z^a_{iI}$ and $\tilde Z^t_{ij}$ of (\ref{66.e.2A}) and show that the result is related to the action of $\tilde L^0_{ww}$ on $\tilde a_{iI}$ and $\tilde t_{ij}$ given in (\ref{6.z.4})  as follows,
\begin{subequations}
\label{H.1}
\begin{align}
 \tilde \LL_{ww}^0  (\tilde a_{iI})  + \tilde Z^a_{iI}   & \, \in \, \cI
\label{H.1aa} \\
 \tilde \LL_{ww}^0  (\tilde t_{ij} )  + \tilde Z^t_{ij} & \,  \in \, \cI
\label{H.1bb}
\end{align}
\end{subequations}
Using the decomposition (\ref{66.a.2cc}) of $\tilde L_{ww}^1$ and the fact that $\tilde \cV^1_{ww}$ of (\ref{66.a.8}) is independent of $\xx$, we obtain the following simplified form,
\begin{subequations}
\label{H.2}
\begin{align}
\tilde Z^a_{iI} & = \big [ \tilde \cV_{ww}^1, \tilde a_{iI} \big ] + \hat Z^a_{iI} & \qquad
 \hat Z_{iI}^a & =   \int _\Sigma d^2 x_i \, \bar \om_I(x_i) \, \big [ \tilde \cV_{ww}  , \tilde J_i^{(1,0)}  \big ]
\label{H.2aa} \\
\tilde Z^t_{ij} & = \big [ \tilde \cV_{ww}^1, \tilde t_{ij} \big ] + \hat Z^t_{ij} & \qquad
\hat Z^t_{ij} & = \lim_{x_j \to x_i} \big [ \tilde \cV_{ww} , \tilde t_{ij} \big ]
\label{H.2bb}
\end{align}
\end{subequations}
The direct calculation of $\hat Z^a_{iI}$ and $\hat Z^t_{ij}$ from the expression (\ref{66.a.8}) for $\tilde \cV_{ww}$ is lengthy and tedious, but may be avoided by exploiting Lemma \ref{66.lem:5a}. Since $\tilde Z^a_{iI}$, $\tilde Z^t_{ij}$ and therefore $\hat Z^a_{iI}$, $\hat Z^t_{ij}$ are independent of $x_j$ for all $j \in \{ 1,\cdots,n \}$, we have the following immediate equality,
\begin{align}
\label{H.4}
\hat Z^a_{iI} & =  \int_{\Sigma^n}  \! \hat \kappa \, \hat Z^a_{iI} & 
\hat Z^t_{ij} & = \int_{\Sigma^n} \! \hat  \kappa \, \hat Z^t_{ij} &
 \int_{\Sigma^n}  \! \hat \kappa & =  \prod_\ell \int _\Sigma d^2 x_\ell \, \kappa(x_\ell) 
\end{align}
Many terms in the evaluation of $\hat Z^a_{iI}$ and $\hat Z^t_{ij}$ simplify using the relations, 
\begin{subequations}
\label{H.5}
\begin{align}
\int _\Sigma d^2 x_j \, \kappa (x_j) \,  \bG (w,x_j;B) & =  0
\label{H.5aa} \\
\int _\Sigma d^2 x_i \, \bar \om_I(x_i) \, \bG(x_i, x_j;B) & =  0 
\label{H.5bb} \\
\int _\Sigma d^2 x_i \, \bar \om_I(x_i) \, \bPhi^J (x_i;B) & =  \delta^J_I 
\label{H.5cc}
\end{align}
\end{subequations}

\subsection{Proof of (\ref{H.1aa})}

Substituting the explicit expressions for $\tilde \cV_{ww} $ and $ \tilde J_i^{(1,0)} $ into $\hat Z^a_{iI}$, we obtain,
\begin{align}
\label{H.6}
\hat Z^a _{iI}  = 
- \int_{\Sigma^n} \! \hat \kappa \int _\Sigma d^2 x_i \, \bar \om_I(x_i) \, \Big [ &
 \sum_k \theta \bG(w,x_k; \tilde B_k) \Big (  \bPhi^{L} (w; \tilde B_k) \tilde a_{k L } 
+ \thalf \sum_{\ell \not = k}    \theta \bG(w,x_\ell; \tilde B_\ell) \tilde t_{k\ell}  \Big ),
\no \\ & \qquad
\bPhi^J(x_i; \tilde B_i) \, \tilde a_{iJ} + \sum_{j \neq i} \bG (x_i,x_j; \tilde B_i) \, \tilde t_{ij}  \Big ]
\end{align}
The contribution to $\hat Z^a _{iI} $ from the first term inside the parentheses on the first line integrates to zero against the first term on the second line unless $k=i$ and against the second term on the second line: its contributions from  $k \not= i$ vanish in view of (\ref{H.5bb}) while those from $k=i$ vanish in view of  (\ref{H.5aa}). Similarly, the contribution from the second term inside the parentheses on the first line integrates to zero against the first term on the second line: its contributions from  $k, \ell  \not= i$ vanish in view of (\ref{H.5bb}) while those from $k=i$ or $\ell=i$ vanish in view of  (\ref{H.5aa}). Finally, the contributions from the second term on the first line and the second term on the second line vanish unless $(k,\ell)=(i,j)$ or $(k, \ell ) = (j,i)$ in view of (\ref{H.5aa}) and (\ref{H.5bb}). Thus, we are left with,
\begin{align}
\label{H.7}
\hat Z^a _{iI}  & = \hat Z^{a1} _{iI} + \hat Z^{a2} _{iI} 
\end{align}
where the following two terms will be separately simplified and combined with 
contributions from $ [ \tilde \LL_{ww}^0  , \tilde a_{iI} ]$ and $[ \tilde \cV_{ww}^1, \tilde a_{iI} ]$ in (\ref{H.1aa}) and (\ref{H.2aa}):
\begin{align}
\hat Z^{a1} _{iI} &= -  \sum_{j \not = i} \int _\Sigma d^2 x_j \,\kappa (x_j) \, \int _\Sigma d^2 x_i \, \bar \om_I(x_i) \, \Big [ 
 \theta \bG(w,x_i; \tilde B_i)  \bG(w,x_j; \tilde B_i) \tilde t_{ij} ,
\bG (x_i,x_j; \tilde B_i) \, \tilde t_{ij}  \Big ]
\no \\ 
\hat Z^{a2} _{iI} &=
-  \int _\Sigma d^2 x_i \, \bar \om_I(x_i) \, \Big [ 
 \theta \bG(w,x_i; \tilde B_i)   \bPhi^{ L }(w; \tilde B_i) \tilde a_{i L },  \bPhi^J(x_i; \tilde B_i) \, \tilde a_{iJ}  \Big ]
 \label{H.7parts}
\end{align}
Expanding both expressions in powers of $\tilde B$ and  integrating  over $x_i$ in terms of DHS kernels we obtain the following explicit expressions,
\begin{align}
\hat Z^{a1} _{iI} &= 
- \sum_{j \neq i} \sum_{\vP, \vQ} \big[ \tilde B_i^\vP  \tilde t_{ij} ,  \tilde B_i^\vQ \tilde t_{ij} \big]
\sum_{\vP = \vX \vY} \Xi_{\vY | \theta(\vX) I \vQ}(w) 
\no \\
 \hat Z^{a2} _{iI} &= 
- \sum_{\vP ,\vQ}   \big [  \tilde B_i^{\vP} \tilde a_{i L },  \tilde B_i^\vQ  \tilde a_{iJ}  \big ]
 \sum_{\vP = \vX \vY}  \, \p_w \Phi_\vY{}^L(w) \, \p_w \Phi_{\theta(\vX) I \vQ}{}^J(w) 
 \label{H.85}
\end{align}
where $\Xi_{\vR | \vS}$ is defined by,
\bea
\Xi_{\vR | \vS}(w) =  \int _\Sigma d^2 x \, \kappa(x) \, \p_w \cG_{ \vR}  (w,x) \, \p_w \cG_{\vS} (w,x)
= \Xi_{\vS | \vR}(w)
\label{H.83}
\eea
The function $\tilde L^0_{ww}(\tilde a_{iI})$, defined in (\ref{6.z.4}),  receives contributions involving the modular tensors $\cA$ and $\mD$. It will be convenient to decompose each one of these contributions in turn according to the decompositions of
$\delta _{ww} \cA$ given in (\ref{7.prop.2a}) and $\delta_{ww} \mD$
given by the following equivalent of (\ref{9.q.12}),
\bea
\delta_{ww} \mD_\vP = \frac{1}{h} \,  \om^J(w)   \sum_{\vP = \vX \vY \vZ} \p_w \Phi_{J (\vX \shuffle \theta(\vZ)) M (\vY + \theta(\vY)) }{}^M(w)
- \sum_{\vP = \vX \vY} \Xi_{ \theta(\vX) | \vY}(w)
\eea
namely,
\bea
\tilde L^0_{ww}(\tilde a_{iI})  =   \tilde L^0_{ww}(\tilde a_{iI}) _{\cA_1} + \tilde L^0_{ww}(\tilde a_{iI}) _{\cA_2} 
+  \tilde L^0_{ww}(\tilde a_{iI}) _{\mD_1} + \tilde L^0_{ww}(\tilde a_{iI}) _{\mD_2} 
\eea
The individual terms depending on $\cA$ are given as follows, 
\begin{align}
 \tilde L^0_{ww}(\tilde a_{iI})_{\cA_1}  &=  \omega^J(w) \,\bigg\{
 \frac{1}{h{-}1} \, \big[ \tilde a_{iM} , \tilde a_{iI} \big] \, \p_w \Phi_{J}{}^M(w)  \notag \\
 &\quad 
- \frac{1}{h} \sum_{\vQ}  \big[ \tilde a_{iM} , \tilde B_i^{\vQ L} \tilde a_{iN} \big]  
\,  \Big(  
\delta^M_I \, \p_w \Phi_{J \vQ L}{}^N(w) + \delta^N_L \, \p_w \Phi_{J \theta(\vQ) I}{}^M(w)
   \Big)
 \notag \\
 &\quad + \frac{1}{2h} \sum_{\vP, \vQ } \! \big [ \tilde B_i^{\vP K} \tilde a_{iM}, \tilde B_i^{\vQ L} \tilde a_{iN} \big ] \,   \Big( \delta^M_K \p_w \Phi_{J \theta(\vP) I \vQ L}{}^N(w)
 {-} \delta^N_L \p_w \Phi_{J \theta(\vQ) I \vP K}{}^M(w) \Big) \bigg\}
 \notag \\
 \tilde L^0_{ww}(\tilde a_{iI})_{\cA_2}  &=  \half \sum_{\vP, \vQ }  
\big [ \tilde B_i^\vP \tilde a_{iM}, \tilde B_i^\vQ \tilde a_{iN} \big ]
\sum_{\theta(\vP) I \vQ = \vX \vY}  \p_w \Phi _{\theta(\vX)} {}^M (w) \, \p_w \Phi _\vY{}^N(w)
\label{H.88}
\end{align}
while those depending on $\mD$ are given by, 
\begin{align}
 \tilde L^0_{ww}(\tilde a_{iI})_{\mD_1}  &= 
 - \frac{1}{2h} \sum_{\vP, \vQ }   \sum_{k\not= i} 
  \big [ \tilde B_i^\vP \tilde t_{ik} , \tilde B_i^\vQ \tilde t_{ik}  \big ] 
 \, \omega^J(w) \sum_{ \theta(\vP) I \vQ  = \vX \vY \vZ}
 \p_w \Phi_{J (\vX \shuffle \theta(\vZ)) M (\vY + \theta(\vY)) }{}^M(w)
 \notag \\
  \tilde L^0_{ww}(\tilde a_{iI})_{\mD_2} &=  \half \sum_{\vP, \vQ }   \sum_{k\not= i} 
  \big [ \tilde B_i^\vP \tilde t_{ik} , \tilde B_i^\vQ \tilde t_{ik}  \big ] \,  \sum_{ \theta(\vP) I \vQ  = \vX \vY} 
  \Xi_{\theta(\vX) | \vY}(w)
     \label{H.82}
\end{align}

\subsubsection{Cancellation of $\hat Z^{a1} _{iI}+\tilde L^0_{ww}(\tilde a_{iI})_{\mD_2}$}
 \label{secH.1.1}
 
To prove the cancellation of $\hat Z^{a1} _{iI} + \tilde L^0_{ww}(\tilde a_{iI})_{\mD_2}$, we decompose the deconcatenation sum in the second line of (\ref{H.82}) according to whether $I \in \vX$ or $I \in \vY$ which, by the antisymmetry of the accompanying commutator,  give equal contributions to the last line of,
\begin{align}
 \tilde L^0_{ww}(\tilde a_{iI})_{\mD_2} &=  \half \sum_{\vP, \vQ }   \sum_{k\not= i} 
  \big [ \tilde B_i^\vP \tilde t_{ik} , \tilde B_i^\vQ \tilde t_{ik}  \big ] \, 
\bigg\{
{-}\sum_{ \vQ = \vX \vY} \Xi_{\theta(\vX) I \vP | \vY}(w)
+ \sum_{ \vP = \vX \vY} \Xi_{ \vY |  \theta(\vX) I \vQ}(w)
\bigg\}   \notag\\
&=  \sum_{\vP, \vQ }   \sum_{k\not= i} 
  \big [ \tilde B_i^\vP \tilde t_{ik} , \tilde B_i^\vQ \tilde t_{ik}  \big ] \, 
 \sum_{ \vP = \vX \vY} \Xi_{ \vY |  \theta(\vX) I \vQ}(w)
    \label{H.84}
\end{align}
Combining this result with the expression for $\hat Z^{a1}_{iI} $ in (\ref{H.85}), we obtain,
\bea
\hat Z^{a1} _{iI} +  \tilde L^0_{ww}(\tilde a_{iI})_{\mD_2} =0
\eea

\subsubsection{Cancellation of $\hat Z^{a2} _{iI}+\tilde L^0_{ww}(\tilde a_{iI})_{\cA_2}$}
  \label{secH.1.2}
  
In the expression for $\tilde L^0_{ww}(\tilde a_{iI})_{\cA_2}$ in (\ref{H.88}) we decompose  the deconcatenation sum according to whether $I \in \vX$ versus $I \in \vY$ and exploit the antisymmetry of $ [ \tilde B_i^\vP \tilde a_{iM}, \tilde B_i^\vQ \tilde a_{iN} ]$ under $(\vP,M)\leftrightarrow (\vQ,N)$ to obtain,
\begin{align}
 \tilde L^0_{ww}(\tilde a_{iI})_{\cA_2}  &=   
 \sum_{\vP, \vQ }   \big [ \tilde B_i^\vP \tilde a_{iM}, \tilde B_i^\vQ \tilde a_{iN} \big ]  
 \sum_{\vP = \vX \vY} \p_w \Phi _{\vY} {}^M (w) \, \p_w \Phi _{\theta(\vX) I \vQ}{}^N(w)
\label{H.89}
\end{align}
Combining this result with the one for $\hat Z^{a2}_{iI}$ given in (\ref{H.85}), we obtain,
\bea
\hat Z^{a2} _{iI}+\tilde L^0_{ww}(\tilde a_{iI})_{\cA_2}=0
\eea

 \subsubsection{Proof of  $\tilde L^0_{ww}(\tilde a_{iI})_{\cA_1} + \tilde L^0_{ww}(\tilde a_{iI})_{\mD_1} 
 + [ \tilde \cV_{ww}^1, \tilde a_{iI} ] \in \cI$} 
 \label{secH.1.3}
 
In view of the expression (\ref{66.a.8}) for $\tilde \cV_{ww}^1$, we decompose
\begin{align}
\big[ \tilde \cV_{ww}^1, \tilde a_{iI} \big] = \big[ \tilde \cV_{ww}^1, \tilde a_{iI} \big]_1+ \big[ \tilde \cV_{ww}^1, \tilde a_{iI} \big]_2+ \big[ \tilde \cV_{ww}^1, \tilde a_{iI} \big]_3
   \label{H.92}
\end{align}
with (assuming $h\geq 2$ to set $c_h = \frac{1}{h{-}1}$)\footnote{All of $\tilde \cV_{ww}^1$, $ \tilde L^0_{ww}(\tilde a_{iI})_{\cA_1}$
and $ \tilde L^0_{ww}(\tilde a_{iI})_{\mD_1}$ vanish for $h=1$, and the cancellations of sections \ref{secH.1.1}
and \ref{secH.1.2} already suffice to prove Lemma \ref{66.lem:25} in that case.} 
\begin{align}
\big[ \tilde \cV_{ww}^1, \tilde a_{iI} \big]_1 &= \frac{1}{h{-}1} \, \omega^J(w) \,\p_w\Phi_J{}^M(w) \sum_{k=1}^n \big[ \tilde a_{iI} ,\tilde a_{kM}\big]
   \label{H.93} \\
   \big[ \tilde \cV_{ww}^1, \tilde a_{iI} \big]_2 &=  \frac{1}{h} \, \omega^J(w) \sum_{\vI \neq \emptyset }\p_w\Phi_{J\vI}{}^M(w)
    \big[ \tilde a_{iI} , \tilde B_i^\vI \tilde a_{iM}\big]
   \notag \\
   \big[ \tilde \cV_{ww}^1, \tilde a_{iI} \big]_3 &=  \frac{1}{h} \, \omega^J(w) \sum_{\vI \neq \emptyset }\p_w\Phi_{J\vI}{}^M(w)
  \sum_{k\neq i}  \big[ \tilde a_{iI} , \tilde B_k^\vI \tilde a_{kM}\big]
   \notag
\end{align}
By inspection, we see that the sum of $[ \tilde \cV_{ww}^1, \tilde a_{iI} ]_1 $ and the first line of $ \tilde L^0_{ww}(\tilde a_{iI})_{\cA_1} $ on the  right side of (\ref{H.88}) belong to $\cI$. Moreover, $[ \tilde \cV_{ww}^1, \tilde a_{iI} ]_2 $ cancels the first term on the second line of (\ref{H.88}). The remaining terms may be regrouped as follows,  
\begin{align}
 \tilde L^0_{ww}&(\tilde a_{iI})_{\cA_1}  + \big[ \tilde \cV_{ww}^1, \tilde a_{iI} \big] 
  \notag \\
 & = 
  \big[ \tilde \cV_{ww}^1, \tilde a_{iI} \big]_3  
  + \frac{1}{h} \,   \omega^J(w) \! \sum_{\vP, \vQ } \!  \sum_{k\neq i}\big [ \tilde B_i^{\vP } \tilde a_{iM}, \tilde B_i^{\vQ }  \tilde t_{ik} \big ] \,   \p_w \Phi_{J \theta(\vQ) I \vP }{}^{\! M}(w) 
  \ {\rm mod} \ \cI
 \label{H.94} 
 \end{align}
To proceed, we simplify $[ \tilde \cV_{ww}^1, \tilde a_{iI} ]_3$  in (\ref{H.93}) through the reformulation of the $\mt_{h,n}$ identity in Proposition E.1 of \cite{DHoker:2026ggx} within $\muu_{h,n}$\footnote{The inductive proof of Proposition E.1 of \cite{DHoker:2026ggx} relies on both of $[a_{iI}, a_{jJ}] = 0$ and $[a_{iI}+a_{jI}, t_{ij}] = 0$ for $j\neq i$ within the base cases of Lemmas E.2 and E.3 of the reference, leading to explicitly known terms in $\cI$ when transcribing the $\mt_{h,n}$ relation of the reference to the relation (\ref{H.95}) in $\muu_{h,n}$.}
\begin{align}
  \label{H.95} 
 \big[ \tilde a_{iI} , \tilde B_k^\vI \tilde a_{kM}\big] &= - \sum_{\vI = \vP L \vQ} \delta^L_I \, \big[ \tilde B_i^\vQ \tilde a_{iM} , \tilde B_i^{\theta(\vP)} \tilde t_{ik} \big]  
 - \sum_{\vI = \vP K \vQ L \vR} \delta^K_M \delta^L_I \, \tilde B_k^\vP \, \big[
 \tilde B_i^\vR \tilde t_{ik} , \tilde B_i^{\theta(\vQ)} \tilde t_{ik} \big] \notag \\
 &\quad + \sum_{\vI = \vP K \vQ L \vR} \delta^K_I \delta^L_M \sum_{\vX ,\vY} \delta_{\vQ,\vX \shuffle \vY}\, \tilde B_k^\vP \, \big[
 \tilde t_{ik} , \tilde B_i^{ \vX (\vR + \theta(\vR))\theta(\vY)} \tilde t_{ik} \big] 
  \ {\rm mod} \ \cI
\end{align}
where $k\neq i$. We shall decompose $[ \tilde \cV_{ww}^1, \tilde a_{iI} ]_3$ in (\ref{H.93}) according to the first,
second and third term of (\ref{H.95}) into
\bea
 \big[ \tilde \cV_{ww}^1, \tilde a_{iI} \big]_3 =  \big[ \tilde \cV_{ww}^1, \tilde a_{iI} \big]_{3a}
 +  \big[ \tilde \cV_{ww}^1, \tilde a_{iI} \big]_{3b} +  \big[ \tilde \cV_{ww}^1, \tilde a_{iI} \big]_{3c}  \ {\rm mod} \ \cI
   \label{H.96}
\eea
where
\begin{align}
\big[ \tilde \cV_{ww}^1, \tilde a_{iI} \big]_{3a} &=  - \frac{1}{h} \, \omega^J(w) \sum_{\vP,\vQ }\p_w\Phi_{J \vP I \vQ}{}^M(w)
  \sum_{k\neq i} \big[ \tilde B_i^\vQ \tilde a_{iM} , \tilde B_i^{\theta(\vP)} \tilde t_{ik} \big]   \label{H.97} \\
 \big[ \tilde \cV_{ww}^1, \tilde a_{iI} \big]_{3b} &= -  \frac{1}{h} \, \omega^J(w) \sum_{\vP,\vQ,\vR }\p_w\Phi_{J \vP M \vQ I \vR}{}^M(w)
  \sum_{k\neq i} \tilde B_k^\vP \, \big[
 \tilde B_i^\vR \tilde t_{ik} , \tilde B_i^{\theta(\vQ)} \tilde t_{ik} \big]
 \notag \\
 \big[ \tilde \cV_{ww}^1, \tilde a_{iI} \big]_{3c} &=  \frac{1}{h} \, \omega^J(w) \sum_{\vP,\vR,\vX,\vY }\p_w\Phi_{J \vP I (\vX \shuffle \vY) M \vR }{}^M(w)
  \sum_{k\neq i} \tilde B_k^\vP \, \big[
 \tilde t_{ik} , \tilde B_i^{ \vX (\vR + \theta(\vR))\theta(\vY)} \tilde t_{ik} \big] 
\notag
  \end{align}
One readily observes that $[ \tilde \cV_{ww}^1, \tilde a_{iI} ]_{3a}$ cancels the double sum
over $\vP,\vQ$ in the last line of (\ref{H.94}), and we arrive at the simplified form,
\begin{align}
 \tilde L^0_{ww}(\tilde a_{iI})_{\cA_1}  + \big[ \tilde \cV_{ww}^1, \tilde a_{iI} \big] &= 
 \big[ \tilde \cV_{ww}^1, \tilde a_{iI} \big]_{3b} +  \big[ \tilde \cV_{ww}^1, \tilde a_{iI} \big]_{3c}  \ {\rm mod} \ \cI
 \label{H.98}
 \end{align}
The final step is to cancel the remaining terms $ [ \tilde \cV_{ww}^1, \tilde a_{iI} ]_{3b}$, $[ \tilde \cV_{ww}^1, \tilde a_{iI} ]_{3c}$
against $ \tilde L^0_{ww}(\tilde a_{iI})_{\mD_1}$ in (\ref{H.82}) which shall decompose into 
\bea
 \tilde L^0_{ww}(\tilde a_{iI})_{\mD_1} =  \tilde L^0_{ww}(\tilde a_{iI})_{\mD_{1a}} +  \tilde L^0_{ww}(\tilde a_{iI})_{\mD_{1b}}+ \tilde L^0_{ww}(\tilde a_{iI})_{\mD_{1c}}
  \label{H.99}
\eea
according to the three cases $I \in \vX$, $I\in \vY$ or $I \in \vZ$ in the deconcatenation sum over
$\theta(\vP) I \vQ = \vX \vY \vZ$ in (\ref{H.82}) (also see (\ref{6.e.13}) for an earlier instance of this decomposition):
\begin{align}
  \tilde L^0_{ww}(\tilde a_{iI})_{\mD_{1a}} &= - \frac{ \omega^J(w)}{2h}  \sum_{\vP , \vQ} \sum_{k\neq i} \big[
 \tilde B_i^\vP \tilde t_{ik} , \tilde B_i^{\vQ} \tilde t_{ik} \big]  \! \! \! \sum_{\vQ = \vQ_1 \vQ_2 \vQ_3}  \! \! \!   \p_w\Phi_{J (\theta(\vQ_3)\shuffle \theta(\vP)I \vQ_1 ) M ( \vQ_2 + \theta(\vQ_2) ) }{}^M(w)
  \notag \\
    \tilde L^0_{ww}(\tilde a_{iI})_{\mD_{1b}} &= - \frac{1}{2h}\, \omega^J(w)  \sum_{\vP , \vQ} \sum_{k\neq i} \big[
 \tilde B_i^\vP \tilde t_{ik} , \tilde B_i^{\vQ} \tilde t_{ik} \big]  \notag \\
 &\qquad \times  \sum_{\vP = \vP_1 \vP_2  }  \sum_{ \vQ = \vQ_1 \vQ_2}  \p_w\Phi_{J (\theta(\vP_2)\shuffle \theta(\vQ_2) ) M ( \theta(\vP_1) I \vQ_1  -  \theta(\vQ_1) I \vP_1) }{}^M(w)
    \notag \\
     \tilde L^0_{ww}(\tilde a_{iI})_{\mD_{1c}} &=  \frac{ \omega^J(w)}{2h}  \sum_{\vP , \vQ} \sum_{k\neq i} \big[
 \tilde B_i^\vP \tilde t_{ik} , \tilde B_i^{\vQ} \tilde t_{ik} \big] 
 \! \! \! \sum_{\vP = \vP_1 \vP_2 \vP_3}  \! \! \!   \p_w\Phi_{J (\theta(\vP_3)\shuffle \theta(\vQ)I \vP_1 ) M ( \vP_2 + \theta(\vP_2) ) }{}^M(w)
  \label{H.100}
  \end{align}
After repeated use of $\tilde B_k^\vP \tilde t_{ik} = \tilde B_i^{\theta(\vP)} \tilde t_{ik} $
and $\tilde B_k^{\vR}  [X , Y ] = \sum_{\vR = \vP \shuffle \vQ} [ \tilde B_k^{\vP} X ,  \tilde B_k^{\vQ}Y]$ 
to rewrite the second and third line of (\ref{H.97}) as
\begin{align}
 \big[ \tilde \cV_{ww}^1, \tilde a_{iI} \big]_{3b} &=  -  \frac{1}{h} \, \omega^J(w) 
 \sum_{\vP , \vQ} \sum_{k\neq i} \big[
 \tilde B_i^\vP \tilde t_{ik} , \tilde B_i^{\vQ} \tilde t_{ik} \big] 
 \sum_{\vP = \vP_1 \vP_2 \atop{ \vQ = \vQ_1 \vQ_2} }
 \p_w\Phi_{J (\theta(\vP_2)\shuffle \theta(\vQ_2)) M \theta( \vQ_1) I \vP_1}{}^M(w)
\notag
  \\
 \big[ \tilde \cV_{ww}^1, \tilde a_{iI} \big]_{3c} &=   \frac{1}{h} \, \omega^J(w) 
 \sum_{\vP , \vQ} \sum_{k\neq i} \big[
 \tilde B_i^\vP \tilde t_{ik} , \tilde B_i^{\vQ} \tilde t_{ik} \big]  \notag \\
 &\qquad \times
 \sum_{\vQ = \vQ_1 \vQ_2 \vQ_3 \vQ_4 }
 \p_w\Phi_{J( \theta( \vP) \shuffle \theta(\vQ_4)) I (\vQ_1 \shuffle \theta(\vQ_3)) M (\vQ_2 + \theta(\vQ_2)) }{}^M(w)
  \label{H.111}
\end{align}
we find the following two cancellations:
\begin{align}
 \tilde L^0_{ww}(\tilde a_{iI})_{\mD_{1b}} +  \big[ \tilde \cV_{ww}^1, \tilde a_{iI} \big]_{3b} &= 0
 \notag \\
  \tilde L^0_{ww}(\tilde a_{iI})_{\mD_{1a}} + 
   \tilde L^0_{ww}(\tilde a_{iI})_{\mD_{1c}} + 
   \big[ \tilde \cV_{ww}^1, \tilde a_{iI} \big]_{3c} &= 0
  \label{H.101}
\end{align}
The first line is a consequence of antisymmetrizing the subscripts
of the $\Phi$-tensor in the first line of (\ref{H.111}) with respect to
$(\vP_1,\vP_2) \leftrightarrow (\vQ_1,\vQ_2)$ according to the accompanying
commutator. The second cancellation in (\ref{H.101}) arises from the simplification
\bea
\sum_{\vQ = \vQ_1 \vQ_2 \vQ_3 \vQ_4 } \! \! \! \!
 \p_w\Phi_{J( \theta( \vP) \shuffle \theta(\vQ_4)) I (\vQ_1 \shuffle \theta(\vQ_3)) M \cdots }{}^M
=  \! \! \! \sum_{\vQ = \vQ_1 \vQ_2 \vQ_3 } \! \! \! 
 \p_w\Phi_{J(  \theta(\vQ_3)  \shuffle \theta( \vP) I \vQ_1 ) M\cdots }{}^M
   \label{H.112}
\eea
of $[ \tilde \cV_{ww}^1, \tilde a_{iI}]_{3c} $ in (\ref{H.111}) due to the shuffle identity (\ref{A.sh.2}), 
followed by antisymmetrization in $(\vP \leftrightarrow \vQ)$.

\sm

By combining (\ref{H.98}), (\ref{H.99}) and (\ref{H.101}), we thus find
\bea
 \tilde L^0_{ww}(\tilde a_{iI})_{\cA_1}  +  \tilde L^0_{ww}(\tilde a_{iI})_{\mD_1} + \big[ \tilde \cV_{ww}^1, \tilde a_{iI} \big]
\in \cI
  \label{H.102}
  \eea
which, together with  the results $\hat Z^{a1} _{iI}   = -  \tilde L^0_{ww}(\tilde a_{iI})_{\mD_2}$ and $ \hat Z^{a2} _{iI} = -  \tilde L^0_{ww}(\tilde a_{iI})_{\cA_2} $ of the previous sections, implies the equivalent
\bea
 \tilde L^0_{ww}(\tilde a_{iI})  + [ \tilde \cV_{ww}^1, \tilde a_{iI}] +\hat Z^{a} _{iI}   \in \cI
   \label{H.113}
\eea
of (\ref{H.1aa}) and (\ref{H.2aa}).

\subsection{Proof of (\ref{H.1bb})}

Substituting the explicit expression for $\tilde \cV_{ww} $  into $\hat Z^t_{ij}$, we obtain,
\bea
\label{H.a.1}
\! \! \hat Z^t _{ij}  =
- \int_{\Sigma^n}  \! \!\hat \kappa  \lim _{x _j \to x_i} \! \bigg [ 
 \sum_k \theta \bG(w,x_k; \tilde B_k) \Big (  \bPhi^I (w; \tilde B_k) \tilde a_{kI} 
\! + \! \thalf \sum_{\ell \not = k}    \theta \bG(w,x_\ell; \tilde B_\ell) \tilde t_{k\ell}  \Big ), \tilde t_{ij} \bigg ]
\quad
\eea
The contribution from the first term vanishes in view of (\ref{H.5aa}) while the contribution from the second term is non-vanishing only when either $(k, \ell )=(i,j)$ or $(k, \ell )=(j,i)$ whose contributions are equal to one another. Thus, we have, 
\bea
\label{H.a.2}
\hat Z^t _{ij}  =
-  \int _\Sigma d^2 x \, \kappa (x)  \Big [ 
 \theta \bG(w,x; \tilde B_i) \, \bG(w,x; \tilde B_i) \tilde t_{ij} , \tilde t_{ij} \Big ]
\eea
where we have used the identity $\bG(w,x; \tilde B_i) \tilde t_{ij} =\theta \bG(w,x; \tilde B_j) \tilde t_{ij} $. Expanding the generating functions in a Lie series in $\tilde B$, 
\bea
\label{H.a.3}
\hat Z^t _{ij}  = - \sum _{\vI} \big [ \tilde B_i^\vI  \tilde t_{ij}, \tilde t_{ij} \big ] \sum _{\vI = \vX \vY} \int _\Sigma d^2 x \, \kappa (x) \, \p_w \cG_{\theta(\vX)} (w,x) \, \p_w \cG_\vY (w,x) 
\eea
One recognizes this combination as the second line in (\ref{9.q.12}) for $\delta_{ww} \mD_\vI$ and, using the expression for $\p_x \cG_\vI(x,x)$ given in Lemma \ref{A.lem:1}, one obtains an exact relation in $\hat{\muu}_{h,n}$, 
\bea
\label{H.a.4}
\hat Z^t _{ij}  = \sum _{\vI} \big [ \tilde B_i^\vI  \tilde t_{ij}, \tilde t_{ij} \big ] \Big \{ \delta _{ww} \mD_\vI
- { 1 \over h} \om^J(w) \sum_{\vI = \vP \vQ \vR} \p_w \Phi _{J ( \vP \shuffle \theta(\vR)) M (\vQ + \theta(\vQ))} {}^M(w) \Big \}
\eea
On the other hand, evaluating the commutator $\big [ \tilde \cV_{ww}^1, \tilde t_{ij} \big ] $ with the help of the expression for $\tilde \cV^1_{ww}$ given in (\ref{66.a.8}), all terms for $k \not= i,j$ cancel and the remaining contributions from the second term under the curly brackets belong to $\cI$, so that we are left with, 
\bea
\label{H.a.5}
\big [ \tilde \cV_{ww}^1, \tilde t_{ij} \big ] + { 1 \over h} \om^J(w) \sum_{\vI} \p_w \Phi _{J \vI} {}^M(w) 
\big [ \tilde B_i^\vI \tilde a_{iM} + \tilde B_j^\vI \tilde a_{jM}, \tilde t_{ij} \big ] \in \cI
\eea  
Using a proof by induction parallel to the one used to establish Lemma E.2 in \cite{DHoker:2026ggx}, one shows the following identity valid in $\hat{\muu}_{h,n}$, 
\bea
\label{H.a.6}
\big [ \tilde B_i ^\vI \, \tilde a_{iM} + \tilde B_j ^ \vI \, \tilde a_{jM}, \tilde t_{ij} \big ]
+ \sum _{\vI = \vY M \vQ} 
\sum_{\vY = \vP \shuffle \theta(\vR) } \Big [ \tilde B_i ^\vP \big ( \tilde B_i^\vQ + \tilde B_i^{\theta(\vQ)} \big ) \tilde B_i^\vR  \tilde  t_{ij}, \tilde  t_{ij} \Big ] \in \cI
\eea
consistently with the identity (\ref{A.Lie.1}) in $\hat \mt_{h,n}$.
Substituting this result in (\ref{H.a.5}), we obtain, 
\bea
\label{H.a.7}
\big [ \tilde \cV_{ww}^1, \tilde t_{ij} \big ] - { 1 \over h} \om^J(w) \sum_{\vI} 
\big [ \tilde B_i^\vI   \tilde t_{ij}, \tilde t_{ij} \big ] \sum_{\vI = \vP \vQ \vR}
\p_w \Phi _{J  (\vP \shuffle \theta(\vR) ) M (\vQ + \theta(\vQ))  } {}^M(w) 
 \in \cI
\eea  
Eliminating the double deconcatenation sums between (\ref{H.a.4}) and (\ref{H.a.7}) gives,
\bea
\big [ \tilde \cV_{ww}^1, \tilde t_{ij} \big ]  + \hat Z^t_{ij} - \sum _{\vI} \big [ \tilde B_i^\vI  \tilde t_{ij}, \tilde t_{ij} \big ]
 \delta _{ww} \mD_\vI \, \in \, \cI
\eea
Combining this result with (\ref{H.2bb}) and using the last relation in (\ref{6.z.4})
to identify the term involving $\delta _{ww} \mD_\vI$ as $+\tilde L^0_{ww}(\tilde t_{ij})$ then proves (\ref{H.1bb}).


\end{document}